\titlespacing{\section}{0pt}{*2}{*1} % "compact" option is too compact for
\newcommand{\para}[2][]{\vspace{0.25\baselineskip}\noindent\textbf{#2}#1~}
\newcommand{\red}[1]{\textcolor{black}{#1}}
\renewcommand{\bullet}
 {\,\begin{picture}(-1,1)(-1,-3)\circle*{3}\end{picture}\ }
 \newcommand{\rulename}[1]{\tiny \bf #1}
\newenvironment{itemizeC}%
  {\begin{list}{$\bullet$}%
   {\leftmargin=1.5\parindent \itemsep=0.5ex \topsep=2pt \labelwidth=6em \labelsep=1em
     \parsep=0.5ex \partopsep=0pt}}%
  {\end{list}}
\newcommand{\Set}[1]{ \{ #1 \} }
\newcommand{\Dom}{\mathrm{dom}}
\newcommand{\Union} {\cup}
\newcommand{\ForAll}[2]
           {\forall #1 .\,#2}
\DeclareSymbolFont{euletters}{U}{eur}{m}{n}
\DeclareMathSymbol{\varwp}{\mathord}{euletters}{"7D}
\newcommand{\pset}{\varwp} % \mathcal{P}}
\newcommand\restr[2]{{% we make the whole thing an ordinary symbol
  \left.\kern-\nulldelimiterspace % automatically resize the bar with \right
  #1 % the function
  \vphantom{\big|} % pretend it's a little taller at normal size
  \right|_{#2} % this is the delimiter
  }}
\newcommand{\labarrow}[1]{\xrightarrow{#1}}  %amsmath labelled arrow
\newcommand{\whprog}{\texttt{while}}
\newcommand{\swhprog}{\texttt{i-while}}
\newcommand{\assem}{\texttt{RISC}}
\newcommand{\System}{\SemanticDomain{Sys}} 	%name of the lts
\newcommand{\SystemStates}{\mathcal{S}}          	%set of states
\newcommand{\Act}{\SemanticDomain{Act}}         	%actions: Act=Pub U Priv U tau
\newcommand{\LowAct}{\SemanticDomain{LAct}}
\newcommand{\Loc}{\SemanticDomain{Loc}}            %to describe F, the faulty part of \System
\newcommand{\faulty}{F}
\newcommand{\tolerant}{T}
\newcommand{\Environment}{\SemanticDomain{Env}}
\newcommand{\errlts}{\SemanticDomain{Err}}          	%name of the lts
\newcommand{\EnvironStates}{\mathcal{E}}		%set of states
\newcommand{\et}[1]{\stackrel{#1}{\leadsto}} 		%transitions
\newcommand{\faultfun}{\SemanticOp{Fault}}		%function that flips stuff
\newcommand{\set}[2]{\labarrow{#1}_{#2}}
\newcommand{\SE}[2]{\langle ~#1,~#2 ~\rangle}
\newcommand{\proba}[2]{\ensuremath{\SemanticOp{Pr}}_{#1}(#2)}
\newcommand{\run}[2]{\SemanticOp{run}_{#2}(#1)}
\newcounter{psarrow}
\newcommand{\Hloc}{H}
\newcommand{\Lloc}{L}
\newcommand{\pprog}[1]{\restr{#1}{\progdom}}
\newcommand{\prog}{P}
\newcommand{\progdom}{\SemanticDomain{ProgLoc}}
\newcommand{\lowdom}{\SemanticDomain{LowLoc}}
\newcommand{\highdom}{\SemanticDomain{HighLoc}}
\newcommand{\pubdata}{\Lloc}
\newcommand{\ppubdata}[1]{\restr{#1}{\lowdom}}
\newcommand{\pprivdata}[1]{\restr{#1}{\highdom}}
\newcommand{\privdata}{\Hloc}
\newcommand{\data}{M}
\newcommand{\ddata}[1]{\restr{#1}{\lowdom \cup \highdom}}
\newcommand{\bistate}[2]{(#1,#2)}
\newcommand{\aug}[1]{\SemanticDomain{#1^{+}}}
\newcommand{\transp}[1]{#1^{\infty}} % \square}}
\newcounter{sarrow}
\newcommand\poarrow[1]{%
\stepcounter{sarrow}%
\mathrel{\begin{tikzpicture}[baseline= {( $ (current bounding box.south) + (0,-0.5ex) $ )}]
\node[inner sep=.5ex] (\thesarrow) {$\scriptstyle #1$};
\path[draw,<-,decorate,
 decoration={zigzag,amplitude=0.7pt,segment length=1.2mm,pre=lineto,pre length=4pt}] 
   (\thesarrow.south east) -- (\thesarrow.south west);
\end{tikzpicture}}%
}
\newcommand{\faultArrow}[1]{\poarrow{#1}}
\newcommand{\gensetrun}[1]{\rho_{#1}}
\newcommand{\enabFamily}{\phi}
\newcounter{esarrow}
\newcommand\trpoarrow[1]{%
\stepcounter{esarrow}%
\mathrel{\begin{tikzpicture}[baseline= {( $ (current bounding box.south) + (0,-0.5ex) $ )}]
\node[inner sep=.5ex] (\thesarrow) {$\scriptstyle #1$};
\path[draw,<<-,decorate,
 decoration={zigzag,amplitude=0.7pt,segment length=1.2mm,pre=lineto,pre length=4pt}] 
   (\thesarrow.south east) -- (\thesarrow.south west);
\end{tikzpicture}}%
}
\newcommand{\termsymbol}{\infty}
\newcommand{\termtransp}[1]{\labarrow{#1}_\termsymbol} 
\newcommand{\whsemtt}[1]{\labarrow{#1}_\termsymbol}
\newcommand{\transtrace}[1]{\mathcal{T}_#1}
\newcommand{\transtracena}{\mathcal{T}}
\newcounter{pesarrow}
\newcommand{\trfaultArrow}[1]{\trpoarrow{#1}}
\newcommand{\SemanticDomain}[1]{\mathit{#1}}
\newcommand{\SemanticOp}[1]{\mathsf{#1}}
\newcommand{\Fault}{\SemanticOp{Fault}}
\newcommand{\flip}{\SemanticOp{flip}}
\newcommand{\memflip}{\SemanticOp{flip}}
\newcommand{\trace}{\SemanticOp{trace}}
\newcommand{\posft}{Possibilistic }
\newcommand{\wraprule}[1]{[#1]}
\newcommand{\update}{\mapsto} % {:=} {\leftarrow}
\newcommand{\whsem}[1]{\labarrow{#1}}
\DeclareMathAlphabet{\mathsc}{OT1}{cmr}{m}{sc}
\newcommand{\ct}[2]{\stackrel{#1}{\rightarrow}_{#2}}
\newcommand{\wskip}{\ensuremath{\SemanticOp{skip}}}
\newcommand{\assign}[2]{\ensuremath{#1 := #2}}%\SemanticOp{#1 := #2}}}
\newcommand{\assignName}{:=}
\newcommand{\wif}[3]{\ensuremath{\SemanticOp{if}\  #1\ \SemanticOp{then}\ #2 \ \SemanticOp{else}\ #3}}
\newcommand{\wifa}[3]{
\begin{array}{ll}
\SemanticOp{if} &  #1  \\
                             & \SemanticOp{then}\ #2 \\
                             & \SemanticOp{else}\ #3
\end{array}}
\newcommand{\wifName}{\ensuremath{\SemanticOp{if}}}
\newcommand{\while}[2]{\ensuremath{\SemanticOp{while}\ #1\ \SemanticOp{do}\ #2}}
\newcommand{\whileName}{\ensuremath{\SemanticOp{while}}}
\newcommand{\var}{\ensuremath{\SemanticDomain{Var}}}
\newcommand{\regvar}{\ensuremath{\SemanticDomain{Reg}}}
\newcommand{\allvari}{\ensuremath{\mathcal{V}}}
\newcommand{\wop}[2]{\ensuremath{#1\ \SemanticOp{op}\ #2}}
\newcommand{\ec}{\SemanticOp{R}}
\newcommand{\findot}{\bullet}
\newcommand{\mem}{M}
\newcommand{\whstate}[2]{\langle #1 \stateSplit #2\rangle}
\newcommand{\whlts}{\mathcal{S}_w}
\newcommand{\iwhlts}{\mathcal{S}_{i-w}}
\newcommand{\etypeproduce}{\hookrightarrow}
\newcommand{\seclev}{\SemanticOp{level}}
\newcommand{\esecan}[2]{\langle #1, #2\rangle}
\newcommand{\code}[1]{\{#1\}}
\newcommand{\subst}[2]{#1#2}
\newcommand{\activeReg}{A}
\newcommand{\emptyRegAll}{\{\}}
\newcommand{\regenv}{\ensuremath{\Phi}}
\newcommand{\conn}[3]{
\ifnum\pdfstrcmp{#3}{R}=0
	[#1\stackrel{#3}{\leftarrow}#2]
\else\ifnum\pdfstrcmp{#3}{W}=0
	[#1\stackrel{#3}{\rightarrow}#2]
\else 
	[#1\stackrel{#3}{\leftrightarrow}#2]
	\fi\fi
}
\newcommand{\breakconn}[1]{[#1\not \rightarrow]}
\newcommand{\envinters}{\sqcap}
\newcommand{\writeMode}{W}
\newcommand{\readMode}{R}
\newcommand{\assemprog}{P}
\newcommand{\typeproduce}{\hookrightarrow}
\newcommand{\enta}[2]{|-^{#1}_{#2}}
\newcommand{\eenta}[2]{||-^{#1}_{#2}}
\newcommand{\timee}{\SemanticOp{t}}
\newcommand{\secan}[2]{\langle #2, #1\rangle}
\newcommand{\valHigh}{H}
\newcommand{\valLow}{L}
\newcommand{\termlattice}{\mathcal{L}_\timee}
\newcommand{\writeMap}{\SemanticOp{write}}
\newcommand{\termMap}{\SemanticOp{term}}
\newcommand{\nextTerm}{\gg}
\newcommand{\regcorresp}{\stackrel{\rightarrow}{\equiv}}
\newcommand{\compa}[1]{corr(#1)}
\newcommand{\shado}[1]{shw(#1)}
\newcommand{\tdontc}{\tconst~\valLow}
\newcommand{\ttop}{\tconst~\valHigh}
\newcommand{\tclub}{\uplus}
\newcommand{\tlub}{\mathbin{\mathrlap{\sqcup}\scriptstyle -}}
\newcommand{\wconst}{\SemanticOp{Wr}}
\newcommand{\tconst}{\SemanticOp{Trm}}
\newcommand{\wdontk}{\wconst~\valLow} % changed back!
\newcommand{\whigh}{\wconst~\valHigh}
\newcommand{\wlub}{\sqcup}
\newcommand{\strongsim}{\approx}
\newcommand{\typeconc}[7]{
#1, #2 |- #3 \typeproduce #4,~ #7, ~#5, ~#6
}
\newcommand{\typeconce}[6]{
\begin{array}{lllllllll}
#1 & #6 & #2 & \typeproduce & #3 & , & #4 & , & #5 \\
\end{array}}
\newcommand{\condtype}[3]{#1?#2:#3}
\newcommand{\typeconcel}[6]{#1\ \ #6 \ #2 \ \typeproduce \ #3 \ , \ #4 \ , \ #5}
\newcommand{\whtoasse}{\alpha}%{w2a}
\newcommand{\vartoptr}{v2p}
\newcommand{\Constant}{\ensuremath{\mathbb{W}}} % vType
\newcommand{\Lab}{\ensuremath{\SemanticDomain{Lab}}}
\newcommand{\loadName}{\ensuremath{\SemanticOp{load}}}
\newcommand{\movekName}{\ensuremath{\SemanticOp{movek}}}
\newcommand{\moverName}{\ensuremath{\SemanticOp{mover}}}
\newcommand{\outName}{\ensuremath{\SemanticOp{out}}}
\newcommand{\addName}{\ensuremath{\SemanticOp{add}}}
\newcommand{\jmpName}{\ensuremath{\SemanticOp{jmp}}}
\newcommand{\sub}{\ensuremath{\SemanticOp{sub}}}
\newcommand{\jlez}{\ensuremath{\SemanticOp{jlez}}}
\newcommand{\mul}{\ensuremath{\SemanticOp{mul}}}
\newcommand{\jzName}{\ensuremath{\SemanticOp{jz}}}
\newcommand{\load}[4]{\ensuremath{\loadName^{{#1}}_{{#2}}\ #3 \ #4}}
\newcommand{\storeName}{\ensuremath{\SemanticOp{store}}}
\newcommand{\store}[4]{\ensuremath{\storeName^{#1}_{#2}\ #3 \ #4}}
\newcommand{\movek}[2]{\ensuremath{\SemanticOp{movek}~#1~#2}}
\newcommand{\mover}[2]{\ensuremath{\SemanticOp{mover}~#1~#2}}
\newcommand{\problemInstr}[1]{\ensuremath{\SemanticOp{ \bullet }}}
\newcommand{\jmp}[1]{\ensuremath{\SemanticOp{jmp}~#1}}
\newcommand{\jz}[2]{\ensuremath{\SemanticOp{jz}~#1~#2}}
\newcommand{\out}[2]{\ensuremath{\SemanticOp{out}~#1~#2}}
\newcommand{\nop}{\ensuremath{\SemanticOp{nop}}}
\newcommand{\binOp}{\ensuremath{\SemanticOp{op}}}
\newcommand{\aopName}{\ensuremath{\SemanticOp{op}}}
\newcommand{\res}{\ensuremath{\SemanticOp{res}}}
\newcommand{\concat}[2]{\ensuremath{#1 \concatprograms #2}}
\newcommand{\concatprograms}{\mathbin{+\!\!\!\!\,+}} % {+\!\!\!+}
\newcommand{\RegName}{\ensuremath{\SemanticDomain{R}}}
\newcommand{\nextpc}{\ensuremath{\pc^{+}}}
\newcommand{\HeapName}{\mathcal{M}}
\newcommand{\CodeName}{P}
\newcommand{\Heap}[1]{\ensuremath{\HeapName(#1)}} %changed. Heaps use words
\newcommand{\pc}{\ensuremath{\SemanticDomain{pc}}}
\newcommand{\optional}[1]{[#1]} % optional syntactic entity
\newcommand{\asem}[1]{\labarrow{#1}}
\newcommand{\emptyLab}{\epsilon_{lab}}
\newcommand{\emptya}{\epsilon_I}
\newcommand{\memequiva}{\rightleftharpoons}
\newcommand{\stateSplit}{,} % was {|}
\newcommand{\absState}[3]{\ensuremath{\langle #1, \pc\stateSplit #2, #3 \rangle}}
\newcommand{\assemlts}{\mathcal{A}}
\newcommand{\absStatePost}[3]{\ensuremath{\langle #1, \nextpc\stateSplit #2, #3 \rangle}}
\newcommand{\absStateJump}[4]{\ensuremath{\langle #1, #2\stateSplit #3, #4
    \rangle}}
\newcommand{\asselts}{\mathcal{S}_a}
\newcommand{\low}{\ensuremath{\SemanticDomain{low}}}
\newcommand{\high}{\ensuremath{\SemanticDomain{high}}}
\author{Filippo Del %
\and David Sands
\and Alejandro Russo}
\begin{document}
%\frontmatter

%\pagestyle{headings}
\title{Type-Directed Compilation \\ for Fault-Tolerant Non-Interference}

\author{
Filippo Del Tedesco\inst{1}
\and
David Sands\inst{2}
\and
Alejandro Russo\inst{2}
}

\institute{Admeta AB, Sweden\\
\and
Chalmers University of Technology, Sweden\\
}
\maketitle

% longer abstract in abstract.txt
\begin{abstract}
%Environmental noise (e.g. heat, ionized particles, \dots) can alter stored bits in hardware components. These interferences, called transient faults, are typically addressed by methods that aim at preserving functionalities and are validated statistically. Said methods are inadequate for systems that manipulate sensitive data, which require stronger guarantees. We propose an information flow strategy for reasoning and enforcing security in the presence of transient faults, without focusing on functional aspects (computation can produce erroneous results).
%We provide a model for describing systems and environments inducing transient faults that supports a formalization of security in terms of probabilistic non-interference (PNI). We simplify the security argument by introducing a bisimulation-oriented non-interference (SS) and showing that it is stronger than PNI via a simpler (but equivalent), possibilistic version of PNI. 
%We develop an enforcement mechanism for SS tailored on a simple RISC architecture whose code memory and program counter are assumed fault-free. The method is defined as a type-directed compilation scheme that produces RISC code for which SS holds.
Environmental noise (e.g. heat, ionized particles, etc.) causes transient faults in hardware, which lead to corruption of stored values. Mission-critical devices require such faults to be mitigated by fault-tolerance -- a combination of techniques that aim at preserving the functional behaviour of a system despite the disruptive effects of transient faults. Fault-tolerance typically has a high deployment cost -- special hardware might be required to implement it -- and provides weak statistical guarantees. It is also based on the assumption that faults are rare. In this paper, we consider scenarios where security, rather than functional correctness, is the main asset to be protected. Our contribution is twofold. Firstly, we develop a theory for expressing confidentiality of data in the presence of transient faults. We show that the natural probabilistic definition of security in the presence of faults can be captured by a possibilistic definition. Furthermore, the possibilistic definition is implied by a known bisimulation-based property, called Strong Security.
Secondly, we illustrate the utility of these results for a simple RISC architecture for which only the code memory and program counter are assumed fault-tolerant. We present a type-directed compilation scheme that produces RISC code from a higher-level language for which Strong Security holds -- i.e. well-typed programs compile to RISC code which is secure despite transient faults. In contrast with fault-tolerance solutions, our technique assumes relatively little special hardware, gives formal guarantees, and works in the presence of an active attacker who aggressively targets parts of a system and induces faults precisely. 
\end{abstract}

\section{Introduction}
% In this paper we show how to prevent transient faults from violating the security of a sensitive data. The approach combines ideas from language-based security by intoducing a type-directed compilation scheme which only accepts and compiles programs which it can guaranteed free of dangerous information flows, even in the presence of attacker-induced faults. The approach relies on only a modest amount of hardware protection, namely that instruction codes and the program-counter register are protected from faults.  These results are made possible by theoretical contributions which show how one can reason about the fundamentally probabilistic notion of security, and an infinite variety of attackers, with a simple non stochastic bisimulation-like condition.

% \para{Background}
Transient faults, or soft errors, are alterations of the state in one or more electronic components, e.g.,\ bit flips in a memory module \cite{Normand:SeuGround:1996}.  In some situations, we are willing to accept that a system may fail due to transient faults, for example, that an occasional message may be lost or corrupted. The problem we address in this paper is how to prevent transient faults (which will be referred to as ``faults'' in the rest of the paper) from compromising the security of the system -- for example, by allowing an attacker to access a secret.

%\draftnote{F: add reference}
%
% Since transient faults can affect the behavior of a device, they are known to be a severe threat for mission critical architectures (i.e. medical and aero-spatial equipments \cite{Trigano:faultpacemaker:2011,Ortmeier:faultjet:2012}). However, t

It has indeed been shown that bit flips can effectively be used to attack, e.g., the AES encryption algorithm and reveal the cipher key \cite{Wright:aesfault:2003}. More recently, it has been shown that faults occurring in the computation of an RSA signature can permit an attacker to extract the private key from an authentication server \cite{Pellegrini:Rsafault:2010}.
% Moreover, the work also provides a practical implementation of the attack, % which is mounted against the OpenSSL library executed in a SPARC-based microprocessor system.
%for example\draftnote{F: WARNING! copypasted from ESSOS14}, in \cite{JoC01,CHES02a,ISTP07}
%\draftnote{Earlier paper: http://goo.gl/GJ4x2}
% the authors present a theoretical cryptographic model in which they
% show
%a single bit flip, regardless of how is triggered, can compromise the value of a secret key in both public key and authentication systems. % Such results are practically verified in
% \cite{CHES02a,ISTP07}, where faults are induced by variations in the  power provided to the cryptographic device.
%
Moreover, it is not just the security of crypto systems that is potentially vulnerable. Risks are greatly amplified if the attacker can influence the code executed by a system that can experience faults: it has been demonstrated that a single fault can cause (with high probability) a malicious but well-typed Java applet to violate the fundamental memory-safety property of the virtual machine \cite{SP03}.  % By deploying a type-compliant but malicious applet, the authors show how a limited number of transient faults can corrupt runtime memory pointers and cause arbitrary code to be executed with high probability.

%% Well-known to the ares community:
% Transient faults are a consequence of physical perturbations coming from the environment in which a circuit  operates. Considering trends in hardware design (sizes and power utilization are decreased, whereas transistor density grows), the impact of transient faults in the future generations of processors is only expected to become more significant.

Traditional countermeasures against %transient
faults aim to make devices fault-tolerant, i.e. able to preserve their functional behavior despite soft errors. Most fault-tolerance mechanisms are based on redundancy: resources are replicated (either in software or in hardware) so that it is possible to detect, if not repair, anomalies. Typically, such solutions are designed for protection against a very simple fault model, in which a limited number of %transient
faults are assumed. Moreover, the formal guarantees of software-based fault-tolerance mechanisms, according to Perry et al.\ \cite{PLDI07}, are usually not stated -- it is more common that their efficacy is explored statistically.  In reasoning about security it is more difficult to give a statistical model of the environment's behaviour  -- it might be that the environment is an active adversary with a laser and a stopwatch (cf. \cite{OpticalFaults}) rather than just passive background radiation. Similarly, experimental evidence based on typical deployment of representative programs is not likely to be so useful if an attack, as it often does, requires an atypical deployment of an unusual program.

%\para{Overview and Contributions}
\para[]{Overview of the Paper and Contributions}
In order to be precise about the guarantees that we provide and the assumptions we make, we develop a formal, abstract model  of  fault-prone systems, attackers who can induce faults based on the observations made of the system's public output, and the resulting interaction between the two (Section \ref{subsec:system}). From this, we define our central notion of security in the presence of faults, \emph{probabilistic fault-tolerant non-interference} (PNI). A system is secure in the presence of faults if, for any attacker, the probability of a given public output is independent of the secrets held by the system (Section \ref{sec:security}).

Reasoning directly about fault-tolerant noninterfence is difficult because (i) it demands reasoning about probabilities, and (ii) it quantifies over all possible attackers (in a given class).

%\para[]{Our first contribution}
\textbf{Our first contribution}
tackles these difficulties with two key theorems. \textbf{Theorem \ref{thm:pniisponi}} (Section \ref{sec:possibft}) shows that PNI can be characterised exactly by a simpler \emph{possibilistic} definition of noninterference, defined by assuming that the exact presence and location of faults are observable to the attacker. This removes difficulty (i), the need to reason about probabilities. \textbf{Theorem~\ref{thm:ssimpliesponi}} (Section \ref{strongsecurity}) then shows that the possibilistic definition is implied by a form of strong bisimulation property, adapting ideas from its use in scheduler independent security of concurrent programs \cite{Sabelfeld:Sands:Multithreaded}. This theorem eliminates difficulty (ii), enabling PNI to be proved without explicit quantification over all attackers.

%\para[]{Our second contribution}
\textbf{Our second contribution} (Section \ref{Sec:instance})
illustrates the applicability of Theorem~\ref{thm:ssimpliesponi} in a specific
setting: a fault-prone machine modeling a program running on a RISC-like
architecture. The aim is to give a sound characterisation of RISC programs
which, when placed in the context of the machine, give a system which is secure
despite faults.  Our approach combines ideas from security type systems for
simple imperative programs with a \emph{type-directed compilation}
scheme. Roughly speaking, our type-directed compilation of simple imperative
programs guarantees that a well-typed program compiles to RISC code which, for
this architecture, is secure despite faults.
\red{A key assumption for this result to hold is assuming fault-tolerant code memory and program counter, while general purpose registers and data memory are susceptible to bit flips.}

Our innovative perspective on the problem of guaranteeing security in the
presence of transient faults combines a lot of concepts from other works, coming
from both security and dependability literature. In order to clarify these
connections, we discuss related works in Section \ref{Sec:relat}, whereas in
Section \ref{Sec:dis} we focus on the main limitations of our results (including
those shared with other approaches).  We complete the paper in
Section \ref{Sec:conc}, where we briefly summarize the conclusions of our
study.% \footnote{

%\begin{quotation}
%\it   Due to space limitations, the full formal details of this work, and in
 % particular all of the proofs, are presented in the Appendix.
%\end{quotation} %\draftnote{F: POPL policy is different wrt ARES}
% } %footnote

%http://courses.cs.washington.edu/courses/cse590g/00au/p25-reinhardt.pdf

%http://europace.oxfordjournals.org/content/15/1/60.short

%Preliminary references
%[1] Experimental study of neutron-induced soft errors in modern cardiac pacemakers
%http://link.springer.com/article/10.1007/s10840-011-9609-6
%[2] Impact of soft errors in a jet engine controller
%http://link.springer.com/chapter/10.1007/978-3-642-33678-2_19#page-1
%[3] Neutron-induced soft errors in graphic processing units
%http://ieeexplore.ieee.org/xpl/articleDetails.jsp?tp=&arnumber=6353714&url=http%3A%2F%2Fieeexplore.ieee.org%2Fxpls%2Fabs_all.jsp%3Farnumber%3D6353714
%[%4] A bit-flip attack
%https://media.blackhat.com/bh-us-11/Dinaburg/BH_US_11_Dinaburg_Bitsquatting_WP.pdf

\section{A Theory for Probabilistic Fault-Tolerant Non-Interference}\label{Sec:allnonint}

%\draftnote{F: add an intro}

We begin (Section \ref{subsec:system}) by setting up a rigorous but abstract semantic model  for  systems that can experience transient faults, and for environments inducing these faults. Then, we establish a model for their interactions and a security definition (Section \ref{sec:security}) for the composition of a system with a fault environment as \emph{probabilistic fault-tolerant non-interference} (PNI).
%
 %with the %goal to have
%systems which enjoy \.
%The fault model is described by separately modeling a family of abstract environments capturing the way that faults might affect the underlying hardware.  Abstract environments exhibit a probabilistic behavior, as expected, but unlike other typical approaches used in the study of fault tolerance, we model active attackers, who make ``low'' observations of the system, including timing, and can induce faults based on that. The faulty semantics of systems is then obtained by composing the system model with the environment model. Then, we provide a standard definition of probabilistic noninterference for the system $+$ environment combination to characterize %with the %goal to have
%systems which enjoy \emph{probabilistic fault-tolerant non-interference}.
%
We continue (Section \ref{sec:possibft}) by simplifying the security model introduced for PNI in two steps. Firstly, we propose a simpler, nondeterministic fault model for fault-prone systems. This model leads to a more straightforward notion of security called \emph{possibilistic fault-tolerant non-interference} (PoNI)\footnote{A similar but weaker notion of security has been presented in \cite{DelTedesco+:FTNInterference}.}%. Despite being simpler, PoNI is shown to be equivalent to PNI.
which is then shown to be equivalent to PNI.
Finally, we introduce (Section \ref{strongsecurity}) the notion of Strong Security (SS), a bisimulation-based security condition which is proved stronger than PoNI,  hence PNI.
%\draftnote{F: three paragraphs changed to reflect the paper structure}

%the second part  we present an alternative security property, called Strong Security, that proves to be sufficient to guarantee PNI without reasoning about transient faults. Strong security is the target of the enforcement technique for probabilistic fault-tolerant non-interference presented in Section \ref{Sec:instance}.

\subsection{Probabilistic Fault-Tolerant Non-Interference}
\label{subsec:system}
We model a fault-prone system as a deterministic labelled transition system.
%Our model of a system is a labelled transition system.
Since we have to reason about bit flips, we model the state of a system as just a collection of bits, each of which is identified by a unique \emph{location}.
The locations are partitioned into a \emph{fault tolerant part} of the system (e.g. memory with error correcting codes), and a faulty part. Only the faulty locations are affected by soft errors from the environment.
The labels of the transition system model outputs of the fault-prone system and ``clock ticks'' ($\tau$) which mark the discrete passage of time. We assume that some outputs can be distinguished by an external observer, whereas the others appear as $\tau$. %\draftnote{F: added a sentence about Lact and the rest}
% they include %  The labels include
%a subset of ``low'' labels representing actions that are directly observable to an attacker.
% move to macro file when done
%Transitions are deterministic, namely for any state there is at most one outgoing transition\draftnote{F: sufficient to state determinism? It seems suspicious}.

%\draftnote{Should be $\Loc$ not $\SystemStates$ that defines a system}
\begin{definition}[Fault-prone System]\label{def:System}
A fault-prone system $\System$ is defined as $\System = \{ \Loc, \Act,  {-->} \subseteq \SystemStates \times \Act \times \SystemStates  \}$, where:
\begin{itemizeC}
\item
% any state $S\in \SystemStates$ is a function in $\Loc \rightarrow \{ 0,1\}$.
The set $\Loc$ contains all the locations (addresses) of the system. It is partitioned into a fault-tolerant subset and a faulty one, namely $\tolerant$ and $\faulty$. Fault-tolerant locations are not affected by soft errors, whereas bits stored in the faulty part of the hardware can get flipped because of transient faults.
%We assume that $\faulty \subseteq \Loc$ includes all the faulty locations of the system and $\tolerant \subseteq \Loc$ includes all the fault-tolerant locations. We also assume that $\faulty \cup \tolerant$ is a partition of $\Loc$. We write $\floc{S}$ for the restriction of $S$ to faulty locations, and $\ftloc{S}$ for the restriction of $S$ to fault-tolerant locations.

%  where $T$ is the \emph{fault tolerant} part of the state, and $F \in$, the  \emph{fault prone} part, modelled as a mapping from some locations $\Loc$ to their bit value.
%\draftnote{F: can it be so abstract? How to talk about low equivalence?}
%NO HIGH ACT pass
%\item $\Act$ is defined as the union of the disjoint sets $\LowAct \Union \HighAct \Union \{\tau\}$. We assume that outputs are in $\LowAct \Union \HighAct$. Events in $\LowAct$ are  ``publicly''  observable -- they are actions visible to the attacker, whereas outputs in $\HighAct$ are not. We use $\tau$ as a distinguished silent action marking the passage of time.
\item $\Act$ includes system outputs and a distinguished silent action $\tau$ marking the passage of time. We assume that ``public'' observations (what an attacker can see) are limited to events in $\LowAct \subseteq \Act$, whereas any other action performed by the system is observed as $\tau$.
\item The relation $-->$ formalizes the system behavior. Given the set of all possible states for the system, namely $\SystemStates = \Loc \rightarrow \{ 0,1\}$, the relation $-->$ models how the system evolves.
We write transitions in the usual infix manner $S-a-> S'$ for $S,S' \in \SystemStates$. The system is deterministic in the sense that for any $S \in \SystemStates$ there is at most one state $S'$ and one action $l$ and a transition $S \labarrow{l} S'$.
\end{itemizeC}
\end{definition}
%We write transitions in the usual infix manner $S-a-> S'$. % The choice of metavariables ``C'' and ``M''  reflects our instantiation in Section~\ref{sec:TheEventualLabelForSectionThree} where we will assume that the \emph{Code} is fault tolerant, but that the rest of the machine (the various types of \emph{Memory}) is not.

%\subsection{Probabilistic Fault-Tolerant Non-Interference}
%\label{subsec:environment}
%Our aim will be to describe the overall behaviour of a fault-prone
%system as a probabilistic labelled transition system.  One simple way
%to do this is to view the environment as inducing bit-flips in the
%fault-prone locations with some fixed (small) probability.

%We want to explicitly model the fact that both fault-tolerant and faulty hardware components can be used to build a system. This is explicitly represented by partitioning the set of locations $\Loc$ into a fault-tolerant subset and a faulty one, namely $\tolerant$ and $\faulty$. Fault-tolerant locations are not affected by soft errors: this can correspond to locations in a memory with error correcting codes, for example. Bits stored in the faulty part of the hardware can get flipped because of transient faults.

We now introduce a probabilistic fault model, which induces transient faults in $\faulty$. We refer to this model as the \emph{environment}, or, synonymously as the \emph{attacker}.

% MISSING Beside allowing a more realistic fault modeling, this approach realizes the principle of separation of concerns: while the system describes computations, the environment models the occurrences of faults.

A simple environment can be modeled as an agent that induces bit flips in the
faulty locations with some fixed (small) probability. Unfortunately, this representation does not capture many realistic scenarios: for example, some locations in the system may be more error-prone than others, and (due to high densities of transistors) the
probability of a location being flipped may be higher if a neighboring
bit is flipped. We could therefore consider environments that can induce
faults according to a probability distribution over
the powerset of faulty locations, where the probability associated with a
given set of locations $L$ is the probability that exactly the bits of
$L$ are flipped before the next computation step occurs. %This could be modelled as a probability distribution over $\Loc \Union \{ \OK\}$, where $\OK$ models the probability that there is no error in the current computational step.
However, such static model of the environment is not sufficient to model an active attacker. An active attacker may have physical access to the system (e.g., a tamper-proof smart card), and may be able to influence the likelihood of an error occurring at a specific location and time with high precision (see, e.g., \cite{OpticalFaults}).
What is more, the attacker may do this in a way that depends on the previous %history
observations, namely the passage of time and the publicly observable actions. We therefore formalize these capabilities of an environment as follows (recall that $\pset(A)$ here denotes the \emph{powerset} of a given set $A$).%that the systems performs.
%This leads to the definition of an fault environment which observes the actions of a system and which provides a probability distribution over $\Loc \Union \{ \OK\}$

\begin{definition}[Fault environment]
\label{def:Env}

Consider a fault-prone system  %$\System = \{ \Loc, \Act,  {-->} \subseteq \SystemStates \times \Act \times \SystemStates  \}$ consists of:
$\System = \{ \Loc, \Act,  {-->}   \}$. A fault environment $(\errlts,\faultfun)$ for $\System$ consists of:
  \begin{itemizeC}
  \item  a labelled transition system
  $\errlts= \langle \EnvironStates, \LowAct \cup \{\tau\}, {\et{}} \subseteq \EnvironStates \times  (\LowAct \cup \{\tau\}) \times \EnvironStates \rangle$, where $\EnvironStates$ represents the set of environment's states;
  \item a function
    $\faultfun
    \in \EnvironStates -> \pset(\faulty) -> [0,1]$ such that  for
    all states $E \in \EnvironStates$, we have that $\Fault(E)$ is a probability
    distribution on sets of locations.
  \end{itemizeC}
\end{definition}

We require that for all states $ E \in \EnvironStates$ and for all actions $a \in \LowAct \cup \{\tau\}$ there exists a unique state $E' \in \EnvironStates$ such that $E \et{a} E'$.
Intuitively, the state $E \in \mathcal{E}$
%So the state of the fault environment
determines the probability that
a given set of locations (and no others) will experience a fault in the current
execution step of the system.
At each step, the state of the attacker evolves by the observation of ``low'' events and
the passing of time.

\begin{example}
We illustrate the use of Definition \ref{def:Env} by
characterizing the simplest most uniform fault-inducing environment:
a bit flip may occur in any  location with a fixed probability $\epsilon$.
% made  the system, which are either public outputs or the
%passage of time.
%    The simple case of an fault environment over locations $Loc$ and
%    actions $\Act$ which models errors being equally likely in every
%    fault-prone bit of the system, and with a fixed probability
%    $\epsilon$,
To model this as a fault environment, we need only a trivial transition system
involving the single state $\bullet$ ($\EnvironStates = \{\bullet\}$), whose transitions are
 $\ForAll{a \in \LowAct \cup\{\tau\}} { \bullet \et{a} \bullet }$.
 As for the $\faultfun$ function consider a set $L \subseteq \faulty$ where $|L| = k$ and
$| \faulty| = n$. Then the probability that flips occur in all locations in $L$ and
nowhere else is equal to the probability of faults in every location in
$L$,  namely $\epsilon^{k}$,  multiplied by the probability of the remaining
 locations \emph{not} flipping, namely $(1-\epsilon)^{n-k}$. Hence
\[
\Fault( \bullet ) (L) = \epsilon^{k} (1-\epsilon)^{n-k} \quad \text{where $k
= |L|$ and $n = |\faulty|$}
% \begin{cases}
%   \epsilon         & a \in \Loc
% \\
%   1 - k \epsilon & \textrm{$a = \OK$ and $|\Loc| = k$}
% \end{cases}
\]

% As expected, the probability of not having bit flips is the probability of not having any bit flip in any of the $k$ memory locations in $\Loc$, i.e., $1-k\epsilon$.
 \end{example}

%  $Sys=\{ \mathcal{S}, \rightarrow, Act \}$ Let $\LowAct\subseteq Act$ be the subset of actions that are visible to an external observer. An fault environment $Err(Sys)=\{ \mathcal{E}, \et{}, \LowAct\cup\{\tau\}, F \}$ is defined as an extended labelled transition system where $F: \mathcal{E} \times  (dom(M)\cup\{OK\}) \rightarrow [0,1]$ is a probability distribution over locations of $M$ and a dummy location $OK$.
% \end{definition}

We now define,
%illustrate (informally, for space reasons the details are presented in
\ifthenelse{\boolean{long}}
{in outline (details are discussed in Appendix \ref{app:alldefs}),}
{in outline,}
%{the extended version of this paper \cite{deltedesco+typecompftni})}
 how fault environments are composed together with fault-prone systems. Some preliminary considerations are needed. We need to model the physical modification performed by the environment on the faulty part of the system. For this, a function $\flip$ is defined as  $\flip(S,L) = S [l |-> \neg S[l], l \in L ]$, which gives the result of flipping the value of every location in $L$ of state $S$  (assuming $L \subseteq \faulty$).
%REMOVING HighACT
%We also assume there is a function $\low \in \Act \rightarrow \LowAct \cup \{\tau\}$ that maps any action in $\HighAct$ to $\tau$, and behaves as the identity in any other case.
We also need to formalize that an attacker can distinguish only a subset of all possible actions of the system. This is obtained by assuming that there is a function $\low \in \Act \rightarrow \LowAct \cup \{\tau\}$ that behaves as the identity for actions in $\LowAct$, and maps any other action to $\tau$. This provides the public view of the system's output.
Finally, we say that a state $S$ is \emph{stuck} if there is no transition from that state.

\begin{definition}[Fault-prone System and Environment]
\label{def:FS}% \draftnote{F: is it ok to specify the informal characterization with the text between parenthesis? Observe that the definition changed according to Geoffrey's comments}
% (Informal characterization)
Consider a fault-prone system $\System = \{ \Loc , \Act ,  {-->} \}$,
and an environment $\Environment=(\errlts,\faultfun)$ where $\errlts= \langle \EnvironStates,
\LowAct \Union\{\tau\}, \et{}\rangle$. The composition of $\System$ and $\Environment$, defined as $\System \times \Environment = << \SystemStates \times \EnvironStates, (\Act ,[0,1]), \set{}{} >>$ where $\SystemStates$ is the set of all possible states for $\System$, defines a labelled transition system whose states are pairs of the system state and the environment state, and with transitions labelled with an action $a$ and a probability $p$, written $\set{a}{p}$.  Transitions
%, that are labelled with the action $a$ performed by the system  and the probability $p$ induced by the environment, are
%written as $\set{a}{p}$
depend on the state of the system as follows:
\begin{itemizeC}
\item If the system state $S$ is not stuck, it undergoes a transient fault according to the $\flip$ function; then, providing the flipped state is not stuck, the execution takes place, namely
%\draftnote{F: setEmpty is used instead of setEmpty()()}
%
%( where we write transitions using the syntax $T|F || E \ct{a}{p} %T'|F' || E' $) by the following rules:
%
% when the system is deterministic, its global semantics will be fully probabilistic, and can thus be viewed as a  discrete Markov chain labelled with actions
%
% \[
% \inference[(Step)]{  T|F  -a->  T' | F'  &
%                            E \et{\low(a)} E'  &  p = \Fault(E)(OK)
% }
% { \TFE{T}{F}{E} \ct{a}{p} \TFE{T'}{F'}{E'}
% }
% \]
% \\
\[
\inference[\rulename{Step}]{
% \exists a',S_1 \mbox{ such that } S \labarrow{a'} S_1\\
 \pi=\{L~|~L \in \pset(\faulty) \mbox{ and } \flip(S,L)  -a->   S'\}  \\
  p = \Sigma_{L \in \pi} \Fault(E)(L) &
   E \et{\low(a)}E'
}
{ \SE{S}{E} \set{a}{p}  \SE{S'}{E'}}
\]
Observe that, in general, there might be several subsets of $L\subseteq \faulty$ such that $\flip(S,L)$ results in a state that (i) performs the same action $a$ and (ii) performs a transition to the final state $S'$. For this reason, the probability associated with the rule corresponds to the sum of all probabilities associated with locations in the set $\pi$.

\item If the system state is stuck, or it is made stuck by a transient fault, it does not perform any action. However, both the environment and the composition of the system state with the environment state evolve as if the system had performed a $\tau$ action. We enforce these restrictions because an error environment should not be able to distinguish between an active but ``silent'' and a stuck state.  Notice that this way of modeling the composition of systems and environments guarantees that any state of the composition can progress. %\draftnote{F: rephrased a bit}
\end{itemizeC}
\end{definition}
% Rule $\textsf{(Step)}$ denotes a transition of the fault prone system triggering event $a$, where no bit flips occur ($p = \Fault(E)(OK)$). However, the fault-inducing environment gets affected by public values produced by $a$ as well as the passing time ($E \et{\low(a)} E'$), which in turn might affect the probability distribution of faults.

This form of transition system is sometimes known as a \emph{fully probabilistic labelled transition system}, or a \emph{labelled markov process}.

%We now show that the $\System \times \Environment$ is a probabilistic LTS.

\subsection{Defining Security}\label{sec:security}

We can now reason about security of a system operating in an environment: $ \System \times \Environment$. Firstly, we define the observations that the attacker can perform. Then, we define when sensitive data remains secure despite the attacker observations.

Our attacker sees sequences of actions in $\LowAct \cup \{ \tau \}$, called \emph{traces}, and measures their probability, but does not otherwise have access to the state of the fault-prone system.

We say that the sequence $r=Z \ct{a_0}{p_0} Z_1 \ldots Z_{n-1} \ct{a_{n-1}}{p_{n-1}} Z_n$ is a
\emph{run} of size $n$ of a system state $Z \in \System \times \Environment$ and has probability $\proba{}{r}=\Pi_{0 \leq i \leq n-1}p_i$.
The set of all $n$-runs of $Z$ are denoted $\run{Z}{n}$, and we define the set of all runs for $Z$ as $\run{Z}{}=\Union_n \run{Z}{n}$.
Consider a run $r=Z \ct{a_0}{p_0} Z_1 \ldots Z_{n-1} \ct{a_{n-1}}{p_{n-1}} Z_n$ and let $\trace \in \run{Z}{} \rightarrow (\LowAct \cup \{\tau\})^{*}$ be a function such that $\trace(r) = \low(a_0)\ldots \low(a_{n-1})$. For any $t \in (\LowAct \cup \{\tau\})^n$, define $\proba{Z}{t}= \Sigma_{\{r \in \run{Z}{n}| \trace(r)= t\}} \proba{}{r}$. This definition induces a probability distribution over $(\LowAct \cup \{\tau\})^n$.

\begin{proposition}\label{prop:probsimp}
For any state  $Z \in \System \times \Environment$ and for any $n \geq 0$ $\Sigma_{t \in (\LowAct \cup \{\tau\})^n}\proba{Z}{t}=1$.
\end{proposition}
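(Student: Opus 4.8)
The plan is to reduce the statement to the single structural fact that $\System \times \Environment$ is a genuinely probabilistic branching system---that is, from every state the probabilities of the outgoing transitions sum to one---and then to propagate this fact along runs by induction on $n$.

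First I would collapse the double summation on the left-hand side. Every run $r \in \run{Z}{n}$ determines exactly one trace $\trace(r) \in (\LowAct \cup \{\tau\})^n$, so the sets $\{ r \in \run{Z}{n} \mid \trace(r) = t \}$, indexed by $t$, form a partition of $\run{Z}{n}$. Substituting the definition of $\proba{Z}{t}$ and reindexing the resulting double sum therefore gives $\sum_{t} \proba{Z}{t} = \sum_{r \in \run{Z}{n}} \proba{}{r}$, so it suffices to prove that the total probability mass of all $n$-runs from $Z$ equals one.

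Next I would establish the key lemma that the composition is fully probabilistic: for every state $Z$, the sum of $p$ over all outgoing transitions $Z \ct{a}{p} Z'$ is one. Fix $Z = \SE{S}{E}$. Since $\Fault(E)$ is a probability distribution on $\pset(\faulty)$, we have $\sum_{L \in \pset(\faulty)} \Fault(E)(L) = 1$, so it is enough to show that each fault set $L$ contributes its mass $\Fault(E)(L)$ to exactly one outgoing transition. The fault sets that trigger a genuine computation step yield, by determinism of $\System$, a unique action--successor pair $(a, S')$ and, by the uniqueness clause on the environment, a unique $E'$ with $E \et{\low(a)} E'$; the first rule then groups all such $L$ producing a given $(a, S')$ into one transition whose probability is their combined mass. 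The remaining fault sets---those leading to a stuck configuration---are handled by the second clause, which reassigns their mass to a $\tau$-labelled transition. These two families are disjoint and exhaust $\pset(\faulty)$, so summing the transition probabilities recovers $\sum_{L} \Fault(E)(L) = 1$.

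Finally I would conclude by induction on $n$. For $n = 0$ the only run is the empty one, whose probability is the empty product, namely one. For the inductive step, every $(n{+}1)$-run from $Z$ factors uniquely as an initial transition $Z \ct{a}{p} Z'$ followed by an $n$-run from $Z'$, so that $\sum_{r \in \run{Z}{n+1}} \proba{}{r} = \sum_{Z \ct{a}{p} Z'} p \cdot \sum_{r' \in \run{Z'}{n}} \proba{}{r'}$; applying the induction hypothesis to each $Z'$ and then the lemma to $Z$ yields the value one. I expect the only delicate point to be the lemma, and within it the stuck-state clause: one must verify that the probability mass of the fault sets leading to stuck configurations is not silently dropped but is genuinely reassigned to a $\tau$-transition, so that the outgoing probabilities still total one (this is exactly the guarantee, noted in Definition~\ref{def:FS}, that every state of the composition can progress).
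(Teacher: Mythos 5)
Your proposal is correct and takes essentially the same route as the paper: the paper also first collapses the double sum to $\Sigma_{r \in \run{Z}{n}} \proba{}{r}$ (since each run has exactly one trace) and then shows by induction on $n$ --- peeling off the \emph{last} transition of a run rather than the first, an immaterial difference --- that this sum equals one, using precisely your key fact that the outgoing transition probabilities of any state of $\System \times \Environment$ sum to one. If anything, you are more thorough than the paper on that key fact: the paper dispatches it in a single sentence (``since for all subsets in $\pset(\faulty)$ there is a transition rule\ldots''), whereas you actually prove it by partitioning $\pset(\faulty)$ into the step-enabling and stuck-inducing fault sets and checking that the mass of each family is carried by some transition (the \rulename{Step} rule aggregating, per determinism, the former; the stuck clauses reassigning the latter to $\tau$-transitions).
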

\ifthenelse{\boolean{long}}
{
\begin{proof}
Appendix \ref{Sec:proofprobdistr}.
\end{proof}
}
{}

We can now define the notion of \emph{probabilistic noninterference} \cite{gray:probani:1991} that characterises security for fault-prone systems. For this purpose, we consider the case when the initial state of a fault-prone system is an encoding of three different components: (i) a ``program'', the set of instructions executed by the system, (ii) ``public data'', that stores values known by an attacker and (iii) ``private data'' for confidential information.
 We formalize this partition by defining three mutually disjoint sets $\progdom$, $\lowdom$ and $\highdom$ (such that $\Loc=\progdom \cup \lowdom \cup \highdom$) and, for a system state $S$, by defining the program component $\prog$ as $\pprog{S}$, the public data $\pubdata$ as $\ppubdata{S}$ and $\privdata$ as $\pprivdata{S}$.

Observe that the way locations are partitioned between program and data is orthogonal to the way they are partitioned between fault-tolerant and faulty components. This is because fault-tolerance is orthogonal to the way security is defined.

Our definition of security, \emph{Probabilistic Fault-Tolerant Non-In\-ter\-fer\-ence} (PNI), requires a notion of equivalence to be defined for states that look the same from an attacker's point of view. Two system states $S$ and $S'$ are low equivalent, written as $S=_{\pubdata}S'$, if $\ppubdata{S}= \ppubdata{S'}$. Low equivalence provides us a way for defining when the program component of a state is secure even in the presence of transient faults. Intuitively the definition says that a program component $P$ is secure if the observed probability for any trace is independent of the sensitive data, for all system states where $P$ is the program component.

\begin{definition}[PNI]\label{def:proftni}
Let $\System$ be a fault-prone system and let $\prog$ be a program component of $\System$.  We say that $\prog$ is \emph{probabilistic fault-tolerant noninterfering}, if for any system states $S,S' \in \System$ such that $\pprog{S'}=\pprog{S}=\prog$ and $S=_{\pubdata}S'$,
%such that $\pprog{S'} = \pprog{S}=\prog$ and $S=_{\pubdata} S'$ and $Z=\SE{S}{E}$ and $Z'=\SE{S'}{E}$,
it holds that for any state $E$ of any environment $\Environment$,
for any $n \geq 0$ and for any $t \in (\LowAct \cup \{\tau\})^n$ we have $\proba{\SE{S}{E}}{t}= \proba{\SE{S'}{E}}{t}$.
\end{definition}
The definition demands that probability of publicly observable traces only depends on values stored in the low locations. Also, it requires that this must hold for any fault-environment. %This is a very strong definition that allows the
%attacker precise control over when and where faults may occur
%(relative to his observations).

\subsection{Possibilistic Characterisation of Fault-Tolerant Non-Interference}\label{sec:possibft}
Reasoning directly about fault-tolerant noninterference is difficult because (i) it demands reasoning about probabilities, and (ii) it quantifies over all possible attackers (in a given class). In this section, we address the first of these problems.

A \emph{possibilistic} model (i.e. not probabilistic) of the interaction between
a fault-prone system and the error environment can be obtained by interleaving
the transitions of the fault prone system with a nondeterministic flipping of
zero or more bits. \red{While this model avoids reasoning about probability
distributions as well as injection of faults by an attacker,}
%The simplicity of this model is enough to avoid
%This modelto involve \red{While} this model \red{does not involve
%probability distributions for errors, it makes possible to avoid an explicit formalization of
%attacker's behavior. On the other hand, }
it is not adequate to directly capture
security, as it is well-known that possibilistic noninterference suffers from
probabilistic information leaks (see e.g. \cite{gray:probani:1991}). In order to
capture PNI precisely, we augment this transition system by making the location
of the faults
\ifthenelse{\boolean{long}}
{observable (details are discussed in Appendix \ref{ref:deftranspblabla}).}
{observable.}
% As for Definition \ref{def:FS} we proceed informally, and specify all the details
% \ifthenelse{\boolean{long}}
% {in Appendix \ref{app:alldefs}.}
% {in the extended version of this paper \cite{deltedesco+typecompftni}.}
 \begin{definition}[Augmented Fault-prone System]
\label{def:NFS}% \draftnote{F: I guess the definition of Sys+ has to be simplified as for Sys D:don't understand}
Given a fault-prone system $\System = \{ \Loc , \Act ,  {-->} \}$
%we define the augmented system  $\aug{\System} = << \SystemStates , \Act ,\faultArrow{} >>$ by the follwing rules:
we define the augmented system  $\aug{\System}$ as $\aug{\System} = \{ \Loc, \pset(\faulty) \times \Act,\faultArrow{\ \ } \}$ , where $\faultArrow{\ \ }$ is defined according to two cases:

\begin{itemizeC}
\item If the system state $S$ is not stuck, it undergoes a nondeterministic transient fault first; then, providing the flipped state is not stuck, execution takes place, namely

%( where we write transitions using the syntax $T|F || E \ct{a}{p} %T'|F' || E' $) by the following rules:

% when the system is deterministic, its global semantics will be fully probabilistic, and can thus be viewed as a  discrete Markov chain labelled with actions
\[  \inference[]
  {
  % \exists a',S_1 \mbox{ such that } S \labarrow{a'} S_1\\
  \flip(S,L) -a-> S' \quad L \subseteq \faulty}
  { S  \faultArrow{L,a} S' }
\]
Observe that, compared to the corresponding rule in Definition \ref{def:FS}, we have that $L$ induces a unique transition since it appears in the transition label.
\item If the system state is stuck, or it is made stuck by a transient fault, the transition does not modify it. However, the label attached to the transition is $(L,\tau)$ so that, as in Definition \ref{def:FS}, we make a stuck state indistinguishable from a silently diverging one.
\end{itemizeC}
\end{definition}

The model resembles the composition of fault-prone systems and fault environments presented in Definition \ref{def:FS}, including the fact that it hides the termination of a system configuration. However, it introduces two main differences that influence the way security is defined.  On one hand, the augmented system is purely nondeterministic, and this supports a simpler definition of security. On the other hand, the augmented system has more expressive labels, that include not only the action performed by the system but also information about flipped locations.

We call the sequence $r=S \faultArrow{L_0,a_0} S_1 \ldots S_{n-1}\faultArrow{L_{n-1},a_{n-1}} S_n$ a \emph{possibilistic run} of a system state $S \in \aug{\System}$, and we say that $t=L_0,$$\low(a_0),$$\ldots,L_{n-1},\low(a_{n-1})$ is its corresponding trace. We write $S\trfaultArrow{\ t\ }$ when there exists a run $r$, produced by $S$, that corresponds to $t$. With these conventions, security for augmented systems, \emph{\posft Fault-Tolerant Non-Interfer\-ence} (PoNI), is defined by using the same notation presented in the previous section for fault-prone systems composed with environments.

%corresponding to the secret range $\Secrs_s=\{(p,i)| 0 \leq p <s,
%1\leq i \leq \wordsize\}\subseteq Loc$.
%We assume  the initial configuration stores secrets in a subset $H$ of
%the locations $Loc$.\draftnote{D: need to rewrite to use the notation of the transformation}

\begin{definition}[PoNI]\label{def:possec}
We say that a program component $\prog$ enjoys \emph{possibilistic fault-tolerant non-interference}, if for any $S,S' \in \aug{\System}$ such that $\pprog{S'}=\pprog{S}=\prog$ and $S=_{ \Lloc}S'$, for any trace $t$, it holds that $S \trfaultArrow{\ t \ } \mbox{} \Leftrightarrow S' \trfaultArrow{\ t \ }$ \mbox{}.
\end{definition}

The following result says that the definitions of PNI and PoNI coincide.

\begin{theorem}\label{thm:pniisponi}
A program component $P$ enjoys PNI if and only if it enjoys PoNI.
\end{theorem}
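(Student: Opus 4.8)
The plan is to reduce both implications to a single ``bridge'' lemma that relates the probabilistic semantics of Definition~\ref{def:FS} to the augmented possibilistic semantics of Definition~\ref{def:NFS}. Fix a system state $S$, an environment state $E$ of some $\Environment=(\errlts,\faultfun)$, and a low trace $t=\ell_0\cdots\ell_{n-1}\in(\LowAct\cup\{\tau\})^n$. First I would observe that, since $\System$ is deterministic, a sequence of fault sets $(L_0,\dots,L_{n-1})$ with each $L_i\subseteq\faulty$ determines a \emph{unique} augmented run from $S$ (flip $L_0$, take the forced system step, flip $L_1$, and so on), hence a unique augmented trace $t'$ whose actions and their low projections are forced; conversely every augmented run from $S$ arises this way. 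Then I would expand the probability of a composed run: each step probability in $\SE{S}{E}\set{a_i}{p_i}\cdots$ is itself a sum $\sum_{L\in\pi_i}\Fault(E_i)(L)$ over exactly those fault sets realising that step (the stuck case of Definition~\ref{def:FS} corresponding to the $(L,\tau)$ self-transitions of Definition~\ref{def:NFS}). Multiplying these sums out and regrouping by fault-set sequence yields the identity
\[
\proba{\SE{S}{E}}{t}\;=\;\sum_{t'} w_E(t')\, \llbracket\, S \trfaultArrow{\ t'\ } \,\rrbracket ,
\]
where $t'$ ranges over augmented traces whose low projection is $t$, the indicator is $1$ exactly when $S$ has an augmented run with trace $t'$, and the weight $w_E(t')=\prod_i\Fault(E_i)(L_i)$ collects the fault probabilities along $t'$. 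The decisive point is that the environment states $E_0=E,E_1,\dots$ are driven only by the low actions $\ell_0,\dots,\ell_{n-1}$ carried by $t'$, which all coincide with those of $t$; hence $w_E(t')$ depends only on $E$ and $t'$, never on which state the run is taken from.

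Given the bridge lemma, the direction PoNI~$\Rightarrow$~PNI is immediate. Take $S,S'$ with $\pprog{S}=\pprog{S'}=\prog$ and $S\lowequiv S'$, and fix $E$, $n$, $t$. By PoNI we have $S\trfaultArrow{\ t'\ }\Leftrightarrow S'\trfaultArrow{\ t'\ }$ for every augmented trace $t'$, so the two indicator families in the bridge identity agree termwise; since the weights $w_E(t')$ are identical on both sides, $\proba{\SE{S}{E}}{t}=\proba{\SE{S'}{E}}{t}$, which is precisely PNI.

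For PNI~$\Rightarrow$~PoNI I would argue by contraposition. Suppose PoNI fails, so there are $S,S'$ with program $\prog$, $S\lowequiv S'$, and an augmented trace $t'=L_0,\ell_0,\dots,L_{n-1},\ell_{n-1}$ with, without loss of generality, $S\trfaultArrow{\ t'\ }$ but not $S'\trfaultArrow{\ t'\ }$. I would then tailor an environment that isolates exactly this fault pattern: take $\EnvironStates=\{0,1,\dots,n\}$ with $i\et{a}\min(i+1,n)$ for every action $a$ (a pure step counter), and set $\Fault(i)$ to be the point mass on $L_i$ for $i<n$; this meets Definition~\ref{def:Env}. Starting from state $0$ and writing $t=\ell_0\cdots\ell_{n-1}$ for the low projection of $t'$, every augmented trace with projection $t$ other than $t'$ must use some fault set $\neq L_i$ at some step and so receives weight $0$, whereas $w_0(t')=1$. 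The bridge identity then collapses to $\proba{\SE{S}{0}}{t}=\llbracket\,S\trfaultArrow{\ t'\ }\,\rrbracket=1$ and $\proba{\SE{S'}{0}}{t}=\llbracket\,S'\trfaultArrow{\ t'\ }\,\rrbracket=0$, contradicting PNI.

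I expect the main obstacle to be the careful proof of the bridge lemma, not the two implications, which follow almost mechanically once it is in place. The delicate points are verifying that multiplying out the per-step sums $\prod_i\bigl(\sum_{L\in\pi_i}\Fault(E_i)(L)\bigr)$ reindexes \emph{exactly} as a sum over augmented runs --- which needs determinism of $\System$ to guarantee both that a fault-set sequence pins down a unique composed run and that distinct composed runs with the same low trace use disjoint fault-set sequences --- and handling the stuck-state transitions uniformly, since these are merged into $\tau$-labelled self-transitions in the probabilistic composition but recorded with their actual fault set $L$ in the augmented system. Checking the totality requirement of the tailored counter-environment (a unique successor for each state and action) is routine.
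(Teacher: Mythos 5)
Your proposal is correct and follows essentially the same route as the paper's own proof: your ``bridge lemma'' is the paper's lemma on enabling sequences (decomposing run probabilities over fault-set sequences whose weights depend only on the environment and the low trace), the PoNI~$\Rightarrow$~PNI direction is the same termwise comparison of that decomposition, and the PNI~$\Rightarrow$~PoNI direction uses the same contrapositive construction of a deterministic, point-mass environment that replays the distinguishing fault pattern. Your step-counter environment with states $\{0,\dots,n\}$ is in fact a slightly cleaner variant of the paper's, which uses the fault sets $L_i$ themselves as environment states and is therefore ill-defined (non-unique successors) when the same set occurs at two different positions of the trace.
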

\ifthenelse{\boolean{long}}
{
\begin{proof}
Appendix \ref{Sec:thmpniisponi}.
\end{proof}
}
{}

\red{This theorem makes clear that, from an enforcement perspective, security in the presence of transient faults can be achieved by targeting either PNI or PoNI.}
%The proof of this result can be found in the extended version of this paper \cite{deltedesco+typecompftni}. \draftnote{Proof sketch? Mention of full version? F: considering the previous experiences I suggest the simpler way}

\subsection{Strong Security Implies PNI}\label{strongsecurity}

We now formalize a different notion of security that guarantees PoNI (and hence
PNI) without explicitly modeling the effects of transient faults in a
fault-prone system. This notion, called \emph{Strong Security}, was developed as
a way to capture a notion of scheduler independent compositional security for
multithreaded programs \cite{Sabelfeld:Sands:Multithreaded}.

% % has been previously formalized in \cite{Sabelfeld:Sands:Multithreaded,heiko:andrei:distributedbisim}.

% \draftnote{Move to related work:}
% In particular in

Strong security is a bisimulation relation over program components of fault-prone systems. Our goal is to relate strong security to the possibilistic security definition established for augmented systems, and show that indeed it is stronger. Before doing so, we need to make sure that the semantics of fault-prone systems hides termination, as it is the case for augmented systems.
%For this reason, we extend the semantics of fault-prone systems to make the termination non-observable.
In particular, for a fault-prone system $\System= \{ \Loc, \Act,  {-->}\}$, we define its termination-transparent version as $\transp{\System} = \{ \Loc, \Act,  \termtransp{}\}$ where $\termtransp{}$ coincides with $-->$ for active states, but has additional transitions $S \termtransp{\tau} S$ whenever
\ifthenelse{\boolean{long}}
{$S \not -->$ (details are discussed in Appendix \ref{ref:deftranspblabla}).}
{ $S \not -->$.}
 With this, we define strong security for termination-transparent fault-prone systems as follows.

\begin{definition}[Strong Security (SS)]\label{def:strongsec}
Let $\System$ be a fault-prone system and $\transp{\System}= \{ \Loc, \Act,  \termtransp{}\}$ be the corresponding termination-transparent fault-prone system. A symmetric relation $R$ between program components is a strong bisimulation if for any $(\prog,\prog') \in R$ we have that for any two states $S,V$ in $\SystemStates$, if $\pprog{S}=\prog$ and $\pprog{V}=\prog'$ and $S =_{ \Lloc} V$ and $S \termtransp{a} S'$ then $V \termtransp{b} V'$ such that (i) $\low(a)= \low(b)$ and (ii) $S' =_{\Lloc} V'$ and (iii) $(\pprog{S'},\pprog{V'}) \in R$. We say that a program component $\prog$ is strongly secure if there exists a strong bisimulation $R$ such that $(\prog,\prog) \in R$.
\end{definition}

Intuitively, a program component $\prog$ is strongly secure when differences in the private part of the data are neither visible in the computed public data, nor in the transition label. This anticipates the fact that even though an external agent (the error environment, in our scenario) might alter the data component, the program behavior does not reveal anything about secrets.

The next result shows that Strong Security is sufficient to obtain PoNI. Notice, however, that the definition of Strong Security only deals with the modifications that occur in the data part of a fault-prone system. Hence, it only makes sense for the class of systems that host the program component in the fault-tolerant part of the configuration.

\begin{theorem}[Strong security $\Rightarrow$ PoNI]\label{thm:ssimpliesponi}
Let $\prog$  be a program component such that $\progdom \subseteq \tolerant$. If $P$ satisfies  SS then $P$ satisfies PoNI.
\end{theorem}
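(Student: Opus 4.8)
The plan is to show that every possibilistic trace realisable from $S$ in the augmented system $\aug{\System}$ is also realisable from $S'$, and conversely; since the hypotheses $\pprog{S}=\pprog{S'}=\prog$, $S=_{\Lloc}S'$ and the strong bisimulation $R$ witnessing SS (with $(\prog,\prog)\in R$) are all symmetric in $S$ and $S'$, it suffices to establish one direction. So I would assume $S \trfaultArrow{t}$ via a run $r = S \faultArrow{L_0,a_0} S_1 \ldots S_{n-1} \faultArrow{L_{n-1},a_{n-1}} S_n$ and build, step by step, a run from $S'$ that reuses exactly the same fault sets $L_i$ and whose $i$-th action $b_i$ satisfies $\low(b_i)=\low(a_i)$, so that the two runs share the trace $t$ and hence $S' \trfaultArrow{t}$.

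The construction is an induction on run length maintaining the invariant that the current states satisfy $S_i =_{\Lloc} S_i'$ and $(\pprog{S_i},\pprog{S_i'}) \in R$. Two elementary facts drive each step. First, because $\progdom \subseteq \tolerant$ the fault set $L_i \subseteq \faulty$ is disjoint from $\progdom$, so flipping never disturbs the program component: $\pprog{\flip(S_i,L_i)}=\pprog{S_i}$, and likewise for $S_i'$. Second, flipping the same locations in two low-equivalent states preserves low-equivalence, since the only low locations touched, $L_i \cap \lowdom$, are flipped identically on both sides. Writing $T_i=\flip(S_i,L_i)$ and $T_i'=\flip(S_i',L_i)$, these facts give $T_i =_{\Lloc} T_i'$ and $(\pprog{T_i},\pprog{T_i'}) \in R$, which is precisely the premise needed to invoke the strong bisimulation on the post-flip states. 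This is the one place where the hypothesis $\progdom \subseteq \tolerant$ is essential: it guarantees that program components evolve only through the matched $\labarrow{}$-steps that $R$ controls, never through faults.

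The heart of the step is to translate the augmented transition $S_i \faultArrow{L_i,a_i} S_{i+1}$ into a transition of $\transp{\System}$ from $T_i$, apply $R$ to obtain a matching transition from $T_i'$, and translate back into an augmented transition from $S_i'$. When $T_i$ is active the first translation is $T_i \termtransp{a_i} S_{i+1}$, since $\termtransp{}$ coincides with $\labarrow{}$ on active states; strong security then yields $T_i' \termtransp{b_i} S_{i+1}'$ with $\low(b_i)=\low(a_i)$, $S_{i+1} =_{\Lloc} S_{i+1}'$, and $(\pprog{S_{i+1}},\pprog{S_{i+1}'}) \in R$, re-establishing the invariant and giving the matched augmented step $S_i' \faultArrow{L_i,b_i} S_{i+1}'$. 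Concatenating the matched steps produces the desired run from $S'$.

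The main obstacle is the careful reconciliation of the stuck-state semantics of $\faultArrow{}$ with the $\tau$ self-loops of $\termtransp{}$. When a flipped state $T_i$ is stuck, the augmented system performs the silent self-step of Definition \ref{def:NFS} while $\transp{\System}$ supplies $T_i \termtransp{\tau} T_i$; the delicate point is to check that the state recorded after such a step coincides on both sides with what the bisimulation delivers, so that $=_{\Lloc}$ and membership in $R$ survive. This requires handling the asymmetric cases permitted by strong security, where one of $T_i$, $T_i'$ is active and the other stuck (which SS allows only when the active side's action is silent, forcing $\low(a_i)=\tau$). I expect verifying each stuck/active combination against the precise definition of the augmented silent transition to be the crux of the argument; once this bookkeeping is settled, the induction closes and, by the symmetry noted above, yields the full equivalence $S \trfaultArrow{t} \Leftrightarrow S' \trfaultArrow{t}$, i.e. PoNI.
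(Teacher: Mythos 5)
Your proposal takes a genuinely different route from the paper (which does not attempt a direct simulation, but factors the proof through a Brookes-style transition-trace semantics: SS implies Strong Trace-based Security, Lemma \ref{lemma:SStoStbS}, and StbS implies PoNI by contraposition, Lemma \ref{thm:ssimplponi}), and the step you defer as ``bookkeeping'' is precisely where a direct argument collapses. The problem is the treatment of stuck \emph{pre-flip} states, not stuck flipped states. In $\aug{\System}$ a stuck state is absorbing: by the third rule of Definition \ref{def:NFSFormal}, if $S_i'$ is stuck then its only transitions are $S_i' \faultArrow{L,\tau} S_i'$ for every $L$ --- the flip is never applied to it. Hence your ``translate back'' step, from $T_i' \termtransp{b_i} S_{i+1}'$ to $S_i' \faultArrow{L_i,b_i} S_{i+1}'$, is sound only when $S_i'$ is active. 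Since SS is stated over the termination-transparent system, it cannot distinguish a stuck state from a silently looping one, so your invariant can perfectly well hold in a configuration where $S_i$ is active and $S_i'$ is stuck. There, SS constrains only the \emph{unflipped} actions (forcing $S_i$'s own action to be silent), but the augmented step from $S_i$ is labelled with the action of $\flip(S_i,L_i)$, which is unconstrained: a fault can ``awaken'' the active side into a visible action, $S_i \faultArrow{L_i,a_i} S_{i+1}$ with $\low(a_i)\neq\tau$, while $S_i'$ can only ever emit $(L,\tau)$. Invoking SS on the flipped pair $T_i,T_i'$ does not help, because $T_i'=\flip(S_i',L_i)$ is unreachable from $S_i'$ in the augmented semantics. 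Even in the sub-case where both actions are silent, your invariant breaks: the active side moves to a state low-equivalent to $\flip(S_i',L_i)$, the stuck side stays at $S_i'$, and these differ on $L_i \cap \lowdom$.

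This is not a patchable oversight from the stated hypotheses alone: the paper's own Figure \ref{fig:stbsponi} is a counterexample to your inductive step. There $S_0$ (stuck) and $S_2$ ($\tau$-loop) share a fault-tolerant program component and are low-equivalent, and the singleton relation pairing that program component with itself is a strong bisimulation in the sense of Definition \ref{def:strongsec} (termination transparency makes $S_0$ and $S_2$ bisimilar); yet $S_2 \faultArrow{\{b_2\},a} S_3$ has no counterpart from $S_0$, whose only moves are $S_0 \faultArrow{L,\tau} S_0$, so PoNI fails. The paper escapes this exactly by the hypothesis your proof lacks: Lemma \ref{thm:ssimplponi} is proved only for \emph{standard} fault-prone systems, where stuckness depends on the program component alone and not on the data, which rules out the asymmetric active/stuck configuration between states carrying SS-related programs; and its transition-trace formulation lets the stuck side be matched against arbitrarily chosen input memories rather than the ones your lock-step construction forces. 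To complete a proof (direct or otherwise) you must import that uniform-termination assumption explicitly --- it is what the paper's own argument, read carefully, actually relies on.
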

\begin{proof}
Appendix \ref{Sec:strongandponi}.
\end{proof}

We conclude this section with a negative result by saying that the reverse implication between SS and PoNI does not hold. Strong security requires that the  partition of the state into a program, a low and a high part is preserved (respected) during the whole computation whereas the definition of PNI imposes no special requirement on intermediate states of the computation.

\section{A Type System for FTNI}\label{Sec:instance}
\newcommand*{\Scale}[2][4]{\scalebox{#1}{$#2$}}%

In this section, we present an enforcement mechanism
  capable of synthesizing strongly secure assembly code. The enforcement is given
as a type-directed compilation from a source while-language. %, where the design
%is guided by the implications of the theorems presented in
%the previous section.
All in all, we present a concrete instance of our fault-prone system
formalization by defining the \assem\ architecture (Section \ref{sec:procdescr})
and propose a technique to enforce strong security over \assem\ programs
(Section \ref{sec:absttypesystem}), whose soundness is sketched in Section
\ref{sec:abssoundness}.  \red{By achieving strong security, we automatically get
secure RISC code that is robust against transient faults (recall Theorem
\ref{thm:ssimpliesponi}).}
%, we
%  have that all \assem\ programs produces by the compilation strategy defined in
%  this section enjoy PNI.}

%We characterize a family of machine programs that enjoy PoNI by defining an enforcement mechanism that targets strong security.
\subsection{A Fault-Prone RISC Architecture}\label{sec:procdescr}

The architecture we are interested in has various hardware components to operate over data and instructions.

Data resides  in the memory and in the register bank. We model the
memory $\HeapName$ as a function $\Constant \rightarrow \Constant$,
where $\Constant$ is the set of all constants that can be represented
with a machine word. The register bank $\RegName$ is modeled as a
function $\regvar \rightarrow \Constant$, where $\regvar$, ranged over
by $r$, $r'$, is a set of register names.

\begin{wrapfigure}{r}{0.75\textwidth}
\centering
\begin{tabular}{l@{\hspace{5pt}}c@{\hspace{5pt}}l@{\hspace{5pt}}c@{\hspace{5pt}}l@{\hspace{5pt}}c@{\hspace{5pt}}l@{\hspace{5pt}}c@{\hspace{5pt}}l@{\hspace{5pt}}c@{\hspace{5pt}}l@{\hspace{5pt}}c@{\hspace{5pt}}l@{\hspace{5pt}}}
%$v$		& ::=		& \multicolumn{8}{l}{$\vType \cup \Ptr \cup \Lab \cup \DReg$}\\
$I $ 		& ::= 		& \multicolumn{8}{l}{$\optional{l:} B $}\\%\mbox{ such that }l\in \Lab$}\\
$B$		& ::=		& $\loadName\ r\ k$ 		&	$|$ 	& $\storeName	\ k  \ r$ 	& $|$	& $\jmpName   \ l$ 	& $|$ 	& $\jzName\ l \ r$ 	& $|$ $\nop$ 	 \\%& $\binOp \ l\ r$ & $|$ & & \\
  		&		& $\movekName\ r\ n$	& $|$&  $\moverName\ r\ r'$		& $|$& $\binOp\ r\ r'$ 	& $|$ 	& $\out{ch}{r} $ 	    	& \\% $|$ % $\nop$ \\%[1ex]
%		&		& \multicolumn{8}{$\mbox{ such that } v \in \regvar \cup \mathbb{N}$}\\
%$\binOp$	& ::=		& $\add{}{}$			& $|$ & $\orName$ & & & & & \\
$ch$		& ::=		& $\low$ 				& $|$ & $\high$ 	&&&&&\\
%$\CodeName$ 		& ::= 		& \multicolumn{8}{l}{$\emptylist$ ~ $|$ ~ $\cons{I}{\CodeName}$}
\end{tabular}
\caption{\assem\ instructions syntax\label{table:assemsyntax}}
\end{wrapfigure}
Figure \ref{table:assemsyntax} describes the instruction set of our architecture. We consider that every instruction $I$ could be optionally labeled by a label in the set $\Lab$. Instruction $\load{}{}{r}{k}$ accesses the data memory $\HeapName$ with the pointer $k \in \Constant$ and writes the value pointed by $k$ into register $r \in \regvar$. The corresponding
$\store{}{}{k}{r}$ instruction writes the content of $r$ into the data memory address $k$. Instruction $\jmp{l}$ causes the control-flow to transfer to the instruction labeled as $l$. Instruction $\jz{l}{r}$ performs the jump only if the content of register $r$ is zero. Instruction $\nop$ performs no computation. The instruction $\movek{r}{k}$ writes the constant $k$ to $r$, whereas the instruction $\mover{r}{r'}$ copy the content in $r'$ to $r$.
The instruction $\binOp$ stands for a generic binary operator that combines values in $r$ and $r'$ and stores the result in $r$. Instruction $\out{ch}{r}$ outputs the constant contained in $r$ into the channel $ch$, that can be either $\low$ or $\high$.

The processor fetches \assem\ instructions from the code memory
$\assemprog$, separated from $\HeapName$. The code memory  is modeled
as a list of instructions. We require the code memory to be well-formed,
namely not having two instruction bodies labeled in the same way. A
dedicated \emph{program counter} register  stores the location in
$\assemprog$ hosting the instruction being currently executed. The
value of the program counter is ranged over by $\pc$.

%The description of the code memory is based on a standard list notation, where $\emptylist$ represent the em\-pty lists and $::$ the cons operation, that adds one element to the front of the list.

As described in Section \ref{Sec:allnonint}, we partition the
architecture into faulty and fault-tolerant components. Both
$\RegName$ and $\HeapName$ are considered to be faulty: transient faults can strike any
location at any time during the execution. On the other hand, we
assume $\assemprog$ and the program counter are implemented in the
fault-tolerant part of the architecture. The fact that the code memory
is fault-tolerant corresponds to having the machine code in a
read-only memory with ECC, a common
assumption in dependability domain. The requirement on the program counter is a restriction that turns out to be necessary for proving the soundness of our enforcement mechanism.

%architecture In order to state the relation between an \swhprog\ program $D$ and its corresponding \assem\ program $P$, the \assem\ semantics has to be defined first. We assume an \assem\ program is executed by an abstract machine with a faulty storage $(\RegName,\HeapName)$, where $\RegName$ is the register bank in the set $\regpool =\{ \regvar \rightarrow \mathbb{N}\}$, and  $\HeapName$ is the memory from the set  $\heapool = \{ \Ptr \rightarrow \mathbb{N} \}$. The fault-tolerant part of the machine is given by the program $P$ and a special register $\pc$, that stores the position of the instruction executed in a certain moment by the abstract machine. Define $P(\pc)$ as a shorthand for the instruction at position $\pc$ in $P$, minus label, and $\nextpc$ as a shorthand for $\pc \update \pc+1$. We assume that the function $\res_P \in \Lab \rightharpoonup \Constant$ returns  the position at which label $l$ occurs in $P$: $\res_P(l)=i$ iif $P(i)=l:\ B$. \draftnote{F: termination is either getting to len(P)+1 (undefined instruction) or jumping to an unassociated label. It seems ok to me}

We instantiate the \assem\ architecture  as a fault-prone system by
defining the semantics of the language as a labelled transition
system.
A state is defined as
a quadruple $\langle \assemprog,\pc, \RegName,\HeapName \rangle$, for $\pc
\in \Constant$,
where the first two elements correspond to the fault-tolerant portion of the hardware. Any action of the system is either an output ($\low!k$ when the output is performed on the publicly observable channel, $\high!k$ otherwise) or the silent action $\tau$. A few examples of transition rules are described in Figure \ref{table:abssemanticsquick} (the full presentation can be found in
\ifthenelse{\boolean{long}}
{Appendix \ref{app:fullassemsema}).}
{the extended version of this paper \cite{deltedesco+typecompftni}).}
 We write $\assemprog(\pc)$ as a shorthand for the instruction at position $\pc$ in $\assemprog$ and $\nextpc$ as a shorthand for $\pc+1$. We assume that the function $\res_\CodeName \in \Lab \rightharpoonup \Constant$ returns  the position at which label $l$ occurs in $P$: $\res_\CodeName(l)=i$ iff $\CodeName(i)=l:B$ for some $B$.

%\draftnote{F: termination is either getting to len(P)+1 (undefined instruction) or jumping to an unassociated label. It seems ok to me}

\begin{figure}[t]
\setnamespace{0pt}\setpremisesend{0.25em}\setpremisesspace{0.6em}
%\begin{minipage}{0.6\linewidth}
%\begin{mathpar}
\[
\begin{array}{@{}c@{}}
\inference[\rulename{Load}]{
P(\pc)= \loadName\ r\ p
% &   \Heap{k}=v'
}
{\absState{P}{\RegName}{\HeapName}
 \asem{\tau}
 \absStatePost{P}{\RegName[r \update \Heap{p}]}{\HeapName}
}
\\[3.5ex] %\and
\inference[\rulename{Store}]{
P(\pc)= \storeName\ p\ r
% &   \Heap{k}=v'
}
{\absState{P}{\RegName}{\HeapName}
\asem{\tau}
 \absStatePost{P}{\RegName}{\HeapName[p \update \RegName(r)]}
}
\\[3.5ex] % \and
%\inference[\rulename{Jmp}]{
%P(\pc)= \jmpName\ l
%}
%{\absState{P}{\RegName}{\HeapName}
%\asem{\tau}
% \absStateJump{P}{\pc\update res_P(l)}{\RegName}{\HeapName}
%}
%\\[3ex] %\and
\inference[\rulename{Jz-S}]{
P(\pc)= \jzName \ l\ r &
\RegName(r)  = 0
}
{\absState{P}{\RegName}{\HeapName}
\asem{\tau}
\absStateJump{P}{\pc\update \res_\CodeName(l)}{\RegName}{\HeapName}
}
\\[3.5ex] % \and
\inference[\rulename{Jz-F}]{
P(\pc)= \jzName \ l\ r &
\RegName(r) \not =0
}
{\absState{P}{\RegName}{\HeapName}
\asem{\tau}
 \absStatePost{P}{\RegName}{\HeapName}
} \ \
 % \and
%\inference[\rulename{Nop}]{
%P(\pc)= \nop
%}
%{
%\absState{P}{\RegName}{\HeapName}
%\asem{\tau}
% \absStatePost{P}{\RegName}{\HeapName}
%}
%\\[3ex] %\and
%%\inference[\rulename{Move-k}]{
%P(\pc)= \movek{r}{n}&
%}
%{\absState{P}{\RegName}{\HeapName}
%\asem{\tau}
% \absStatePost{P}{\RegName[r \update n]}{\HeapName}
%}
%\\[3ex] %\and
%
%\inference[\rulename{Move-r}]{
%P(\pc)= \mover{r}{r'}&
%}
%{\absState{P}{\RegName}{\HeapName}
%\asem{\tau}
% \absStatePost{P}{\RegName[r \update \RegName(r')]}{\HeapName}
%}
%\\[3ex] % \and
%\inference[\rulename{Op}]{
%P(\pc)= \aopName\ {r}\ {r'}
%}
%{\absState{P}{\RegName}{\HeapName}
%\asem{\tau}
% \absStatePost{P}{\RegName[r \update \RegName(r) \ \aopName \ \RegName(r')]}{\HeapName}
%}
%\\[3ex] %\and
\inference[\rulename{Out}]{
P(\pc)= \outName\ ch \  r  &
Reg(r) =  n
}
{
\absState{P}{\RegName}{\HeapName}
\asem{ch!n}
\absStatePost{P}{\RegName}{\HeapName}
}
\end{array}
\]
\caption{Selected rules for \assem\ instructions semantics}\label{table:abssemanticsquick}
\end{figure}

Once the rule $\wraprule{\rulename{Load}}$ is triggered, the register content at $r$ is updated with the memory content at $p$ ($\RegName[r \update \Heap{p}]$), and the program counter is incremented by one ($\nextpc$). Conversely, the rule $\wraprule{\rulename{Store}}$ writes the content at register $r$ in memory location $p$ ($\HeapName[p \update \RegName(r)]$). In case of a jump instruction, neither memory nor registers are modified.  If the guard is 0, as in rule $\wraprule{\rulename{Jz-S}}$, the execution is restarted at the instruction of the label used as the jump argument $\pc\update \res_\CodeName(l)$, otherwise the program counter is just incremented. All previous instructions map to silent actions $\tau$: channels are written by instruction $\wraprule{\rulename{Out}}$ which, on the other hand, leaves both register and memory untouched.

\subsection{Strong Security Enforcement}\label{sec:absttypesystem}
We guarantee strong security for some \assem\ programs via a novel
approach based on type-directed compilation.
Our strategy targets a simple high-level imperative language,
\whprog. For \whprog\ programs we define a type system that performs
two tasks: (i) translation of \whprog\ programs into \assem\ programs
(ii) enforcement of Strong Security on \whprog\ programs. The
compilation is constructed so that the strong security at the level of
\whprog\ programs is preserved by the compilation.

 % together with an intermediate language, called \swhprog. For \whprog\ programs we define a type system that performs two tasks: (i) translation of \whprog\ programs into \swhprog\ programs (ii) enforcement of Strong Security on \swhprog\ programs. Then we define a simple, semantic preserving, compilation scheme between \swhprog\ programs and \assem\ programs, that guarantees Strong Security for \assem\ programs as a consequence of Strong Security of \swhprog\ programs.

%The full details are presented in
%\ifthenelse{\boolean{long}}
%{Appendix \ref{sec:realtypesystem};}
%{the extended version of this paper \cite{deltedesco+typecompftni};}
%  In particular, to
To facilitate the proof of strong security, the
 %system presented here
 method is factored into a two-step process:  a
 type-directed translation to an intermediate
 language followed by a simple compilation to \assem.
 In this section, we limit
 ourself to an overview of the method and therefore elide this intermediate step
 from the presentation.

 % For improving readability, we encompass both steps in a type system that maps \whprog\ programs into Strongly Secure \assem\ programs.

The grammar of the \whprog\ language is presented in Figure \ref{table:whilesyntax}. Both expressions and commands are standard, and assume that the language contains an output command $\outName$.

%\draftnote{F: how to write the set of all such $C$? It is useful when defining the bisimulation, for example}
%\draftnote{F: issues to consider: overflow N vs K }
\begin{figure}[h]
\centering
$\begin{array}{l c  l c l c l c  l c  l c l }
C 		& ::=		& \wskip & | & \assign{x}{E} & | & \wif{E}{C_1}{C_2} & | & \out{ch}{E} & |  & C_1; C_2 & | &   \while{x}{C}  \\
	%%	& |		&  \out{ch}{E} & |  & C_1; C_2 & | &   \while{x}{C} \\

E		& ::=		& k \in \Constant %\mathbb{N}
                                       & | &  x \in \var & | & \wop{E_1}{E_2} & & & & & &\\
ch		& ::=		& \low				& | & \high  & & & && & & & 	\\
\end{array}$
\caption{\whprog\ programs syntax\label{table:whilesyntax}}
\end{figure}

\para{Typing Expressions} The general structure of the typing
judgement for expressions is presented in Figure
\ref{fig:typingforexpr}.

\begin{wrapfigure}{r}{0.55\textwidth}
%\fbox{\begin{minipage}{0.45\textwidth}
\[
\underbrace{\regenv, \activeReg, l}_{\text{(1)}}
||- \overbrace{E}^\text{(2)}
\etypeproduce
\underbrace{\CodeName}_\text{(3)},
\overbrace{ \esecan{\lambda}{n} }^\text{(4)},
\underbrace{r, \regenv'}_\text{(5)}
\]
%\end{minipage}
%}
\caption{General structure for expression typing rules}\label{fig:typingforexpr}
\end{wrapfigure}

The core part of the judgement says that \whprog\ expression $E$ (2) can be
compiled to secure \assem\ program $\CodeName$ (3), with security annotation
(4).  For defining the security annotation, we assume \whprog\ variables and
\assem\ registers are partitioned into two security levels $\{\Lloc, \Hloc\}$
(ordered according to $\sqsubseteq$, which is the smallest reflexive relation
for which $\Lloc \sqsubseteq \Hloc$) according to the function $\seclev \in \var
\cup \regvar \rightarrow \{ \Lloc,\Hloc\}$. The first component $\lambda$ of a
security annotation $(\lambda, n)$ specifies the security level of the registers
that are used to evaluate an expression. The second component $n$ represents the
number of \assem\ computation steps that are necessary to evaluate $E$. The
reason for tracking this specific information about expressions (and in commands
later on) is related to obtain assembly code which avoids timing leaks -- the
type-directed compilation will use $n$, for instance, to do \emph{padding} of
code when needed~\cite{Agat:Timing}. In fact, most of the involved aspects
in our type-system arises from avoiding timing leaks in low-level code.
%will become clearer when describing the typing derivation for commands.

Because the compilation is defined compositionally, some auxiliary
information (1) is required:
 we firstly need to
know the label $l$ which is to be attached to the first instruction of
$\CodeName$. Also, we have to consider the set $A$ of
registers which cannot be used in the compilation of $E$, since
they hold intermediate values that will be needed after the
computation of $E$ is complete. Finally, we need to keep track of how variables are
mapped to registers. This is done via the register record $\regenv$, which requires
a slightly more elaborate explanation.

\para{Register Records}
In order to allow for efficiently compiled code, expression compilation
builds up a record of associations between \whprog\ variables and \assem\ registers
in a \emph{register record} $\regenv \in \regvar \rightharpoonup
\var$. If $\regenv(r) = x$ then it means that the current value of
variable $x$ is present in register $r$.
The register record produced by a compilation is highly
nondeterministic, meaning that we do not build any particular register
allocation mechanism into the translation.

The rules (in particular the later rules for commands) involve a number of operations on register records which we
briefly describe here. We ensure that a register record is always a partial bijection, namely
 a register is associated to at most one variable and a variable is
 associated with at most one register. We write
 $\subst{\regenv}{\conn{r}{x}{}}$ to denote
 the minimal modification of $\regenv$
 which results in a partial bijection mapping $r$ to $x$. Similarly,
$\subst{\regenv}{\breakconn{r}}$ denotes the removal of any
association to $r$ in $\regenv$. The intersection of records $\regenv \envinters \regenv'$ is just the subset of the
bijections on which $\regenv$ and  $\regenv'$ agree. Finally,
inclusion between register records, written as $\regenv \sqsubseteq
\regenv'$, holds if all associations in $\regenv$ are also found in $\regenv'$.

 % A register $r$ can be made free in $\regenv$. We write this operation as $\subst{\regenv}{\breakconn{r}}$, under the understanding that (i) $r$ is no longer associated with any variable and (ii) if $\conn{r}{x}{}\in \regenv$ then $x$ is no longer associated in $\subst{\regenv}{\breakconn{r}}$. When a new association $\conn{r}{x}{}$ has to be included in the register record, we denote the update operation as $\subst{\regenv}{\conn{r}{x}{}}$. The register record is updated to track the association between $r$ and $x$ in a way that preserves the bijection property: any register different than $r$ that was associated to $x$ in $\regenv$ results unassociated in $\subst{\regenv}{\conn{r}{x}{}}$.

Beside the compiled expression and its security annotation, each rule returns a modified register record and specifies the actual register where the evaluation of the expression $E$ is found at the end of the execution of $\CodeName$ (5).

\para{Expression Rules}

It will be convenient to
 extend the set of instructions with the empty instruction $\emptya$. Some sample rules for computing the types of
expressions are  presented in Figure \ref{table:abstypesystemexprquick}
(the full presentation can be found in
\ifthenelse{\boolean{long}}
{Appendix \ref{app:allabsrulexp}).}
{the extended version of this paper \cite{deltedesco+typecompftni}).}
%; a full account is given in
%\ifthenelse{\boolean{long}}
%{Appendix \ref{app:allabsrulexp}.}
%{the extended version of this paper \cite{deltedesco+typecompftni}.}
 %\draftnote{F: break the rules in paragraphs}

\begin{wrapfigure}{r}{0.75\textwidth}
\begin{mathpar}
\Scale[1]{
\inference[\rulename{K}]
{r \not \in \activeReg}
{\regenv, \activeReg, l ||- k \etypeproduce [l:\ \movekName\ r\ k], \esecan{\seclev(r)}{1}, r, \subst{\regenv}{\breakconn{r}}}}\\
\and
\Scale[1]{
\inference[\rulename{V}-cached]
{\regenv(r) = x}
{\regenv, \activeReg, l ||- x \etypeproduce [l:\ \emptya], \esecan{\seclev(r)}{0}, r, \regenv}}
\end{mathpar}
\caption{Selected type system rules for \whprog\ expressions\label{table:abstypesystemexprquick}}
\end{wrapfigure}

% $$\inference[\rulename{K}]
%{r \not \in \activeReg}
%{\regenv, \activeReg, l ||- k \etypeproduce [l:\ \movekName\ r\ k], \esecan{\seclev(r)}{1}, r, \subst{\regenv}{\breakconn{r}}}$$
In rule $\wraprule{\rulename{K}}$ the constant $k$ is compiled to
code which write the constant to some
register $r$ via the $\movekName\ r\ k$ instruction, providing $r$ is
not already in use ($r \not \in A$). As a result, any previous
association between register $r$ and a variable is lost
($\subst{\regenv}{\breakconn{r}}$). The security level of the result
is simply the level of the register, and the computation time is one.

%When the expression is a variable $x$, two rules might be applied.
%$$
%\inference[\rulename{V}-cached]
%{\conn{r}{x}{} \in \regenv}
%{\regenv, \activeReg, l ||- x \etypeproduce [l:\ \emptya], \esecan{\seclev(r)}{0}, r, \regenv}
%$$
%If the variable is already associated to a register, the rule $\wraprule{\rulename{V}-cached}$ can be used and no code is produced. Otherwise, the rule $\wraprule{\rulename{V}-uncached}$ must be used.

In rule $\wraprule{\rulename{V}-cached}$ the variable to be compiled is already associated to a register, hence no code is produced.

\begin{wrapfigure}{r}{0.5\textwidth}
$$
\underbrace{\regenv, l}_{\text{(1)}}
|- C
\typeproduce
\CodeName,
\overbrace{ \secan{t}{w} }^\text{(3)},
\underbrace{l', \regenv'}_\text{(2)}
$$
\caption{General structure for command typing rules}\label{fig:typingforcmd}
\end{wrapfigure}

\para{Typing Commands}
The general structure of a typing rule for commands is presented in
Figure \ref{fig:typingforcmd}.  Judgements for commands \
%  do not
% require the concept of active registers, which has no meaning across
% different commands (1). Also, they
assume a starting label for the code to be produced, and an incoming
register record (1). A compilation will result in a new (outgoing)
register record, and the label of the next instruction following this
block (2) (cf. Figures \ref{table:absttypeatomquick},
\ref{table:absttypeif} and \ref{table:absttypewhcomp}). The security
annotation (3) is similar to that for
expressions;
$w$, the \emph{write effect}, provides information about the security level of variables,
registers, and channels to which the compiled code writes,
and $t$ describes its timing behaviour. However, $w$ and $t$ are drawn
from domains which include possible uncertainty.

% \begin{figure}[t]
% \centering
% \begin{minipage}{0.1\textwidth}
% $\wdontk \sqsubseteq \whigh$
% \end{minipage}
% \begin{minipage}{0.2\textwidth}
% $w_1 \wlub w_2 = max(w_1,w_2)$
% \end{minipage}
% \caption{Write Effects and their Partial Ordering \label{fig:writelattice}}
% \end{figure}

The write effect $w$ is described with a label defined in the set $\Set{\whigh,\wdontk}$, with partial ordering
$ \whigh \sqsubseteq  \wdontk $.
%
% lattice
%depicted in Figure \ref{fig:writelattice}.
The value $\whigh$ is for
programs that never write to registers and memory locations
outside of $\Hloc$. The value $\wdontk$ is used when write operations
might occur at any security level.

\begin{wrapfigure}{r}{0.7\textwidth}
\centering
\[
\tconst~0 \sqsubseteq \dots \sqsubseteq \tconst~n
\sqsubseteq \dots \sqsubseteq \tdontc  \sqsubseteq \ttop, ~n \in \mathbb{N}
\] %\\[8pt]
\begin{minipage}{0.6 \textwidth}
%\centering
$\begin{array}{lcl}
t_1 \tlub t_2 & =  &
\left\{
	\begin{array}{ll}
		\tdontc  & \mbox{if } t_1 \sqsubseteq \tdontc
                         \mbox{ and } t_2 \sqsubseteq \tdontc
%\mbox{ are in } \mathbb{N}
        \\
		\ttop % \max(t_1,t_2)
                      & \mbox{otherwise}\\
	\end{array}
\right. \\
t_1 \tclub t_2  & = &
\left\{
	\begin{array}{ll}
		\tconst~n_1+n_2  & \mbox{if } \forall i \in \Set{1,2}\quad t_i= \tconst~n_i  \\
		t_1 \tlub t_2 & \mbox{otherwise}
	\end{array}
\right.
\end{array}$
\end{minipage}
\caption{Termination Partial Ordering \label{fig:termlattice}}
\end{wrapfigure}

The timing behavior of a command is described by an element of the
 partial order (and associated operations) defined in Figure
 \ref{fig:termlattice}. We use timing $t = \tconst~n$, for $n \in \mathbb{N}$, when termination of the code is
guaranteed in exactly $n$ steps (and hence is independent of any
secrets); $t = \tdontc$ is used for
programs whose  timing characterization does not depend on secret
values, but whose exact timing is either unimportant, or difficult to
calculate statically.
When secrets might directly influence the timing behavior of a program, the label $\ttop$ is used.

\para{Command Rules}
We now introduce some of the actual rules for commands.
 The concatenation of code memories $\CodeName$ and $\CodeName'$ is
 written $\concat{\CodeName}{\CodeName'}$ and is well defined if the
 resulting program remains well-formed. It will be convenient to
 extend the set of labels $\Lab$ with a special empty label
 $\emptyLab$ such that $\emptyLab:B$ simply denotes $B$. Also, we consider that the empty instruction $\emptya$ is such that if $\CodeName=[B,I_1,\dots,I_n]$ we define $\concat{[l:\emptya]}{\CodeName}$ as $[l:B,I_1,I_2,\dots,I_n]$.

\begin{wrapfigure}{r}{0.75\textwidth}
\begin{mathpar}
\inference[\rulename{\assignName}]
{\regenv, \emptyRegAll, l ||- E \etypeproduce \CodeName, \esecan{\seclev(x)}{n}, r, \regenv'
& \regenv''=\subst{\regenv'}{\conn{r}{x}{}}} %\writeMode}}}
{\typeconc{\regenv}{l}{\assign{x}{E}}
{
 \{
 \CodeName ~\concatprograms ~
 {[}\storeName\ \vartoptr(x) \ r{]}
 \}
}
{\emptyLab}{\regenv''}{
%\seclev(x)=\Hloc ~ ?  ~ \secan{n+1}{\whigh}
%		 ~  : ~  \secan{\tdontc}{\wdontk}
tp
}
}
\\
\text{where } tp = \begin{cases} \secan{\tconst~n+1}{\whigh} &\text{if $\seclev(x)=\Hloc$} \\
                   \secan{\tdontc}{\wdontk} & \text{otherwise}. \end{cases}
\end{mathpar}
\caption{Type system rule for assignment}\label{table:absttypeatomquick}
\end{wrapfigure}

%The rule $\wraprule{\rulename{\wskip}}$ maps the \whprog\ command $\wskip$ into the \assem\ command $\nop$. The security annotation specifies that no writes are performed (the write effect is $\whigh$) and the number of \assem\ instruction is 1. Clearly, no modification of the register record is required.

%\para{Assignment}
The rule $\wraprule{\rulename{\assignName}}$ in Figure
\ref{table:absttypeatomquick} requires that the security level of
expression $E$ matches the level of the variable $x$
($\esecan{\seclev(x)}{n}$). If this is possible, the compilation is
completed by storing the value of $r$ into the pointer corresponding
to $x$ via the instruction $\storeName\ \vartoptr(x) \ r$ (assuming
there exists an injective function $\vartoptr \in \var \rightarrow
\Constant$ which maps \whprog\ variables to memory locations), and the
register record is updated by associating $r$ and $x$
 ($\subst{\regenv'}{\conn{r}{x}{}}$).
The resulting security annotation depends on the level of $x$:  when $\seclev(x)=\Hloc$ the security annotation is $\secan{\tconst~n+1}{\whigh}$, otherwise it is $\secan{\tdontc}{\wdontk}$. Rules for $\wskip$ and $\outName$ can be found in
\ifthenelse{\boolean{long}}
{Appendix \ref{app:absatomicstat}.}
{the extended version of this paper \cite{deltedesco+typecompftni}.}
 %The rule $\wraprule{\rulename{\outName}}$ is just a simpler version of $\wraprule{\rulename{\assignName}}$, since it does not require the evaluation of $E$ to be stored in memory. Notice that it implicitly assumes that the security level of the $\low$ channel (respectively, of the $\high$ channel) is $\Lloc$ (respectively, is $\Hloc$).

\begin{figure*}[tb]
%\vspace{0.2cm}
\begin{mathpar}
\setnamespace{0pt}\setpremisesend{0.25em} % \setpremisesspace{0.4em}
\inference[\rulename{\wifName}-any]
{\regenv, \emptyRegAll, l ||- E \etypeproduce \CodeName_0,
  \esecan{\lambda}{n_0} , r, \regenv_1 &
\forall i \in \Set{1,2}\quad
\typeconc{\regenv_1}{
  \emptyLab}{C_i}{\CodeName_i}{l_i}{\regenv_{i+1}}{ \secan{t_i}{w_i}}
\\ %\and
w_i \sqsubseteq \writeMap(\lambda)
& %\and
\mathit{br}, \mathit{ex} \mbox{ fresh}
}
{
\typeconc{\regenv}{l}{\wifa{E}{C_1}{C_2}}{
\left \{
\begin{array}{l}
\CodeName_0 \concatprograms % \\
{[}\jzName\ br \ r{]} \concatprograms \\
\concat{\CodeName_1}{[l_1:\ \jmpName\ ex]} \concatprograms \\
\concat{br: \CodeName_2}{[l_2:\ \nop]}
\end{array}
\right \}
}
{ex}{ \regenv_2 \envinters \regenv_3}
{
\left \langle
\begin{array}{c}
\writeMap(\lambda)\\
\termMap(\lambda)\tlub t_1 \tlub t_2
\end{array}
\right \rangle
%\secan{}{}
 % \wlub w_1 \wlub w_2 %% these are both smaller than \writeMap(\lambda)
}}
\\ % end inference %%%%%%%%%%%%%%%%%%%%%%%%%%%%%%%%%%%%%%%%%%%%
% \inference[\rulename{\wifName}-any]
% {\regenv, \emptyRegAll, l ||- E \etypeproduce \CodeName_g, \esecan{\lambda}{n_g} , r, \regenv_1 \and
% \typeconc{\regenv_1}{ \emptyLab}{C_t}{\CodeName_t}{l_t}{\regenv_2}{ \secan{t_1}{w_1}} \and
% \typeconc{\regenv_1}{br}{C_e}{\CodeName_e}{l_e}{\regenv_3}{\secan{t_2}{w_2}} \\
% \CodeName_t'=\concat{\CodeName_t}{[l_t:\ \jmpName\ ex]} \and
% \CodeName_e'=\concat{\CodeName_e}{[l_e:\ \nop]} \and
% \writeMap(\lambda) \sqsupseteq w_i \and
% \regenv_F=\regenv_2 \envinters \regenv_3 \and
% br, ex \mbox{ fresh}
% }
% {
% \typeconc{\regenv}{l}{\wif{E}{C_t}{C_e}}{
% \left \{
% \begin{array}{l}
% \CodeName_g \concatprograms \\
% {[}\jzName\ br \ r{]} \concatprograms \\
% \CodeName_t' \concatprograms \\
% \CodeName_e'
% \end{array}
% \right \}
% }
% {ex}{ \regenv_F}
% {
% \secan{\termMap(\lambda)\tlub t_1 \tlub t_2}{\writeMap(\lambda) \wlub w_1 \wlub w_2}
% }}\\
%\vspace{0.5cm}

\inference[\rulename{\wifName}-\valHigh]
{\regenv, \emptyRegAll, l ||- E \etypeproduce \CodeName_0, \esecan{\valHigh}{n_0} , r, \regenv_1 \\
\forall i \in \Set{1,2} \quad
  \typeconc{\regenv_1}{\emptyLab}{C_i}{\CodeName_i}{l_i}{\regenv_{i+1}}{\secan{\tconst~n_i}{\whigh}}
\\
% \typeconc{\regenv_1}{br}{C_e}{\CodeName_e}{l_e}{\regenv_3}{\secan{n_e}{\whigh}}
%\begin{array}{l}
%br, ex \mbox{ fresh}
%\\
m = n_0 + max(n_1,n_2) + 2 & \mathit{br}, \mathit{ex} \mbox{ fresh}
%\end{array}
}
{\typeconc{\regenv}{l}{\wifa{E}{C_1}{C_2}}
{
\left \{
%\begin{array}{@{}l@{}l@{}}
\begin{array}{l}
\CodeName_0 \concatprograms {[}\jzName\ br \ r{]} \\
\concatprograms
\concat{\CodeName_1}{\concat{l_1:\nop^{n_2-n_1}}{[\jmpName\ ex]}}
\\
\concatprograms
\concat{\CodeName_2}{\concat{l_2:\nop^{n_1-n_2}}{[\nop]}}
\end{array}
\right \}
}
{ex}
{\regenv_2 \envinters \regenv_3}
{\left \langle
\begin{array}{c}
\whigh\\
\tconst~m
\end{array}
\right \rangle
}}
\end{mathpar}

% \inference[\rulename{\wifName}-\valHigh]
% {\regenv, \emptyRegAll, l ||- E \etypeproduce \CodeName_0, \esecan{\valHigh}{n_0} , r, \regenv_1 \and
% \typeconc{\regenv_1}{\emptyLab}{C_t}{\CodeName_t}{l_t}{\regenv_2}{\secan{n_t}{\whigh}} \and
% \typeconc{\regenv_1}{br}{C_e}{\CodeName_e}{l_e}{\regenv_3}{\secan{n_e}{\whigh}} \\
% \CodeName_t'=\concat{\CodeName_t}{\concat{\concat{[l_t:\ \emptya]}{\nop^{n_e-n_t}}}{[\jmpName\ ex]}} \and
% \CodeName_e'=\concat{\CodeName_e}{\concat{\concat{[l_e:\ \emptya]}{\nop^{n_t-n_e}}}{[\nop]}} \and
% br, ex \mbox{ fresh}
% }
% {\typeconc{\regenv}{l}{\wif{E}{C_t}{C_e}}
% {
% \left \{
% \begin{array}{l}
% \CodeName_0 \concatprograms ~
% {[}\jzName\ br \ r{]}  \concatprograms ~
% \CodeName_t' \concatprograms ~
% \CodeName_e'
% \end{array}
% \right \}
% }
% {ex}
% {\regenv_2 \envinters \regenv_3}
% {\secan{n_0 + max(n_t,n_e) + 2}{\whigh}}}
% \end{mathpar}
\caption{Type rules for $\wifName$}\label{table:absttypeif}
\end{figure*}

The rule $\wraprule{\rulename{\wifName}-any}$ in Figure
\ref{table:absttypeif} builds the translation of the $\wifName$
statement by joining together several \assem\ fragments.
The basic idea of this rule is that it follows Denning's classic
condition for certifying information flow security
\cite{Denning:Denning:Certification}: if the conditional involves high
data then the branches of the conditional cannot write to anything
except high variables. This is obtained by imposing the side condition $w_i \sqsubseteq
\writeMap(\lambda)$, where $w_i$ is the write-effect of the respective
branches, and $\writeMap$ is a function mapping the security level of the guard into its corresponding write-effect (such that $\writeMap(\Hloc)= \whigh$ and $\writeMap(\Lloc)= \wdontk$).  In this rule, in contrast to the
$\wraprule{\rulename{\wifName}-\valHigh}$ rule for the conditional,
the timing properties of the two branches may be
different, so we do not attempt to return an accurate timing. Hence,
the use of the operator $\tlub$ which just records whether the timing
depends on only low data, or possibly any data (notice that the security level of the guard is mapped into its corresponding timing label by the function $\termMap$, such that $\termMap(\Lloc)= \tdontc$ and $\termMap(\Hloc)= \ttop$). The compilation of the conditional code into \assem\ is fairly
straightforward: compute the expression ($P_0$) into register $r$, jump to
else-branch ($P_2$) if $r$ is zero, otherwise fall through to ``then''
branch ($P_1$) and
then jump out of the block. The resulting register record of the
whole command compilation is the common part of
the register records resulting from the respective branches.

The rule $\wraprule{\rulename{\wifName}-\valHigh}$ (Figure
\ref{table:absttypeif}) allows the system to be more permissive.
It deals with a conditional
expression which computes purely with high data -- a so-called
\emph{high conditional}. This rule, when applicable, compiles the
high conditional in a way that
guarantees that its timing behaviour is \emph{independent of the high
  data}. This is important since it is the only way that we can permit
a computation to securely write to low variables after a high conditional.
This is related to timing-sensitive information-flow typing rules for
high conditionals by Smith \cite{Smith:CSFW01}. The basic strategy is to
compute  the timing of each branch ($n_1$ and $n_2$ respectively)
and pad the respective
branches in the compiled code with sequences of $\nop$ instructions so that they become equally
long, where $\nop^m$ is a sequence of $m$ consecutive $\nop$ instructions when
$m>0$, and is $\emptya$ otherwise.

% exploits timing information of branches to build secure $\wifName$ commands whose guard is at security level $\Hloc$. It can be applied when the guard is labeled as $ \esecan{\valHigh}{n_g}$ and both branches have a write effect $\whigh$ together with a timing behavior expressed in $\mathbb{N}$ (which excludes, for example, loops). When this is the case, branches are aligned on the same length, and the overall security annotation retains the write effects of the components, whereas the timing behavior is given by summing together the guard evaluation, the branch execution and further book keeping overhead. The alignment is performed by appending the padding code $\nop^{n_e-n_t}$ after $\CodeName_t$ and $\nop^{n_t-n_e}$ after $\CodeName_e$, under the assumption that $\nop^n= \emptya$ for $n \leq 0$, and a sequence $[\nop,\dots,\nop]$ of $n$ $\nop$ instructions otherwise.

%\vspace{0.2cm}

%The rule $\rulename{\whileName}$
The rule $\wraprule{\rulename{;}}$ for sequential composition (Figure
\ref{table:absttypewhcomp}) is largely standard: the label and
register records are passed sequentially from inputs to outputs, and
the security types are combined in the obvious way. The only twist,
the side condition, encodes the key idea in the type system of Smith
\cite{Smith:CSFW01}. If the computation of the first command has
timing behaviour which might depend on high data ($t_1 = \ttop$), then the
second command cannot be allowed to write to low data ($w_2 =
\whigh$), as this would otherwise reveal information about the high
data through timing of low events.

% applies the compilation recursively to the arguments, and requires that the writing effect of the second component is $\whigh$ when the timing behavior of the first component is $\ttop$, to secure timing channels. When this is the case, the resulting write effect is obtained as the join of the components' write effects, whereas the time effect is obtained via the operator $\tclub$.

\begin{figure*}[tb]
\begin{mathpar}
\inference[\rulename{seq}]
{\typeconc{\regenv}{l}{C_1}{\CodeName_1}{l_1}{\regenv_1}{\secan{t_1}{w_1}} &
\typeconc{\regenv_1}{l_1}{ C_2}{\CodeName_2}{ l_2}{\regenv_2}{\secan{t_2}{w_2}} \\
t_1=\ttop \Rightarrow w_2=\whigh
}
%{}
{\typeconc{\regenv}{l}{C_1;C_2}{
\left \{
\begin{array}{l}
\CodeName_1 \concatprograms
\CodeName_2
\end{array} \right \}}{l_2}{\regenv_2}{\secan{t_1 \tclub t_2}{w_1 \wlub w_2}}
}
\\
\inference[\rulename{\whileName}]
{
\lambda = \seclev(x)=\seclev(r) &
 & t=\ttop \Rightarrow \writeMap(\lambda)=\whigh
&
w \sqsubseteq \writeMap(\lambda) \\
\regenv_B \sqsubseteq \subst{\regenv}{\conn{r}{x}{}}  &
\regenv_B \sqsubseteq \subst{\regenv_E}{\conn{r}{x}{}}
 & lp, ex \mbox{ fresh} \\
\typeconc{\regenv_B}{\emptyLab}{C}{\CodeName}{l'}{\regenv_E}{\secan{t}{w}}
& \CodeName_i  =[ \loadName\ r\ \vartoptr(x),\ \storeName\ \vartoptr(x) \ r]
}
{\typeconc
{\regenv}
{l}
{ \while{x}{C}}
{
\left \{
\begin{array}{l}
l: \ \CodeName_i  \concatprograms [lp:\jzName\ ex\ r]
\concatprograms {} \CodeName \concatprograms {} \\{}
l': \ \CodeName_i \concatprograms [\jmpName\ lp]
\end{array}
\right \}
}
{ex}
{\regenv_B}
{
\left \langle
\begin{array}{c}
\writeMap(\lambda) \\
\termMap(\lambda)\tlub t
\end{array}
\right \rangle
}}
\end{mathpar}
\caption{Type rules for sequential composition ($;$) and $\whileName$}\label{table:absttypewhcomp}
\end{figure*}

The compilation of the $\whileName$ command (Figure
\ref{table:absttypewhcomp})
is quite involved for two reasons.  Firstly, as one would expect in a typing rule for a
looping construct, there are technical conditions
relating the register record at the beginning of the loop, and the register
record on exit. This is because we need a single description of the exit
register record $\regenv_B$ which approximates both the register record at the
start of the loop body ($\subst{\regenv}{\conn{r}{x}{}}$) and the
register record after computing the loop body and putting $x$ back
into register $r$ ($\subst{\regenv_E}{\conn{r}{x}{}}$).
 Secondly, for technical reasons relating purely to the
proof of correctness (security), the code is (i) a little less compact that one
would expect to write due to an unnecessarily repeated subexpression, and (ii) contains a redundant
instruction $\storeName\ x \ r$ immediately after having loaded $x$
into $r$.
The lack of compactness is due
to the fact that the proof goes via an intermediate language that
cannot represent the ideal version of the code. The redundant
instruction
establishes a particular invariant that is needed in the
proof: not only is $x$ in register $r$, but it arrived there as the
result of writing $r$ into $x$.
 The security concerns are taken care of by ensuring
that the security level of the whole loop is consistent with the
levels of the branch variable $x$ and the branch register $r$, and
that if the timing of the body might depend on high data, then the
level of the loop variable (and hence the whole expression) must be $H$.

All the rules introduced in this section are used in Appendix \ref{appendix:example} to compile a simple \whprog\ program that computes a hash function.
%  Hence, the code $\CodeName_i$ is deployed at the entrance and after any execution of the body loop $\CodeName$. Notice that the code associates the register $r$ with $x$ in write mode $\writeMode$. The semantic of the loop would be preserved even if the association was established in read mode $\readMode$, however $\writeMode$ simplifies arguments in proofs. As for the $\wifName$ case, some book-keeping is necessary to implement the guard evaluation (instruction $lp:\ \jzName\ ex\ r$) and for making sure that the execution goes back to the guard evaluation after the body of the loop is executed (instruction $\jmpName\ lp$).
% %For the compilation of the loop body, a register environment $\regenv_b$ is required such that it enjoys a pre fixed-point property with respect to  $\regenv_e$. Since, in general, this might not be true for $\regenv'$, we require from $\regenv_b$  to be anything smaller than the environment $\regenv'$.
% %The security annotation introduced by the $\rulename{\rulename{\whileName}}$
% Several constraints have to be fulfilled for preventing implicit flows from happening. First, we require that the security level of $r$ and $x$ coincides. Then, we require that the write effect of the loop body is stricter than the write effect induced by the loop guard. Finally, we make sure that the write effect of the operations performed in the whole content of the loop body (namely $\CodeName$ and $\CodeName_i$ together) is compatible with the timing features of $\CodeName$.

\subsection{Soundness}\label{sec:abssoundness}

%In this section we formalize the property enforced on \assem\ programs by the type system defined in Section \ref{sec:absttypesystem}.

%\begin{definition}[\assem\ architecture as fault-prone system]
%\end{definition}

We open this section by instantiating the definition of strong security for \assem\ programs, which requires to view the  \assem\ machine as an instance of a
fault-prone system (Definition \ref{def:System}). For this we consider the
set of locations $\Loc$  of the \assem\ fault-prone system to be the
names of the individual bits comprising the registers and memories. So,
for example, a general purpose register $r$ corresponds to some set of
locations $r_0,\ldots r_{31}$ (for a word-size of 32). With this
correspondence, the set of states of the \assem\ system are isomophic
to the set of functions $\Loc -> \Set{0,1}$.
As mentioned earlier, the fault-prone locations $F$
are those which correspond to the general purpose registers and the
data memory.

For the definition of security we must additionally partition the
locations into the program $\progdom$, the low locations $\lowdom$, and the high
locations $\highdom$:  $\progdom$ comprises the locations of the code \emph{and} the program
counter register, $\lowdom$ the locations of the low variables and
registers, and $\highdom$ the  locations of the high variables and registers.

Since assembly programs are run starting at their first instruction, the following
slightly specialised version of strong security is appropriate:
\begin{definition}[Strong Security for \assem\ programs]
We say that an assembly program $\CodeName$ is strongly secure if $(\CodeName,0)$ is strongly secure according to Definition \ref{def:strongsec} instantiated on the fault-prone system $\assemlts=\{ \Loc,\{ch!k| ch \in \{\low,\high\} \mbox{ and } k \in \Constant\}\cup \{\tau\},\asem{}\}$.
\end{definition}

The type system defined in Section \ref{sec:absttypesystem} guarantees that any type-correct \whprog\ program is compiled into a strongly secure \assem\ program. This is formalized as follows.

\begin{theorem}[Strong security enforcement]\label{thm:strsenforce}
Let $C$ be a \whprog\ program, and suppose $\{\}, \emptyLab |- C \typeproduce \CodeName, \secan{t}{w}, l,  \regenv$. Then $\CodeName$ is strongly secure.
\end{theorem}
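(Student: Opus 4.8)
The plan is to exhibit an explicit strong bisimulation $R$ on program components of the \assem\ fault-prone system and to verify the clauses of Definition~\ref{def:strongsec} by induction on the typing derivation $\{\}, \emptyLab |- C \typeproduce \CodeName, \secan{t}{w}, l, \regenv$. It is worth noting first that strong security is stated over the termination-transparent relation $\termtransp{}$ and makes \emph{no} reference to faults (those are discharged separately by Theorem~\ref{thm:ssimpliesponi}, whose hypothesis $\progdom\subseteq\tolerant$ is exactly what makes the code and program counter immune so that chaining the two theorems yields PoNI/PNI). Hence the obligation here is a purely fault-free strong low-bisimulation argument about the generated code, with stuck or terminated configurations handled by the $\tau$ self-loops that $\termtransp{}$ adds. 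Since the paper factors compilation through an intermediate language, one concrete route is to prove the bisimulation at the intermediate level and transfer it along a lock-step simulation to \assem; I will instead describe the argument directly on the emitted code, which is conceptually the same.

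A single program point of the compiled code is a pair $(\CodeName,\pc)$. The relation $R$ will contain (i) the diagonal of all program points reachable inside $\CodeName$, and (ii) \emph{cross} pairs $(\,(\CodeName,\pc_1),(\CodeName,\pc_2)\,)$ that arise when two $=_{\Lloc}$ states disagree on a high branch guard: $\pc_1,\pc_2$ are then corresponding positions in the two compiled branches. The intuition is that such executions run in lock-step and re-converge at the branch's exit label; $R$ records precisely which points are ``at the same logical stage''. I expect it to be cleanest to build $R$ compositionally, one clause per typing rule, reusing the fact (inherited from \cite{Sabelfeld:Sands:Multithreaded}) that strong bisimulations compose under the language constructs.

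To drive the induction I would strengthen the statement into an invariant asserting that the annotation $\secan{t}{w}$ soundly describes every $\termtransp{}$-execution of $\CodeName$ from an $R$-related point: (writes) if $w=\whigh$ then no location in $\lowdom$ is modified; (timing) if $t=\tconst~n$ then exactly $n$ steps reach the exit label irrespective of the high data, if $t=\tdontc$ the step count is independent of secrets, and only when $t=\ttop$ may it depend on secrets; and (registers) the outgoing record $\regenv$ faithfully describes which registers cache which variables, so that two $=_{\Lloc}$ states at the same program point agree on every register that the record associates with a low variable. The base cases ($\wskip$, assignment, $\outName$, and the expression rules $\wraprule{\rulename{K}}$, $\wraprule{\rulename{V}-cached}$) are direct: each $=_{\Lloc}$ pair takes identical single steps, emits equal low labels, and the change read off the rule matches the register-record update and write effect.

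The real work, and the main obstacle, is the conditional and the loop with a high guard. For rule $\wraprule{\rulename{\wifName}-\valHigh}$ two $=_{\Lloc}$ states may evaluate the high guard differently and branch apart; I must show that the $\nop$-padding forces both compiled branches to run in exactly $m=n_0+\max(n_1,n_2)+2$ steps, that the write effect $\whigh$ forbids any write to $\lowdom$ inside either branch, and hence that the two runs re-converge at the exit label $ex$ with $=_{\Lloc}$-equivalent states and equal program component, which is exactly what the cross part of $R$ must certify to close the bisimulation. The high-guarded $\whileName$ case additionally needs the record conditions $\regenv_B\sqsubseteq\subst{\regenv}{\conn{r}{x}{}}$ and $\regenv_B\sqsubseteq\subst{\regenv_E}{\conn{r}{x}{}}$ so that the loop head is re-entered with a record sound for both the first and subsequent iterations, letting the loop be closed coinductively. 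Finally, gluing the pieces through rule $\wraprule{\rulename{seq}}$ relies on Smith's side condition $t_1=\ttop\Rightarrow w_2=\whigh$ \cite{Smith:CSFW01}: it prevents a low write or low-observable timing in $C_2$ from following a command whose running time already depends on secrets, and it is precisely the hypothesis needed to merge the two component bisimulations while keeping low-observable timing secret-independent. I expect the register-record bookkeeping (the partial-bijection operations $\envinters$, $\subst{\regenv}{\conn{r}{x}{}}$, $\subst{\regenv}{\breakconn{r}}$) to be the most tedious but least deep part; the conceptual crux is the lock-step re-convergence after high branching together with the timing discipline.
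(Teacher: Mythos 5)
You take a genuinely different route from the paper. The paper never builds a bisimulation on \assem\ code at all: it first proves that every type-correct \whprog\ program is strongly secure (an explicit source-level bisimulation $R = A \cup X \cup H^2$ assembled from typability, using subject reduction, upward closure of register records, and context-typing lemmas); it then shows that the emitted intermediate \swhprog\ program is itself \emph{re-typable} by the same type system (over an enlarged register set with shadow registers), so the source-level theorem applies to it verbatim; finally it transfers strong security to \assem\ through a purely semantic lock-step correspondence (``strong coupling'') between the \swhprog\ program and its \assem\ translation. Your plan -- a direct bisimulation on $(\CodeName,\pc)$ program points, by induction on the typing derivation -- is legitimate in principle, and the invariants you isolate (write-effect soundness, timing soundness, register-record bookkeeping) are the right ones; this is essentially the work the paper pushes into the source-level proof and then reuses twice.

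There is, however, a genuine gap in the relation $R$ you describe. Diagonal pairs plus lock-step cross pairs that ``re-converge at the branch's exit label'' suffice only for the padded rule $\wraprule{\rulename{\wifName}-\valHigh}$, whose branches have equal constant timing. The type system also admits secret-dependent timing: rule $\wraprule{\rulename{\wifName}-any}$ with a high guard produces branches typed $\whigh$ but \emph{unpadded}, of possibly different or even $\ttop$ timing, and rule $\wraprule{\rulename{\whileName}}$ with a high guard yields a loop whose iteration count depends on the secret. In these cases two low-equivalent runs do not proceed in lock-step and need never re-converge -- one run may exit the loop while the other loops forever -- so no assignment of ``corresponding positions'' exists, and your relation cannot be closed. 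What is missing is the analogue of the paper's $H^2$ component: relate \emph{every} pair of program points lying in write-effect-$\whigh$ code, including the terminated configuration, and argue that any such point, under the arbitrary data supplied by strong security's re-quantification over all low-equivalent states at every step, performs a step whose action is observed as $\tau$, leaves the low part of the state untouched, and remains inside the high region; termination transparency (the $\tau$ self-loops on stuck states) is then exactly what keeps a terminated run bisimilar to one that is still looping. Smith's side condition on sequencing, which you do invoke, guarantees that everything downstream of a $\ttop$ command lies in this high region, but it does not by itself repair a relation containing only lock-step pairs. Until $R$ is widened in this way, your bisimulation fails to close on well-typed programs as simple as a secret-guarded loop followed by further code.
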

\ifthenelse{\boolean{long}}
{
\begin{proof}
See Section \ref{sec:realtypesystem}.
\end{proof}
}
{}

According to Theorem \ref{thm:strsenforce}, we can obtain strongly secure \assem\ programs from type-correct \whprog\ programs. Theorem \ref{thm:ssimpliesponi} (Section \ref{sec:possibft}) states that strong security is a sufficient condition to guarantee PoNI. The two results together express a strategy to translate \whprog\ programs into \assem\ programs that enjoy PoNI. We state this formally by instantiating the definition of PoNI for \assem\ programs.

\begin{definition}[PoNI for \assem\ programs]
We say that an assembly program $\CodeName$ enjoys PoNI if $(\CodeName,0)$ is PoNI according to Definition \ref{def:possec} instantiated on the fault-prone system $\assemlts=\{ \Loc,\{ch!k| ch \in \{\low,\high\} \mbox{ and } k \in \Constant\}\cup \{\tau\},\asem{}\}$.

\end{definition}

\begin{corollary}[PoNI enforcement on \assem\ programs]
Let $C$ be a \whprog\ program, and suppose $\{\}, \emptyLab |- C \typeproduce \CodeName, \secan{t}{w}, l,  \regenv$. Then $\CodeName$ is PoNI.
\end{corollary}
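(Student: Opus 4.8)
The plan is to chain the two principal enforcement results already established, treating this corollary as a pure composition with a side-condition check. First I would observe that the hypothesis of the corollary is verbatim identical to that of Theorem~\ref{thm:strsenforce}: we are given a \whprog\ program $C$ with $\{\}, \emptyLab |- C \typeproduce \CodeName, \secan{t}{w}, l, \regenv$. Applying Theorem~\ref{thm:strsenforce} immediately yields that $\CodeName$ is strongly secure, i.e.\ that the program component $(\CodeName,0)$ satisfies SS in the sense of Definition~\ref{def:strongsec} instantiated on the \assem\ fault-prone system $\assemlts$.

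Next I would verify the only genuine proof obligation, namely the side condition $\progdom \subseteq \tolerant$ required by Theorem~\ref{thm:ssimpliesponi}. This holds by construction of the \assem\ instance: as fixed in Section~\ref{sec:procdescr}, $\progdom$ comprises exactly the locations of the code memory $\assemprog$ together with the program counter register, and both of these components were assumed to reside in the fault-tolerant part $\tolerant$ of the architecture (general purpose registers and data memory being the only faulty locations). Hence the entire program component lives in $\tolerant$, so the hypothesis of Theorem~\ref{thm:ssimpliesponi} is met. Applying that theorem to $(\CodeName,0)$ then gives that $(\CodeName,0)$ satisfies PoNI according to Definition~\ref{def:possec} instantiated on $\assemlts$, which is precisely the assertion that $\CodeName$ enjoys PoNI for \assem\ programs.

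The only (modest) obstacle is bookkeeping: one must confirm that the three instantiated notions line up. Specifically, the program component selected by $\pprog{\cdot}$ (the restriction of a state to $\progdom$) must coincide with the pair $(\CodeName,0)$ appearing in both the ``Strong Security for \assem\ programs'' and the ``PoNI for \assem\ programs'' definitions, and the same $\progdom/\tolerant$ partition of $\assemlts$ must be used throughout both theorems. Once this is checked, no further reasoning is needed, since the effect of Theorem~\ref{thm:strsenforce} feeds directly into the premise of Theorem~\ref{thm:ssimpliesponi}; the corollary follows by transitivity of the implication $\text{well-typed} \Rightarrow \text{SS} \Rightarrow \text{PoNI}$.
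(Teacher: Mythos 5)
Your proof is correct and follows exactly the paper's route: the paper's own proof is the one-line ``direct application of Theorem~\ref{thm:strsenforce} and Theorem~\ref{thm:ssimpliesponi},'' which is precisely your chain well-typed $\Rightarrow$ SS $\Rightarrow$ PoNI. Your explicit verification of the side condition $\progdom \subseteq \tolerant$ (code memory and program counter being in the fault-tolerant part of the \assem\ instance) is a worthwhile detail that the paper leaves implicit, but it does not constitute a different approach.
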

\begin{proof}
Direct application of Theorem \ref{thm:strsenforce} and Theorem \ref{thm:ssimpliesponi}.
\end{proof}

\section{Related Work}\label{Sec:relat}
% DImensions: fault-tolerant NI
% Types aiding FT (TAL)
% What we build on

\para{Fault Tolerant Non-Interference}
The only previous work of which we are aware that aims to prevent transient faults from violating noninterference is by Del Tedesco et al.~\cite{DelTedesco+:FTNInterference}. The enforcement approach of that paper is radically different from the approach studied here, and the two approaches are largely complimentary. Here we highlight the differences and tradeoffs: 
\begin{itemizeC}
\item 
Targeting a similar RISC machine, the implementation mechanism of \cite{DelTedesco+:FTNInterference} is a combination of software fault isolation \cite{Wahbe:1993:originalsfi} and a black-box non-interference technique called secure multi-execution \cite{Devriese:2010}. This can be applied to any program, but only preserves the behaviour of noninterfering memory-safe programs. Verifying that a program is memory-safe would have to be done separately, but could be achieved by compiling correctly from a memory-safe language.  
\item In \cite{DelTedesco+:FTNInterference}, fault-tolerance is assumed for the code memory but not in the program counter register. The cost of this is that the method described in that paper can only tolerate up to a statically chosen number of faults, whereas in the present work we can tolerate any number. 
\item The security property enforced by the method described in \cite{DelTedesco+:FTNInterference} is a restriction of PoNI to runs with a limited number of faults. However, the work does not justify this definition with respect to the more standard notion of probabilistic noninterference. The limitation in the number of faults, together with our result, shows that the established security property is strictly weaker than PNI.  
\end{itemizeC}

\para{Strong Security for Fault Tolerance}
This paper is not the first to realise that Sabelfeld and Sands's strong security \cite{Sabelfeld:Sands:Multithreaded} implies security in the presence of faults. Mantel and Sabelfeld \cite{heiko:andrei:distributedbisim} used strong security in a state-based encoding of 
channel-based communication. They observed that strong security is not affected by faults occurring in message transmission. Another way to think of this is that strong security of individual threads implies strong security of their composition; a faulty environment is itself a strongly secure thread, simply because it has no ability to read directly from secrets in the state. 

\para{Related Type Systems}
The type-directed compilation presented here combines several features which are inspired by existing non-interference type systems for sequential and concurrent programming languages. Our security notion is timing sensitive and has some similarities with 
Agat's~\cite{Agat:Timing}  type-directed source-to-source transformation method that maps a source program into an equivalent target program where timing leaks are eliminated by padding. Similar ideas were shown to apply to a type system for strong security \cite{Sabelfeld:Sands:Multithreaded}.
Our padding mechanism is different from Agat's, since it is based on  counting the number of computation steps in the branches of a high conditional expressions, and our system is more liberal, since it allows e.g. loops with a secret guard. These distinguishing features are both present in Smith's type system for a concurrent language \cite{Smith:CSFW01} (see also \cite{Castellani:Boudol:TCS02}).

\para{ Non-interference for Low-level Programming Languages}
% Our approach guarantees security for low level programs in an indirect way:
% first we compile a \whprog\ program into a \swhprog\ program for which
%security
% holds. Then we map the intermediate representation to assembly code without
% altering the semantics. We adopt this strategy because, in general, reasoning % on
% low level code proves to be very challenging.
% Other strategies for reasoning about security of low-level program has been
% proposed.
Medel et al.~\cite{TCS05} propose a type system for a RISC-like assembly language capable to enforce (termination and time insensitive) non-interference. %for Sif, a simple RISC assembly language
%(very similar to the one used in our work).
Enforcing the same security condition, Barthe et al.~\cite{barthe:typesprescomp:2004} introduce a stack-based assembly
language equipped with a type system. 
%The analysis enforces (time and
%termination insensitive) non-interference. 
Subsequent work~\cite{Barthe:2007} shows a compilation strategy which enforces non-interference across all the intermediate steps until reaching a JVM-like language. Bartuti et al.~\cite{Barbuti:2002} use a different notion for  confidentiality, called $\sigma$-security, which is enforced by 
abstract interpretation.

\para{Dependability}
The need for a stronger connection between security and dependability has been stated in many works (e.g.~\cite{erland:secandavail:2006,connectdepsec:2004}). 
Interestingly, it can be observed that many solutions for dependability are based on information-flow security concepts. 
%Similarly, some methods for enforcing information-flow security show a clear %connection with dependability principles. 
In  \cite{Rushby99:partitioning}, well-known concepts from the information-flow literature are introduced as building blocks to achieve dependability goals. 
In \cite{Weber:FaultToleranceAsNoninterference:1989}, 
non-interference-like definitions are used to express fault tolerance in terms of program semantics. 
On the other hand, one could argue that the security domain has been influenced by dependability principles as well. For instance, our enforcement is sound only if fault-tolerant hardware components are deployed for the code memory and the program counter. % Another example is Secure Multi-Execution~\cite{Devriese:2010:sme} (SME), a technique that has been used in \cite{DelTedesco+:FTNInterference} to achieve separation between public and sensitive computation. SME enforces security via isolation principles that are typically found in dependability solutions.
	
\para{Language-Based Techniques for Fault-tolerance}
The style of our work -- in terms of the style of formalisation, the use of programming language techniques, and the level of semantic precision in the stated goals -- is in the spirit of Perry et al's fault-tolerant typed assembly language \cite{PLDI07}. Because we need to reason about security and not conventional fault tolerance, our semantic model of faults is necessarily much more involved than theirs, which is purely deterministic.

\para{Security and Transient Faults}
We have not been the first ones to consider the implications of
transient faults for security -- Bar-El et al.~ \cite{naccache:sorcerer:2006} survey a variety of methods that can be used to induce transient faults on circuits that manipulate sensitive data. 
 Xu et al.~\cite{vulnerabilities:errors:2001} study the effect of a single bit flip that strikes the opcode of x86 control flow instructions;  
%The paper shows that both FTP and SSH authentications can be subverted, with a %fairly high probability, by such events. The authors also propose a strategy
%for securing this vulnerability via a different encoding for such instructions.
their work states the non-modifiability of the source code, which is a crucial assumption in our framework. Bao et al.~\cite{Bao:TransientFaultsPublicKey:1998}  illustrate several transient-fault based attacks on crypto-schemes. 
%RSA and Discrete Logarithms cryptographic schemes, together with software %countermeasures. 
Their protection mechanisms either involve some form of replication or a more intensive usage of randomness (to increase the unpredictability of the result). In a similar scenario, Ciet et al.~\cite{Ciet:EllipticTransientFaults:2005} 
show how the parameters of an elliptic curve crypto-system can be compromised by transient faults, and illustrate how a comparison mechanism is sufficient to prevent the attack from being successful. 
%In particular the method compares the working copies of said parameters %(located in a faulty hardware component) to their original counterparts (stored %in fault-free hardware) in several stages of the computation.
Canetti et al.\ \cite{Canetti+:Maintaining} discuss security in the presence of transient faults for cryptographic protocol implementations where they focus on how random number generation is used in the code.

Our approach relies on fault-tolerant support for the program counter. While it seems a bit restrictive, there are fault-tolerant solutions for registers (e.g.~\cite{Oliveira:RegStrong:2007,sparc64viiifx:2010}).

\section{Limitations}\label{Sec:dis}
The hardware model discussed here is similar to those introduced in
\cite{PLDI07,DelTedesco+:FTNInterference} and, in common with many informal models of faults, has similar
shortcomings: faults occurring at lower levels e.g. in combinatorial circuits, are not modelled. 
It has been argued \cite{Wang+:Characterizing} that these non-memory elements of a
processor have much lower sensitivity to faults than state elements, but in our attacker model this does not say so much.

For timing channels discussed in Section \ref{sec:absttypesystem} we make a large simplifying assumption: that the time to compute an instruction is constant. This ignores cache effects, so either implies a greatly simplified achitecture, or a need for further refinements to the method to ensure that cache effects are mitigated by preloading or using techniques from \cite{Agat:Timing}.

% In particular, since the model abstracts away some
% hardware details (for example about circuits used to implement register updates)
% there might be bit flips that are not captured by our representation, and this
% could be a source of inaccuracy, as it has been already noticed in
% \cite{DelTedesco+:FTNInterference}. \draftnote{A: \color{red}{some ideas how to fix this?
% Perhaps a very high level description} F: in the previous paper we roughly say "these are so rare events that no one bothers to understand them". Should we go for something similar? }

% The equivalence between the probabilistic and the possibilistic notions of
% noninterference crucially depends on the fact that we are considering
% deterministic fault-prone systems. Extending the security condition to a larger
% class of systems (for example reactive systems) is left for further
% research. Even within deterministic fault-prone systems there are
% functionalities that are not handled by our method.
The language we are able to compile is too small to be practical, containing
neither functions nor arrays. The type system can perhaps be extended to handle
such features, based on our understanding of other security type systems, and
with the help of some additional fault-tolerant components (an extra fault-tolerant register should be sufficient to guarantee SS in the presence of dynamic pointers). However, the main
challenge was in the non-trivial correctness proof 
\ifthenelse{\boolean{long}}
{(Appendix \ref{sec:realtypesystem})}
{\cite{deltedesco+typecompftni}}. 
% which, constructed by hand, runs to around 27 pages. 
Clearly, mechanically verifiable proofs will facilitate extending our system 
to other features as well as verifying our confidence in the formal results. 
%One could argue that to have any formal value, this and anything larger should be proved in a mechanically verifiable way.

% For example, one feature
% which is missing from the language is the possibility for programs to manipulate
% memory pointers. This simplification is required to guarantee security despite
% of any number of transient faults. A way to relax this restriction would be to
% deploy more fault-tolerant hardware in the machine: this would guarantee that
% pointer arithmetic is not corrupted during the execution. Applying the
% separation technique presented in \cite{DelTedesco+:FTNInterference} would not
% suffice to achieve PNI, since the method can cope only with a finite number of
% bit flips.

\section{Conclusion}\label{Sec:conc}
% Transient faults can alter the behavior of a system, and such perturbations can
% jeopardize the confidentiality of data. Fault-tolerance techniques mitigate the
% problem, but are costly (they require either hardware or software redundancy)
% and typically do not provide formal guarantees.

% In this paper we propose a different approach: we accept that the observed
% behavior of a system deviates from the expected one, but we \emph{formally}
% guarantee that the security of sensitive data is never compromised. 

We formalize security in presence of transient faults as Probabilistic
Fault-Tolerant Non-Interference (PNI). We simplify it by reducing it to 
possibilistic framework (PoNI), and we show that another well-known security
condition, called Strong Security \cite{Sabelfeld:Sands:Multithreaded}, implies
it. We explore a concrete instance of our formalism. We consider a simple RISC
architecture, in which the only fault-tolerant components are the program
counter and the code memory. We define a type system that maps programs written
in a toy while-language to the assembly language executed by our architecture
and, at the same time, ensures that the produced code enjoys Strong Security
(hence PNI).

\bibliography{local,conferences,zippedbiblio}
\bibliographystyle{IEEEtran} 

\newpage

To avoid unintentional printing of the lengthy appendices of this submission, they have been removed from this version. The paper plus appendices can be obtained from  \texttt{arxiv.org}.

\newpage

\appendix

\section{Full Definition of Fault-prone Systems with Environments}\label{app:alldefs}

We formally account the composition of a Fault-prone System together with an error environment as follows.

 \begin{definition}[Fault-prone System with an Environment]
\label{def:FSFormal}
Consier a fault-prone system $\System = \{ \SystemStates , \Act ,  {-->} \}$, 
and an environment $\Environment=(\errlts,\faultfun)$ where $\errlts= \langle \EnvironStates,
\LowAct \Union\{\tau\}, \et{}\rangle$. The composition of $\System$ and $\Environment$, defined as   
$\System \times \Environment = << \SystemStates \times \EnvironStates, (\Act ,[0,1]), \set{}{} >>$ where $\SystemStates$ is the set of all possible states for $\System$, defines a labelled transition system whose states are pairs of the system state and the environment state, and with transitions labelled with an action $a$ and a probability $p$, written $\set{a}{p}$. Transitions are determined according to the following rules: 
%\draftnote{F: setEmpty is used instead of setEmpty()()}

%( where we write transitions using the syntax $T|F || E \ct{a}{p} %T'|F' || E' $) by the following rules:

% when the system is deterministic, its global semantics will be fully probabilistic, and can thus be viewed as a  discrete Markov chain labelled with actions
\begin{center}
% \[
% \inference[(Step)]{  T|F  -a->  T' | F'  &
%                            E \et{\low(a)} E'  &  p = \Fault(E)(OK)
% }
% { \TFE{T}{F}{E} \ct{a}{p} \TFE{T'}{F'}{E'}
% }
% \]
% \\
\[
\inference[\rulename{Step}]{ 
 \exists l S \labarrow{l} S' & \pi=\{L~|~L \in \pset(\faulty) \mbox{ and } \flip(L,S)  -a->   S'\}  \\
  p = \Sigma_{L \in \pi} \Fault(E)(L) &
   E \et{\low(a)}E' 
}
{ \SE{S}{E} \set{a}{p}  \SE{S'}{E'}}
\]
\\
\[
\inference[\rulename{Stuck-1}]{
 \exists l S \labarrow{l} S' & L \in \pset(\faulty) & 
 \flip(L,S)  \not \rightarrow &
 p = \Fault(E)(L) & 
  E \et{\tau}E'  
}
{ \SE{S}{E} \set{\tau}{p}  \SE{\flip(L,S)}{E'}}
\]
\\
\[
\inference[\rulename{Stuck-2}]{ 
 S  \not --> &
E \et{\tau}E' 
}
{ \SE{S}{E} \set{\tau}{1}  \SE{S}{E'}}
\]
\end{center}
\end{definition}
% Rule $\textsf{(Step)}$ denotes a transition of the fault prone system triggering event $a$, where no bit flips occur ($p = \Fault(E)(OK)$). However, the fault-inducing environment gets affected by public values produced by $a$ as well as the passing time ($E \et{\low(a)} E'$), which in turn might affect the probability distribution of faults. 

In rule $\wraprule{\rulename{Step}}$ some locations in the fault-prone component of the system are flipped, then the execution can take place. In general, there might be several subsets of $L\subseteq \faulty$ such that $\flip(L,S)$ results in a state that (i) performs the same action $a$ and (ii) performs a transition to the final state $S'$. For this reason the probability associated to the rule corresponds to the sum of all probabilities associated to locations in the set $\pi$. 

We assume that a stuck configuration cannot be resumed. In rule $\wraprule{\rulename{Stuck-1}}$ a flip on given subset of locations produces a stuck configuration with a certain probability. In rule $\wraprule{\rulename{Stuck-2}}$ a stuck configuration remains as such regardless of any possible flip of the faulty component.

\section{Proof of Proposition \ref{prop:probsimp}}\label{Sec:proofprobdistr}

In order to prove Proposition \ref{prop:probsimp}, we prove an auxiliary result which ensures that our assignment of probability to $n$-sized runs is a probability distribution.

\begin{proposition}\label{prop:probrun}
Let $\System$ be a fault-prone system and $\Environment$ an environment. Then
$\forall Z \in \System \times \Environment$ and $\forall n \geq 0$ $\Sigma_{r \in \run{Z}{n}} \proba{}{r}=1$.
\end{proposition}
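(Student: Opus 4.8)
The plan is to reduce the statement to a single \emph{local} fact: from every state $Z=\SE{S}{E}$ of the composition, the probabilities labelling the outgoing transitions sum to $1$. Granting this, the proposition follows by induction on $n$. The base case $n=0$ is immediate: there is exactly one run of size $0$ from $Z$, namely the empty run, whose probability is the empty product $1$. For the inductive step I would observe that every run of size $n+1$ from $Z$ factors uniquely as a first transition $Z \ct{a}{p} Z'$ followed by a run $r'$ of size $n$ from the successor $Z'$, with $\proba{}{r}=p\cdot\proba{}{r'}$; this pairing is a bijection between $\run{Z}{n+1}$ and the pairs consisting of an outgoing transition of $Z$ and a size-$n$ run of its target. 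Hence
\[
\Sigma_{r \in \run{Z}{n+1}} \proba{}{r}
= \Sigma_{Z \ct{a}{p} Z'} \; p \cdot \Big( \Sigma_{r' \in \run{Z'}{n}} \proba{}{r'} \Big)
= \Sigma_{Z \ct{a}{p} Z'} p ,
\]
where the inner sum equals $1$ by the induction hypothesis applied to $Z'$, and the outer sum equals $1$ by the local fact.

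The real work is therefore the local fact, and here I would split on whether $S$ is stuck. If $S \not\to$, only rule \textsf{Stuck-2} fires, yielding a single transition of probability $1$, so the claim is trivial. If $S$ is active, the outgoing transitions are produced by rules \textsf{Step} and \textsf{Stuck-1}, and the engine of the argument is that $\Fault(E)$ is a probability distribution on $\pset(\faulty)$, i.e. $\Sigma_{L \in \pset(\faulty)} \Fault(E)(L)=1$. I would show that this total mass is redistributed across the outgoing transitions without loss or duplication.

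The key invariant is that each subset $L \subseteq \faulty$ contributes its mass $\Fault(E)(L)$ to \emph{exactly one} outgoing transition. Indeed, by determinism of $\System$ the flipped state $\flip(S,L)$ either has a unique transition $\flip(S,L)\labarrow{a}S'$ or is stuck, so $L$ is classified unambiguously: in the active case it feeds, via \textsf{Step}, the transition with label $a$ to $\SE{S'}{E'}$ (the environment target $E'$ being fixed by $\low(a)$ through determinism of $\errlts$); in the stuck case it feeds, via \textsf{Stuck-1}, a $\tau$-transition to $\SE{\flip(S,L)}{E'}$. Summing the aggregated transition probabilities over all distinct label/target pairs therefore recovers $\Sigma_{L\in\pset(\faulty)}\Fault(E)(L)=1$.

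The main obstacle is precisely this accounting. One must verify that the summation already built into the \textsf{Step} rule (the sum over the set $\pi$ of all subsets producing a given $(a,S')$) neither overlaps with, nor omits, the mass assigned by \textsf{Stuck-1}. The cleanest way to see there is no double counting is the ``each $L$ contributes once'' viewpoint above, which is robust even to the coincidence where an active and a stuck subset happen to route mass to the same $\tau$-labelled target (they simply both add into that single transition's probability). Noting additionally that $L \mapsto \flip(S,L)$ is injective guarantees that distinct stuck subsets produce genuinely distinct \textsf{Stuck-1} targets, so no mass is silently merged there either. Once this partition of the fault mass is checked, the local fact, and hence the proposition, follow.
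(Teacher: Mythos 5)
Your proposal is correct, and its skeleton is the same as the paper's: induction on $n$, with everything resting on the local fact that the outgoing transition probabilities of any state of $\System \times \Environment$ sum to $1$. There are two differences worth noting. First, the direction of the decomposition: you peel off the \emph{first} transition, so the local fact is applied at $Z$ itself and the inductive hypothesis at each successor, whereas the paper peels off the \emph{last} transition, grouping $n$-runs by their $(n-1)$-prefixes $r'$ and applying the local fact at the endpoint states; the two are interchangeable, though your version is slightly cleaner since the induction hypothesis is invoked uniformly rather than after a regrouping step. Second, and more substantively, the paper disposes of the local fact in a single clause (``since for all subsets in $\pset(\faulty)$ there is a transition rule, the one-step probabilities sum to $1$''), which only argues that no fault mass is lost; your proposal supplies the missing half of that argument, namely that no mass is counted twice or silently merged: the case split on whether $S$ is stuck, the observation that determinism of $\System$ and of $\errlts$ classifies each $L \subseteq \faulty$ into exactly one transition (the $\wraprule{\rulename{Step}}$ rule's internal sum over $\pi$ accounting for all $L$ with the same label/target, and injectivity of $L \mapsto \flip(S,L)$ keeping distinct $\wraprule{\rulename{Stuck\mbox{-}1}}$ contributions separate). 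This accounting is exactly the part the paper leaves implicit, so your proof buys additional rigour at the point where the statement could actually fail, at the cost of some length.
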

\begin{proof}
We prove the statement by induction on $n$.

\framebox[1.1\width]{Base case} 

When $n=0$, the set $\run{Z}{0}$ contains only one element, the empty run, and its probability is 1.

\framebox[1.1\width]{Inductive step} 

Consider $n>0$. 

Any run $r \in \run{Z}{n}$ can be written as $r=Z \ct{a_0}{p_0} Z_1 \ldots Z_{n-1} \ct{a_{n-1}}{p_{n-1}} Z_n=(Z \ct{a_0}{p_0} Z_1 \ldots Z_{n-1}).(Z_{n-1} \ct{a_{n-1}}{p_{n-1}} Z_n)$, where $r'= Z \ct{a_0}{p_0} Z_1 \ldots Z_{n-1}$ is a prefix of $r$ in $\run{Z}{n-1}$ with probability $\proba{}{r'}=p_{r'}$.

Consider the set $R_{r'} \subseteq \run{Z}{n}$ of all $n$-sized runs from $Z$ that has $r'$ as prefix. Since for all subsets in $\pset(\faulty)$ there is a transition rule in $\System \times \Environment$, we have $\Sigma_{r \in \run{Z_{n-1}}{1}} \proba{}{r}=1$. Hence we have that $\Sigma_{r \in R_{r'}} \proba{}{r}=p_{r'}$ and we can conclude that $\Sigma_{r \in \run{Z}{n}} \proba{}{r} =  \Sigma_{r' \in \run{Z}{n-1}} \Sigma_{r \in R_{r'}} \proba{Z}{r} $ $= \Sigma_{r' \in \run{Z}{n-1}}  p_{r'}=1$, where the last result holds by inductive hypothesis.
%and its probability can be computed as $\proba{S}{r}=\proba{S}{r}$. Consider the target state of $r$ being $s'$. Apply case $n=1$ to 1-step runs $R^1(s')=\{s' \ct{a}{p} s''\}$, obtaining $\Sigma_{r' \in R^1(s')} \proba_{s'}(r')=1$. Derive the set of $n$-size runs departing from $s$ and going trough $s'$ as $T_{(s,s')}=\{r.r'| r' \in R^1(s')\}$. Recognize $\Sigma_{r \in T_{(s,s')}} \proba_s(r)=p$. Apply the result to all possible $r \in R^{n-1}(s)$.   
\end{proof}

\begin{proof}[Proof of Proposition \ref{prop:probsimp}]
Recall that for any trace $t \in (\LowAct \cup \{\tau\})^n$, $\proba{Z}{t}= \Sigma_{\{r \in \run{Z}{n}| \trace(r)= t\}} \proba{}{r}$. Since for any run $r$ there is a trace $t \in (\LowAct \cup \{\tau\})^n$ such that $\trace(r)= t$, we have as a result  that   $\Sigma_{t \in (\LowAct \cup \{\tau\})^n } \proba{Z}{t} = \Sigma_{r \in \run{Z}{n}} \proba{}{r}=1$, where the last equality holds because of Proposition \ref{prop:probrun}.  
\end{proof}

\section{Full definitions of Augmented Fault-prone and Termination Transparent Systems}\label{ref:deftranspblabla}

An augmented fault-prone system is formally described as follows. 

 \begin{definition}[Augmented Fault-prone System]
\label{def:NFSFormal}
Given a fault-prone system $\System = \{ \Loc , \Act ,  {-->} \}$
%we define the augmented system  $\aug{\System} = << \SystemStates , \Act ,\faultArrow{} >>$ by the follwing rules:
we define the augmented system  $\aug{\System}$ as $\aug{\System} = \{ \Loc , \Act \times \pset(\faulty),\faultArrow{\ \ } \}$ by the following rules:

%( where we write transitions using the syntax $T|F || E \ct{a}{p} %T'|F' || E' $) by the following rules:

% when the system is deterministic, its global semantics will be fully probabilistic, and can thus be viewed as a  discrete Markov chain labelled with actions
\begin{center}
\[
  \inference[]
  { \exists l S \labarrow{l} S' & \flip(S,L) -a-> S' \quad L \subseteq \faulty} 
  { S  \faultArrow{L,a} S' }
  \]
  \\
  \[
  \inference[]
  {\exists l S \labarrow{l} S' &  \flip(S,L) \not --> &  L \subseteq \faulty} 
  { S  \faultArrow{L,\tau} \flip(S,L) }
  \]
\\
  \[
  \inference[]
  { S \not --> \quad L \subseteq \faulty} 
  { S  \faultArrow{L,\tau} S }
  \]
  
\end{center}
\end{definition}

A termination transparent system is formalized as follows.

\begin{definition}[Termination Transparent System]
For a fault-prone system $\System= \{ \Loc, \Act,  {-->}\}$ we define its termination-transparent version as $\transp{\System} = \{ \Loc, \Act,  \termtransp{}\}$ where $\termtransp{}$ is defined with the following rules:
\[
  \inference[]
  { S -a-> S'} 
  { S \termtransp{a} S' }
  \inference[]
  { S \not --> } 
  { S  \termtransp{\tau} S }
  \]
 
\end{definition}

\section{Proof of Theorem \ref{thm:pniisponi}}\label{Sec:thmpniisponi}
Before proving the theorem in question, we need to define some auxiliary concepts.
 
\begin{definition}[Enabling set]
Let $Z=\SE{S}{E}$ and $Z'=\SE{S'}{E'}$ be a pair of states in $\System \times \Environment$ such that $Z \set{a}{p} Z'$. We say that $L$ is an enabling set (of locations) for $Z \set{a}{p} Z'$ in the following cases:
\begin{itemize}
\item the transition is derived from rule $\wraprule{\rulename{Step}}$ and $L \in \pi$;
\item the transition is derived from rule $\wraprule{\rulename{Stuck-1}}$ and $L$ is the argument in $\flip(S,L)$; 
\item the transition is derived from rule $\wraprule{\rulename{Stuck-2}}$.
\end{itemize}
\end{definition}

\begin{definition}[Enabling sequence]
Let $r$ be a run for $\SE{S_0}{E_0}$ in $\System \times \Environment$ such that $\SE{S_0}{E_0} \set{a_0}{p_0} \SE{S_1}{E_1} \dots \set{a_{n-1}}{p_{n-1}} \SE{S_n}{E_n} $. The sequence $\mathcal{L}=L_0\dots L_{n-1}$ such that $\forall 0 \leq i \leq n-1$ $L_i$ is an enabling set for $\SE{S_i}{E_i} \set{a_i}{p_i} \SE{S_{i+1}}{E_{i+1}}$ is called an enabling sequence for $r$. We define the probability of $\mathcal{L}$ as $\proba{r}{\mathcal{L}}=\Pi_{0 \leq i \leq n-1} \Fault(E_i)(L_i)$. We define the set of all enabling sequences for $r$ as $\enabFamily(r)$.
\end{definition}

We also need the following intermediate result.

\begin{lemma}\label{lemma:sumofprob}
Let $r$ be a run for $Z$ in $\System \times \Environment$. Then we have that $\proba{}{r}= \Sigma_{\mathcal{L} \in \enabFamily(r)} \proba{Z}{\mathcal{L}}$.
\end{lemma}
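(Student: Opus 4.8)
The plan is to reduce the statement to a single per-step identity and then combine the steps by the generalized distributive law. Write the run as $r = \SE{S_0}{E_0} \set{a_0}{p_0} \SE{S_1}{E_1} \dots \set{a_{n-1}}{p_{n-1}} \SE{S_n}{E_n}$, and for each $i$ let $\Phi_i$ denote the set of enabling sets of the $i$-th transition $\SE{S_i}{E_i} \set{a_i}{p_i} \SE{S_{i+1}}{E_{i+1}}$ (so $\Phi_i$ is to a single step what $\enabFamily(r)$ is to the whole run). The first observation is that, directly from the definition of enabling sequence, a sequence $\mathcal{L} = L_0 \dots L_{n-1}$ lies in $\enabFamily(r)$ exactly when $L_i \in \Phi_i$ for every $i$, with no cross-constraints between distinct steps. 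Hence $\enabFamily(r)$ is precisely the Cartesian product $\Phi_0 \times \cdots \times \Phi_{n-1}$, and $\proba{r}{\mathcal{L}} = \prod_{i=0}^{n-1} \Fault(E_i)(L_i)$.

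The main per-step claim is then that $p_i = \Sigma_{L \in \Phi_i} \Fault(E_i)(L)$ for every $i$, which I would establish by case analysis on the rule deriving the $i$-th transition. For rule $\wraprule{\rulename{Step}}$ the claim is immediate: $\Phi_i$ coincides with the set $\pi$ appearing in the rule, and $p_i$ is by definition $\Sigma_{L \in \pi} \Fault(E_i)(L)$. For rule $\wraprule{\rulename{Stuck-1}}$ the set $\Phi_i$ is the singleton containing the unique $L$ with $S_{i+1} = \flip(S_i, L)$ — uniqueness holding because $\flip(S_i, \cdot)$ is injective in its location argument — and $p_i = \Fault(E_i)(L)$, so the one-term sum agrees. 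For rule $\wraprule{\rulename{Stuck-2}}$ we have $p_i = 1$ while, by the convention in the definition of enabling set, \emph{every} $L \in \pset(\faulty)$ enables the transition; thus the sum is $\Sigma_{L \in \pset(\faulty)} \Fault(E_i)(L)$, which equals $1$ since $\Fault(E_i)$ is a probability distribution on $\pset(\faulty)$ (Definition \ref{def:Env}), matching $p_i$.

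With the per-step identity in hand, the result follows by multiplying the identities and distributing. Concretely,
\[
\proba{}{r} = \prod_{i=0}^{n-1} p_i = \prod_{i=0}^{n-1} \Big( \Sigma_{L_i \in \Phi_i} \Fault(E_i)(L_i) \Big),
\]
and expanding this product of finite sums as a single sum over the Cartesian product of the index sets gives
\begin{align*}
\prod_{i=0}^{n-1} \Big( \Sigma_{L_i \in \Phi_i} \Fault(E_i)(L_i) \Big)
&= \Sigma_{(L_0, \dots, L_{n-1}) \in \Phi_0 \times \cdots \times \Phi_{n-1}} \; \prod_{i=0}^{n-1} \Fault(E_i)(L_i) \\
&= \Sigma_{\mathcal{L} \in \enabFamily(r)} \proba{r}{\mathcal{L}}.
\end{align*}
A fully rigorous justification of the expansion step, if desired, is obtained by a routine induction on $n$ that factors out the sum contributed by the last step. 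The genuinely delicate points — as opposed to pure unfolding of definitions — are the $\wraprule{\rulename{Stuck-2}}$ case, where it is exactly the convention that \emph{all} subsets of $\faulty$ enable the transition that lets the total-probability argument close, and the interchange of the finite product with the finite sums, where the distributive law (or the induction on $n$) must be invoked carefully so that the summation ranges over precisely $\enabFamily(r)$.
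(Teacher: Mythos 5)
Your proof is correct, but there is in fact nothing in the paper to compare it against: Lemma~\ref{lemma:sumofprob} is stated in Appendix~\ref{Sec:thmpniisponi} as an auxiliary result and is then invoked in the proof of Theorem~\ref{thm:pniisponi} without any proof being given. Your argument supplies exactly the routine justification the authors omitted, and it isolates the two points where something genuinely has to be checked. First, because the definition of an enabling sequence imposes no cross-step constraints, $\enabFamily(r)$ is precisely the Cartesian product $\Phi_0 \times \cdots \times \Phi_{n-1}$ of the per-step families, so the generalized distributive law (sums are over subsets of $\pset(\faulty)$, with nonnegative terms) reduces the lemma to the per-step identity $p_i = \Sigma_{L \in \Phi_i} \Fault(E_i)(L)$. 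Second, your case analysis of that identity is right: for $\wraprule{\rulename{Step}}$ it is the definition of $p_i$; for $\wraprule{\rulename{Stuck-1}}$ it follows because the map $L \mapsto \flip(S,L)$ is injective, so the target state determines $L$ and $\Phi_i$ is a singleton; and for $\wraprule{\rulename{Stuck-2}}$ it needs both the convention that \emph{every} $L \in \pset(\faulty)$ is an enabling set of such a transition and the fact that $\Fault(E_i)$ is a probability distribution (Definition~\ref{def:Env}), so that the sum is $1 = p_i$ --- you correctly flag this as the one place where the convention is indispensable. The only caveat, which you inherit from the paper's definitions rather than introduce yourself, is that ``the rule deriving the $i$-th transition'' is well defined only if each transition of $\System \times \Environment$ has a unique derivation; in a contrived system the same transition (same action, probability and target) could be derivable by both $\wraprule{\rulename{Step}}$ and $\wraprule{\rulename{Stuck-1}}$, in which case $\Phi_i$ would contain both families and the per-step identity would fail. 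Your implicit reading of a run as a sequence of rule applications is the intended one, and under it your proof is complete.
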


We can now prove Theorem \ref{thm:pniisponi}. 

\begin{proof}[Proof of Theorem \ref{thm:pniisponi}]
Suppose $\prog$ enjoys PoNI, we now show it enjoys PNI as well.\\
Consider a faulty system $\System$, an error environment $\Environment=(\errlts,\Fault)$ and two states  $Z=\SE{S}{E}$ and $Z'=\SE{S'}{E}$, for $S,S' \in \System$ and $E \in \errlts$. Assume $\pprog{S}=\pprog{S'}=P$ and $S=_\pubdata S'$. \\
We first  show that for any $n \geq 0$ and for any trace $t \in (\LowAct \cup \{\tau \})^n$, $\proba{Z}{t}  \leq \proba{Z'}{t}$, and hence by symmetry that $\proba{Z}{t} = \proba{Z'}{t}$. \\
We prove the inequality by relating the probability of a trace to the probability determined by the enabling sequences that corresponds to it. \\
Consider a trace $t$ such that $\proba{Z}{t}>0$. 
Let $\gensetrun{Z}(t)=\{r \in \run{Z}{}| \trace(r)=t\}$ be the (nonempty) set of runs from $Z$ whose trace is $t$. \newline
%For any $r\in \gensetrun{Z}(t)$ let $\gensetseq{S}{\set{}{p}}(r)=\{u \in \rune{S}{}{\set{}{p}}| \low(u)=r\}$ be the set of runs from $S$ in $\System \times \Environment$ whose low projection corresponds to $r$.\\
%For any run $u \in \gensetseq{S_0}{\set{}{p}}(r)$ let $\fam(u)=\{\mathcal{L}|\} be the set of\\
%Consider the set $\Cup_{r \in \gensetrun{S_0}{\prFaultArrow{\ \ }}(t)} \Cup_{s \in }  
We have $\proba{Z}{t}= \Sigma_{r \in \gensetrun{Z}(t)} \proba{}{r} =  \Sigma_{r \in \gensetrun{Z}(t)} \Sigma_{\mathcal{L} \in \enabFamily(r)} \proba{Z}{\mathcal{L}}$, where the first equality holds by definition and the second one follows from Lemma \ref{lemma:sumofprob}.\\
We now show that all enabling sequences for $Z$ are also enabling sequences for $Z'$.  Let $\kappa_Z= \cup_{r \in \gensetrun{Z}(t)}  \enabFamily(r)$ be the set of all enabling sequences for $t$ in $Z$ and let $\mathcal{L}$ be an enabling sequence for a run $r \in \run{Z}{}$. Since $\prog$ is PoNI, there must be a run $r'$ from $Z'$ such that $\mathcal{L}$ is an enabling sequence for $r'$, and $\trace(r')=t$. Hence, for the set $\kappa_{Z'}= \cup_{r' \in \gensetrun{Z'}(t)}  \enabFamily(r')$ we have that $\kappa_Z \subseteq \kappa_{Z'}$.\\
Also, observes that for any $\mathcal{L} \in \kappa_Z$. $\proba{Z}{\mathcal{L}}= \proba{Z'}{\mathcal{L}}$, since $\trace(u)= \trace(u')$.\\
Then $\proba{Z}{t}=  \Sigma_{\mathcal{L} \in \kappa_Z} \proba{Z}{\mathcal{L}} \leq \Sigma_{\mathcal{L} \in \kappa_{Z'}} \proba{Z'}{\mathcal{L}}= \proba{Z'}{t}$. 
\newline
%We now have to show the symmetric case $\proba{Z}{t} \geq \proba{Z'}{t}$, but this only requires to apply the same technique starting from $Z'$.
%\newline
We continue by showing that PNI implies PoNI by proving  the contrapositive. 
Suppose that $\prog$ is not PoNI. Then there must a fault-prone system $\System$%and its termination transparent version $\transp{\System}$
, two states $S,S'$ such that $\pprog{S}=\pprog{S'}=\prog$ and $S=_\pubdata S'$, together with a location set $\lambda \in \pset(\faulty)$, a trace $t=L_0, a_0, \ldots, L_j, a_j$ and $a,b \in \LowAct \cup \{\tau \}$ such that $a \not = b$, $S \trfaultArrow{t,\lambda,a}$ and $S' \trfaultArrow{t,\lambda,b}$. 
\newline
Define an error environment $\Environment=(\errlts,\Fault)$ such that $\errlts = \langle \{L_i|L_i \in t \} \cup \{\lambda\},\LowAct \cup \{\tau\}, \{L_i \et{a} L_{i+1}|0 \leq i \leq j-1 \mbox{ and } a \in \LowAct \cup \{\tau\}\} \cup \{L_j \et{a} \lambda|a \in \LowAct \cup \{\tau\}\} \rangle$ and $\Fault(\lambda)(\lambda) = \Fault(L)(L)=1$. Essentially, $\Environment$ deterministically traverses all flipped locations included in $t$, and terminates in $\lambda$, regardless of the actions performed by the fault-prone system.
\newline
Consider now the composition of $\System$ with $\Environment$ and let $Z=\SE{S}{L_0}$ and $Z'=\SE{S'}{L_0}$. Let $t'=a_0.\ldots.a_j.a$ be a trace in $(\LowAct \cup \{\tau\})^{*}$ obtained from $t$ by (i) striping flipped locations and (ii) appending the action $a$ at the end. Then there exists a unique run $r \in \run{Z}{}$ such that $\trace(r)=t'$ and $\proba{}{r}= \proba{Z}{t'}=1 \not = \proba{Z'}{t'}=0$. The inequality between $\proba{Z}{t'}$ and $\proba{Z'}{t'}$ follows from the hypothesis of $\prog$ not being PoNI, which implies that there is no $r' \in \run{Z'}{}$ such that $\trace(r')= t'$. 
\end{proof}

\section{Proof of Theorem \ref{thm:ssimpliesponi}}\label{Sec:strongandponi}
Rather than showing that Strong Security implies  PoNI directly, we take an indirect approach. 

First we characterize the semantics of a termination-transparent system in terms of ``transition traces'', borrowing ideas from \cite{Brookes:transitionsem:1996}. Then we define an ad-hoc security property, called Strong Trace-based Security within this semantic model. We finally show that Strong Security implies Strong Trace-based Security, which in turn implies PoNI. 

For improving readability we represent $\ddata{S}$ as $\data$, the data component,  therefore the state of a fault-prone system is represented as $\bistate{\prog}{\data}$. We adapt the concept of low equality between states to data components by saying that $\data =_{\pubdata} \data'$ if $\ppubdata{\data}= \ppubdata{\data'}$.

\begin{definition}[Transition trace semantics]
Let $\System$ be a fault-prone system and let $\transp{\System}= \{ \Loc, \Act,  \termtransp{}\}$ be its termination-transparent version. The $n$-step transition-trace semantic of a program component $\prog_0$ is defined as the set $\transtrace{n}(\prog_0)$, such that $\transtrace{n}(\prog_0)=\{(\data_0,a_0,\data_0'),(\data_1,a_1,\data_1')\dots(\data_{n-1},a_{n-1},\data_{n-1}')|$ $ \forall 0 \leq i \leq n-1$$ \bistate{P_i}{\data_i} $$\termtransp{a_i} \bistate{P_{i+1}}{\data_i'}\}$. The transition trace semantics of $P_0$ is defined as $\transtracena(\prog_0)= \cup_{n} \transtrace{n}(\prog_0)$.
\end{definition}

In the transition trace model, the semantics of a program component $P$ is built in sequences of steps. In particular, at any step, the program component is executed on a certain data component, then the data component is modified and the execution is restarted. Observe that the model is very similar to the way a fault-prone system and an error environment interact with each other. This is even more clear when viewing the modification of the data component as the effect of its interaction with the error environment. 

We say that two transition traces $$
\begin{array}{c}
t=(\data_0,a_0,\data_0'),(\data_1,a_1,\data_1')\dots(\data_{n-1},a_{n-1},\data_{n-1}') \\ 
t'=(N_0,b_0,N_0'),(N_1,b_1,N_1')\dots(N_{n-1},b_{n-1},N_{n-1}')
\end{array}$$ 
are input low-equivalent, written $t=_I t'$, if $\forall 0 \leq i < n$, $\data_i =_{\pubdata} N_i$, whereas they are output low-equivalent, written $t=_O t'$ if $\forall 0 \leq i < n$, $\low(a_i) = \low(b_i)$ and $\data_i' =_{\pubdata} N_i'$.

\begin{definition}[Strong Trace-based Security (StbS)]
We say that a program component  $\prog$ is $n$-Strong Trace-based Secure if for any two transition traces $t,t'\in \transtrace{n}(\prog)$, if $t=_I t'$ then $t=_O t'$. We say that a program component $\prog$ is Strong Trace-based Secure if it is $n$-Strong Trace-based Secure for any $n \in \mathbb{N}$.
\end{definition}

We now show how to use the notion of Strong Trace-based Security to bridge the gap between Strong Security and PoNI. We show that Strong Security implies Strong Trace-based Security first. 
 
\begin{lemma}[SS implies StbS]\label{lemma:SStoStbS}
Let $\prog$ be a program component. If $\prog$ enjoys SS then $\prog$ enjoys StbS.
\end{lemma}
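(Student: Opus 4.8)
The plan is to prove the lemma by induction on the trace length $n$, driven by the strong bisimulation $R$ whose existence is guaranteed by SS of $\prog$ (so $(\prog,\prog)\in R$). Since StbS compares two traces of the \emph{same} program component, but after a single step the two executions may reach syntactically different components, I will strengthen the statement so that the induction can carry $R$-relatedness forward. Concretely, I will establish the following invariant $\Phi(n)$: for every pair $(Q,Q')\in R$ and every pair of traces $t\in\transtrace{n}(Q)$ and $t'\in\transtrace{n}(Q')$ with $t=_I t'$, we have $t=_O t'$ \emph{and} the program components $Q_n,Q_n'$ reached at the ends of $t$ and $t'$ again satisfy $(Q_n,Q_n')\in R$. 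The lemma then follows by instantiating $\Phi(n)$ at $Q=Q'=\prog$ for every $n$, which is exactly $n$-StbS.

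The base case $n=0$ is immediate: the empty traces are vacuously output low-equivalent and the components reached are $Q,Q'$ themselves, related by hypothesis. For the inductive step, write $t=(\data_0,a_0,\data_0'),\dots,(\data_n,a_n,\data_n')$ and $t'=(N_0,b_0,N_0'),\dots,(N_n,b_n,N_n')$, so the first steps are $\bistate{Q}{\data_0}\termtransp{a_0}\bistate{Q_1}{\data_0'}$ and $\bistate{Q'}{N_0}\termtransp{b_0}\bistate{Q_1'}{N_0'}$. From $t=_I t'$ I get $\data_0=_{\pubdata}N_0$, hence the states $S=\bistate{Q}{\data_0}$ and $V=\bistate{Q'}{N_0}$ satisfy $S=_{\pubdata}V$ with $\pprog{S}=Q$ and $\pprog{V}=Q'$. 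Applying the strong bisimulation clause to $(Q,Q')\in R$ and the transition $S\termtransp{a_0}S'$ produces a matching transition out of $V$ enjoying $\low(a_0)=\low(b)$, $S'=_{\pubdata}V'$ and $(\pprog{S'},\pprog{V'})\in R$.

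The key device is determinism: the termination-transparent system $\transp{\System}$ is deterministic (active states inherit the unique original transition, and stuck states have only the added $\tau$ self-loop), so the transition out of $V$ supplied existentially by the bisimulation must coincide with the one already recorded in $t'$, namely $\bistate{Q'}{N_0}\termtransp{b_0}\bistate{Q_1'}{N_0'}$. This identification forces $\low(a_0)=\low(b_0)$ and $\data_0'=_{\pubdata}N_0'$ (so the first step is output low-equivalent) together with $(Q_1,Q_1')\in R$. The tails $(\data_1,a_1,\data_1'),\dots$ and $(N_1,b_1,N_1'),\dots$ are traces in $\transtrace{n}(Q_1)$ and $\transtrace{n}(Q_1')$ that remain input low-equivalent, so the induction hypothesis $\Phi(n)$ applied to $(Q_1,Q_1')\in R$ yields output low-equivalence of the tails and $R$-relatedness of their final components. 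Concatenating the first step with the tails establishes $\Phi(n+1)$.

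I expect the main obstacle to be precisely this reconciliation step: the bisimulation definition only asserts the \emph{existence} of a matching transition, whereas the trace $t'$ fixes a specific one; determinism of $\transp{\System}$ is what makes these agree and is therefore essential. A secondary point requiring care is that in the transition-trace model the input data $\data_{i+1}$ at each step is chosen freely (modelling arbitrary faults between computation steps), which is exactly why $=_I$ must be assumed stepwise rather than derived, and why the bisimulation — comparing arbitrary low-equivalent states at each step — is the right tool for matching the free restart of the data component.
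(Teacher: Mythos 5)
Your proof is correct and follows essentially the same route as the paper's: unwinding the strong bisimulation step-by-step along the pair of traces while maintaining $R$-relatedness of the program components (your invariant $\Phi(n)$ is precisely what the paper establishes by induction on the step index, namely $(\prog_{t_i},\prog_{t'_i}) \in R$ for all $i$). The one refinement you add is making explicit the role of determinism of $\transp{\System}$ in reconciling the bisimulation's existentially quantified matching transition with the transition actually recorded in $t'$ --- a point the paper's proof uses tacitly.
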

\begin{proof}
We define some notation first. We refer to the $i$-th triple in a transition trace $t$ as $t_i$, and to  the program component used to evaluate it as $\prog_{t_i}$ (for a trace $t=(\data_0,a_0,\data_0'),(\data_1,a_1,\data_1')\dots(\data_{n-1},a_{n-1},\data_{n-1}')$ we therefore say that the $i$-th triple $(\data_{i},a_{i},\data_{i}')$ is induced by $\bistate{\prog_{t_i}}{\data_i} \termtransp{a_i} \bistate{\prog_{t_{i+1}}}{\data_{i}'}$). 

Consider a program component $\prog$ and two $n$-transition traces $$t=(\data_0,a_0,\data_0'),(\data_1,a_1,\data_1')\dots(\data_{n-1},a_{n-1},\data_{n-1}')$$ and $$t'=(N_0,b_0,N_0'),(N_1,b_1,N_1')\dots(N_{n-1},b_{n-1},N_{n-1}')$$ in $\transtrace{n}(\prog)$. 

We want to show that if $\prog$ enjoys SS and $t=_I t'$, then $t=_O t'$. 

Starting from a strong bisimulation $R$ for $(\prog,\prog)$, the idea of the proof is to infer properties of $t'$ by unwinding $R$ for $n$-steps. We proceed by showing that for all $0 \leq i <n$ we have that $(\prog_{t_i},\prog_{t'_i}) \in R$. For $i=0$ $\prog_{t_0}=\prog_{t'_0}=\prog$ and $(\prog,\prog) \in R$. If $(\prog_{t_i},\prog_{t'_i})\in R$, then by definition of $R$ we have that if $\bistate{\prog_{t_i}}{\data_i} \termtransp{a_i} \bistate{\prog_{t_{i+1}}}{\data_{i}'}$ and $\data_i =_{\pubdata} N_i$ then $\bistate{\prog_{t_i'}}{N_i} \termtransp{t_i} \bistate{\prog_{t_{i+1}'}}{N_{i}'}$ and $(\prog_{t_{i+1}},\prog_{t'_{i+1}}) \in R$. But this is the case for $t$ and $t'$, since $t=_I t'$.

The statement of the lemma is therefore proved by recalling that two program components $\prog$ and $\prog'$ in a strong bisimulation $R$ are such that their executions from low equivalent data result in (i) low equivalent data and (ii) low equivalent actions. 
\end{proof}

We now discuss the relation between Strong Trace-based Security and PoNI. In general it is not true that Strong Trace-based Security is stronger than PoNI. Consider, for example, the class of systems such that $\progdom \not \subseteq \tolerant$. Due to transient faults, a completely innocuous program component can be converted into a harmful one, even when it enjoys Strong Trace-based Security. 

Surprisingly, this is not the only constraint that we must impose to the systems of our interest. We must also require that they show a uniform behavior for termination, as shown in the following example. 

\begin{example}
Consider the fault-prone system in Figure \ref{fig:stbsponi}. For each state $S=\{b_i \rightarrow \{0,1\} | i \in \{0,1,2\}\}$ we consider $\progdom= \{b_0\}$, $\highdom= \{b_1\}$ and $\lowdom= \{b_2\}$. 
We also assume that the states where $\prog$ is 1 are stuck and therefore are omitted.

%\begin{figure}[h]
%\centering
 %\includegraphics[width=0.5\textwidth]{fig1}
 % \caption{StbS does not imply PoNI}\label{fig:stbsponi}
%\end{figure}

\begin{figure}[h]
\centering
\begin{tikzpicture}[->,>=stealth',shorten >=1pt,auto,node distance=2.8cm,
                    semithick]
  \tikzstyle{every state}=[text=black]

  \node[state] 	(A)                    {$S_0=000$};
  \node[state]     (B) [right of=A] {$S_2=010$};
  \node[state]     (D) [below of=A] {$S_1=001$};
  \node[state]     (C) [below of=B] {$S_3=011$};

  \path 
        (B) edge [loop right] node {$\tau$} (B)
        (C) edge [loop right] node {$a$} (C)
        (D) edge [loop left] node {$a$} (D);

\end{tikzpicture}
  \caption{StbS does not imply PoNI in general}\label{fig:stbsponi}
\end{figure}

The system is Strongly Trace-based Secure: all states are either stuck or perform a transition on themselves, therefore low equivalence is preserved. The only difference in the output behavior is observable between $S_0$ and $S_2$: the former is stuck, the latter perform a $\tau$ transition. Nonetheless they result indistinguishable in the termination transparent version of the system. However, the system is not PoNI. In fact, if a bit flip on $b_2$ can transform $S_2$ into $S_3$ so we have $S_0 \faultArrow{\{b_2\},\tau}$ but $S_2 \faultArrow{\{b_2\},a}$. 
\end{example}

From now onwards we focus our attention on ``standard fault-prone'' systems. For such systems, we have that the whole program component is fault-tolerant (formally  $\progdom \subseteq \tolerant$) and it is either stuck or active, regardless of the data component. 

\begin{definition}[Standard fault-prone systems]
A fault-prone system is called standard if $\progdom \subseteq \tolerant$ and, for all  $\prog$, either for any data component $\data$ there exists an action $l$ such that $\bistate{P}{\data}\labarrow{l}$ or for any data component $\data$ the system is stuck, namely $\bistate{P}{\data} \not -->$.
\end{definition}

For the class of systems of our interest, Strong Trace-based Security is indeed stronger than PoNI.

\begin{lemma}[Strong Trace-based Security implies PoNI]\label{thm:ssimplponi}
Let $\prog$ be a program component in a standard fault-prone system $\System$. If $\prog$ enjoys StbS then $\prog$ enjoys PoNI.
\end{lemma}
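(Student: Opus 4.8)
The plan is to transfer a possibilistic run from $S$ to $S'$ by re-using the \emph{same} sequence of flip sets, using StbS to keep the low-observable behaviour under control. Since PoNI is symmetric it suffices to prove one direction: given $S,S'\in\aug{\System}$ with $\pprog{S}=\pprog{S'}=\prog$ and $S=_{\pubdata}S'$, and a run $r$ witnessing $S\trfaultArrow{t}$ for $t=L_0,\low(a_0),\dots,L_{n-1},\low(a_{n-1})$, I would build a run $r'$ of $S'$ with the same trace $t$. Write $r$ as $S=S_0\faultArrow{L_0,a_0}S_1\cdots\faultArrow{L_{n-1},a_{n-1}}S_n$, with $P_i=\pprog{S_i}$ and $M_i=\ddata{S_i}$. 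In a standard system stuckness depends only on the program component, so the fault-induced-stuck rule of $\aug{\System}$ is vacuous (if $S_i$ is active then $\flip(S_i,L_i)$ is too), the active-step rule governs active states and the already-stuck rule governs stuck ones, and the program stays stuck once stuck. Hence, once the flip sets $L_0,\dots,L_{n-1}$ are fixed, the augmented run driven by them is \emph{unique}; I take $r'$ to be the unique run of $S'$ driven by exactly these flip sets, with states $S_i'$, actions $b_i$ and data $M_i'=\ddata{S_i'}$. The whole problem then reduces to showing $\low(b_i)=\low(a_i)$ for every $i$.

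The first key step is to record both runs as transition traces. Because $\progdom\subseteq\tolerant$ the flips touch only data, so $\flip(S_i,L_i)$ and $\flip(S_i',L_i)$ act on $M_i$ and $M_i'$ alone. I would define $\theta,\theta'\in\transtrace{n}(\prog)$ by the ``flip-then-execute'' reading: at step $i$ the injected datum is $\flip(M_i,L_i)$ (resp. $\flip(M_i',L_i)$), and the output datum is obtained from the termination-transparent step $\bistate{P_i}{\flip(M_i,L_i)}\termtransp{}\bistate{P_{i+1}}{\cdot}$. Both lie in $\transtrace{n}(\prog)$ since $P_0=P_0'=\prog$. The bridge to the actual runs is that on \emph{active} steps this reading coincides exactly with the active-step rule of $\aug{\System}$, so $\theta$ agrees with $r$ (and $\theta'$ with $r'$) on data and action; on \emph{stuck} steps both the already-stuck rule and the termination-transparent self-loop emit $\tau$, and as the program stays stuck thereafter every later action is $\tau$ in both models as well. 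Consequently $\low(a_i)$ equals the low action of $\theta$ at step $i$ and $\low(b_i)$ equals that of $\theta'$, for all $i$, even though the data of $\theta$ and $r$ may diverge after the first stuck step.

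With this bridge I would close the argument by an induction on $j$ showing $M_j=_{\pubdata}M_j'$ for $0\le j\le n$, invoking $(j{+}1)$-StbS at each stage. The base case $M_0=_{\pubdata}M_0'$ is exactly $S=_{\pubdata}S'$. For the step I first note that flipping the same faulty locations preserves low-equivalence: for a low location $b$, $\flip(M,L)(b)=\flip(M',L)(b)$ whenever $M(b)=M'(b)$, so $M_j=_{\pubdata}M_j'$ for $j\le i$ gives $\flip(M_j,L_j)=_{\pubdata}\flip(M_j',L_j)$ for $j\le i$. Thus the length-$(i{+}1)$ prefixes of $\theta$ and $\theta'$ are input low-equivalent ($=_I$); applying $(i+1)$-StbS yields output low-equivalence ($=_O$), which simultaneously delivers $M_{i+1}=_{\pubdata}M_{i+1}'$ (feeding the next step) and the matching of the low actions through step $i$. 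After $n$ stages I obtain $\low(a_i)=\low(b_i)$ for every $i$, so $r'$ realises the trace $t$ and $S'\trfaultArrow{t}$, completing the forward direction and, by symmetry, the lemma.

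I expect the main obstacle to be precisely the mismatch between the two models in the stuck case: the already-stuck rule of $\aug{\System}$ freezes the \emph{unflipped} state $S_i$, whereas the transition-trace self-loop retains the \emph{flipped} datum. The resolution is the observation used above — in a standard system stuckness is fixed by the program component alone, so both models agree that a stuck step is silent and that the program remains stuck forever after. The divergence is therefore confined to data that is never again reflected in a low action, and so is irrelevant to the trace $t$ that PoNI actually compares; all the genuine work is done by StbS on the active prefix before any state gets stuck.
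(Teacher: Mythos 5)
Your overall plan---drive $S'$ with the same flip sets, use standardness to make the induced run unique, and apply StbS to prefixes to force the low actions to agree---is sound, and it is essentially the direct (non-contrapositive) rendering of the paper's own argument. However, the induction step fails exactly at the point you flag as ``the main obstacle,'' and your proposed resolution does not repair it. The trouble is your choice of trace inputs: you inject $\flip(M_i,L_i)$ where $M_i$ is the data of the \emph{augmented} run, which after the first stuck step is frozen at its pre-flip value, whereas the output recorded by the transition trace at a stuck step is the \emph{flipped} datum. Suppose $r$ reaches a stuck program at step $i$ while $r'$ is still active (StbS cannot exclude this, since termination transparency deliberately hides stuckness), and suppose $L_i$ contains a low faulty location. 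StbS on the $(i{+}1)$-prefixes yields $\flip(M_i,L_i)=_{\pubdata}M'_{i+1}$, but what your induction needs is $M_{i+1}=_{\pubdata}M'_{i+1}$, i.e. $M_i=_{\pubdata}M'_{i+1}$; since $\flip(M_i,L_i)$ and $M_i$ differ on a low bit, the first equivalence actually \emph{refutes} the second. So the invariant $M_j=_{\pubdata}M'_j$ breaks, the $(i{+}2)$-prefixes of $\theta,\theta'$ are no longer input low-equivalent, and StbS can no longer be invoked. This is fatal, because the work is \emph{not} all done on the active prefix: after $r$ is stuck you still must show that the still-active $r'$ emits only $\tau$'s (otherwise the traces differ and PoNI fails), and StbS was your only tool for controlling $r'$. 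Your closing remark that the post-stuck divergence ``is never again reflected in a low action'' concerns $r$'s data, which indeed is harmless; the real casualty is the \emph{hypothesis} of StbS.

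The missing idea---used in Case~2 of the paper's (contrapositive) proof---is to exploit standardness more aggressively when building the stuck run's trace: a stuck program self-loops in the termination-transparent semantics on \emph{any} data, so from step $i$ onward you may take $\theta$'s triple at step $k$ to be $(\flip(M'_k,L_k),\tau,\flip(M'_k,L_k))$, feeding it the \emph{other} run's data. Then the inputs of $\theta$ and $\theta'$ are literally identical beyond step $i$, input low-equivalence is maintained for free, and StbS forces $\low(b_k)=\tau$ for all $k\ge i$, which is exactly what you need. (An equivalent repair: let $\theta$'s data evolve as a genuine flip-then-execute run of the termination-transparent system, so that the output of each trace step is the pre-flip datum of the next; then StbS's output equivalence at step $k$ is precisely the input equivalence required at step $k{+}1$, and the induction closes, with the trace actions still agreeing with those of $r$ since both models emit $\tau$ at stuck steps.) With either repair your direct argument goes through; the remaining differences from the paper's proof---direct implication versus contraposition, and the explicit use that prefixes of transition traces are transition traces---are cosmetic.
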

\begin{proof}

We prove this lemma by showing the contrapositive. Suppose that $\prog$ is not PoNI. Then there must be two states $S= \bistate{\prog}{\data_0}$ and $S'= \bistate{\prog}{N_0}$ in the augmented version of a fault-prone system $\System$ such that $\data_0=_\pubdata N_0$, and two runs that exit from them whose corresponding traces violate the security condition. Let $$r= \bistate{\prog}{\data_0} \faultArrow{L_0,a_0} \bistate{\prog_1}{\data_1} \dots \bistate{\prog_{j-1}}{\data_{j-1}} \faultArrow{L_{j-1},a_{j-1}}  \bistate{\prog_{j}}{\data_{j}} \faultArrow{L,a}$$ and 
$$r'= \bistate{\prog}{N_0} \faultArrow{L_0,b_0} \bistate{\prog^1}{N_1} \dots \bistate{\prog^{j-1}}{N_{j-1}} \faultArrow{L_{j-1},b_{j-1}}  \bistate{\prog^{j}}{N_{j}} \faultArrow{L,b}$$

be the runs in question, such that $\forall 0 \leq i < j$ $\low(a_i)=\low(b_i)$ but $\low(a) \not = \low(b)$. 

Before continuing, we observe that, in the initial configuration, $\prog$ cannot be stuck. Also, it must be that at most one run between $r$ and $r'$ contains a sequence of stuck configurations. Both conditions are  necessary to have $\low(a) \not = \low(b)$. 

We now show that it is possible to build two transition traces for $\prog$ that violate Strong Trace-based Security. Recall that the $\flip$ function can be applied only to locations in $\faulty$, and that we consider systems whose faulty locations are restricted to the data component ($\faulty \subseteq \lowdom \cup \highdom$). Hence, when $S=(\prog,\data)$, we write $\flip(S,L)$ as $(\prog,\flip(\data,L))$, and we focus on the data component $\flip(\data,L)$ when necessary. 

We proceed by distinguishing two cases, depending on whether or not a stuck configuration is traversed by $r$ (equivalently $r'$).

\emph{Case 1: no stuck configurations are traversed in either $r$ or $r'$}.

Consider the following two transition traces 
$$t= (\memflip(\data_0,L_0),a_0,\data_1),(\memflip(\data_1,L_1),a_1,\data_2)\dots(\memflip(\data_j,L),a,\data_{j+1})$$
and
$$t'= (\memflip(N_0,L_0),b_0,N_1),(\memflip(N_1,L_1),b_1,N_2)\dots (\memflip(N_j,L),b,N_{j+1})$$

Since $r$ does not traverse stuck configurations, all transitions are computed by an application of the rule $\wraprule{\rulename{Step}}$. This means that for any transition we have  $\bistate{\prog_{i}}{\data_{i}} \faultArrow{L_{i},a_{i}}  \bistate{\prog_{i+1}}{\data_{i+1}}$ if $\bistate{\prog_{i}}{\memflip(\data_{i},L_i)} \labarrow{a_{i}}  \bistate{\prog_{i+1}}{\data_{i+1}}$. Hence $t$ is in $\transtracena(\prog)$. By applying a similar argument for $r'$, we conclude that $t'$ is in $\transtracena(\prog)$ as well. 

Observe that $\memflip$ preserves low equivalence between data components (if $\data_i =_\pubdata N_i$ then for all set of locations $L$ it is true that $\memflip(\data_i,L)=_\pubdata \memflip(N_i,L))$. Considering that $\data_0 =_\pubdata N_0$, there are two possible cases. Either there exists $k$, such that  $0 \leq k < j$ and $(\memflip(\data_k,L_k),a_k,\data_{k+1})$ and $(\memflip(N_k,L_k),b_k,N_{k+1})$ and $\data_{k+1} \not =_\pubdata N_{k+1}$, or $\forall k 0 \leq k \leq j$ $\data_k \ =_\pubdata N_k$ and $\low(a) \not = \low(b)$. In both cases Strong Trace-based Security is violated.

\emph{Case 2: there is a stuck configuration in $r$.}

We consider the case in which a configuration in $r$ is stuck. The symmetric case for $r'$ is similar, and it is omitted. 

Since $\System$ is a ``standard fault-prone'' system, the rule $\wraprule{\rulename{Stuck-2}}$ cannot be applied in the first step. Let $1 \leq w \leq j$ be the index of the first stuck state $\bistate{\prog_w}{\data_w}$ in $r$. Consider the following transition traces
$$
\begin{array}{lcl}
t & = & (\memflip(\data_0,L_0),a_0,\data_1),\dots,(\memflip(\data_{w-1},L_{w-1}),a_w,\data_{w}),\\
	&	& (\memflip(N_w,L_w),\tau,\memflip(N_w,L_w)),\dots,(\memflip(N_j,L),\tau,\memflip(N_j,L))\\
t' & = & (\memflip(N_0,L_0),b_0,N_1),\dots, (\memflip(N_{w-1},L_{w-1}),b_{w-1},N_w),\\
	&	& (\memflip(N_w,L_w),b_w,N_{w+1}), \dots ,(\memflip(N_j,L),b,N_{j+1})
\end{array}
$$

As observed in the previous case, since $\memflip$ preserves low equivalence of data components, there are the following cases to be considered. Either there exists $k$ such that $0 \leq k < w$ and $(\memflip(\data_k,L_k),a_k,\data_{k+1})$ and $(\memflip(N_k,L_k),b_k,N_{k+1})$ and $\data_{k+1} \not =_\pubdata N_{k+1}$, or there exists $k$ such that $w \leq k < j$ and $(\memflip(N_k,L_k),\tau,\memflip(N_k,L_k))$ and $(\memflip(N_k,L_k),b_k,N_{k+1})$ and $\memflip(N_k,L_k) \not =_\pubdata N_{k+1}$, or $\tau \not = \low(b)$. In all cases Strong Trace-based Security is violated.

\end{proof}

\begin{proof}[Proof of Theorem \ref{thm:ssimpliesponi}]
Directly obtained by applying Lemma \ref{lemma:SStoStbS} and Lemma \ref{thm:ssimplponi}.
\end{proof}

\section{\assem\ instructions semantics}\label{app:fullassemsema}

Figure \ref{table:abssemantics} formalizes the semantics of the \assem\ language.

\begin{figure}[t] 
\setnamespace{0pt}\setpremisesend{0.25em}\setpremisesspace{0.7em}
%\begin{minipage}{0.6\linewidth}
%\begin{mathpar}
\[
\begin{array}{@{}c@{}}
\inference[\rulename{Load}]{ 
P(\pc)= \loadName\ r\ p 
% &   \Heap{k}=v'
}
{\absState{P}{\RegName}{\HeapName} 
 \asem{\tau} 
 \absStatePost{P}{\RegName[r \update \Heap{p}]}{\HeapName}
}
\\[3ex] %\and 
\inference[\rulename{Store}]{ 
P(\pc)= \storeName\ p\ r 
% &   \Heap{k}=v'
}
{\absState{P}{\RegName}{\HeapName} 
\asem{\tau}
 \absStatePost{P}{\RegName}{\HeapName[p \update \RegName(r)]}
}
\\[3ex] % \and
\inference[\rulename{Jmp}]{ 
P(\pc)= \jmpName\ l 
}
{\absState{P}{\RegName}{\HeapName} 
\asem{\tau} 
 \absStateJump{P}{\pc\update \res_\CodeName(l)}{\RegName}{\HeapName}
}
\\[3ex] %\and 
\inference[\rulename{Jz-S}]{ 
P(\pc)= \jzName \ l\ r &
\RegName(r) = 0 
}
{\absState{P}{\RegName}{\HeapName} 
\asem{\tau} 
\absStateJump{P}{\pc\update \res_\CodeName(l)}{\RegName}{\HeapName}
}
\\[3ex] % \and
\inference[\rulename{Jz-F}]{ 
P(\pc)= \jzName \ l\ r &
\RegName(r) \not =0 
}
{\absState{P}{\RegName}{\HeapName} 
\asem{\tau} 
 \absStatePost{P}{\RegName}{\HeapName}
}
\\[3ex] % \and
\inference[\rulename{Nop}]{ 
P(\pc)= \nop  
}
{
\absState{P}{\RegName}{\HeapName} 
\asem{\tau} 
 \absStatePost{P}{\RegName}{\HeapName}
}
\\[3ex] %\and 
\inference[\rulename{Move-k}]{ 
P(\pc)= \movek{r}{n}&
}
{\absState{P}{\RegName}{\HeapName} 
\asem{\tau} 
 \absStatePost{P}{\RegName[r \update n]}{\HeapName}
}
\\[3ex] %\and 

\inference[\rulename{Move-r}]{ 
P(\pc)= \mover{r}{r'}&
}
{\absState{P}{\RegName}{\HeapName} 
\asem{\tau} 
 \absStatePost{P}{\RegName[r \update \RegName(r')]}{\HeapName}
}
\\[3ex] % \and
\inference[\rulename{Op}]{ 
P(\pc)= \aopName\ {r}\ {r'} 
}
{\absState{P}{\RegName}{\HeapName} 
\asem{\tau}
 \absStatePost{P}{\RegName[r \update \RegName(r) \ \aopName \ \RegName(r')]}{\HeapName}
}
\\[3ex] %\and 
\inference[\rulename{Out}]{ 
P(\pc)= \outName\ ch \  r  & 
Reg(r) =  n 
}
{
\absState{P}{\RegName}{\HeapName} 
\asem{ch!n} 
\absStatePost{P}{\RegName}{\HeapName}
}
\end{array}
\]
\caption{\assem\ instructions semantics}\label{table:abssemantics}
\end{figure}

\section{Typing rule for expressions}\label{app:allabsrulexp}

Figure \ref{table:abstypesystemexpr} shows all the (abstract) rules for the compilation of \whprog\ expressions into \assem\ code.

\begin{figure*}[h]
\centering
$\begin{array}{c}
\vspace{0.5cm}

\Scale[0.95]{
\inference[\rulename{K}]
{r \not \in \activeReg}
{\regenv, \activeReg, l ||- k \etypeproduce [l:\ \movekName\ r\ k], \esecan{\seclev(r)}{1}, r, \subst{\regenv}{\breakconn{r}}}}\\

\vspace{0.5cm}

\Scale[0.95]{
\inference[\rulename{V}-cached]
{\conn{r}{x}{} \in \regenv}
{\regenv, \activeReg, l ||- x \etypeproduce [l:\ \emptya], \esecan{\seclev(r)}{0}, r, \regenv}}\\

\vspace{0.5cm}
\Scale[0.95]{
\inference[\rulename{V}-uncached]
{r \not \in \activeReg & \seclev(r) \sqsupseteq \seclev(x)} %\conn{r}{x}{l} \in \regenv}
{\regenv, \activeReg, l ||- x \etypeproduce [l:\ \loadName\ r\ \vartoptr(x)], \esecan{\seclev(r)}{1}, r, \subst{\regenv}{\conn{r}{x}{\readMode}}}}\\

\Scale[0.95]{
\inference[\rulename{C}]
{\regenv, \activeReg, l ||- E_1 \etypeproduce \CodeName_1, \esecan{\lambda}{n_1}, r, \regenv_1 & 
\regenv_1, \activeReg\cup\{r\}, \emptyLab ||- E_2 \etypeproduce \CodeName_2, \esecan{\lambda}{n_2}, r', \regenv_2}
{\regenv, \activeReg, l ||- \wop{E_1}{E_2} \etypeproduce \concat{\concat{\CodeName_1}{\CodeName_2}}{[\binOp\ r\ r']}, \esecan{\lambda}{n_1+n_2+1}, r, \subst{\regenv_2}{\breakconn{r}}}}\\

\end{array}$
\caption{(Abstract) Type system for \whprog\ expressions\label{table:abstypesystemexpr}}
\end{figure*}

\section{Typing rule for atomic statements}\label{app:absatomicstat}

All rules for typing atomic statements are presented in Figure \ref{table:absttypeatom}.
\begin{figure*}[htp]
\centering
$\begin{array}{c}
\vspace{0.2cm}
\inference[\rulename{\wskip}]
{ - }
{\typeconc{\regenv}{l}{\wskip}{[l:\ \nop]}{\emptyLab}{\regenv}{\secan{\tconst~1}{\whigh}}}
\\
\vspace{0.2cm}
\inference[\rulename{\assignName}]
{\regenv, \emptyRegAll, l ||- E \etypeproduce \CodeName, \esecan{\seclev(x)}{n}, r, \regenv'
& \regenv''=\subst{\regenv'}{\conn{r}{x}{\writeMode}}} 
{\typeconc{\regenv}{l}{\assign{x}{E}}{
 \left \{ 
\begin{array}{l}
\CodeName \concatprograms \\ 
{[}\storeName\ \vartoptr(x) \ r{]}
\end{array} \right \}}{\emptyLab}{\regenv''}{
\begin{array}{rcl}
\seclev(x)=\Hloc & ? & \secan{\tconst~n+1}{\whigh}\\
			  & :  & \secan{\tdontc}{\wdontk}\\
\end{array}
}}
 \\
\vspace{0.2cm}
\inference[\rulename{\outName}]
{\regenv, \emptyRegAll,l ||- E \etypeproduce \CodeName, \esecan{\seclev(ch)}{n} , r, \regenv'}
{\typeconc{\regenv}{l}{ \out{ch}{E}}{ 
 \left \{ 
\begin{array}{l}
\CodeName \concatprograms \\ 
{[}\out{ch}{r}{]}
\end{array} \right \}}
{\emptyLab}{\regenv'}{
\begin{array}{rcl}
\seclev(ch)=\Hloc & ? & \secan{\tconst~n+1}{\whigh} \\
			     & : & \secan{\tdontc}{\wdontk}
\end{array}
}}
\end{array}$
\caption{(Abstract) Type system for atomic \whprog\ commands}\label{table:absttypeatom}
\end{figure*}

\section{Proof of Theorem \ref{thm:strsenforce}}\label{sec:realtypesystem} 

The type system presented in Section \ref{sec:absttypesystem} is an abstraction of the method that we use to enforce Strong Security over \assem\ programs. In this section we describe such method, together with all the necessary results to prove the Theorem \ref{thm:strsenforce}.

We begin (Section \ref{Sec:Wh-iWh}) by introducing the \swhprog\ language, an intermediate representation that bridges the translation between \whprog\ and \assem\ programs. We also define the type system, whose abstraction is presented in Section \ref{sec:absttypesystem}, that translates \whprog\ programs into \swhprog\ programs. For a type correct \whprog\ program, the type system ensures that (i) it is strongly secure (Proposition \ref{prop:strongsectype}) and (ii) is mapped to a strongly secure \swhprog\ program (Proposition \ref{thm:typeprediction}).

In Section \ref{Sec:iWh-asse} we define a strategy for translating \swhprog\ programs into \assem\ programs. The strategy ensures that the target \assem\ program is semantically equivalent to the source \swhprog\ program (Proposition \ref{prop:iwhileandrisc}), hence Strong Security can be inferred for the \assem\ program via Strong Security enjoyed by the source \swhprog\ program.

\subsection{From \whprog\ programs to \swhprog\ programs}\label{Sec:Wh-iWh}

The syntax of \swhprog\ programs is presented in Figure \ref{table:swhilesyntax}. 

\begin{figure}[htb]
\centering
$\begin{array}{lc  l c l c l c l c l l l }
D 		& ::=		& \wskip & | & \assign{x}{r} & | & \assign{r}{F} & |  & D_1; D_2 & | & \out{ch}{r} \ \ | \\
		&		& \multicolumn{9}{l}{\while{r}{\{D;\wskip\}} \ \ | \  \  \wif{r}{\{D_1;\wskip\}}{\{D_2;\wskip\}}} \\
F		& ::=		& k \in \mathbb{N} & | &  r \in \regvar & | & x \in \var & | & \wop{r_1}{r_2}\\
ch		& ::=		& \low				& | & \high 	&&&&&\\
\end{array}$
\caption{\swhprog\ programs syntax\label{table:swhilesyntax}}
\end{figure}

\swhprog\ programs are a subclass of \whprog\ programs that take into account features of the \assem\ architecture. The first feature is that an \swhprog\ program distinguishes between \emph{pure} and  \emph{register} variables, respectively in the sets $\var$ and $\regvar$. This distinction equips the \swhprog\ language with the concept of ``register" that is used in the target machine. The second feature is a more restrictive syntax for conditionals and loops, that enables a simpler translation between \swhprog\ and \assem\ programs. Specifically, both $\whileName$ and $\wifName$ commands in \swhprog\ syntax require the guard to be a register variable, and require the branches (in the case of $\wifName$) or the loop body (in the case of $\whileName$) to be terminated by a $\wskip$ command.

\begin{figure}[htp]
\centering
$\begin{array}{c}
\vspace{0.5cm}

\Scale[0.93]{
\inference[\rulename{K}]
{r \in \regvar & r \not \in \activeReg}
{\regenv, \activeReg \eenta{\regvar}{\var} k \etypeproduce \code{\assign{r}{k}}, \esecan{\seclev(r)}{1}, r, \subst{\regenv}{\breakconn{r}}}}\\

\vspace{0.5cm}

\Scale[0.93]{
\inference[\rulename{V}-cached]
{x \in \var & r \in \regvar &  \conn{r}{x}{} \in \regenv } 
{\regenv, \activeReg \eenta{\regvar}{\var} x \etypeproduce \code{\findot},\esecan{\seclev(r)}{0}, r, \regenv }}\\
\vspace{0.5cm}

\Scale[0.93]{
\inference[\rulename{V}-uncached]
{x \in \var & r \in \regvar & r   \not \in \activeReg & \seclev(r) \sqsupseteq \seclev(x)} %\conn{r}{x}{l} \in \regenv} 
{\regenv, \activeReg \eenta{\regvar}{\var} x \etypeproduce \code{\assign{r}{x}}, \esecan{\seclev(r)}{1}, r, \subst{\regenv}{\conn{r}{x}{\readMode}}}}\\
\vspace{0.5cm}

\Scale[0.93]{
\inference[\rulename{C}]
{r,r' \in \regvar \\
\regenv, \activeReg \eenta{\regvar}{\var} E_1 \etypeproduce \code{D_1}, \esecan{\lambda}{n_1}, r, \regenv_1 \\ 
\regenv_1, \activeReg\cup\{r\} \eenta{\regvar}{\var} E_2 \etypeproduce \code{D_2}, \esecan{\lambda}{n_2}, r', \regenv_2} 
{\regenv, \activeReg \eenta{\regvar}{\var} \wop{E_1}{E_2} \etypeproduce \code{D_1;D_2;\assign{r}{\wop{r}{r'}}}, \esecan{\lambda}{n_1+n_2+1}, r, \subst{\regenv_2}{\breakconn{r}}}}\\
\end{array}$
\caption{Type system for \whprog\ expressions\label{table:exptypesystem}}
\end{figure}
%\draftnote{F: I suspect one has to add $\lambda=\seclev(r)$ in C }
%\draftnote{F: it seems like some weakening is necessary for chaching, like reading from low in a high context is high. So notice I added the $\lambda$ part}

In Figures \ref{table:exptypesystem}, \ref{table:typesystematom} and \ref{table:typesystem} the type system that performs both the translation of a \whprog\ program $C$ into an \swhprog\ program $D$ and the security analysis on $C$ is described.  We now discuss these aspects in details.

\begin{figure}[htp]
\centering
$\begin{array}{c}

\vspace{0.5cm}
\inference[\rulename{\wskip}]
{ - }
{\regenv \enta{\regvar}{\var}  \wskip \typeproduce \code{\wskip}, \secan{\tconst~(1,1)}{\whigh}, \regenv}\\

\vspace{0.5cm}
\Scale[0.94]{
\inference[\rulename{\assignName}]
{x \in \var &r\in \regvar & \regenv, \emptyRegAll \eenta{\regvar}{\var} E \etypeproduce \code{D}, \esecan{\seclev(x)}{n}, r, \regenv' }
{
\regenv \enta{\regvar}{\var} \assign{x}{E} \typeproduce \code{D;\assign{x}{r}}, 
\condtype{\seclev(x)=\valHigh}{\secan{\tconst~(1,n+1)}{\whigh}}{\secan{\tdontc}{\wdontk}}, 
\subst{\regenv'}{\conn{r}{x}{\writeMode}}}}\\

%\inference[\rulename{\assignName}]
%{\regenv, \emptyRegAll ||- E \etypeproduce D, r, \regenv', \esecan{\valLow}{m} & \seclev(x)=\valLow}
%{\regenv \enta{\regvar}{\var}  \assign{x}{E} \typeproduce \code{D;\assign{x}{r}}, \subst{\regenv'}{\conn{r}{x}{\writeMode}}, }\\

\vspace{0.5cm}
\Scale[0.94]{
\inference[\rulename{\outName}]
{ r\in \regvar & \regenv, \emptyRegAll \eenta{\regvar}{\var} E \etypeproduce \code{D}, \esecan{\seclev(ch)}{n}, r, \regenv'}
{
\regenv \enta{\regvar}{\var} \out{ch}{E} \typeproduce \code{D;\out{ch}{r}}, 
\condtype{\seclev(ch)=\valHigh}{\secan{\tconst~(1,n+1)}{\whigh}}{\secan{\tdontc}{\wdontk}}, 
\regenv'
}}\\
\end{array}$
\caption{Type system for atomic \whprog\ statements\label{table:typesystematom}}
\end{figure}

%\inference[\rulename{\outName} \valLow]
%{\regenv, \emptyRegAll ||- E \etypeproduce D, r, \regenv', \esecan{\valLow}{m} & \seclev(ch)=\valLow}
%{\regenv \enta{\regvar}{\var} \out{ch}{E} \typeproduce \code{D;\out{ch}{r}}, \regenv', \secan{\tdontc}{\wdontk}}\\
\begin{figure}[htp]
\centering
$\begin{array}{c}
\vspace{0.5cm}
\inference[\rulename{;}]
{\regenv \enta{\regvar}{\var}  C \typeproduce \code{D}, \secan{t_1}{w_1}, \regenv_1 &
\regenv_1 \enta{\regvar}{\var}  C' \typeproduce \code{D'}, \secan{t_2}{w_2}, \regenv_2 \\
t_1=\ttop \Rightarrow w_2=\whigh
}
{\regenv \enta{\regvar}{\var}  C;C' \typeproduce \code{D;D'}, \secan{t_1 \tclub t_2}{w_1 \wlub w_2},\regenv_2}\\

\vspace{0.5cm}
\Scale[0.9]{
\inference[\rulename{\wifName}-any]
{r \in \regvar \\
\regenv, \emptyRegAll \eenta{\regvar}{\var} E \etypeproduce \code{D_g}, \esecan{\lambda}{n_g} , r, \regenv_1\\ 
\regenv_1 \enta{\regvar}{\var}  C_t \typeproduce \code{D_t}, \secan{t_1}{w_1}, \regenv_2 &
\regenv_1 \enta{\regvar}{\var}  C_e \typeproduce \code{D_e}, \secan{t_2}{w_2}, \regenv_3 \\
D_t'=D_t;\wskip & 
D_e'=D_e;\wskip \\
\writeMap(\lambda) \sqsupseteq w_i &
\regenv_F=\regenv_2 \envinters \regenv_3
}
{\typeconce{\regenv}{
\wifa{E}{C_t}{C_e}
}{
\left \{ 
\begin{array}{l}
D_g; \\
\wifa{r}{D_t'}{D_e'}
\end{array} 
\right \}
} 
{
\left \langle 
\begin{array}{l}
\writeMap(\lambda) \wlub w_1 \wlub w_2\\
\termMap(\lambda)\tlub t_1 \tlub t_2 
\end{array} 
\right \rangle
}
{\regenv_F}{\enta{\regvar}{\var}}}}\\

\vspace{0.5cm}

\Scale[0.9]{
\inference[\rulename{\wifName}-\valHigh]
{r \in \regvar \\
\regenv, \emptyRegAll \eenta{\regvar}{\var} E \etypeproduce D_g, \esecan{\valHigh}{n_g}, r, \regenv_1 \\ 
\regenv_1 \enta{\regvar}{\var}  C_t \typeproduce \code{D_t}, \secan{\tconst~(m,n_t)}{\whigh} , \regenv_2&
\regenv_1 \enta{\regvar}{\var}  C_e \typeproduce \code{D_e}, \secan{\tconst~(m,n_e)}{\whigh}, \regenv_3 \\
D_t'=D_t;\wskip^{n_e-n_t};\wskip &
D_e'=D_e;\wskip^{n_t-n_e};\wskip \\
\regenv_F=\regenv_2 \envinters \regenv_3
}
{\typeconce{\regenv}{\wifa{E}{C_t}{C_e}}{
\left \{ 
\begin{array}{l}
D_g; \\
\wif{r}{D_t'}{D_e'}
\end{array} 
\right \}
} 
{
\left \langle 
\begin{array}{c}
\whigh \\
\tconst~(m+1,\\
n_g+max(n_t,n_e)+2) 
\end{array} 
\right \rangle
}
{\regenv_F}{\enta{\regvar}{\var}}}}\\

%vspace{0.3cm}

\Scale[0.9]{
\inference[\rulename{\whileName}]
{x \in \var & r \in \regvar \\
D_0=\assign{r}{x};\assign{x}{r}; \\
\regenv'=\subst{\regenv}{\conn{r}{x}{\writeMode}} &
\regenv_B \sqsubseteq \regenv' & \regenv_B \sqsubseteq \subst{\regenv_E}{\conn{r}{x}{\writeMode}}\\
\regenv_B \enta{\regvar}{\var}  C \typeproduce \code{D}, \secan{t}{w} , \regenv_E\\
\lambda = \seclev(x)=\seclev(r) & 
\writeMap(\lambda) \sqsupseteq w & t=\ttop \Rightarrow \writeMap(\lambda)=\whigh
}
{\typeconce{\regenv}{\while{x}{C}}{
\left \{ 
\begin{array}{l}
D_0; \\
\while{r}{\code{D;D_0;\wskip}}
\end{array} 
\right \}
} 
{
\left \langle 
\begin{array}{c}
\writeMap(\lambda) \wlub w \\
\termMap(\lambda)\tlub t
\end{array} 
\right \rangle
}
{\regenv_B}{\enta{\regvar}{\var}}}}\\

\end{array}$
\caption{Type system for non-atomic \whprog\ statements\label{table:typesystem}}
\end{figure}

\subsubsection{Type System: Translation}\label{ts:transl}

The type system described in Figures \ref{table:exptypesystem}, \ref{table:typesystematom}  and \ref{table:typesystem} converts a \whprog\ program into an \swhprog\ program. 

In general, observe that none of the rules involve labels, since these are no longer part of the target language. However, the rules still require that a register record is carried along the compilation. We assume, similarly to what has been done in Section \ref{sec:absttypesystem}, that there is a register record  function $\regenv \in \regvar \rightharpoonup (\{\readMode,\writeMode \} \times \var)$ which not only associates variables to registers, but also records the modality (read/write) in which the association was created. Notice that in this context $\regvar$ are just variables and have not direct hardware interpretation. Finally, observe that the rules are parametrized in the set of pure and register variables ($\var$ and $\regvar$). Implementing our translation strategy does not require this parametrization, since the set of registers that are used in the \assem\ machine is known beforehand. However it turns out to be a useful tool for implementing our proof strategy: first we show that a type-correct \whprog\ program $C$ is strongly secure, then we show that the correspondent \swhprog\ program $D$ is strongly secure by retyping it under a different set of register variables, that are solely used for stating the security property of $D$ (all the details are formalized in Proposition \ref{thm:typeprediction}).  

 %\draftnote{F: BEGIN Dave, please check!}
The rules in Figure \ref{table:exptypesystem}, that formalize the compilation of expressions, are very similar to the rules in Figure \ref{table:abstypesystemexpr}, beside the fact that operations over registers and memory locations are replaced by assignments between pure and register variables. In order to represent a compilation that produces no code (see rule $\wraprule{\rulename{V}-cached}$) we explicitly extend the syntax of \swhprog\ programs (i.e. \whprog\ programs) with $\findot$, the empty statement, that  represents the \swhprog\ correspondent of the $\emptya$ statement used for \assem\ programs. However, since $\findot$ has no semantic meaning, we assume that the composition operator $;$ strips the occurrences of $\findot$ when composing commands together. This is formalized by defining \emph{structural equivalence} to be the least congruence\footnote{A congruence relation, in this context, is an equivalence relation on commands that is closed under the syntactic constructors of the language.} relation $\equiv$ between programs such that $\findot; C \equiv C ; \findot \equiv C$. 

Commands translation, implemented by the rules in Figures \ref{table:typesystematom} and \ref{table:typesystem}, are also similar to the rules in Figures \ref{table:absttypeatom}, \ref{table:absttypeif} and \ref{table:absttypewhcomp}, hence we only focus on the main differences.

%The $\wskip$ command is mapped to itself by the rule $\rulename{\wskip}$ that does not modify the input register environment $\regenv$. 
%The translation of an assignment statement $\assign{x}{E}$ (rule $\rulename{\assignName}$)  requires the translation of $E$ to be calculated first. Once the expression is compiled, the register variable $r$, where the expression is evaluated, is then written to pure variable $x$. The produced code $\code{D;\assign{x}{r}}$ corresponds to a modified register environment $\regenv'\conn{r}{x}{\writeMode}$ where the variables $r$ and $x$ are paired in write mode $\writeMode$. 
The compilation of $\wif{E}{C_t}{C_e}$ performed by the $\wraprule{\rulename{\wifName}-any}$ rule produces the code $D_g$ for $E$ first, that corresponds to evaluating $E$ in register variable $r$. Then, subprograms $D_t$ and $D_e$ are obtained from $C_t$ and $C_e$ respectively. In order to comply with the $\swhprog$ syntax, a $\wskip$ instruction is appended to both $D_t$ and $D_e$, obtaining respectively programs $D_t'$ and $D_e'$ that are used in the rule output $D_g; \wif{r}{D_t'}{D_e'}$.  In the rule $\wraprule{\rulename{\wifName}-\valHigh}$ we follow a padding strategy which is similar to the one used for \assem\ programs: we say $\wskip^n$ is a sequence of $n$ $\wskip$ commands when $n>0$, whereas it is $\findot$ for $n \leq 0$.  
In the compilation of a $\while{x}{C}$ command, we deploy the fragment of code $D_0$ to establish the invariant property on the register record. Moreover, we append $\wskip$ to $D;D_0$ in order to make the code compliant with $\swhprog$ syntax.

%Rule $\rulename{;}$ is defined and inductive application of the rules described so far, therefore it does not require further explanation.

\subsubsection{Type System: Security analysis}\label{ts:sec}

In order to reason about security of \whprog\ programs, we distinguish between public and secret data. In particular, as we did in Section \ref{sec:absttypesystem}, we assume that variables are labeled in a way that does not change during the execution and it is determined by the function $\seclev \in \regvar \cup \var \rightarrow \{\Lloc, \Hloc\}$.  

The security information calculated by the type system is essentially the same one presented for the type system in Section \ref{sec:absttypesystem}. A type-correct \whprog\ expression is decorated with a label $(\lambda,n)$, which indicate the security level of the register variables that are used to evaluate $E$ and the number of \swhprog\ steps that are required to evaluate $E$, exactly as in Figure \ref{table:abstypesystemexpr}. A type-correct \whprog\ program is associated to a label $(w,t)$ describing its write effect (label $w$) and timing behavior (label $t$). Write labels are taken from the
two-element set $\Set{\whigh,\wdontk}$, with partial ordering
$ \whigh \sqsubseteq  \wdontk $,  and have the same semantics introduced in Section \ref{sec:absttypesystem}.  %Figure \ref{table:absttype}.
%The type system annotates \whprog\ commands with a security pair $(t,w)$, that carries  information about \emph{timing} and \emph{write} behaviors, respectively. 
The timing label $t$ is an element from the partial order $\termlattice$ which is described in Figure \ref{fig:exttermlattice} (notice that, for improving readability, the order relation is just sketched). Compared to the  partial order presented in Figure \ref{fig:termlattice}, we define a refined annotation for programs that are certainly terminating in a finite number of steps. In particular we label such programs with values $\tconst~(m,n)$, which specifies that (i) the source \whprog\ program terminates in $m$ steps and (ii) the compiled \swhprog\ program terminates in $n$ steps, no matter which values the variables are set to. As for rules parametrization, this feature is a tool for our proof strategy. In fact, the type system has to track timing behavior of both the original program and its compiled version because it might happen that the timing behavior is modified in the recompilation of  a compiled program. %The labels for termination that does not depend on secret values or depends on secret values are left unchanged and are, respectively, $\tdontc$ and $\ttop$.

\begin{figure}[htbp]
\centering
\begin{minipage}{0.9\textwidth}
\centering
\begin{tikzpicture}[scale=.7]
  \node (top) at (0,0) {$\ttop$};
  \node (mid) at (0,-1) {$\tdontc$};
  \node (bottom1a) at (-4,-5) {$\tconst~(0,0)$};
  \node (bottom2a) at (-2,-5) {$\tconst~(0,1)$};
  \node (bottom3a) at (0,-5) {$\dots$};
  \node (bottom4a) at (2,-5) {$\tconst~(0,n)$};
  \node (bottom5a) at (4,-5) {$\dots$};
   \node (bottom1b) at (-4,-4) {$\tconst~(1,0)$};
  \node (bottom2b) at (-2,-4) {$\tconst~(1,1)$};
  \node (bottom3b) at (0,-4) {$\dots$};
  \node (bottom4b) at (2,-4) {$\tconst~(1,n)$};
  \node (bottom5b) at (4,-4) {$\dots$};
  \node (middle) at (0,-3) {$\dots$};
   \node (bottom1c) at (-4,-2) {$\tconst~(m,0)$};
  \node (bottom2c) at (-2,-2) {$\tconst~(m,1)$};
  \node (bottom3c) at (0,-2) {$\dots$};
  \node (bottom4c) at (2,-2) {$\tconst~(m,n)$};
  \node (bottom5c) at (4,-2) {$\dots$};
  \draw (top) -- (mid)  -- (bottom1c); 
  \draw (mid)  -- (bottom2c);
   \draw (mid)  -- (bottom3c);
    \draw (mid)  -- (bottom4c);
     \draw (mid)  -- (bottom5c);
     \draw (bottom1a) -- (bottom1b);
     \draw (bottom1a) -- (bottom2b);
     \draw (bottom1a) -- (bottom4b);
\end{tikzpicture}  
\vspace{0.5cm}
\end{minipage}
\begin{minipage}{0.9\textwidth}
\centering
$\begin{array}{lcl}
t_1 \tlub t_2 	& = & 
\left\{
	\begin{array}{ll}
		\tdontc  		& \mbox{if }  \forall i \in \Set{1,2} t_i \sqsubseteq \tdontc \\
		\ttop 	& \mbox{otherwise}\\
	\end{array}
\right. \\
t_1 \tclub t_2	& = &
\left\{
	\begin{array}{ll}
		\tconst~(m_1+m_2,n_1+n_2)  	& \mbox{if } \forall i \in \Set{1,2}\quad t_i= \tconst~(m_i,n_i) \\
		t_1 \tlub t_2 & \mbox{otherwise}\\
	\end{array}
\right. \\
\end{array}$
\end{minipage}
\caption{Termination partial order\label{fig:exttermlattice}}
\end{figure}

We now explain how security annotations are computed by the type system in Figures \ref{table:typesystematom} and \ref{table:typesystem}.

The $\wskip$ commands takes exactly one step to be completed in both its original and its compiled version. Moreover, it does not modify the value of any of the variables used in the program. For this reason, $\wraprule{\rulename{\wskip}}$ rule assigns $\secan{\tconst~(1,1)}{\whigh}$ as security annotation for $\wskip$.

The security annotation for an assignment command $\assign{x}{E}$ depends on the security level of the variable $x$. If $\seclev(x)= \valHigh$, we require that all instructions used to evaluate the expression perform write actions solely on $\valHigh$ variables. This, in turn, requires that $E$ has an associated type $\esecan{\valHigh}{n}$, which not only specifies the information about written variables, but also states that $E$ has been compiled into $n$ \swhprog\ instructions. Hence, for $\seclev(x)= \valHigh$, rule $\wraprule{\rulename{\assignName}}$ assigns $\secan{\tconst~(1,n+1)}{\whigh}$ to $\assign{x}{E}$, since a single instruction is mapped to $n+1$ \swhprog\ instructions ($n$ instructions correspond to $E$, the last instruction $\assign{x}{r}$ counts for one). If $\seclev(x)= \valLow$, then we require that written variables are at security level $\valLow$\footnote{In fact security could be established in a more liberal setting, however this strictness simplifies arguments in proofs.}, beside making sure that no content from $\valHigh$ variables is used. In this case the final type for $\assign{x}{E}$ is $\secan{\tdontc}{\wdontk}$. 

The rule $\wraprule{\rulename{\outName}}$ follows a similar argument, with the role of $x$ taken by the channel $ch$.

The rule $\wraprule{\rulename{;}}$ requires that whenever a component $C_1$ has a timing behavior described by $\ttop$, the following component $C_2$ induces a $\whigh$ write effects, in order to avoid timing channel leaks. The annotation computed by rule $\wraprule{\rulename{;}}$ considers the least upper bound of the writing effects of $C_1$ and $C_2$, and uses an extended least upper bound operator $\tclub$ (cf. Figure \ref{fig:exttermlattice}).

The rule $\wraprule{\rulename{\wifName}-any}$ prevents implicit flows from happening in the program by applying the same strategy presented for the type system in Figure \ref{table:absttypeif}. In particular, the security label $\lambda$ of an expression is translated into a write effect by the function $\writeMap$ (cf. Section \ref{sec:absttypesystem}), and write effects of both branches are expected to be lower than $\writeMap(\lambda)$. The resulting security annotation for the $\wif{E}{C_t}{C_e}$ command is computed in terms of the least upper bound of the security annotations for $E$, $C_t$ and $C_e$. In particular, the write effect $\writeMap(\lambda) \wlub w_1 \wlub w_2$ corresponds to the least upper bound of all write effects, whereas the time behavior $\termMap(\lambda)\tlub t_1 \tlub t_2$ is determined from the time behavior of branches, together with the corresponding time behavior of the guard, according to the function $\termMap$ (where $\termMap$ is defined in Section \ref{sec:absttypesystem}). Notice that the exact label for termination would be $\tconst~(0, n_g) \tclub \termMap(\lambda) \tclub (t_1 \tclub \tconst~(0,1)) \tlub (t_2 \tclub \tconst~(0,1))$, because of the code $D_g$ that is executed at the loop entrance and the $\wskip$ commands that are appended at the end of $D_t$ and $D_e$. However, we can omit this information because $\termMap$ always returns a label in $\{\tdontc,\ttop\}$, hence the exact number of steps for $D_g$ and $\wskip$ is irrelevant. 

When it is known that the $\wifName$ statement involves only write actions on $\valHigh$ variables, the timing behavior can be computed more accurately, as for the type system in Figure \ref{table:absttypeif}. In particular, as shown by rule $\wraprule{\rulename{\wifName}- \valHigh}$, if $C_t$ and $C_e$ are associated to a security type $\secan{\tconst~(m,n_t)}{\whigh}$ and $\secan{\tconst~(m,n_e)}{\whigh}$, the resulting annotation for the statement becomes $\secan{\tconst~(m+1,\max(n_t,n_e)+n_g+2)}{\whigh}$. The first timing value, namely $m+1$, adds one expression evaluation step to the timing behavior of the branches, which is \emph{required} to have the same value. The second timing value, namely $\max(n_t,n_e)+n_g+2$ considers the expression compilation (factor $n_g$), the final $\wskip$ command and the register evaluation (they count for the factor 2), together with the biggest factor between $n_t$ and $n_e$ for branches. Notice that alignment of branches is performed by applying the usual padding strategy. 

As for the type system in Figure \ref{table:absttypewhcomp}, the rule $\wraprule{\rulename{\whileName}}$ prevents implicit flow from happening by enforcing several constraints. Indirect information flows involving the loop guard are prevented by requiring the write effect of the loop to be lower than the write effect of the guard. The timing channel induced by the loop body is prevented by requiring the write effect of the body to be $\whigh$ if its timing behavior depends on secrets.  

\subsubsection{Type System: Results}

In order to formalize the properties of the type system we proceed with specifying the semantics of the \whprog\ language. Since we are targeting the notion of Strong Security, we instantiate the abstract machine for the \whprog\ language as a fault-prone system. 

Observe that we formulate the semantic definition generally enough to be suitable for \swhprog\ programs as well. Since \swhprog\ programs are a subclass of \whprog\ programs, the only difference to be considered is the set of variables on which programs are defined in the two languages. We therefore define the rules in Figure \ref{table:whilesem} to be parametric on $\allvari$, the set of variables, such that $\allvari= \var$ for \whprog\ programs and $\allvari = \var \cup \regvar$ for \swhprog\ programs. 

We define the memory as an element $\mem$ from the set $\{\var \cup \regvar \rightarrow \mathbb{N}\}$ of mappings between variables and values in $\mathbb{N}$.  The semantics for the \whprog\ language as a fault-prone system is described by the LTS $\whlts=\{\{\langle C, \mem \rangle  \}, \whsem{}, \{ch!n|ch \in \{\low,\high\} \mbox{ and } n \in \mathbb{N}\} \cup \{\tau\} \}$ \, where $C$ is the fault-tolerant part of configurations, $\mem$ is the fault-prone part and $\whsem{}$ is obtained by rules in Figure \ref{table:whilesem}. Notice that rules for sequential composition are expressed using evaluation contexts, defined as $\ec ::= [-] | \ec;C$. The rule $\wraprule{\rulename{c-2}}$ requires a particular attention: since we need a fine control over the number of execution steps, rule $\wraprule{\rulename{c-2}}$ simultaneously executes statements that terminates in one step (like $\wskip, \assignName, \outName$) and stripes off the terminating command $\findot$ they are reduced to. In this way, the number of reductions performed by a \whprog\ program according to \whprog\ semantics are easily mapped to the number of steps performed by the corresponding \assem\ program in the \assem\ semantics.

\begin{figure}[h]
\centering
$\begin{array}{c}
\vspace{0.2cm}
\inference[\rulename{skip}]
{}
{\whstate{\wskip}{\mem}\whsem{\tau} \whstate{\findot}{\mem}
}
\\
\vspace{0.2cm}
\inference[\rulename{:=}]
{v \in \allvari & |[E_\allvari |](M)=n}
{\whstate{\assign{v}{E}}{\mem} \whsem{\tau} \whstate{\findot}{\mem[v \backslash n]}
}
\\
\vspace{0.2cm}
\inference[\rulename{out}]
{|[E_\allvari |](M)=n}
{\whstate{\out{ch}{E}}{\mem} \whsem{ch!n} \whstate{\findot}{\mem}
}
\\
\vspace{0.2cm}
\inference[\rulename{if-1}]
{|[E_\allvari |](M)\not = 0 }
{\whstate{\wif{E}{C_1}{C_2}}{\mem}\whsem{\tau} \whstate{C_1}{\mem}
}
\\
\vspace{0.2cm}
\inference[\rulename{if-2}]
{ |[E_\allvari  |](M) = 0}
{\whstate{\wif{E}{C_1}{C_2}}{\mem} \whsem{\tau} \whstate{C_2}{\mem}
}
\\
\vspace{0.2cm}
\inference[\rulename{w-1}]
{v\in \allvari &  M(v)\not = 0}
{\whstate{\while{v}{C}}{\mem}\whsem{\tau} \whstate{C;\while{v}{C}}{\mem}
}
\\
\vspace{0.2cm}
\inference[\rulename{w-2}]
{v\in \allvari & M(v) = 0 }
{\whstate{\while{v}{C}}{\mem} \whsem{\tau} \whstate{\findot}{\mem}
}
\\
\vspace{0.2cm}
\inference[\rulename{c-1}]
{\whstate{C}{\mem} \whsem{l} \whstate{C'}{\mem'} & C' \not = \findot}
{\whstate{\ec[C]}{\mem} \whsem{l} \whstate{\ec[C']}{\mem'}
}
\\
\vspace{0.2cm}
\inference[\rulename{c-2}]
{\whstate{C_1}{\mem} \whsem{l} \whstate{\findot}{\mem'}}
{\whstate{\ec[C_1;C_2]}{\mem} \whsem{l} \whstate{\ec[C_2]}{\mem'}
}
\\[3ex] % \and
\end{array}$
\caption{\whprog\ and \swhprog\ programs semantics\label{table:whilesem}}
\end{figure}

We want to show that any type-correct \whprog\ program is SS. First we instantiate the definition of Strong Security for \whprog\ programs.

\begin{definition}[Strong Security for \whprog\ programs]
We say that a \whprog\ program $C$ is strongly secure if it is strongly secure according to Definition \ref{def:strongsec} instantiated for the fault-prone system $\whlts$.
\end{definition}
 
The first property of the type system presented in Figures \ref{table:exptypesystem}, \ref{table:typesystematom} and \ref{table:typesystem} is that any type-correct program is strongly secure. We formalize this result as follows.

\begin{proposition}[Type System Enforces Strong Security]\label{prop:strongsectype}
Let $C$ be a \whprog\ program. If there exists $\regenv$ such that $\regenv \enta{\regvar}{\var}  C \typeproduce \code{D}, \secan{t}{w}, \regenv'$, then $C$ is strongly secure. 
\end{proposition}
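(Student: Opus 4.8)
The plan is to exhibit an explicit strong bisimulation $R$ over \whprog\ commands, show that the typing hypothesis puts $(C,C)\in R$, and then verify the clauses of Definition \ref{def:strongsec} for the fault-prone system $\whlts$. I would let $R$ relate two co-typeable commands (commands admitting derivations with \emph{equal} write effect) living in one of two regimes. In the \emph{low-invisible regime} both commands are typed with write effect $\whigh$; any two such commands are related, including the terminated command $\findot$ (which is vacuously $\whigh$). In the \emph{lockstep regime} the commands decompose as $D_1;C_0$ and $D_2;C_0$ with a common continuation $C_0$, where the heads $D_1,D_2$ carry the same precise source timing $\tconst~(m,\cdot)$ and are either syntactically identical up to sub-phrases that influence only $\Hloc$ data (the guard and branches of a high conditional, high expressions, the $\high$ channel), or both $\whigh$. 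Reflexivity ($D_1=D_2$, empty suffix) yields $(C,C)\in R$ from the hypothesis. The low-invisible regime absorbs executions that diverge forever on secret data, while the matched counter $m$ in the lockstep regime forces two branch-diverging executions to run in step for exactly $m$ transitions and then re-synchronise at $C_0$.

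Two supporting lemmas would carry most of the routine work. First, an \textbf{expression lemma}: an expression typeable at level $\Lloc$ reads only $\Lloc$ variables, so $\mem_1 =_{\Lloc}\mem_2$ forces its two evaluations to agree; this keeps the low cases in perfect lockstep. Second, a \textbf{high-write preservation lemma} (a subject-reduction statement for the $\whigh$ effect): if $C$ is typed with effect $\whigh$ then every transition $\whstate{C}{\mem}\termtransp{a}\whstate{C'}{\mem'}$ satisfies $\low(a)=\tau$ and $\mem'=_{\Lloc}\mem$, and the residual $C'$ is again typeable with effect $\whigh$. This makes the low-invisible regime self-contained: any two $\whigh$ commands step with low-silent labels, preserve low-equivalence, and hand back $\whigh$ residuals. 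Because the termination-transparent semantics adds the self-loop $\whstate{\findot}{\mem}\termtransp{\tau}\whstate{\findot}{\mem}$ on stuck states, a terminated command can always match a still-running $\whigh$ command, which is exactly what is needed when two secret-diverging executions finish at different times.

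The bisimulation property itself would be established by case analysis on the operational rule firing from $\whstate{C_1}{\mem_1}$, with an inner induction on the evaluation context $\ec$ for sequential composition. The easy cases ($\wskip$, $\assign{x}{E}$ for either level of $x$, $\out{ch}{E}$ on either channel, the low-guarded conditional and loop) either keep the pair in the lockstep regime with $D_1=D_2$ or immediately push it into the low-invisible regime. The decisive cases are the high conditional and the high-guarded loop, where $\mem_1$ and $\mem_2$ can drive the two executions into different branches. Under $\wraprule{\rulename{\wifName}-\valHigh}$ the branches carry the \emph{same} source step-count $m$ and effect $\whigh$, so the residuals enter the lockstep regime as $D_1;C_0$ and $D_2;C_0$ and re-synchronise after exactly $m$ steps, before any $\Lloc$-write. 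Under $\wraprule{\rulename{\wifName}-any}$ (and the high-guarded $\wraprule{\rulename{\whileName}}$) the command has timing $\ttop$ and effect $\whigh$, and the side condition $t_1=\ttop\Rightarrow w_2=\whigh$ of $\wraprule{\rulename{;}}$ propagates $\whigh$ to every downstream command, so the pair remains in the low-invisible regime permanently. In all branch-diverging cases the labels map to $\tau$ and low memory is preserved, discharging (i)--(iii) of Definition \ref{def:strongsec}.

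The main obstacle is the synchronisation bookkeeping. I must show that once two related executions diverge on a secret they either stay forever in the low-invisible regime or re-synchronise \emph{exactly}, at the same step, before either resumes writing to $\Lloc$. This is precisely what the timing discipline is engineered to guarantee: the matched counter $m$ of $\wraprule{\rulename{\wifName}-\valHigh}$ (enforced by the $\nop$/$\wskip$ padding) for the re-synchronising case, and the propagation of $\whigh$ through the $t=\ttop\Rightarrow w=\whigh$ side conditions of $\wraprule{\rulename{;}}$, $\wraprule{\rulename{\wifName}-any}$ and $\wraprule{\rulename{\whileName}}$ for the never-resynchronising case. The real content of the proof is therefore checking that these side conditions are \emph{collectively sufficient} to keep $R$ closed under every single transition, including the delicate interplay between the padded branch lengths, the common-continuation suffix, and the terminated-state self-loops of the termination-transparent semantics.
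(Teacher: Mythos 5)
Your relation is, in substance, the one the paper uses: the paper proves the proposition by exhibiting $R = A \cup X \cup H^2$, where $H^2$ is your low-invisible regime (all pairs of $\whigh$-typed commands, plus $\findot$), $X$ is your lockstep regime (two $\whigh$-typed heads carrying the same source step-count $m$ under a common reduction context that is typeable after both heads), and $A$ is the plain diagonal of typeable commands. The supporting lemmas you name also correspond one-to-one: your expression lemma; your high-write preservation lemma, which the paper factors into a low-transparency lemma for $\whigh$ commands plus subject reduction; and the counter-decrement fact you invoke for re-synchronisation, which is the paper's progress lemma for $\secan{\tconst~(m,n)}{\whigh}$ commands. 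The case analysis (operational rule plus induction on the evaluation context, high conditionals either re-synchronising via matched $m$ or falling permanently into the $\whigh$ regime via the $t_1=\ttop\Rightarrow w_2=\whigh$ side condition) is also the paper's.

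There is, however, one concrete flaw in your definition that the paper's formulation avoids. You require the lockstep heads to carry a precise timing $\tconst~(m,\cdot)$ in every case, and you obtain $(C,C)\in R$ as the instance $D_1=D_2=C$ with empty continuation. But most typeable commands do not have precise timing: already $\assign{x}{E}$ with $\seclev(x)=\Lloc$ types as $\secan{\tdontc}{\wdontk}$, and a low assignment followed by a high-guarded loop types as $\secan{\ttop}{\wdontk}$. Such a pair $(C,C)$ has write effect $\wdontk$, so it is not in your low-invisible regime, and no decomposition of $C$ yields a precisely-timed head, so it is not in your lockstep regime either; reflexivity fails and the proof cannot start. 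The repair is exactly the paper's component $A$: admit the unrestricted diagonal with no timing constraint whatsoever — matched precise timing is only needed for \emph{non-identical} $\whigh$ heads, where it drives the re-synchronisation count. A second, smaller omission: closure of the lockstep component under transitions needs the register-record bookkeeping you elide — the common continuation must remain typeable under the \emph{intersection} of the two heads' output records after each step, which is where the paper's upward-closure and reduction-context-typing lemmas do real work. Neither point changes your architecture, but both must be filled in for $R$ to actually be a strong bisimulation.
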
 

Proving this statement requires few auxiliary results. The first one, formalized in Lemma \ref{lemma:comweak},  illustrates a property of the type system which we refer to as ``upward closure'': if a command is type-correct with respect to a register record $\regenv$, it also results type-correct with respect to any other register record $\regenv'$ that is bigger than $\regenv$. Also, the output register record corresponding to $\regenv'$ results bigger than the one corresponding to $\regenv$. 

In order to prove Lemma \ref{lemma:comweak}, two auxiliary results are required. The first one formalizes some properties of the operations over register records.

\begin{lemma}\label{lemma:regenvoper}
Let $\regenv$ and $\regenv'$ two register records such that $\regenv' \sqsupseteq \regenv$. Then:
\begin{itemize}
\item $\subst{\regenv'}{\breakconn{r}} \sqsupseteq \subst{\regenv}{\breakconn{r}}$;
\item $\subst{\regenv'}{\conn{r}{x}{\mu}} \sqsupseteq  \subst{\regenv}{\conn{r}{x}{\mu}}$.
\end{itemize}
\end{lemma}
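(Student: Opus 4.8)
The plan is to pin down a precise set-theoretic reading of register records and then reduce both inclusions to a single, elementary monotonicity fact. Following the paper's informal description, I regard a register record $\regenv$ as a set of associations $(s,\nu,y) \in \regvar \times \{\readMode,\writeMode\} \times \var$ satisfying the partial-bijection invariant (at most one triple per register and at most one per variable), and I read $\regenv \sqsubseteq \regenv'$ directly as set inclusion $\regenv \subseteq \regenv'$, which is exactly the stated meaning (``all associations in $\regenv$ are also found in $\regenv'$''). Writing $R_r$ for the set of \emph{all} triples whose register component is $r$ and $X_x$ for the set of all triples whose variable component is $x$, the two operations then have the uniform shape $\regenv \mapsto (\regenv \setminus D) \cup A$, where $D$ and $A$ depend only on the operation's arguments and \emph{not} on $\regenv$: for removal, $\subst{\regenv}{\breakconn{r}} = \regenv \setminus R_r$ (so $D = R_r$, $A = \emptyset$), and for override, $\subst{\regenv}{\conn{r}{x}{\mu}} = (\regenv \setminus (R_r \cup X_x)) \cup \{(r,\mu,x)\}$ (so $D = R_r \cup X_x$, $A = \{(r,\mu,x)\}$). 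This last reading realizes the ``minimal modification yielding a partial bijection mapping $r$ to $x$'': deleting $R_r \cup X_x$ clears exactly the two potential conflicts, and adding $(r,\mu,x)$ installs the required association.

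The key observation I would isolate first is that any map of the form $f(\regenv) = (\regenv \setminus D) \cup A$ is monotone with respect to $\subseteq$. This is immediate: set difference is monotone in its first argument, so $\regenv \subseteq \regenv'$ gives $\regenv \setminus D \subseteq \regenv' \setminus D$, and union with a fixed set $A$ preserves inclusion, whence $(\regenv \setminus D) \cup A \subseteq (\regenv' \setminus D) \cup A$. Crucially, this uses that $D$ and $A$ are identical in the two instances $f(\regenv)$ and $f(\regenv')$, which holds precisely because they are determined by $r$ (and $x$, $\mu$) alone.

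Given this, both clauses of the lemma are corollaries. The first, $\subst{\regenv'}{\breakconn{r}} \sqsupseteq \subst{\regenv}{\breakconn{r}}$, is the instance $D = R_r$, $A = \emptyset$. The second, $\subst{\regenv'}{\conn{r}{x}{\mu}} \sqsupseteq \subst{\regenv}{\conn{r}{x}{\mu}}$, is the instance $D = R_r \cup X_x$, $A = \{(r,\mu,x)\}$. If I chose to spell the second one out directly rather than via the abstract lemma, I would take $(s,\nu,y) \in \subst{\regenv}{\conn{r}{x}{\mu}}$ and split on whether it equals the freshly added $(r,\mu,x)$ (which always lies in the target) or is inherited from $\regenv$; in the latter case it satisfies $s \neq r$ and $y \neq x$, so it avoids $R_r \cup X_x$, and since $\regenv \subseteq \regenv'$ it survives the same deletion in $\subst{\regenv'}{\conn{r}{x}{\mu}}$.

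I expect the only genuine obstacle to be definitional rather than combinatorial: the statement is essentially trivial once the operations are written as ``delete a fixed conflict set, then insert a fixed association,'' but the paper specifies $\subst{\regenv}{\conn{r}{x}{\mu}}$ only informally as a ``minimal modification.'' The care needed is to justify that this minimal modification is exactly $(\regenv \setminus (R_r \cup X_x)) \cup \{(r,\mu,x)\}$ — in particular that the deleted set depends only on $r$ and $x$ and not on $\regenv$ — and to confirm that the partial-bijection invariant is preserved so the result is again a legitimate record. Once that is settled, no case is delicate and the monotonicity is purely formal.
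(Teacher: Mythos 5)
Your proof is correct, but it takes a genuinely different route from the paper's. The paper handles the first bullet exactly as you do (set difference preserves inclusion), but for the second bullet it runs an exhaustive case analysis --- seven cases in all --- on whether and how $x$ is associated in $\regenv$ and in $\regenv'$ ($x$ unassociated in both; $\conn{r'}{x}{}$ in $\regenv'$ for some $r' \neq r$; $\conn{r}{x}{\mu'}$ in $\regenv'$ with $\mu' \neq \mu$; and so on), in each case arguing that the updated records agree with the originals on $\regvar \setminus \{r\}$ (or $\regvar\setminus\{r,r'\}$) and then comparing what sits at the affected registers. You instead factor both bullets through a single observation: each operation has the shape $\regenv \mapsto (\regenv \setminus D) \cup A$ where the deleted set $D$ and the added set $A$ are determined by the arguments $r,x,\mu$ alone (removal: $D = R_r$, $A = \emptyset$; override: $D = R_r \cup X_x$, $A = \{(r,\mu,x)\}$), and any such map is trivially monotone for $\subseteq$. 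What your approach buys is brevity and uniformity --- the paper's seven cases collapse into one line, and the argument applies to any ``delete a fixed set, insert a fixed set'' update. What it costs, and you correctly flag this, is the obligation to show that the paper's informally specified ``minimal modification of $\regenv$ resulting in a partial bijection mapping $r$ to $x$'' is exactly $(\regenv \setminus (R_r \cup X_x)) \cup \{(r,\mu,x)\}$; the paper never commits to such a closed form and instead pins down the operation's behaviour case by case, which is precisely why its proof is longer. Your justification of the closed form is adequate: any partial bijection containing $(r,\mu,x)$ must exclude every other element of $R_r \cup X_x$, so the record that deletes exactly $\regenv \cap (R_r \cup X_x)$, keeps the rest of $\regenv$, and inserts $(r,\mu,x)$ is the unique minimal choice, and it is again a partial bijection. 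With that settled, your monotonicity step is airtight and the lemma follows.
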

\begin{proof}
The first statement follows directly by the fact that inclusion is preserved by set difference. Proving the second statement is slightly more challenging, since the update operation requires the bijection property of register records to be preserved. In particular, implementing $\subst{\regenv}{\conn{r}{x}{\mu}}$ requires that (i) any existing association $\conn{r'}{x}{}$ is removed, and (ii) the new association $\conn{r}{x}{\mu}$ is added to the register record. The statement is proved by distinguishing the following cases:
\begin{itemize}
\item assume $x$ is not associated in $\regenv$. This implies that there is no $r' \in \regvar$ such that $\conn{r'}{x}{} \in \regenv$. We distinguish four cases:
\begin{itemize}
\item $x$ is not associated in $\regenv'$. Then  $\subst{\regenv'}{\conn{r}{x}{\mu}} \sqsupseteq  \subst{\regenv}{\conn{r}{x}{\mu}}$ is true because $\restr{\subst{\regenv'}{\conn{r}{x}{\mu}}}{\regvar \setminus \{r\}}= \restr{\regenv'}{\regvar \setminus \{r\}}$ and $\restr{\subst{\regenv}{\conn{r}{x}{\mu}}}{\regvar \setminus \{r\}}= \restr{\regenv}{\regvar \setminus \{ r \}}$.

\item there exists $r' \not = r$ such that $\conn{r'}{x}{} \in \regenv'$. Then $\subst{\regenv'}{\conn{r}{x}{\mu}} \sqsupseteq  \subst{\regenv}{\conn{r}{x}{\mu}}$ because $r'$ is unassociated in $\regenv$. 

\item $\conn{r}{x}{\mu'} \in \regenv'$ but $\mu' \not = \mu$. Then $\subst{\regenv'}{\conn{r}{x}{\mu}} \sqsupseteq  \subst{\regenv}{\conn{r}{x}{\mu}}$ because $\restr{\subst{\regenv'}{\conn{r}{x}{\mu}}}{\regvar \setminus \{r \}}= \restr{\regenv'}{\regvar \setminus \{r\}}$ and $\restr{\subst{\regenv}{\conn{r}{x}{\mu}}}{\regvar \setminus \{r\}}= \restr{\regenv}{\regvar \setminus \{ r \}}$.

\item $\conn{r}{x}{\mu} \in \regenv'$. Then $\subst{\regenv'}{\conn{r}{x}{\mu}}=\regenv'$. Hence $\subst{\regenv'}{\conn{r}{x}{\mu}}=\regenv' \sqsupseteq  \subst{\regenv}{\conn{r}{x}{\mu}}$ since $\restr{\subst{\regenv}{\conn{r}{x}{\mu}}}{\regvar \setminus \{r\}}= \restr{\regenv}{\regvar \setminus \{r\}}$.
\end{itemize} 
\item Assume $x$ is associated in $\regenv$, hence there exists $r' \in \regvar$ such that $\conn{r'}{x}{\mu'} \in \regenv$. Since $\regenv' \sqsupseteq \regenv$, $\conn{r'}{x}{\mu'} \in \regenv'$. We distinguish three cases:
\begin{itemize}
\item $r' \not =r$. Then $\subst{\regenv'}{\conn{r}{x}{\mu}}  \sqsupseteq  \subst{\regenv}{\conn{r}{x}{\mu}}$ since (i) $\restr{\subst{\regenv'}{\conn{r}{x}{\mu}}}{\regvar \setminus \{r,r'\}}= \restr{\regenv'}{\regvar \setminus \{ r,r' \}}$ and $\restr{\subst{\regenv}{\conn{r}{x}{\mu}}}{\regvar \setminus \{r,r'\}}= \restr{\regenv}{\regvar \setminus \{ r,r' \}}$ and (ii) $r'$ results unassociated in both $\subst{\regenv'}{\conn{r}{x}{\mu}}$ and $\subst{\regenv}{\conn{r}{x}{\mu}}$. 

\item $r'=r$ but $\mu' \not = \mu$. Then  $\subst{\regenv'}{\conn{r}{x}{\mu}} \sqsupseteq \subst{\regenv}{\conn{r}{x}{\mu}}$ holds because $\restr{\subst{\regenv'}{\conn{r}{x}{\mu}}}{\regvar \setminus \{r \}}= \restr{\regenv'}{\regvar \setminus \{r\}}$ and $\restr{\subst{\regenv}{\conn{r}{x}{\mu}}}{\regvar \setminus \{r \}}= \restr{\regenv}{\regvar \setminus \{r\}}$ .

\item If $r'=r$ and $\mu' = \mu$. Then $\subst{\regenv'}{\conn{r}{x}{\mu}} =  \regenv' \sqsupseteq  \regenv = \subst{\regenv}{\conn{r}{x}{\mu}}$.\\
\end{itemize} 
\end{itemize}
\end{proof}

The second result that supports the proof of Lemma \ref{lemma:comweak} formalizes the notion of ``upward closure'' for expressions.

\begin{lemma}[Register Record Upward Closure for Expressions]\label{lemma:expweak}
Let $E$ be a \whprog\ expression. If there exists $\regenv$ such that $\regenv, \activeReg \eenta{\regvar}{\var} E \etypeproduce \code{D}, \esecan{\lambda}{n}, r, \regenv_\alpha$ then $\forall \regenv' \sqsupseteq \regenv$ $\regenv', \activeReg \eenta{\regvar}{\var} E \etypeproduce \code{D}, \esecan{\lambda}{n}, r, \regenv_\beta$ such that $\regenv_\beta \sqsupseteq \regenv_\alpha$.
\end{lemma}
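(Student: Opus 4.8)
The plan is to prove Lemma~\ref{lemma:expweak} by induction on the structure of the expression $E$ (equivalently, on the height of the typing derivation from $\regenv$), with a case analysis on the last rule of Figure~\ref{table:exptypesystem} applied. The observation that drives every case is that the side conditions of these rules fall into two benign categories: they are either \emph{independent} of the register record (as with $r \in \regvar$, $r \not\in \activeReg$, and $\seclev(r) \sqsupseteq \seclev(x)$), or they are \emph{preserved} when the record grows (as with $\conn{r}{x}{} \in \regenv$, which remains valid in every $\regenv' \sqsupseteq \regenv$ by the definition of record inclusion). Hence from the larger record $\regenv'$ one can always replay exactly the same rule instance with the same choice of target register, reproducing the identical code $\code{D}$, annotation $\esecan{\lambda}{n}$, and result register $r$; what then remains in each case is to verify that the \emph{output} record dominates the original one, which is precisely what Lemma~\ref{lemma:regenvoper} supplies.

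For the base cases I would argue as follows. In $\wraprule{\rulename{K}}$ and $\wraprule{\rulename{V}-uncached}$ the premises do not mention the record, so the same instance fires from $\regenv'$ with the same $r$, and the output records $\subst{\regenv'}{\breakconn{r}}$ and $\subst{\regenv'}{\conn{r}{x}{\readMode}}$ dominate $\subst{\regenv}{\breakconn{r}}$ and $\subst{\regenv}{\conn{r}{x}{\readMode}}$ respectively, by the two parts of Lemma~\ref{lemma:regenvoper}. In $\wraprule{\rulename{V}-cached}$ the premise $\conn{r}{x}{} \in \regenv$ transfers to $\regenv'$ since $\regenv' \sqsupseteq \regenv$, so the rule applies unchanged and returns $\regenv'$ itself as its output, which trivially satisfies $\regenv' \sqsupseteq \regenv = \regenv_\alpha$.

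For the inductive case $\wraprule{\rulename{C}}$ the derivation decomposes $\wop{E_1}{E_2}$ via $\regenv, \activeReg \eenta{\regvar}{\var} E_1 \etypeproduce \code{D_1}, \esecan{\lambda}{n_1}, r, \regenv_1$ and $\regenv_1, \activeReg\cup\{r\} \eenta{\regvar}{\var} E_2 \etypeproduce \code{D_2}, \esecan{\lambda}{n_2}, r', \regenv_2$. First I would apply the induction hypothesis to $E_1$ with the enlargement $\regenv' \sqsupseteq \regenv$, obtaining an output $\regenv_1' \sqsupseteq \regenv_1$ with the same code and register $r$. Then, crucially using that the hypothesis is universally quantified over starting records, I would apply it to $E_2$ with the pair $\regenv_1' \sqsupseteq \regenv_1$ and the same active set $\activeReg\cup\{r\}$, obtaining an output $\regenv_2' \sqsupseteq \regenv_2$ with the same $\code{D_2}$ and $r'$. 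Reassembling with $\wraprule{\rulename{C}}$ then reproduces $\code{D_1;D_2;\assign{r}{\wop{r}{r'}}}$ together with its annotation, and yields output record $\subst{\regenv_2'}{\breakconn{r}}$, which dominates $\subst{\regenv_2}{\breakconn{r}} = \regenv_\alpha$ by the first part of Lemma~\ref{lemma:regenvoper}.

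The only genuinely delicate point — and thus the thing to be careful about — is that the type system is nondeterministic in its register allocation, so the lemma asserts the \emph{existence} of a matching derivation from $\regenv'$ rather than any determinisation. The argument above sidesteps this by mirroring the \emph{particular} derivation used from $\regenv$ step for step; I would therefore state the induction hypothesis with explicit universal quantification over both the starting record and its enlargement, so that the chained inductive call in the $\wraprule{\rulename{C}}$ case is licensed, since there the second enlargement $\regenv_1' \sqsupseteq \regenv_1$ is itself produced by the first inductive call.
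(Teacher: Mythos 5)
Your proposal is correct and follows essentially the same route as the paper's own proof: structural induction on $E$ with case analysis on the last rule, replaying the same rule instance from the larger record (noting that the premises are either record-independent or preserved under $\sqsupseteq$), invoking Lemma~\ref{lemma:regenvoper} for dominance of the output records, and chaining the inductive hypothesis through $E_1$ and then $E_2$ (from $\regenv_1' \sqsupseteq \regenv_1$) in the $\wraprule{\rulename{C}}$ case. Your closing remark about stating the induction hypothesis with universal quantification over the starting record, so that the second inductive call is licensed, makes explicit a point the paper's proof uses silently; otherwise the two arguments coincide.
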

\begin{proof}
We prove the proposition by induction on the structure of $E$ and by cases on the last rule applied in the type derivation. %In the proof we omit the explicit representation of the code production since it is not relevant in this context. 

\framebox[1.1\width]{Base case} 

Case $E=k$. We know that there exists $\regenv$ such that $\regenv, \activeReg \eenta{\regvar}{\var} k \etypeproduce \code{\assign{r}{k}}, \esecan{\seclev(r)}{1}, r, \subst{\regenv}{\breakconn{r}}$ and consider $\regenv' \sqsupseteq \regenv$. Then $\regenv', \activeReg \eenta{\regvar}{\var} k \etypeproduce \code{r:=k}, \esecan{\seclev(r)}{1}, r, \subst{\regenv'}{\breakconn{r}}$ and $\regenv_\beta = \subst{\regenv'}{\breakconn{r}} \sqsupseteq  \subst{\regenv}{\breakconn{r}} = \regenv_\alpha$ by Lemma \ref{lemma:regenvoper}. \\

Case $E=x$. Assume that the rule $\wraprule{\rulename{V}-cached}$ is used. Then there exists $\regenv$ such that $\conn{r}{x}{\mu} \in \regenv$ and  $\regenv, \activeReg \eenta{\regvar}{\var} x \etypeproduce \code{\findot},\esecan{\lambda}{0}, r, \regenv$. Consider $\regenv' \sqsupseteq \regenv$. Then $\conn{r}{x}{\mu} \in \regenv'$ and $\regenv', \activeReg \eenta{\regvar}{\var} x \etypeproduce \code{\findot},\esecan{\lambda}{0}, r, \regenv'$ and $\regenv_\beta = \regenv' \sqsupseteq  \regenv = \regenv_\alpha$ by hypothesis.\\
Assume that the rule $\wraprule{\rulename{V}-uncached}$ is used instead. Then there exists $\regenv$ such that $\regenv, \activeReg \eenta{\regvar}{\var} x \etypeproduce \code{\assign{r}{x}}, \esecan{\seclev(r)}{1}, r, \subst{\regenv}{\conn{r}{x}{\readMode}}$. Let $\regenv' \sqsupseteq \regenv$. Then $\regenv', \activeReg \eenta{\regvar}{\var} x \etypeproduce \code{\assign{r}{x}}, \esecan{\seclev(r)}{1}, r, \subst{\regenv'}{\conn{r}{x}{\readMode}}$ and $\regenv_\beta =  \subst{\regenv'}{\conn{r}{x}{\readMode}} \sqsupseteq   \subst{\regenv}{\conn{r}{x}{\readMode}} = \regenv_\alpha$ by Lemma \ref{lemma:regenvoper}. \\

\framebox[1.1\width]{Inductive step} 

Case $E=\wop{E_1}{E_2}$. We know that there exists $\regenv$ such that $\regenv, \activeReg \eenta{\regvar}{\var} \wop{E_1}{E_2} \etypeproduce \code{D_1;D_2;\assign{r}{\wop{r}{r'}}}, \esecan{\lambda}{n_1+n_2+1}, r, \subst{\regenv_2}{\breakconn{r}}$, under the assumption that $\regenv, \activeReg \eenta{\regvar}{\var} E_1 \etypeproduce \code{D_1}, \esecan{\lambda}{n_1}, r, \regenv_1$ and $\regenv_1, \activeReg\cup\{r\} \eenta{\regvar}{\var} E_2 \etypeproduce \code{D_2}, \esecan{\lambda}{n_2}, r', \regenv_2$. Consider $\regenv' \sqsupseteq \regenv$. By applying the inductive hypothesis  on $E_1$ we obtain $\regenv', \activeReg \eenta{\regvar}{\var} E_1 \etypeproduce \code{D_1}, \esecan{\lambda}{n_1}, r, \regenv_1'$, such that $\regenv_1' \sqsupseteq \regenv_1$. By applying the inductive hypothesis on $E_2$ we obtain $\regenv_1', \activeReg\cup\{r\} \eenta{\regvar}{\var} E_2 \etypeproduce \code{D_2}, \esecan{\lambda}{n_2}, r', \regenv_2'$ such that $\regenv_2' \sqsupseteq \regenv_2$. Hence $\regenv', \activeReg \eenta{\regvar}{\var} \wop{E_1}{E_2} \etypeproduce \code{D_1;D_2;\assign{r}{\wop{r}{r'}}}, \esecan{\lambda}{n_1+n_2+1}, r, \subst{\regenv_2'}{\breakconn{r}}$ and $\regenv_\beta = \subst{\regenv_2'}{\breakconn{r}} \sqsupseteq  \subst{\regenv_2}{\breakconn{r}} = \regenv_\alpha$ by Lemma \ref{lemma:regenvoper}.

\end{proof}

We continue with the formalization (and the proof) of ``upward closure'' for commands.

\begin{lemma}[Register Record Upward Closure for Commands]\label{lemma:comweak}
Let $C$ be a \whprog\ program. If there exists $\regenv$ such that $\regenv \enta{\regvar}{\var}  C \typeproduce \code{D}, \secan{t}{w}, \regenv_\alpha$ then $\forall \regenv' \sqsupseteq \regenv$ $\regenv' \enta{\regvar}{\var}  C \typeproduce \code{D}, \secan{t}{w}, \regenv_\beta$ such that $\regenv_\beta \sqsupseteq \regenv_\alpha$.
\end{lemma}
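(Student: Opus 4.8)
The plan is to proceed by structural induction on the command $C$, with a case analysis on the last typing rule used to derive $\regenv \enta{\regvar}{\var} C \typeproduce \code{D}, \secan{t}{w}, \regenv_\alpha$. Throughout I fix an arbitrary $\regenv' \sqsupseteq \regenv$ and reconstruct a derivation with the \emph{same} code $\code{D}$ and the \emph{same} security annotation $\secan{t}{w}$, whose output record $\regenv_\beta$ satisfies $\regenv_\beta \sqsupseteq \regenv_\alpha$. The two workhorses are Lemma \ref{lemma:regenvoper} (monotonicity of the record operations $\subst{\regenv}{\breakconn{r}}$ and $\subst{\regenv}{\conn{r}{x}{\mu}}$ in the argument $\regenv$) and Lemma \ref{lemma:expweak} (upward closure for expressions, which crucially returns the same code, the same register $r$, and the same annotation $\esecan{\lambda}{n}$, only possibly enlarging the expression-output record). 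I will also establish once and for all a small monotonicity fact about record intersection: if $\regenv_2' \sqsupseteq \regenv_2$ and $\regenv_3' \sqsupseteq \regenv_3$ then $\regenv_2' \envinters \regenv_3' \sqsupseteq \regenv_2 \envinters \regenv_3$, since any association on which $\regenv_2$ and $\regenv_3$ agree survives in both larger records and hence in their intersection.

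For the leaf cases, $\wraprule{\rulename{\wskip}}$ is immediate because $\regenv_\alpha = \regenv$, so $\regenv_\beta = \regenv'$ works. For $\wraprule{\rulename{\assignName}}$ and $\wraprule{\rulename{\outName}}$ I retype the expression $E$ from $\regenv'$ using Lemma \ref{lemma:expweak}, obtaining the same code, register $r$, and annotation but a larger expression-output record; the command-output record is then formed by applying $\subst{\regenv}{\conn{r}{x}{\writeMode}}$ (for assignment) or by passing the expression record through unchanged (for output), so Lemma \ref{lemma:regenvoper} supplies the required $\sqsupseteq$ relation. The annotation is untouched since it depends only on $\seclev(x)$ (resp. $\seclev(ch)$) and the step count $n$, both preserved by Lemma \ref{lemma:expweak}.

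The compositional cases are routine chaining. For $\wraprule{\rulename{;}}$ I apply the induction hypothesis to $C_1$ from $\regenv'$ to obtain a larger intermediate record, then feed that record into the induction hypothesis for $C_2$; the side condition $t_1 = \ttop \Rightarrow w_2 = \whigh$ is preserved because $t_1$ and $w_2$ are unchanged. For the two conditional rules I use Lemma \ref{lemma:expweak} on the guard (fixing $r$ and $\lambda$, so the side conditions $\writeMap(\lambda) \sqsupseteq w_i$ and the matching step count $m$ in $\wraprule{\rulename{\wifName}-\valHigh}$ remain valid), the induction hypothesis on each branch started from the enlarged record $\regenv_1'$, and the intersection fact to conclude $\regenv_2' \envinters \regenv_3' \sqsupseteq \regenv_2 \envinters \regenv_3$; the padding-based code and the timing annotation are determined by the preserved branch data, so they are unchanged.

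The main obstacle is $\wraprule{\rulename{\whileName}}$, because its output record $\regenv_B$ is not produced forward but guessed subject to the upper bounds $\regenv_B \sqsubseteq \subst{\regenv}{\conn{r}{x}{\writeMode}}$ and $\regenv_B \sqsubseteq \subst{\regenv_E}{\conn{r}{x}{\writeMode}}$, while $\regenv_B$ is itself the record from which the body $C$ is typed. The insight that resolves this is to \emph{reuse the very same} $\regenv_B$: since $\regenv' \sqsupseteq \regenv$, Lemma \ref{lemma:regenvoper} gives $\subst{\regenv'}{\conn{r}{x}{\writeMode}} \sqsupseteq \subst{\regenv}{\conn{r}{x}{\writeMode}} \sqsupseteq \regenv_B$, so the first constraint still holds with the enlarged input; the body derivation $\regenv_B \enta{\regvar}{\var} C \typeproduce \code{D}, \secan{t}{w}, \regenv_E$, and therefore the second constraint, are untouched because the body is still typed from $\regenv_B$. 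The remaining side conditions ($\lambda = \seclev(x) = \seclev(r)$, $\writeMap(\lambda) \sqsupseteq w$, and $t = \ttop \Rightarrow \writeMap(\lambda) = \whigh$) do not mention the input record. Hence $\regenv_\beta = \regenv_B = \regenv_\alpha$, which trivially gives $\regenv_\beta \sqsupseteq \regenv_\alpha$ and closes the induction.
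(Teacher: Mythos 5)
Your proof is correct and follows essentially the same route as the paper's: structural induction with case analysis on the last typing rule, using Lemma~\ref{lemma:regenvoper} and Lemma~\ref{lemma:expweak} for the atomic commands and guards, chaining the induction hypothesis through sequencing and the two conditional rules, and---the key step---reusing the very same $\regenv_B$ in the $\whileName$ case so that $\regenv_\beta = \regenv_B = \regenv_\alpha$. The only cosmetic difference is that you state and justify the monotonicity of record intersection ($\regenv_2' \envinters \regenv_3' \sqsupseteq \regenv_2 \envinters \regenv_3$) explicitly, whereas the paper uses it silently in the conditional cases.
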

\begin{proof}
We prove the proposition by induction on the structure of $C$ and by cases on the last rule applied in the type derivation.

\framebox[1.1\width]{Base case} 

%In this case $C \in \{ \wskip, \assign{x}{E}, \out{ch}{E}\}$.\\ 
Case $C=\wskip$. We know that $\regenv \enta{\regvar}{\var}  \wskip \typeproduce \code{\wskip}, \secan{\tconst~(1,1)}{\whigh}, \regenv$. Consider $\regenv' \sqsupseteq \regenv$. Then $\regenv' \enta{\regvar}{\var}  \wskip \typeproduce \code{\wskip}, \secan{\tconst~(1,1)}{\whigh}, \regenv'$ and $\regenv_\beta = \regenv' \sqsupseteq \regenv = \regenv_\alpha$ by hypothesis. \\

Case $C= \assign{x}{E}$. We know that $\regenv \enta{\regvar}{\var} \assign{x}{E} \typeproduce \code{D;\assign{x}{r}}, \condtype{\seclev(x)=\valHigh}{\secan{\tconst~(1,n+1)}{\whigh}}{\secan{\tdontc}{\wdontk}}, 
\subst{\regenv_e}{\conn{r}{x}{\writeMode}}$, under the assumption that $\regenv, \emptyRegAll \eenta{\regvar}{\var} E \etypeproduce \code{D}, \esecan{\seclev(x)}{n}, r, \regenv_e$. Consider $\regenv' \sqsupseteq \regenv$. By applying Lemma \ref{lemma:expweak} we obtain that $\regenv', \emptyRegAll \eenta{\regvar}{\var} E \etypeproduce \code{D}, \esecan{\seclev(x)}{n}, r, \regenv'_e$, such that $\regenv'_e \sqsupseteq \regenv_e$, and $\regenv' \enta{\regvar}{\var} \assign{x}{E} \typeproduce \code{D;\assign{x}{r}}, \condtype{\seclev(x)=\valHigh}{\secan{\tconst~(1,n+1)}{\whigh}}{\secan{\tdontc}{\wdontk}}, 
\subst{\regenv_e'}{\conn{r}{x}{\writeMode}}$ and  $\regenv_\beta= \subst{\regenv_e'}{\conn{r}{x}{\writeMode}} \sqsupseteq \subst{\regenv_e}{\conn{r}{x}{\writeMode}} = \regenv_\alpha$ by Lemma \ref{lemma:regenvoper}. \\

Case $C= \out{ch}{E}$. The derivation $\regenv \enta{\regvar}{\var} \out{ch}{E} \typeproduce \code{D;\out{ch}{r}}, 
\condtype{\seclev(ch)=\valHigh}{\secan{\tconst~(1,n+1)}{\whigh}}{\secan{\tdontc}{\wdontk}}, 
\regenv_e$ holds, under the assumption that that $\regenv, \emptyRegAll \eenta{\regvar}{\var} E \etypeproduce \code{D}, \esecan{\seclev(ch)}{n}, r, \regenv_e$. Consider $\regenv' \sqsupseteq \regenv$. By applying Lemma \ref{lemma:expweak} we obtain that $\regenv', \emptyRegAll \eenta{\regvar}{\var} E \etypeproduce \code{D}, \esecan{\seclev(ch)}{n}, r, \regenv'_e$ such that $\regenv'_e \sqsupseteq \regenv_e$. Hence $\regenv' \enta{\regvar}{\var} \out{ch}{E} \typeproduce \code{D;\out{ch}{r}}, 
\condtype{\seclev(ch)=\valHigh}{\secan{\tconst~(1,n+1)}{\whigh}}{\secan{\tdontc}{\wdontk}}, 
\regenv_e'$ and $\regenv_\beta = \regenv_e' \sqsupseteq \regenv_e  = \regenv_\alpha$. \\

\framebox[1.1\width]{Inductive step}
% In this case $C \in \{\wif{E}{C_t}{C_e}, \while{x}{C'}, C_1;C_2  \}$.

Case $C=\wif{E}{C_t}{C_e}$. An $\wifName$ statement can be typed according to two rules, $\wraprule{\rulename{\wifName}-any}$ and $\wraprule{\rulename{\wifName}-\valHigh}$. \\
Assume the rule $\wraprule{\rulename{\wifName}-any}$ is used.  By hypothesis, we have that  the derivation $\typeconcel{\regenv}{\wif{E}{C_t}{C_e}}{\code{D_g;\wif{r}{D_t'}{D_e'}}}{\secan{\termMap(\lambda)\tlub t_1 \tlub t_2 }{\writeMap(\lambda) \wlub w_1 \wlub w_2}}{\regenv_2 \envinters \regenv_3}{\enta{\regvar}{\var}}$ holds, for $D_t'=D_t;\wskip$ and $D_e'=D_e;\wskip$, providing that $\regenv, \emptyRegAll \eenta{\regvar}{\var} E \etypeproduce \code{D_g}, \esecan{\lambda}{n_g} , r, \regenv_1$ and $\regenv_1 \enta{\regvar}{\var}  C_t \typeproduce \code{D_t}, \secan{t_1}{w_1}, \regenv_2$ and $\regenv_1 \enta{\regvar}{\var}  C_e \typeproduce \code{D_e}, \secan{t_2}{w_2}, \regenv_3$. Consider $\regenv' \sqsupseteq \regenv$. By applying Lemma \ref{lemma:expweak} on $E$ we obtain $\regenv', \emptyRegAll \eenta{\regvar}{\var} E \etypeproduce \code{D_g}, \esecan{\lambda}{n_g} , r, \regenv_1'$ such that $\regenv_1' \sqsupseteq \regenv_1$. By applying the inductive hypothesis on $C_t$ we obtain $\regenv_1' \enta{\regvar}{\var}  C_t \typeproduce \code{D_t}, \secan{t_1}{w_1}, \regenv_2'$ such that $\regenv_2' \sqsupseteq \regenv_2$. By applying the inductive hypothesis on $C_e$ we obtain $\regenv_1' \enta{\regvar}{\var}  C_e \typeproduce \code{D_e}, \secan{t_2}{w_2}, \regenv_3'$ such that $\regenv_3' \sqsupseteq \regenv_3$. Hence we conclude that  $\typeconcel{\regenv'}{\wif{E}{C_t}{C_e}} {\code{D_g;\wif{r}{D_t'}{D_e'}}}{\secan{\termMap(\lambda)\tlub t_1 \tlub t_2 }{\writeMap(\lambda) \wlub w_1 \wlub w_2}}{\regenv_2' \envinters \regenv_3'}{\enta{\regvar}{\var}}$ and $\regenv_\beta= \regenv_2' \envinters \regenv_3' \sqsupseteq \regenv_2 \envinters \regenv_3 = \regenv_\alpha$. \\
Consider the case in which the rule $\wraprule{\rulename{\wifName}-\valHigh}$ is used instead. 
We know that $\typeconcel{\regenv}{\wif{E}{C_t}{C_e}}{\code{D_g;\wif{r}{D_t'}{D_e'}}}{\secan{\tconst~(m+1,n_g+max(n_t,n_e)+2)}{\whigh}}{\regenv_2 \envinters \regenv_3}{\enta{\regvar}{\var}}$, for $D_t'=D_t;\wskip^{n_e-n_t};\wskip$ and $D_e'=D_e;\wskip^{n_t-n_e};\wskip$, providing that $\regenv, \emptyRegAll \eenta{\regvar}{\var} E \etypeproduce \code{D_g}, \esecan{\valHigh}{n_g}, r, \regenv_1$ together with $\regenv_1 \enta{\regvar}{\var}  C_t \typeproduce \code{D_t}, \secan{\tconst~(m,n_t)}{\whigh} , \regenv_2$ and $\regenv_1 \enta{\regvar}{\var}  C_e \typeproduce \code{D_e}, \secan{\tconst~(m,n_e)}{\whigh}, \regenv_3$. By applying Lemma \ref{lemma:expweak} on $E$ we obtain $\regenv', \emptyRegAll \eenta{\regvar}{\var} E \etypeproduce \code{D_g}, \esecan{\valHigh}{n_g} , r, \regenv_1'$ such that $\regenv_1' \sqsupseteq \regenv_1$. By applying the inductive hypothesis on $C_t$ we obtain $\regenv_1' \enta{\regvar}{\var}  C_t \typeproduce \code{D_t}, \secan{\tconst~(m,n_t)}{\whigh} , \regenv_2'$ such that $\regenv_2' \sqsupseteq \regenv_2$. By applying the inductive hypothesis on $C_e$ we obtain $\regenv_1' \enta{\regvar}{\var}  C_e \typeproduce \code{D_e}, \secan{\tconst~(m,n_e)}{\whigh}, \regenv_3'$ such that $\regenv_3' \sqsupseteq \regenv_3$. Hence we can conclude that the derivation $\typeconcel{\regenv'}{\wif{E}{C_t}{C_e}}{\code{D_g;\wif{r}{D_t'}{D_e'}}}{\secan{\tconst~(m+1,n_g+max(n_t,n_e)+2)}{\whigh}}{\regenv_2' \envinters \regenv_3'}{\enta{\regvar}{\var}}$ holds and $\regenv_\beta= \regenv_2' \envinters \regenv_3' \sqsupseteq \regenv_2 \envinters \regenv_3 = \regenv_\alpha$. \\

Case $C=\while{x}{C'}$. 
By considering the rule definition we know that $\typeconcel{\regenv}{\while{x}{C}}{\code{D_0; \while{r}{\code{D;D_0;\wskip}}}}{\secan{\termMap(\lambda)\tlub t}{\writeMap(\lambda) \wlub w}}{\regenv_B}{\enta{\regvar}{\var}}$ 
for $D_0= \assign{r}{x};\assign{x}{r}$, $\regenv_{*}= \subst{\regenv}{\conn{r}{x}{\writeMode}}$, $\regenv_B \sqsubseteq \regenv_{*}$ and $\regenv_B \sqsubseteq \subst{\regenv_E}{\conn{r}{x}{\writeMode}}$, providing that $\regenv_B \enta{\regvar}{\var}  C \typeproduce \code{D}, \secan{t}{w} , \regenv_E$. Consider $\regenv' \sqsupseteq \regenv$. By Lemma \ref{lemma:regenvoper} we know that $\subst{\regenv'}{\conn{r}{x}{\writeMode}} \sqsupseteq \subst{\regenv}{\conn{r}{x}{\writeMode}}$, hence the same $\regenv_B$ used for $\regenv$ can be used for $\regenv'$ to conclude that  that $\typeconcel{\regenv'}{\while{x}{C}}{\code{D_0; \while{r}{\code{D;D_0;\wskip}}}}{\secan{\termMap(\lambda)\tlub t}{\writeMap(\lambda) \wlub w}}{\regenv_B}{\enta{\regvar}{\var}}$ and $\regenv_\beta= \regenv_B \sqsupseteq \regenv_B = \regenv_\alpha$. \\

Case $C=C_1;C_2$. We know that $\regenv \enta{\regvar}{\var}  C_1;C_2 \typeproduce \code{D_1;D_2}, \secan{t_1 \tclub t_2}{w_1 \wlub w_2},\regenv_2$, providing that $\regenv \enta{\regvar}{\var}  C_1 \typeproduce \code{D_1}, \secan{t_1}{w_1}, \regenv_1$ as well as $\regenv_1 \enta{\regvar}{\var}  C_2 \typeproduce \code{D_2}, \secan{t_2}{w_2}, \regenv_2$ hold. Consider $\regenv' \sqsupseteq \regenv$. By applying the inductive hypothesis on $C_1$ we obtain $\regenv' \enta{\regvar}{\var}  C_1 \typeproduce \code{D_1}, \secan{t_1}{w_1}, \regenv_1'$ such that $\regenv_1' \sqsupseteq \regenv_1$. By applying the inductive hypothesis on $C_2$ we obtain $\regenv_1' \enta{\regvar}{\var}  C_2 \typeproduce \code{D_2}, \secan{t_2}{w_2}, \regenv_2'$ such that $\regenv_2' \sqsupseteq \regenv_2$. Hence $\regenv' \enta{\regvar}{\var}  C_1;C_2 \typeproduce \code{D_1;D_2}, \secan{t_1 \tclub t_2}{w_1 \wlub w_2},\regenv_2'$ and $\regenv_\beta= \regenv_2' \sqsupseteq \regenv_2 = \regenv_\alpha$.

\end{proof}

The second result that supports the proof of Proposition \ref{prop:strongsectype} is subject reduction. In order to characterize subject reduction we have to reason about the connection between the operational semantics and the type system. For this purpose it is convenient to generalize the syntax to include $\findot$ (the terminated program) as a possible sub term of any sequential composition term. We extend the type system to include the following typing rule for $\findot$:
$$\inference[\rulename{T}]{}{\regenv \enta{\regvar}{\var} \findot \typeproduce \code{\findot}, \secan{\tconst~(0,0)}{\whigh}, \regenv}$$ 
%\draftnote{F:we need to have a more general thing here}
We choose this particular typing value since $\findot$ represents the final configuration, which does not perform any reduction step.

\begin{lemma}[Extended typing is compatible with structural equivalence]
For any \whprog\ program $C$, if $\regenv \enta{\regvar}{\var}  C \typeproduce \code{D}, \secan{t}{w}, \regenv'$ and $C \equiv C'$, then $\regenv \enta{\regvar}{\var}  C' \typeproduce \code{D'}, \secan{t}{w}, \regenv'$ such that $D \equiv D'$. 
\end{lemma}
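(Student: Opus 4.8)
The plan is to reduce the statement to two algebraic facts about $\findot$ together with a congruence-closure argument, avoiding any explicit induction. The two facts I would record first are: (i) for every $\regenv$ the command $\findot$ is typable \emph{only} by rule $\wraprule{\rulename{T}}$, yielding code $\findot$, annotation $\secan{\tconst~(0,0)}{\whigh}$, and output record equal to the input $\regenv$; and (ii) $\tconst~(0,0)$ is neutral for $\tclub$ and $\whigh$ is neutral for $\wlub$. Both parts of (ii) read off Figure~\ref{fig:exttermlattice}: $\tconst~(0,0)\tclub t = t$ because when $t=\tconst~(m,n)$ the sum branch gives $\tconst~(0+m,0+n)=t$, and otherwise $\tconst~(0,0)\tlub t = t$ since $\tconst~(0,0)\sqsubseteq\tdontc$; and $\whigh\wlub w = w$ since $\whigh$ is the least write label.

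Next I would define a relation $\cong$ on programs by: $C\cong C'$ iff for every $\regenv$ the judgement $\regenv \enta{\regvar}{\var} C \typeproduce \code{D}, \secan{t}{w}, \regenv'$ is derivable exactly when $\regenv \enta{\regvar}{\var} C' \typeproduce \code{D'}, \secan{t}{w}, \regenv'$ is derivable for some $D'$ with $D\equiv D'$ (the same $t$, $w$, and $\regenv'$). The defining condition is a biconditional whose code clause uses the symmetric, transitive $\equiv$, so $\cong$ is immediately reflexive (take $D'=D$), symmetric, and transitive. Hence it suffices to prove that $\cong$ is a congruence and contains the two generators $\findot;C_0\cong C_0$ and $C_0;\findot\cong C_0$; since $\equiv$ is by definition the least congruence with these generators, $\equiv\subseteq\cong$, which is exactly the lemma.

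For the generators I would invert rule $\wraprule{\rulename{;}}$. If $\findot;C_0$ is typable the derivation ends in $\wraprule{\rulename{;}}$ whose left premise is a $\wraprule{\rulename{T}}$ instance; by fact (i) this leaves $\regenv \enta{\regvar}{\var} C_0 \typeproduce \code{D_0}, \secan{t_0}{w_0}, \regenv'$, and by fact (ii) the overall annotation is $\secan{\tconst~(0,0)\tclub t_0}{\whigh\wlub w_0}=\secan{t_0}{w_0}$ with code $\findot;D_0\equiv D_0$; the side condition is vacuous since $\tconst~(0,0)\neq\ttop$. The reverse direction rebuilds this derivation, and $C_0;\findot\cong C_0$ is symmetric, the only change being that the side condition reads $t_0=\ttop\Rightarrow\whigh=\whigh$, which is trivially true.

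For congruence, the only command constructors taking command arguments are $;$, $\wifName$, and $\whileName$ (skip, assignment and output carry expressions only, where $\cong$ is vacuously congruent), so I would invert the matching typing rule, invoke the premises $C_i\cong C_i'$ at the register records prescribed by the rule, and reassemble; for $\wifName$ I would treat $\wraprule{\rulename{\wifName}-any}$ and $\wraprule{\rulename{\wifName}-\valHigh}$ separately, noting that the shared guard $E$ leaves $D_g,\lambda,n_g,\regenv_1$ untouched. The reassembly succeeds because invoking $C_i\cong C_i'$ returns identical annotations $\secan{t_i}{w_i}$ and identical output records, so the $\tlub/\tclub/\wlub$ combinations, the $\writeMap$/$\termMap$ guard contributions, the step counts $n_t,n_e$ and common $m$ in $\wraprule{\rulename{\wifName}-\valHigh}$, and the record-level side conditions (the $\regenv_2\envinters\regenv_3$ meet of $\wifName$ and the $\regenv_B\sqsubseteq\cdots$ constraints of $\wraprule{\rulename{\whileName}}$) are reproduced verbatim, while the generated codes are preserved up to $\equiv$ and hence related by the constructor-congruence of $\equiv$. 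The main obstacle is purely this bookkeeping: one must check that $\cong$ returns exactly the same output register record (not merely an $\sqsubseteq$-comparable one) so that the downstream register-record side conditions and the meet operations match literally.
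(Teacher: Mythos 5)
Your proposal is correct and takes essentially the same route as the paper: the paper dispatches this lemma with the one-line remark that it follows by induction on the derivation of $C \equiv C'$ ``from the fact that the typing of $\findot;C$, $C;\findot$ and $C$ are all equivalent,'' and your least-congruence-closure argument is exactly that induction carried out explicitly, since showing $\cong$ is an equivalence, is closed under the constructors, and contains the generators is the same as handling the reflexivity/symmetry/transitivity, congruence, and generator cases of the induction. Your facts (i) and (ii) (uniqueness of the $\wraprule{\rulename{T}}$ typing of $\findot$, and neutrality of $\tconst~(0,0)$ for $\tclub$ and of $\whigh$ for $\wlub$) are precisely the ``key fact'' the paper invokes.
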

\begin{proof} Induction on derivation, which follows easily from the fact that the typing of $\findot;C$, $C;\findot$  and $C$ are all equivalent. 
\end{proof}

This compatibility property allows us to implicitly view a typing derivation as applying to structural equivalence classes of terms (much in the same way that one reasons informally about alpha equivalence in languages with variable binding). When reasoning about induction on the size of a derivation, we will view the size of a derivation to be the size of the $\wraprule{\rulename{T}}$-rule free derivation - i.e. the smallest derivation of the given typing. 

\begin{lemma}[Reduction Context Typing]\label{lemma:ctxandtypes}
For any \whprog\ command $C$ and reduction context $R[]$, there exists $\regenv$ such that $\regenv \enta{\regvar}{\var}  R[C] \typeproduce \code{\dots}, \secan{t}{w}, \regenv'$ if and only if there exist $\regenv$ and pairs $t_1,t_2$ and $w_1,w_2$ such that $\regenv \enta{\regvar}{\var}  C \typeproduce \code{\dots}, \secan{t_1}{w_1}, \regenv_1$ and $\regenv_1 \enta{\regvar}{\var}  R[\findot] \typeproduce \code{\dots}, \secan{t_2}{w_2}, \regenv'$ and $t_1 = \ttop \Rightarrow w_2= \whigh$ for $w = w_1 \wlub w_2$ and $t = t_1 \tclub t_2$. 
\end{lemma}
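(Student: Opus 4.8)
The plan is to prove the biconditional by induction on the structure of the reduction context $R$ (recall $R ::= [-] \mid R;C$), with the sequential-composition rule $\wraprule{\rulename{;}}$ as the only non-trivial typing rule involved: it is used both to \emph{build} a composition and, by inversion, to \emph{decompose} one. Before starting the induction I would establish three elementary algebraic facts about the operators of Figure \ref{fig:exttermlattice}, by a finite case split on whether each argument is a $\tconst$, $\tdontc$, or $\ttop$: (i) $\tclub$ and $\wlub$ are associative; (ii) $\ttop$ is absorbing for $\tclub$, so $t_1 \tclub t_2 = \ttop$ holds exactly when $t_1 = \ttop$ or $t_2 = \ttop$; and (iii) $\whigh$ is the least element of the write order ($\whigh \sqsubseteq \wdontk$), so $w_1 \wlub w_2 = \whigh$ forces $w_1 = w_2 = \whigh$. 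I would also record the two identity laws $t \tclub \tconst~(0,0) = t$ and $w \wlub \whigh = w$.

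\textbf{Base case} $R = [-]$: here $R[C] = C$ and $R[\findot] = \findot$. Since $\findot$ is typed only by $\wraprule{\rulename{T}}$, its annotation is forced to $\secan{\tconst~(0,0)}{\whigh}$ with unchanged register record, so $\regenv_1 = \regenv'$. Using the identity laws $t_1 \tclub \tconst~(0,0) = t_1$ and $w_1 \wlub \whigh = w_1$, together with the fact that the side condition $t_1 = \ttop \Rightarrow \whigh = \whigh$ is vacuous, the right-hand side collapses precisely to ``$C$ is typable from $\regenv$ to $\regenv'$ with annotation $\secan{t}{w}$'', which is the left-hand side.

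\textbf{Inductive case} $R = R';C'$, so $R[C] = R'[C];C'$ and $R[\findot] = R'[\findot];C'$: in both directions I invert/apply $\wraprule{\rulename{;}}$ at the outermost composition and invoke the induction hypothesis on $R'$. In the forward direction, inverting the typing of $R'[C];C'$ yields $(t_a,w_a)$ for $R'[C]$ and $(t_b,w_b)$ for $C'$ with $t = t_a \tclub t_b$, $w = w_a \wlub w_b$; the IH on $R'$ then splits $R'[C]$ into $C$ (giving $(t_1,w_1)$, record $\regenv_1$) and $R'[\findot]$ (giving $(t_a',w_a')$) with $t_a = t_1 \tclub t_a'$ and $w_a = w_1 \wlub w_a'$. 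I recompose $R'[\findot];C'$ via $\wraprule{\rulename{;}}$ to obtain $R[\findot]$ with $t_2 = t_a' \tclub t_b$ and $w_2 = w_a' \wlub w_b$. The backward direction is symmetric: invert the typing of $R'[\findot];C'$, apply the IH backwards to recover $R'[C]$, and recompose $R'[C];C'$. (Where $R' = [-]$ produces a leading $\findot;C'$, I would appeal to the structural-equivalence compatibility lemma just proved, so that inversion through $\wraprule{\rulename{;}}$ remains legitimate.)

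The crux — and the step I expect to demand the most care — is checking that all the side conditions of the shape $t = \ttop \Rightarrow w = \whigh$ propagate through each application and inversion of $\wraprule{\rulename{;}}$, and that the final annotation identities $t = t_1 \tclub t_2$, $w = w_1 \wlub w_2$ hold. This is exactly where facts (i)--(iii) are used. By (ii), $t_a' = \ttop$ forces $t_a = t_1 \tclub t_a' = \ttop$, which transfers the inversion's side condition to the recomposition; and when the overall guard $t_1$ equals $\ttop$, (ii) makes $t_a$ (hence $t_1 \tclub t_a'$) equal to $\ttop$, so the hypothesised $w_b = \whigh$ combined with the IH's $w_a' = \whigh$ gives $w_2 = w_a' \wlub w_b = \whigh$ by (iii). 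Finally, associativity (i) reconciles $(t_1 \tclub t_a') \tclub t_b$ with $t_1 \tclub (t_a' \tclub t_b) = t_1 \tclub t_2$, and likewise $(w_1 \wlub w_a') \wlub w_b = w_1 \wlub w_2$, closing the induction. Having isolated the associativity and absorption lemmas at the outset, the induction itself is then routine bookkeeping of these identities.
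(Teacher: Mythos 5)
Your proposal is correct and takes essentially the same route as the paper's own proof: induction on the structure of the reduction context, a base case resolved by the forced typing $\secan{\tconst~(0,0)}{\whigh}$ of $\findot$ (with structural equivalence where a leading $\findot$ appears), and an inductive case that inverts and re-applies the sequential-composition rule around the induction hypothesis, discharging the $t = \ttop \Rightarrow w = \whigh$ side conditions via absorption of $\ttop$ under $\tclub$. The only difference is explicitness: you isolate the associativity/absorption/identity facts and spell out the side-condition bookkeeping (notably in the backward direction, which the paper compresses into a single asserted derivation), which the paper leaves implicit.
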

\begin{proof}
We prove the lemma by induction on the structure of $R$.

\framebox[1.1\width]{Base case} 

Assume $R=[]$. Then $R[C]=C$. \\
($\Rightarrow$)  Assume $\regenv \enta{\regvar}{\var}  R[C] \typeproduce \code{\dots}, \secan{t}{w}, \regenv'$. Since $R[C]=C$ then $\regenv \enta{\regvar}{\var}  C \typeproduce \code{\dots}, \secan{t}{w}, \regenv'$ hence  $t_1=t$ and $w_1=w$. Also, $R[\findot]= \findot$, and we know that $\forall \regenv$ $\regenv \enta{\regvar}{\var} \findot \typeproduce \code{\findot}, \secan{\tconst~(0,0)}{\whigh}, \regenv$, hence $t_2= \tconst~(0,0)$ and $w_2=\whigh$. Notice that $w= w_1= w_1\wlub w_2$ and $t= t_1= t_1 \tclub t_2$.\\
($\Leftarrow$)  Assuming $\regenv \enta{\regvar}{\var}  C \typeproduce \code{\dots}, \secan{t}{w}, \regenv'$ the following type derivation is correct
$$\inference{\regenv \enta{\regvar}{\var}  C \typeproduce \code{\dots}, \secan{t}{w}, \regenv' & \regenv' \enta{\regvar}{\var} \findot \typeproduce \code{\findot}, \secan{\tconst~(0,0)}{\whigh}, \regenv'}{\regenv \enta{\regvar}{\var}  C;\findot \typeproduce \code{\dots}, \secan{t}{w}, \regenv' }$$
and $C;\findot \equiv C=R[C]$.% \draftnote{F: this require to add $C;\findot \equiv C$. Issue?}

\framebox[1.1\width]{Inductive Step} 

Assume $R[]=R_1[];C_1$. Then there exists a command $C'$ such that  $R[C] = R_1[C'];C_1$. \\
($\Rightarrow$)  Assume $\regenv \enta{\regvar}{\var}  R[C] \typeproduce \code{\dots}, \secan{t}{w}, \regenv'$  Then the following type derivation must exist:

$$
\inference{
\regenv \enta{\regvar}{\var}  R_1[C'] \typeproduce \code{\dots}, \secan{t_\alpha}{w_\alpha}, \regenv_\alpha \\
\regenv_\alpha \enta{\regvar}{\var}  C_1 \typeproduce \code{\dots}, \secan{t_\beta}{w_\beta}, \regenv_\beta \\
w= w_\alpha \wlub w_\beta & t= t_\alpha \tclub t_\beta}
{\regenv \enta{\regvar}{\var}  R_1[C'];C_1 \typeproduce \code{\dots}, \secan{t}{w}, \regenv'}
$$ 
such that $t_\alpha = \ttop \Rightarrow w_\beta=\whigh$.
We now focus on $R_1[C']$. By applying the inductive hypothesis we know that there exist $t_a,t_b$ and $w_a,w_b$ such that $\regenv \enta{\regvar}{\var}  C' \typeproduce \code{\dots}, \secan{t_a}{w_a}, \regenv_1$ and $\regenv_1 \enta{\regvar}{\var}  R_1[\findot] \typeproduce \code{\dots}, \secan{t_b}{w_b}, \regenv_\alpha$ and $w_\alpha = w_a \wlub w_b$ and $t_\alpha= t_a \tclub t_b$. Hence, the following type derivation 
$$
\inference{
\regenv_1 \enta{\regvar}{\var}  R_1[\findot] \typeproduce \code{\dots}, \secan{t_b}{w_b}, \regenv_\alpha \\ 
\regenv_\alpha \enta{\regvar}{\var}  C_1 \typeproduce \code{\dots}, \secan{t_\beta}{w_\beta}, \regenv_\beta }
%t_\alpha = \ttop \Rightarrow w_\beta=\whigh \\
%w= w_\alpha \wlub w_\beta & t= t_\alpha \tclub t_\beta}
{\regenv_1 \enta{\regvar}{\var}  R_1[\findot];C_1 \typeproduce \code{\dots}, \secan{t_b \tclub t_\beta}{w_b \wlub w_\beta}, \regenv'}
$$ 
is correct. In particular, if $t_b = \ttop$, then $t_\alpha = \ttop$, hence $w_\beta = \whigh$ for hypothesis. \\
($\Leftarrow$) Assume that $\regenv \enta{\regvar}{\var}  C' \typeproduce \code{\dots}, \secan{t_a}{w_a}, \regenv_1$ together with $\regenv_1 \enta{\regvar}{\var}  R_1[\findot];C_1 \typeproduce \code{\dots}, \secan{t_c}{w_c}, \regenv'$ such that $t_a = \ttop \Rightarrow w_c= \whigh$. Then the following type derivation $$
\inference{
\regenv \enta{\regvar}{\var}  C' \typeproduce \code{\dots}, \secan{t_a}{w_a}, \regenv_1 \\ 
\regenv_1 \enta{\regvar}{\var}  R_1[\findot];C_1 \typeproduce \code{\dots}, \secan{t_c}{w_c}, \regenv' }
{\regenv \enta{\regvar}{\var}  R_1[C'];C_1 \typeproduce \code{\dots}, \secan{t_a \tclub t_c}{w_a \wlub w_c}, \regenv'}
$$
is correct. 
\end{proof}

We can now present all the details related to subject reduction.

\begin{proposition}[Subject Reduction]\label{prop:subjred}
Let $C$ be a \whprog\ program. If there exists $\regenv$ such that $\regenv \enta{\regvar}{\var}  C \typeproduce \code{D}, \secan{t_\alpha}{w_\alpha}, \regenv_\alpha$ and there exists $\mem$ such that $\whstate{C}{\mem} \whsem{l} \whstate{C'}{\mem'}$, then there exists $\regenv'$ such that $\regenv' \enta{\regvar}{\var}  C' \typeproduce \code{D'}, \secan{t_\beta}{w_\beta}, \regenv_\beta$ and:
\begin{itemize}
\item $w_\beta \sqsubseteq w_\alpha$ and $t_\beta \sqsubseteq t_\alpha$;
\item $\regenv_\beta \sqsupseteq \regenv_\alpha$.
\end{itemize}
\end{proposition}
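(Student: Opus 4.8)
The plan is to argue by induction on the derivation of the reduction $\whstate{C}{\mem} \whsem{l} \whstate{C'}{\mem'}$ using the rules of Figure~\ref{table:whilesem}, inverting the typing derivation of $C$ in each case. The axioms ($\wraprule{\rulename{skip}}$, assignment, $\wraprule{\rulename{out}}$, $\wraprule{\rulename{if-1}}$, $\wraprule{\rulename{if-2}}$, $\wraprule{\rulename{w-1}}$, $\wraprule{\rulename{w-2}}$) are the base cases, and the two context rules ($\wraprule{\rulename{c-1}}$, $\wraprule{\rulename{c-2}}$) drive the inductive step. I will repeatedly use three elementary facts about the lattices of Figure~\ref{fig:exttermlattice}: that $\tconst~(0,0)$ and $\whigh$ are least elements, that $\tlub$, $\wlub$ are upper bounds and $\tclub$, $\wlub$ are monotone, and that rule $\wraprule{\rulename{T}}$ types $\findot$ under any register record, returning it unchanged with the least annotation $\secan{\tconst~(0,0)}{\whigh}$.

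The cases where $C$ steps to $\findot$ (assignment, $\wraprule{\rulename{out}}$, $\wraprule{\rulename{skip}}$, $\wraprule{\rulename{w-2}}$) are immediate: I type $\findot$ under $\regenv_\alpha$ by $\wraprule{\rulename{T}}$, so $\regenv_\beta = \regenv_\alpha$, while $t_\beta = \tconst~(0,0)$ and $w_\beta = \whigh$ sit below every $t_\alpha$, $w_\alpha$. For $\wraprule{\rulename{if-1}}$/$\wraprule{\rulename{if-2}}$, inverting either $\wraprule{\rulename{\wifName}-any}$ or $\wraprule{\rulename{\wifName}-\valHigh}$ exposes a typing of the taken branch $C_i$ under the post-guard record $\regenv_1$ with annotation $\secan{t_i}{w_i}$ and output $\regenv_{i+1}$; taking $\regenv' = \regenv_1$, the inequalities $w_i \sqsubseteq w_\alpha$ and $t_i \sqsubseteq t_\alpha$ hold because $w_\alpha$, $t_\alpha$ are obtained from $w_i$, $t_i$ (and the guard contribution) by $\wlub$/$\tlub$ — and in the $\wraprule{\rulename{\wifName}-\valHigh}$ case because the source-step component drops from $m+1$ to $m$ — while $\regenv_{i+1} \sqsupseteq \regenv_2 \envinters \regenv_3 = \regenv_\alpha$. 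For $\wraprule{\rulename{w-1}}$ the loop $\while{x}{C}$ unfolds to $C; \while{x}{C}$; I retype the residual loop under the body's output record $\regenv_E$, which is legal since $\wraprule{\rulename{\whileName}}$ already requires $\regenv_B \sqsubseteq \subst{\regenv_E}{\conn{r}{x}{\writeMode}}$, so the same invariant record $\regenv_B$ is reusable. Composing via $\wraprule{\rulename{;}}$ reproduces the annotation $\secan{\termMap(\lambda)\tlub t}{\writeMap(\lambda)\wlub w}$ and output $\regenv_B$ verbatim (the sequencing side condition survives because $w \sqsubseteq \writeMap(\lambda)$ makes $\writeMap(\lambda)\wlub w = \writeMap(\lambda)$), so $t_\beta = t_\alpha$, $w_\beta = w_\alpha$, $\regenv_\beta = \regenv_\alpha$.

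The context rules are the crux and the place I expect the most friction. For $\wraprule{\rulename{c-1}}$, with $\ec[C_0] \whsem{l} \ec[C_0']$ and $C_0' \neq \findot$, I first apply Lemma~\ref{lemma:ctxandtypes} to split the typing of $\ec[C_0]$ into a typing of the redex $C_0$ (annotation $\secan{t_1}{w_1}$, output $\regenv_1$) and of $\ec[\findot]$ (annotation $\secan{t_2}{w_2}$), with $t_1 = \ttop \Rightarrow w_2 = \whigh$, $w_\alpha = w_1 \wlub w_2$, $t_\alpha = t_1 \tclub t_2$. The induction hypothesis retypes $C_0'$ with $t_1' \sqsubseteq t_1$, $w_1' \sqsubseteq w_1$ and larger output $\regenv_1' \sqsupseteq \regenv_1$; Lemma~\ref{lemma:comweak} then retypes $\ec[\findot]$ under $\regenv_1'$ with the same $\secan{t_2}{w_2}$ and output $\regenv_\alpha' \sqsupseteq \regenv_\alpha$; and Lemma~\ref{lemma:ctxandtypes} recombines them. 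The only genuine check is that the side condition is preserved, which holds because $\ttop$ is maximal: $t_1' = \ttop$ forces $t_1 = \ttop$ and hence $w_2 = \whigh$. Monotonicity of $\tclub$, $\wlub$ then yields $t_\beta = t_1' \tclub t_2 \sqsubseteq t_\alpha$ and $w_\beta = w_1' \wlub w_2 \sqsubseteq w_\alpha$. Rule $\wraprule{\rulename{c-2}}$, rewriting $\ec[C_1;C_2]$ to $\ec[C_2]$ when $C_1 \whsem{l} \findot$, is even more direct: viewing $\ec[[-];C_2]$ as one reduction context, Lemma~\ref{lemma:ctxandtypes} supplies a typing of $\ec[\findot;C_2]$, which by the structural equivalence $\findot;C_2 \equiv C_2$ and the compatibility lemma is a typing of $\ec[C_2]$ with unchanged $\secan{t_2}{w_2}$; since the induction hypothesis applied to $C_1$ only raises the relevant input record, one final appeal to Lemma~\ref{lemma:comweak} delivers $\ec[C_2]$ typed with output $\regenv_\beta \sqsupseteq \regenv_\alpha$ and $t_\beta = t_2 \sqsubseteq t_\alpha$, $w_\beta = w_2 \sqsubseteq w_\alpha$. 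The main obstacle is thus the bookkeeping of the context cases — threading the existentially chosen records $\regenv'$ through Lemmas~\ref{lemma:ctxandtypes} and~\ref{lemma:comweak} while confirming that every $t = \ttop \Rightarrow w = \whigh$ side condition survives the effect-decreasing reduction.
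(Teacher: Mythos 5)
Your proposal is correct and follows essentially the same route as the paper's proof: the same inversions of the typing rules in the redex cases (taking the post-guard record for conditionals, reusing the invariant record $\regenv_B$ to retype the unfolded loop), the same use of Lemma~\ref{lemma:ctxandtypes} to split and recombine context typings together with Lemma~\ref{lemma:comweak} for upward closure, and the same observation that maximality of $\ttop$ preserves the sequencing side condition under effect-decreasing retyping. Organizing the induction over the reduction derivation rather than over the structure of $C$ changes nothing substantive, and your extra appeal to the induction hypothesis plus upward closure in the $\wraprule{\rulename{c-2}}$ case is harmless but unnecessary — the decomposition already types $\ec[C_2]$ from the redex's output record, which is exactly the witness the paper uses.
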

\begin{proof}
We prove the proposition by induction on the structure of $C$ and by cases on the last rule applied in the type derivation. In the proof we omit the explicit representation of the code production since it is not relevant in this context. 

\framebox[1.1\width]{Base case} 

In this case $C \in \{ \wskip, \assign{x}{E}, \out{ch}{E}\}$. For any such $C$ we have that $\forall \mem$ $\whstate{C}{\mem} \whsem{l} \whstate{\findot}{\mem'}$ for suitable $l$ and $\mem'$. Since $\forall \regenv$ $\regenv \enta{\regvar}{\var} \findot \typeproduce \code{\findot}, \secan{\tconst~(0,0)}{\whigh}, \regenv$ the statement trivially holds. %\draftnote{F: the assumed $\sqsubseteq$ is not compatible with $\sqcup$ }

\framebox[1.1\width]{Inductive step} %In this case $C \in \{\wif{E}{C_t}{C_e}, \while{x}{C'}, C_1;C_2  \}$.

Case $C=\wif{E}{C_1}{C_2}$. An $\wifName$ statement can be typed according to two rules, $\wraprule{\rulename{\wifName}-any}$ and $\wraprule{\rulename{\wifName}-\valHigh}$. \\
Assume the rule $\wraprule{\rulename{\wifName}-any}$ is used. 
Then 
$\typeconcel{\regenv}{\wif{E}{C_t}{C_e}}{\code{\dots}} {\secan{\termMap(\lambda)\tlub t_1 \tlub t_2}{\writeMap(\lambda) \wlub w_1 \wlub w_2 }}{\regenv_F}{\enta{\regvar}{\var}}$    
holds under the assumption that 
$\regenv, \emptyRegAll \eenta{\regvar}{\var} E \etypeproduce \code{\dots}, \esecan{\lambda}{n_g} , r, \regenv_1$ 
and  
$\regenv_1 \enta{\regvar}{\var}  C_t \typeproduce \code{\dots}, \secan{t_1}{w_1}, \regenv_2$ 
and 
$\regenv_1 \enta{\regvar}{\var}  C_e \typeproduce \code{\dots}, \secan{t_2}{w_2}, \regenv_3$
, such that $\regenv_F=\regenv_2 \envinters \regenv_3$. 
Assume that $\mem$ is such that  $\whstate{\wif{E}{C_t}{C_e}}{\mem} \whsem{\tau} \whstate{C_t}{\mem}$. The statement is true for $\regenv'=\regenv_1$ since $ w_\beta= w_1 \sqsubseteq \writeMap(\lambda) \wlub w_1 \wlub w_2 = w_\alpha$ and $t_\beta= t_1 \sqsubseteq \termMap(\lambda)\tlub t_1 \tlub t_2 = t_\alpha$ and $\regenv_\beta= \regenv_2 \sqsupseteq \regenv_2 \envinters \regenv_3 = \regenv_F = \regenv_\alpha$. The case for $\mem$ such that $\whstate{\wif{E}{C_t}{C_e}}{\mem} \whsem{\tau} \whstate{C_e}{\mem}$ is analogous, hence it is omitted. \\
Consider the case in which the rule $\wraprule{\rulename{\wifName}-\valHigh}$ is used instead.
Then we know 
$\typeconcel{\regenv}{\wif{E}{C_t}{C_e}}{\code{\dots}}{\secan{\tconst~(m+1,n_g+max(n_t,n_e)+2)}{\whigh}}{\regenv_F}{\enta{\regvar}{\var}}$ under the assumption that $\regenv, \emptyRegAll \eenta{\regvar}{\var} E \etypeproduce \code{\dots}, \esecan{\valHigh}{n_g}, r, \regenv_1$ and $\regenv_1 \enta{\regvar}{\var}  C_t \typeproduce \code{\dots}, \secan{\tconst~(m,n_t)}{\whigh}, \regenv_2$ and $\regenv_1 \enta{\regvar}{\var}  C_e \typeproduce \code{\dots}, \secan{\tconst~(m,n_e)}{\whigh}, \regenv_3$ such that $\regenv_F=\regenv_2 \envinters \regenv_3$.
Assume that $\mem$ is such that  $\whstate{\wif{E}{C_t}{C_e}}{\mem} \whsem{\tau} \whstate{C_t}{\mem}$. The statement is true for $\regenv'=\regenv_1$ since $ w_\beta= \whigh \sqsubseteq \whigh = w_\alpha$ and $t_\beta= \tconst~(m,n_t) \sqsubseteq \tconst~(m+1,n_g+max(n_t,n_e)+2) = t_\alpha$ and $\regenv_\beta= \regenv_2 \sqsupseteq \regenv_2 \envinters \regenv_3 = \regenv_F = \regenv_\alpha$. The case for $\mem$ such that $\whstate{\wif{E}{C_t}{C_e}}{\mem} \whsem{\tau} \whstate{C_e}{\mem}$ is analogous, hence it is omitted. \\

Case $C=\while{x}{C'}$. 
Then we know that $\typeconcel{\regenv}{\while{x}{C'}}{\code{\dots}}{\secan{\termMap(\lambda)\tlub t}{\writeMap(\lambda) \wlub w}}{\regenv_B}{\enta{\regvar}{\var}}$. This is true under the assumption that, for $\regenv_{*}= \subst{\regenv}{\conn{r}{x}{\writeMode}}$, there exists $\regenv_B$ such that 
$\regenv_B \sqsubseteq \regenv_{*}$ and $\regenv_B \sqsubseteq \subst{\regenv_E}{\conn{r}{x}{\writeMode}}$
for  which $\regenv_B \enta{\regvar}{\var}  C' \typeproduce \code{\dots}, \secan{t}{w} , \regenv_E$. It is required that $\lambda = \seclev(x)=\seclev(r)$ and $\writeMap(\lambda) \sqsupseteq w$ and $t= \ttop \Rightarrow \writeMap(\lambda)= \whigh$. Assume that $\mem$ is such that  $\whstate{\while{x}{C'}}{\mem} \whsem{\tau} \whstate{C';\while{x}{C'}}{\mem}$ and consider the following type derivation

$$\begin{array}{c}
\Scale[0.94]{
\inference{
\regenv_B \enta{\regvar}{\var}  
C' \typeproduce \code{\dots}, \secan{t}{w} , \regenv_E &
 \inference{
 \regenv_B \sqsubseteq \subst{\regenv_E}{\conn{r}{x}{\writeMode}}\\
 \regenv_B \enta{\regvar}{\var}  
C' \typeproduce \code{\dots}, \secan{t}{w} , \regenv_E \\
 \writeMap(\lambda) \sqsupseteq w \\ t= \ttop \Rightarrow \writeMap(\lambda)= \whigh \\
}%premise of the while comp
 {
%c
\Delta
}}
{
%a
\regenv_B  \enta{\regvar}{\var}  
\left \{
\begin{array}{l}
C'; \\
\while{x}{C'}
\end{array}
\right \} \typeproduce \code{\dots}, 
\left \langle
\begin{array}{cc}
\writeMap(\lambda) \wlub w \\
t \tclub (\termMap(\lambda)\tlub t)
\end{array}
\right \rangle , \regenv_B}}

\end{array}$$
for $\Delta=\regenv_E \enta{\regvar}{\var} \while{x}{C'} \typeproduce \code{\dots}, \secan{\termMap(\lambda)\tlub t}{\writeMap(\lambda) \wlub w}, \regenv_B$. 

Notice that the proposition is true for $\regenv'=\regenv_B$ since $ w_\beta= \writeMap(\lambda) \wlub w \sqsubseteq \writeMap(\lambda) \wlub w = w_\alpha$ and $t_\beta= t \tclub (\termMap(\lambda)\tlub t) =  \termMap(\lambda)\tlub t \sqsubseteq \termMap(\lambda)\tlub t = t_\alpha$ and $\regenv_\beta= \regenv_B \sqsupseteq \regenv_B = \regenv_\alpha$. The case for $\mem$ such that $\whstate{\while{x}{C'}}{\mem} \whsem{\tau} \whstate{\findot}{\mem}$ is trivial, hence it is omitted. \\

Case $C=C_1;C_2$. 
Then we know that $\regenv \enta{\regvar}{\var}  C_1;C_2 \typeproduce \code{\dots},\secan{t_1 \tclub t_2}{w_1 \wlub w_2}, \regenv_2$ under the assumption that  $\regenv \enta{\regvar}{\var}  C_1 \typeproduce \code{\dots}, \secan{t_1}{w_1}, \regenv_1$ and $\regenv_1 \enta{\regvar}{\var}  C_2 \typeproduce \code{\dots}, \secan{t_2}{w_2}, \regenv_2$ and $t_1= \ttop \Rightarrow w_2=\whigh$. We now consider the reduction of $C$ by distinguishing two cases. \newline
Assume the rule $\wraprule{\rulename{c-1}}$ is applied. Hence there exists $ C_3$ such that $C = R[C_3]$ and $\whstate{R[C_3]}{\mem} \whsem{l} \whstate{R[C_3']}{\mem'}$, given that $\whstate{C_3}{\mem} \whsem{l} \whstate{C_3'}{\mem'}$ and $C_3' \not = \findot$. By hypothesis of type-correctness and Lemma \ref{lemma:ctxandtypes}, this type derivation for  $C$ must exist
%\draftnote{F: Check for ctx}
$$\begin{array}{c} 
\inference{
\regenv \enta{\regvar}{\var}  C_3 \typeproduce \code{\dots}, \secan{t_3}{w_3},\regenv_3 &
\regenv_3 \enta{\regvar}{\var}  R[\findot] \typeproduce \code{\dots}, \secan{t_c}{w_c},\regenv_2\\
t_3= \ttop \Rightarrow w_c=\whigh 
}{\regenv \enta{\regvar}{\var}  R[C_3] \typeproduce \code{\dots}, \secan{t_3 \tclub t_c}{w_3 \wlub w_c},\regenv_2}
\end{array}$$
Notice that it must also be that  $w_3 \wlub w_c=w_1 \wlub w_2$ and  $t_3 \tclub t_c=t_1 \tclub t_2$.
By applying the inductive hypothesis on $C_3$, there must exist a $\regenv'$ such that $\regenv' \enta{\regvar}{\var}  C_3' \typeproduce \code{\dots},\regenv_3', \secan{t_3'}{w_3'}$ such that $w_3' \sqsubseteq w_3$ and $t_3' \sqsubseteq t_3$ and $\regenv_3' \sqsupseteq \regenv_3$. 
Consider this type derivation for $R[C_3']$
$$\begin{array}{c} 
\inference{
(i) &
\inference{\regenv_3 \enta{\regvar}{\var}  R[\findot] \typeproduce \code{\dots}, \secan{t_c}{w_c},\regenv_2}{\regenv_3' \enta{\regvar}{\var}  R[\findot] \typeproduce \code{\dots}, \secan{t_c}{w_c},\regenv_2' (ii)}\\
t_3'= \ttop \Rightarrow w_c=\whigh 
}
{\regenv' \enta{\regvar}{\var}  R[C_3'] \typeproduce \code{\dots}, \secan{t_3' \tclub t_c}{w_3' \wlub w_c},\regenv_2'}
\end{array}$$
where (i) stays for $\regenv' \enta{\regvar}{\var}  C_3' \typeproduce \code{\dots}, \secan{t_3'}{w_3'} ,\regenv_3'$ and holds by hypothesis, and (ii) is an application of Lemma \ref{lemma:comweak}.  We have that $ w_\beta= w_3' \wlub w_c \sqsubseteq w_3 \wlub w_c=w_1 \wlub w_2 = w_\alpha$ and $t_\beta= t_3' \tclub t_c \sqsubseteq t_3 \tclub t_c=t_1 \tclub t_2 = t_\alpha$ and $\regenv_\beta= \regenv_2' \sqsupseteq \regenv_2 = \regenv_\alpha$ because of Lemma \ref{lemma:comweak}.\\
%\newline
Assume the rule $\wraprule{\rulename{c-2}}$ is applied instead. Hence $\exists C_3,C_4$ such that $C = R[C_3;C_4]$ and $\whstate{R[C_3;C_4]}{\mem} \whsem{l} \whstate{R[C_4]}{\mem'}$, given that $\whstate{C_3}{\mem} \whsem{l} \whstate{\findot}{\mem'}$. By hypothesis of type-correctness, this type derivation for  $C$ must exist
%\draftnote{F: check ctx}
$$\begin{array}{c} 
\inference{
\regenv \enta{\regvar}{\var}  C_3 \typeproduce \code{\dots}, \secan{t_3}{w_3} ,\regenv_3&
\regenv_3 \enta{\regvar}{\var}  R[C_4] \typeproduce \code{\dots}, \secan{t_c}{w_c},\regenv_2\\
t_3= \ttop \Rightarrow w_c=\whigh 
}{\regenv \enta{\regvar}{\var}  R[C_3;C_4] \typeproduce \code{\dots}, \secan{t_3 \tclub t_c}{w_3 \wlub w_c},\regenv_2}
\end{array}$$
Notice that it must also be that  $w_3 \wlub w_c=w_1 \wlub w_2$ and  $t_3 \tclub t_c=t_1 \tclub t_2$.
%CANONIC PATH
%By applying the inductive hypothesis on $C_3$, there must exist a $\regenv'$ such that $\regenv' \enta{\regvar}{\var}  C_3' \typeproduce \code{\dots},\regenv_3', \secan{t_3'}{w_3'}$ such that $w_3' \sqsubseteq w_3$ and $t_3' \sqsubseteq t_3$ and $\regenv_3' \sqsupseteq \regenv_3$. 
%Consider this type derivation for $R[C_3']$
%$$\begin{array}{c} 
%\inference{
%\regenv' \enta{\regvar}{\var}  C_3' \typeproduce \code{\dots},\regenv_3', \secan{t_3'}{w_3'} (i) &
%\inference{\regenv_3 \enta{\regvar}{\var}  R[] \typeproduce \code{\dots},\regenv_c, \secan{t_c}{w_c}}{\regenv_3' \enta{\regvar}{\var}  R[] \typeproduce \code{\dots},\regenv_c', \secan{t_c}{w_c} (iii)}\\
%t_3'= \ttop \Rightarrow w_c=\whigh (ii)
%}
%{\regenv' \enta{\regvar}{\var}  R[C_3'] \typeproduce \code{\dots},\regenv_c', \secan{t_3' \tclub t_c}{w_3' \wlub w_c}}
%\end{array}$$
%where (i) holds by hypothesis and (iii) is an application of Lemma \ref{lemma:comweak}.  We have that $ w_\beta= w_3' \wlub w_c \sqsubseteq w_3 \wlub w_c=w_1 \wlub w_2 = w_\alpha$ and $t_\beta= t_3' \tclub t_c \sqsubseteq t_3 \tclub t_c=t_1 \tclub t_2 = t_\alpha$ and $\regenv_\beta= \regenv_c' \sqsupseteq \regenv_c = \regenv_\alpha$ because of Lemma \ref{lemma:comweak}.
%EXPERIMENTAL PATH
By having $\regenv'=\regenv_3$ we immediately have that $\regenv_3 \enta{\regvar}{\var}  R[C_4] \typeproduce \code{\dots}, \secan{t_c}{w_c},\regenv_2$ and $ w_\beta= w_c  \sqsubseteq w_3 \wlub w_c=w_1 \wlub w_2 = w_\alpha$ and $t_\beta= t_c \sqsubseteq t_3 \tclub t_c=t_1 \tclub t_2 = t_\alpha$ and $\regenv_\beta= \regenv_2 \sqsupseteq \regenv_2 = \regenv_\alpha$.

\end{proof}

%\inference{}
%{ }

Before illustrating the details of the proof for Proposition \ref{prop:strongsectype}
we continue with some other auxiliary results that formalize features of commands that are associated to a write effect $\whigh$. The first result states that a $\whigh$ command does not produce low-distinguishable actions and does not alter the low part of the memory.

\begin{lemma}[Low transparency of $\secan{t}{\whigh}$ commands]\label{lemma:prophighcom}
Let $C$ be a \whprog\ command such that $C \not = \findot$ and there exists $\regenv$ such that $\regenv \enta{\regvar}{\var}  C \typeproduce \code{\dots}, \secan{t}{\whigh}, \regenv_\alpha$. Then for any $\mem$ we have $\whstate{C}{\mem} \whsem{l} \whstate{C'}{\mem'}$ such that $\low(l)= \tau$ and $\mem =_{\pubdata} \mem'$.
\end{lemma}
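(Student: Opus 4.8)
The plan is to proceed by induction on the size of the typing derivation of $C$ --- equivalently, by induction on the structure of $C$ viewed up to the structural equivalence $\equiv$ (so that $\findot$ subterms can be stripped freely) --- with a case analysis on the last typing rule applied. The first observation I would record is that, because $C \neq \findot$, the configuration $\whstate{C}{\mem}$ admits an outgoing transition for every $\mem$ by the operational semantics (Figure \ref{table:whilesem}); the content of the lemma is then precisely that this transition is silent to the low observer, $\low(l) = \tau$, and leaves the low part of the memory unchanged, $\mem =_{\pubdata} \mem'$.

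For the atomic commands the key point I would exploit is that the annotation $\whigh$ constrains the write target. If $C = \wskip$, then rule $\wraprule{\rulename{skip}}$ gives $\whstate{\wskip}{\mem} \whsem{\tau} \whstate{\findot}{\mem}$ and both conclusions are immediate. If $C = \assign{x}{E}$, inspecting rule $\wraprule{\rulename{\assignName}}$ shows the annotation is $\whigh$ only in the branch $\seclev(x) = \valHigh$; hence $x$ is high, and since the transition $\whstate{\assign{x}{E}}{\mem} \whsem{\tau} \whstate{\findot}{\mem[x \backslash n]}$ updates only the high variable $x$, we get $\low(\tau) = \tau$ and $\mem =_{\pubdata} \mem[x \backslash n]$. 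The case $C = \out{ch}{E}$ is analogous: rule $\wraprule{\rulename{\outName}}$ forces $\seclev(ch) = \valHigh$, i.e. $ch = \high$, the transition is $\whstate{\out{ch}{E}}{\mem} \whsem{\high!n} \whstate{\findot}{\mem}$, and $\low(\high!n) = \tau$ since high outputs do not lie in $\LowAct$. I expect the control constructs to require no appeal to the annotation at all: by rules $\wraprule{\rulename{if-1}}$/$\wraprule{\rulename{if-2}}$ (respectively $\wraprule{\rulename{w-1}}$/$\wraprule{\rulename{w-2}}$) the first step of any conditional or loop carries the label $\tau$ and leaves the memory untouched, so both conclusions hold regardless of which branch is taken.

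The substantive case, which I expect to be the main obstacle, is sequential composition $C = C_1; C_2$. Here rule $\wraprule{\rulename{;}}$ yields write effect $w_1 \wlub w_2 = \whigh$, and since $\whigh$ is the least element of the write ordering this forces $w_1 = w_2 = \whigh$. Using the compatibility of the extended typing with structural equivalence together with $\findot; C \equiv C$, I may assume $C_1 \neq \findot$, so the induction hypothesis applies to $C_1$ and produces a step $\whstate{C_1}{\mem} \whsem{l} \whstate{C_1'}{\mem'}$ with $\low(l) = \tau$ and $\mem =_{\pubdata} \mem'$. The only transition rules available for a sequential composition are the context rules, so the step of $C_1; C_2$ is driven entirely by this step of $C_1$: if $C_1' \neq \findot$ then rule $\wraprule{\rulename{c-1}}$ (with context $[-]; C_2$) gives $\whstate{C_1; C_2}{\mem} \whsem{l} \whstate{C_1'; C_2}{\mem'}$, while if $C_1' = \findot$ then rule $\wraprule{\rulename{c-2}}$ (with context $[-]$) gives $\whstate{C_1; C_2}{\mem} \whsem{l} \whstate{C_2}{\mem'}$. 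In both subcases the label and the resulting memory are exactly those of the $C_1$-step, so $\low(l) = \tau$ and $\mem =_{\pubdata} \mem'$ are inherited. The delicate part will be to be entirely precise about the evaluation-context semantics --- verifying that the observable effect of the composite step really is that of the leftmost active subcommand --- and about discharging $\findot$ subterms through structural equivalence so that the induction hypothesis is invoked on a command whose write effect is genuinely $\whigh$.
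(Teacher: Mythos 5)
Your proof is correct, and its skeleton (structural induction with case analysis on the last typing rule; identical treatment of $\wskip$, $\assignName$, $\outName$, and the observation that the first step of $\wifName$/$\whileName$ is silent and memory-preserving regardless of the annotation) matches the paper's. Where you genuinely diverge is the sequential-composition case. The paper does \emph{not} invert the top-level $\wraprule{\rulename{seq}}$ rule: it case-splits on which semantic rule ($\wraprule{\rulename{c-1}}$ or $\wraprule{\rulename{c-2}}$) fired, identifies the actual redex $C_3$ inside the evaluation context, and --- when that redex is atomic --- re-types it in place using the Reduction Context Typing lemma (Lemma \ref{lemma:ctxandtypes}), which splits the typing of $R[C_3;C_4]$ into a typing of $C_3$ with write effect $\whigh$ and a typing of the surrounding context; for if/while redexes it needs no typing information at all. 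You instead invert the typing rule at the outermost semicolon, use the lattice fact that $w_1 \wlub w_2 = \whigh$ forces $w_1 = \whigh$ (since $\whigh$ is the bottom of the two-point write ordering), apply the induction hypothesis to the whole left component $C_1$, and then lift its step through the context rules ($\wraprule{\rulename{c-1}}$ with context $[-];C_2$ when $C_1' \neq \findot$, $\wraprule{\rulename{c-2}}$ with the empty context when $C_1' = \findot$). This buys you a more self-contained argument --- no appeal to Lemma \ref{lemma:ctxandtypes}, and nested sequences are absorbed by the induction hypothesis rather than by digging down to the redex --- at the cost of the lifting obligation you flag at the end: you must know that the lifted step is \emph{the} step of $C_1;C_2$, which is immediate from determinism of the semantics (and is enough, since the lemma asserts the existence of a step with the stated properties). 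The paper's route avoids any lifting/inversion reasoning because it starts from the step the semantics actually performs, and it reuses Lemma \ref{lemma:ctxandtypes}, which the development needs anyway for subject reduction; your route would let this lemma stand independently of that machinery. One small bookkeeping point: your reduction to $C_1 \neq \findot$ via structural equivalence is sound, but under the paper's convention that ``$;$'' strips $\findot$ this case never arises syntactically, so no appeal to the compatibility lemma is actually required there.
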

\begin{proof}

We prove the proposition by induction on the structure of $C$ and by cases on the last rule applied in the type derivation. In the proof we omit the explicit representation of the code production since it is not relevant in this context. Recall that $\forall \regenv$ $\regenv \enta{\regvar}{\var} \findot \typeproduce \code{\findot}, \secan{\tconst~(0,0)}{\whigh}, \regenv$.

\framebox[1.1\width]{Base case} 

Case $C=\wskip$. We know that $\regenv \enta{\regvar}{\var}  \wskip \typeproduce \code{\dots}, \secan{\tconst~(1,1)}{\whigh}, \regenv$ and that, for any memory $\mem$, $\whstate{\wskip}{\mem} \whsem{\tau} \whstate{\findot}{\mem}$. 

Case $C= \assign{x}{E}$. Assume that $\regenv \enta{\regvar}{\var}  \assign{x}{E} \typeproduce \code{\dots}, \secan{\tconst~(1,n+1)}{\whigh}, \regenv'$. Then $\seclev(x)= \valHigh$ and we know that, for any memory $\mem$, $\whstate{\assign{x}{E}}{\mem} \whsem{\tau} \whstate{\findot}{\mem[x \backslash |[E|](\mem)]}$ such that $\mem[x \backslash |[E|](\mem)] =_{\pubdata} \mem$.

Case $C= \out{ch}{E}$. Assume that $\regenv \enta{\regvar}{\var}  \out{ch}{E} \typeproduce \code{\dots}, \secan{\tconst~(1,n+1)}{\whigh}, \regenv'$. Then $\seclev(ch)= \valHigh$ and we know that, for any memory $\mem$, it is true that $\whstate{\out{ch}{E}}{\mem} \whsem{ch!|[E|](\mem)} \whstate{\findot}{\mem}$ such that $\low(ch!|[E|](\mem))=  \tau$. 

\framebox[1.1\width]{Inductive step} 

Case $C= \wif{E}{C_t}{C_e}$. Regardless of the typing derivation, for any memory $\mem$ there are two cases that are possible: either $\whstate{\wif{E}{C_t}{C_e}}{\mem} \whsem{\tau} \whstate{C_t}{\mem}$ or  $\whstate{\wif{E}{C_t}{C_e}}{\mem} \whsem{\tau} \whstate{C_e}{\mem}$. Hence, regardless of $\mem$, the transition produces a silent action and does not alter the memory, therefore the property trivially holds. 

Case $C= \while{x}{C'}$. Regardless of the typing derivation, for any memory $\mem$ there are two cases that are possible: either we have that $\whstate{\while{x}{C'}}{\mem} \whsem{\tau} \whstate{\findot}{\mem}$ or  $\whstate{\while{x}{C'}}{\mem} \whsem{\tau} \whstate{C';\while{x}{C'}}{\mem}$. Hence, regardless of $\mem$, the transition produces a silent action and does not alter the memory, and the property trivially holds. 

Case $C= C_1;C_2$. We proceed by cases according to the semantic rule used to reduce the first element of the pair.\\
Assume $\mem$ is considered such that  rule $\wraprule{\rulename{c-1}}$ is applied. Hence there exists $C_3 \in \{\wif{E}{C_t}{C_e},\while{x}{C'}\}$ such that $C_1;C_2 = R[C_3]$ and $\whstate{R[C_3]}{\mem} \whsem{l} \whstate{R[C_3']}{\mem'}$, given that $C_3' \not = \findot$.\\ 
As it has been observed previously, regardless of their type derivation, none of the commands produce a visible action and alter the memory, hence the property trivially holds. 
Assume $\mem$ is considered such that  rule  $\wraprule{\rulename{c-2}}$ is applied instead. Hence there exist two commands $C_3$ and $C_4$ such that $C_1;C_2 = R[C_3;C_4]$ and $\whstate{R[C_3;C_4]}{\mem} \whsem{l} \whstate{R[C_4]}{\mem'}$, given that $\whstate{C_3}{\mem} \whsem{l} \whstate{\findot}{\mem'}$. We distinguish two cases. Assume $C_3$ in $\{\wskip, \assign{x}{E}, \out{ch}{E}\}$.
If $\regenv \enta{\regvar}{\var}  C_1;C_2 \typeproduce \code{\dots}, \secan{t}{\whigh}, \regenv_\alpha$ then, by Lemma \ref{lemma:ctxandtypes}, a type derivation 

$$
\inference
{\regenv \enta{\regvar}{\var}  C_3 \typeproduce \code{\dots}, \secan{\tconst~(1,n_a)}{\whigh}, \regenv_1 \\
\regenv_1 \enta{\regvar}{\var}  R[\bullet;C_4] \typeproduce \code{\dots}, \secan{t'}{\whigh}, \regenv_\alpha}
{\regenv \enta{\regvar}{\var}  R[C_3;C_4] \typeproduce \code{\dots}, \secan{t}{\whigh}, \regenv_\alpha}
$$
must exist such that $t= \tconst~(1,n_a) \tclub t'$. By applying the inductive hypothesis on $C_3$ we know that for all $\mem$ we have $\whstate{C_3}{\mem} \whsem{l} \whstate{\findot}{\mem'}$ such that $\low(l)= \tau$ and $\mem =_{\pubdata} \mem'$, hence the property holds. 
Assume $C_3=\while{x}{C'}$. As observed previously, the command neither produces a visible action nor alter the memory. Therefore the property trivially holds. 

\end{proof}

The second result investigates the features of $\whigh$ commands further. 

\begin{lemma}[Progress properties of $\secan{t}{\whigh}$ commands]\label{lemma:prophighstep}
Let $C$ be a \whprog\ command for which there exists $\regenv$ such that $\regenv \enta{\regvar}{\var}  C \typeproduce \code{\dots}, \secan{\tconst~(m,n)}{\whigh}, \regenv_\alpha$. Then for any memory $\mem$ such that $\whstate{C}{\mem} \whsem{l} \whstate{C'}{\mem'}$ there exists $\regenv'$ %=_{\pubdata} \regenv$ 
such that $\regenv' \enta{\regvar}{\var}  C' \typeproduce \code{\dots}, \secan{\tconst~(m-1,n')}{\whigh}, \regenv_\beta$ and $\regenv_\beta \sqsupseteq \regenv_\alpha$.
%\begin{enumerate} 
%\item $\regenv =_{\pubdata} \regenv_\alpha$;
%\item for any memory $\mem$ such that $\whstate{C}{\mem} \whsem{l} \whstate{C'}{\mem'}$ there exists $\regenv' =_{\pubdata} \regenv$ such that $\regenv' \enta{\regvar}{\var}  C' \typeproduce \code{\dots}, \secan{(m-1,n')}{\whigh}, \regenv_\beta$.
%\end{enumerate}
\end{lemma}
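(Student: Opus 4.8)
The plan is to prove this exactly as Subject Reduction (Proposition~\ref{prop:subjred}) was proved: by induction on the structure of $C$ and by cases on the last rule applied in the type derivation, suppressing the code component since it is irrelevant here. The decisive preliminary observation is that the annotation $\secan{\tconst~(m,n)}{\whigh}$ sharply restricts which rule could have been used. Because $\termMap$ always returns a label in $\Set{\tdontc,\ttop}$, the rule $\wraprule{\rulename{\whileName}}$ can never yield a $\tconst~(m,n)$ timing, so the $\whileName$ case is \emph{vacuous}; likewise $\wraprule{\rulename{\wifName}-any}$ produces a timing $\termMap(\lambda)\tlub t_1\tlub t_2\in\Set{\tdontc,\ttop}$, so a conditional with a $\tconst$ timing must have been typed through $\wraprule{\rulename{\wifName}-\valHigh}$. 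Finally, since $C$ reduces we have $C\not=\findot$, hence $m\ge 1$ and $m-1$ is well defined. The second timing coordinate $n'$ is existentially quantified in the statement, so it need not be tracked precisely.

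For the base cases $C\in\Set{\wskip,\assign{x}{E},\out{ch}{E}}$ the $\tconst$ timing forces $\seclev(x)=\valHigh$ (resp.\ $\seclev(ch)=\valHigh$) and therefore $m=1$, and each such $C$ reduces in one step to $\findot$. Taking $\regenv'=\regenv_\alpha$ and the extended rule $\wraprule{\rulename{T}}$ gives $\regenv_\alpha\enta{\regvar}{\var}\findot\typeproduce\code{\findot},\secan{\tconst~(0,0)}{\whigh},\regenv_\alpha$, which satisfies the claim with $m-1=0$ and $\regenv_\beta=\regenv_\alpha$. For the conditional $C=\wif{E}{C_t}{C_e}$ typed by $\wraprule{\rulename{\wifName}-\valHigh}$ with total timing $\tconst~(m+1,\dots)$, the premises give $\regenv_1\enta{\regvar}{\var}C_t\typeproduce\code{\dots},\secan{\tconst~(m,n_t)}{\whigh},\regenv_2$ and symmetrically for $C_e$, with $\regenv_\alpha=\regenv_2\envinters\regenv_3$. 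A reduction step sends $C$ to $C_t$ or to $C_e$, leaving $\mem$ unchanged; choosing $\regenv'=\regenv_1$ we read off the required typing, whose first timing coordinate is $(m+1)-1=m$ and whose write effect is $\whigh$, while its output record is $\regenv_2$ or $\regenv_3$, each of which $\sqsupseteq\regenv_2\envinters\regenv_3=\regenv_\alpha$.

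The substantive case, as in Subject Reduction, is $C=C_1;C_2$, where $\secan{\tconst~(m,n)}{\whigh}$ forces both components to carry annotations $\secan{\tconst~(m_1,n_1)}{\whigh}$ and $\secan{\tconst~(m_2,n_2)}{\whigh}$ with $m=m_1+m_2$. I would split on the reduction rule. When $\wraprule{\rulename{c-2}}$ fires, $C=R[C_3;C_4]$ with $\whstate{C_3}{\mem}\whsem{l}\whstate{\findot}{\mem'}$; decomposing the typing gives $\regenv\enta{\regvar}{\var}C_3\typeproduce\code{\dots},\secan{\tconst~(m_3,n_3)}{\whigh},\regenv_3$ and $\regenv_3\enta{\regvar}{\var}R[C_4]\typeproduce\code{\dots},\secan{\tconst~(m_c,n_c)}{\whigh},\regenv_\alpha$, where the $\tconst$ timing of $C_3$ rules out a terminating $\whileName$ and leaves $C_3$ atomic, so $m_3=1$ and $m_c=m-1$; taking $\regenv'=\regenv_3$ settles this subcase with $\regenv_\beta=\regenv_\alpha$. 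When $\wraprule{\rulename{c-1}}$ fires, $C=R[C_3]$ with $\whstate{C_3}{\mem}\whsem{l}\whstate{C_3'}{\mem'}$ and $C_3'\not=\findot$, so $C_3$ is a conditional typed by $\wraprule{\rulename{\wifName}-\valHigh}$; using Lemma~\ref{lemma:ctxandtypes} I would decompose into $C_3$ of type $\secan{\tconst~(m_3,n_3)}{\whigh}$ and $R[\findot]$ of type $\secan{\tconst~(m_c,n_c)}{\whigh}$, apply the induction hypothesis to the proper subterm $C_3$ to type $C_3'$ as $\secan{\tconst~(m_3-1,n_3')}{\whigh}$ under an enlarged record, retype $R[\findot]$ from that larger record by upward closure (Lemma~\ref{lemma:comweak}), and reassemble with the reverse direction of Lemma~\ref{lemma:ctxandtypes}. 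The resulting timing is $\tconst~(m_3-1+m_c,\,n_3'+n_c)=\tconst~(m-1,\,n')$, the write effect is $\whigh\wlub\whigh=\whigh$, the side condition $t_1=\ttop\Rightarrow w_2=\whigh$ holds vacuously because $C_3'$ carries a $\tconst$ label, and monotonicity through Lemma~\ref{lemma:comweak} delivers $\regenv_\beta\sqsupseteq\regenv_\alpha$.

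The main obstacle I expect is precisely the $\wraprule{\rulename{c-1}}$ reassembly: threading the two directions of Lemma~\ref{lemma:ctxandtypes} together with the record growth of Lemma~\ref{lemma:comweak}, while verifying that the exact decrement of the first timing coordinate ($m_3-1+m_c=m-1$) and the preservation of $\whigh$ survive the recomposition. The genuinely new ingredient relative to plain Subject Reduction is the structural observation that a $\tconst$ timing excludes $\whileName$ everywhere (making loops vacuous) and pins every atomic redex to $m=1$, which is exactly what guarantees that the first timing coordinate drops by one on each step.
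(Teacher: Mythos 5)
Your proposal is correct and follows essentially the same route as the paper's own proof: induction on the structure of $C$ with the key observation that a $\tconst$ timing together with write effect $\whigh$ rules out $\whileName$ and forces conditionals through $\wraprule{\rulename{\wifName}-\valHigh}$, and, for sequential composition, a split on $\wraprule{\rulename{c-1}}$ versus $\wraprule{\rulename{c-2}}$ that uses Lemma~\ref{lemma:ctxandtypes} to decompose and recompose the typing and Lemma~\ref{lemma:comweak} for record monotonicity, with the inductive hypothesis applied to the conditional redex. The only differences are cosmetic: you make explicit a few facts the paper leaves implicit (e.g., that $m \geq 1$ and that the sequencing side condition $t_1 = \ttop \Rightarrow w_2 = \whigh$ holds vacuously), so there is nothing to correct.
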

\begin{proof}
We prove the proposition by induction on the structure of $C$ and by cases on the last rule applied in the type derivation. In the proof we omit the explicit representation of the code production since it is not relevant in this context. Recall that $\forall \regenv$ $\regenv \enta{\regvar}{\var} \findot \typeproduce \code{\findot}, \secan{\tconst~(0,0)}{\whigh}, \regenv$, and observe that there is no type derivation that can associate $\secan{\tconst~(m,n)}{\whigh}$ to $\while{x}{C}$.

\framebox[1.1\width]{Base case} 

Case $C= \wskip$. We know that $\regenv \enta{\regvar}{\var}  \wskip \typeproduce \code{\dots}, \secan{\tconst~(1,1)}{\whigh}, \regenv$ and that, for any memory $\mem$, $\whstate{\wskip}{\mem} \whsem{\tau} \whstate{\findot}{\mem}$. 

Case $C= \assign{x}{E}$. Assume that $\regenv \enta{\regvar}{\var}  \assign{x}{E} \typeproduce \code{\dots}, \secan{\tconst~(1,n+1)}{\whigh}, \regenv'$. Then we know that, for any memory $\mem$, it is true that  $\whstate{\assign{x}{E}}{\mem} \whsem{\tau} \whstate{\findot}{\mem[x \backslash |[E|](\mem)]}$.

Case $C= \out{ch}{E}$. Assume that $\regenv \enta{\regvar}{\var}  \out{ch}{E} \typeproduce \code{\dots}, \secan{\tconst~(1,n+1)}{\whigh}, \regenv'$. Then  we know that, for any memory $\mem$, $\whstate{\out{ch}{E}}{\mem} \whsem{ch!|[E|](\mem)} \whstate{\findot}{\mem}$. 

\framebox[1.1\width]{Inductive step} 

Case $C= \wif{E}{C_t}{C_e}$. Assume $\typeconcel{\regenv}{\wif{E}{C_t}{C_e}}{\code{\dots}}{\secan{\tconst~(m+1,n_g+max(n_t,n_e)+2)}{\whigh}}{\regenv_2 \envinters \regenv_3}{\enta{\regvar}{\var}}$. Then it must be $\regenv, \emptyRegAll \eenta{\regvar}{\var} E \etypeproduce \code{\dots}, \esecan{\valHigh}{n_g}, r, \regenv_1$ together with the derivation  $\regenv_1 \enta{\regvar}{\var}  C_t \typeproduce \code{\dots}, \secan{\tconst~(m,n_t)}{\whigh} , \regenv_2$ and $\regenv_1 \enta{\regvar}{\var}  C_e \typeproduce \code{\dots}, \secan{\tconst~(m,n_e)}{\whigh}, \regenv_3$. Hence, regardless of which branch is triggered by the reduction step, the property holds.

Case $C= C_1;C_2$. Assume $\regenv \enta{\regvar}{\var}  C_1;C_2 \typeproduce \code{\dots}, \secan{\tconst~(m,n)}{\whigh}, \regenv_\alpha$. We proceed by cases according to the semantic rule used to reduce the first element of the pair.\\
Assume $\mem$ is considered such that  rule $\wraprule{\rulename{c-1}}$ is applied. Hence there exists $C_3=\wif{E}{C_t}{C_e}$ such that $C_1;C_2 = R[C_3]$ and $\whstate{R[C_3]}{\mem} \whsem{l} \whstate{R[C_3']}{\mem'}$, given that $C_3' \not = \findot$.\\  
By Lemma \ref{lemma:ctxandtypes}, a type derivation 
$$
\inference
{\regenv \enta{\regvar}{\var}  C_3 \typeproduce \code{\dots}, \secan{\tconst~(m_a,n_a)}{\whigh}, \regenv_1 \\
\regenv_1 \enta{\regvar}{\var}  R[\findot] \typeproduce \code{\dots}, \secan{\tconst~(m_b,n_b)}{\whigh}, \regenv_\alpha}
{\regenv \enta{\regvar}{\var}  R[C_3] \typeproduce \code{\dots}, \secan{\tconst~(m,n)}{\whigh}, \regenv_\alpha}
$$
must exist such that $\tconst~(m,n)= \tconst~(m_a,n_a) \tclub \tconst~(m_b,n_b)$. By applying the inductive hypothesis on $C_3$ we know that for all $\mem$ we have $\whstate{C_3}{\mem} \whsem{l} \whstate{C_3'}{\mem'}$ and exists $\regenv'$ such that 
$\regenv' \enta{\regvar}{\var}  C_3' \typeproduce \code{\dots}, \secan{\tconst~(m_a-1,n_a')}{\whigh}, \regenv_1'$ and $\regenv_1' \sqsupseteq \regenv_1$. Hence 
$$
\inference
{\regenv' \enta{\regvar}{\var}  C_3' \typeproduce \code{\dots}, \secan{\tconst~(m_a-1,n_a')}{\whigh}, \regenv_1' \\
\regenv_1' \enta{\regvar}{\var}  R[\findot] \typeproduce \code{\dots}, \secan{\tconst~(m_b,n_b)}{\whigh}, \regenv_\alpha'}
{\regenv' \enta{\regvar}{\var}  R[C_3'] \typeproduce \code{\dots}, \secan{\tconst~(m-1,n_a'+n_b)}{\whigh}, \regenv_\alpha'}
$$
is a valid type derivation and $\regenv_\alpha' \sqsupseteq \regenv_\alpha$, by Lemma \ref{lemma:ctxandtypes} and Lemma \ref{lemma:comweak}. \\
Assume $\mem$ is considered such that  rule  $\wraprule{\rulename{c-2}}$ is applied instead. Hence there exist two commands $C_3 \in \{\wskip, \assign{x}{E}, \out{ch}{E}\}$ and $C_4$ such that $C_1;C_2 = R[C_3;C_4]$ and $\whstate{R[C_3;C_4]}{\mem} \whsem{l} \whstate{R[C_4]}{\mem'}$, given that $\whstate{C_3}{\mem} \whsem{l} \whstate{\findot}{\mem'}$. 
By the hypothesis about the type of $C$ and Lemma \ref{lemma:ctxandtypes}, a type derivation 
$$
\inference
{\regenv \enta{\regvar}{\var}  C_3 \typeproduce \code{\dots}, \secan{\tconst~(1,n_a)}{\whigh}, \regenv_1 \\
\regenv_1 \enta{\regvar}{\var}  R[\findot;C_4] \typeproduce \code{\dots}, \secan{\tconst~(m-1,n_b)}{\whigh}, \regenv_\alpha}
{\regenv \enta{\regvar}{\var}  R[C_3;C_4] \typeproduce \code{\dots}, \secan{\tconst~(m,n_a+n_b)}{\whigh}, \regenv_\alpha}
$$
hence $\regenv_1 \enta{\regvar}{\var}  R[C_4] \typeproduce \code{\dots}, \secan{\tconst~(m-1,n_b)}{\whigh}, \regenv_\alpha$ is a derivation proving the property.
\end{proof}

\begin{proof}[of Proposition \ref{prop:strongsectype}]

Let us consider the relation $R$ between \whprog\ programs defined as follows:

$$\begin{array}{lll}
H 	& = & \{ C | \exists \regenv. \regenv \enta{\regvar}{\var}  C \typeproduce \code{\dots}, \secan{t}{\whigh},\regenv' \} \cup \{ \findot \}\\
A   	& = & \{ (C,C) | \exists \regenv. \regenv \enta{\regvar}{\var}  C \typeproduce \code{\dots}, \secan{t}{w},\regenv' \} \\
X 	& = & \{(R[C_1],R[C_2]| \exists \regenv_\alpha \regenv_\beta \mbox{ such that } \\
	& 	&\regenv_\alpha \enta{\regvar}{\var}  R[C_1] \typeproduce \code{\dots}, \secan{t_1}{w_1},\regenv_\alpha' \mbox{ and } \\
	&	& \regenv_\beta \enta{\regvar}{\var}  R[C_2] \typeproduce \code{\dots}, \secan{t_2}{w_2},\regenv_\beta' \mbox{ and }\\
	&	& \regenv_\alpha \enta{\regvar}{\var}  C_1 \typeproduce \code{\dots}, \secan{\tconst~(m,n_1)}{\whigh},\regenv_1 \mbox{ and } \\
	&	& \regenv_\beta \enta{\regvar}{\var}  C_2 \typeproduce \code{\dots}, \secan{\tconst~(m,n_2)}{\whigh},\regenv_2 \mbox{ and } \\
	&	& \regenv_1 \envinters \regenv_2 \enta{\regvar}{\var}  R[\bullet] \typeproduce \code{\dots}, \secan{t_3}{w_3},\regenv_3 \}\\
R 	&  =	& A \cup X \cup H^2
\end{array}$$

We now show that $R$ is a strong bisimulation for \whprog\ programs. Recall that strong bisimulation is defined for termination transparent systems, hence we should lift the semantics of \whprog\ programs to a termination transparent system first, and then proceed with the proof. However, for improving readability, we take a different approach. Wherever it is possible (i.e. wherever the commands are not $\findot$) we reason directly on the semantics of the \whprog\ language, and we appeal to the features of the termination transparent semantics for \whprog\ programs only where it is strictly necessary (Part 2).\\

\emph{Part 1}.

We start by showing that pairs in $A$ perform transitions that correspond to low equivalent actions and preserve low equality of memories. Also, we show that transitions are performed towards pairs of commands that are either in $A$, $X$ or $H^2$. We proceed by cases on $C$.

Case $(\wskip,\wskip)$. We know that $\forall \mem =_{\pubdata} N $ $\whstate{\wskip}{\mem} \whsem{\tau} \whstate{\findot}{\mem}$ if and only if $\whstate{\wskip}{N} \whsem{\tau} \whstate{\findot}{N}$. Moreover, $(\findot,\findot) \in H^2$.

Case $(\assign{x}{E},\assign{x}{E})$. Assume that $\seclev(x)= \valHigh$. Then $\forall \mem =_{\pubdata} N $ $\whstate{\assign{x}{E}}{\mem} \whsem{\tau} \whstate{\findot}{\mem[x \backslash |[E|](\mem)]}$ if and only if $\whstate{\assign{x}{E}}{N} \whsem{\tau} \whstate{\findot}{N[x \backslash |[E|](N)]}$. Moreover $\mem[x \backslash |[E|](\mem)] =_{\pubdata} N[x \backslash |[E|](N)]$ since $\seclev(x)= \valHigh$. \newline
%\regenv \enta{\regvar}{\var} \assign{x}{E} \typeproduce \code{\dots},\secan{(1,n+1)}{\whigh},\subst{\regenv'}{\conn{r}{x}{\writeMode}}$. Hence , under the assumption that $\regenv, \emptyRegAll ||- E \etypeproduce \code{D}, \esecan{\valHigh}{n}, r, \regenv'$. The type derivation ensures that no $\valLow$ variables are  $
Assume that $\seclev(x)= \valLow$. Then we know that the derivation $\regenv \enta{\regvar}{\var} \assign{x}{E} \typeproduce \code{\dots},\secan{\tdontc}{\wdontk},\subst{\regenv'}{\conn{r}{x}{\writeMode}}$ holds under the assumption that $\regenv, \emptyRegAll \eenta{\regvar}{\var} E \etypeproduce \code{\dots}, \esecan{\valLow}{n}, r, \regenv'$. Since the security level of an expression represents the upper bound of the security levels of written registers, which in turns represent the upper bound of the security levels of read variables, the type derivation ensures that $E$ does not involve variables from $\valHigh$. Hence  $\forall \mem =_{\pubdata} N $ $|[E|](\mem)=|[E|](N)=k$. We therefore conclude that $\whstate{\assign{x}{E}}{\mem} \whsem{\tau} \whstate{\findot}{\mem[x \backslash k]}$ if and only if $\whstate{\assign{x}{E}}{N} \whsem{\tau} \whstate{\findot}{N[x \backslash k]}$. Moreover $\mem[x \backslash k] =_{\pubdata} N[x \backslash k]$.

Case $(\out{ch}{E},\out{ch}{E})$. Assume that $\seclev(ch)= \valHigh$. In this case $\forall \mem =_{\pubdata} N $ it must be that  $\whstate{\out{ch}{E}}{\mem} \whsem{ch!|[E|](\mem)} \whstate{\findot}{\mem}$ if and only if $\whstate{\out{ch}{E}}{N} \whsem{ch! |[E|](N)} \whstate{\findot}{N}$. Moreover $\low(ch!|[E|](\mem))=\low(ch!|[E|](\mem))= \tau$ since $\seclev(ch)= \valHigh$. \newline
Assume that $\seclev(ch)= \valLow$. Then we know that the derivation $\regenv \enta{\regvar}{\var} \out{ch}{E} \typeproduce \code{\dots},\secan{\tdontc}{\wdontk},\regenv'$ holds provided that $\regenv, \emptyRegAll \eenta{\regvar}{\var} E \etypeproduce \code{\dots}, \esecan{\valLow}{n}, r, \regenv'$. Since the security level of an expression represents the upper bound of the security levels of written registers, which in turns represent the upper bound of the security levels of read variables, the type derivation ensures that $E$ does not involve variables from $\valHigh$. Hence  $\forall \mem =_{\pubdata} N $ $|[E|](\mem)=|[E|](N)=k$. We therefore conclude that $\whstate{\out{ch}{E}}{\mem} \whsem{ch!k} \whstate{\findot}{\mem}$ if and only if $\whstate{\out{ch}{E}}{N} \whsem{ch!k} \whstate{\findot}{N}$. 

Case $(\wif{E}{C_t}{C_e},\wif{E}{C_t}{C_e})$. Regardless of the typing derivation used for including the pair in $A$, for any memory $\mem$ two cases are possible: either $\whstate{\wif{E}{C_t}{C_e}}{\mem} \whsem{\tau} \whstate{C_t}{\mem}$ or  $\whstate{\wif{E}{C_t}{C_e}}{\mem} \whsem{\tau} \whstate{C_e}{\mem}$. Hence, regardless of $\mem$, the transition produces a silent action and does not alter the memory. This reduces checking the strong bisimulation condition to checking that the pairs of reduced commands are in $R$. This aspect is ensured by a careful analysis of the type derivation used to include the pair in $A$.\\ An $\wifName$ statement can be typed according to two rules, $\wraprule{\rulename{\wifName}-any}$ and $\wraprule{\rulename{\wifName}-\valHigh}$. \\
Assume the rule $\wraprule{\rulename{\wifName}-any}$ is used. Then we know that the derivation
$\typeconcel{\regenv}{\wif{E}{C_t}{C_e}}{\code{\dots}} {\secan{\writeMap(\lambda) \wlub w_1 \wlub w_2}{\termMap(\lambda)\tlub t_1 \tlub t_2 }}{\regenv_F}{\enta{\regvar}{\var}}$  holds 
under the assumption that 
$\regenv, \emptyRegAll \eenta{\regvar}{\var} E \etypeproduce \code{\dots}, \esecan{\lambda}{n_g} , r, \regenv_1$ 
and  
$\regenv_1 \enta{\regvar}{\var}  C_t \typeproduce \code{\dots}, \secan{t_1}{w_1}, \regenv_2$ 
and 
$\regenv_1 \enta{\regvar}{\var}  C_e \typeproduce \code{\dots}, \secan{t_2}{w_2}, \regenv_3$ and $\writeMap(\lambda) \sqsupseteq w_i$
%, such that $\regenv_F=\regenv_2 \envinters \regenv_3$. 
Assume that $\lambda= \valHigh$. Then $\writeMap(\lambda)= \valHigh$ and $w_1= w_2 = \valHigh$. Hence, all possible pairs of reduced terms, namely $(C_t,C_t)$, $(C_t,C_e)$, $(C_e,C_t)$ and $(C_e,C_e)$ are in $H^2$. Assume that $\lambda= \valLow$. Since the security level of an expression represents the upper bound of the security levels of written registers, which in turns represent the upper bound of the security levels of read variables, the type derivation ensures that $E$ does not involve variables from $\valHigh$. Then $\forall \mem =_{\pubdata} N $ $\whstate{\wif{E}{C_t}{C_e}}{\mem} \whsem{\tau} \whstate{C_{*}}{\mem}$ if and only if $\whstate{\wif{E}{C_t}{C_e}}{N} \whsem{\tau} \whstate{C_{*}}{N}$, for $C_{*} \in \{C_t,C_e\}$. Moreover, $(C_{*},C_{*}) \in A$ follows directly by Proposition \ref{prop:subjred}. \\
Assume the rule $\wraprule{\rulename{\wifName}-\valHigh}$ is used instead.
Then we know that it must be 
$\typeconcel{\regenv}{\wif{E}{C_t}{C_e}}{\code{\dots}}{\secan{\tconst~(m+1,n_g+max(n_t,n_e)+2)}{\whigh}}{\regenv_F}{\enta{\regvar}{\var}}$ under the assumption that  $\regenv_1 \enta{\regvar}{\var}  C_t \typeproduce \code{\dots}, \secan{\tconst~(m,n_t)}{\whigh}, \regenv_2$ and $\regenv_1 \enta{\regvar}{\var}  C_e \typeproduce \code{\dots}, \secan{\tconst~(m,n_e)}{\whigh}, \regenv_3$. Hence, as it was concluded in the previous subcase, all possible pairs of reduced terms, namely $(C_t,C_t)$, $(C_t,C_e)$, $(C_e,C_t)$ and $(C_e,C_e)$ are in $H^2$.

Case $(\while{x}{C},\while{x}{C})$. Regardless of the typing derivation used for including the pair in $A$, for any memory $\mem$ two cases are possible: either it is true that $\whstate{\while{x}{C}}{\mem} \whsem{\tau} \whstate{\findot}{\mem}$ or  $\whstate{\while{x}{C}}{\mem} \whsem{\tau} \whstate{C;\while{x}{C}}{\mem}$. Hence, regardless of $\mem$, the transition produces a silent action and does not alter the memory. This reduces checking the strong bisimulation condition to checking that the pairs of reduced commands are in $R$. This aspect is ensured by a careful analysis of the type derivation used to include the pair in $A$. We know that $\typeconcel{\regenv}{\while{x}{C}}{\dots}{\secan{\termMap(\lambda)\tlub t}{\writeMap(\lambda) \wlub w}}{\regenv_B}{\enta{\regvar}{\var}}$ providing that $\regenv_B \enta{\regvar}{\var}  C \typeproduce \code{\dots}, \secan{t}{w} , \regenv_E$ and $\writeMap(\lambda) \sqsupseteq w$ and  $t= \ttop \Rightarrow \writeMap(\lambda)= \whigh$, for $\lambda = \seclev(x)= \seclev(r)$. Assume that $\lambda= \valHigh$. Then $\writeMap(\lambda)= w= \whigh$, hence all possible pairs of reduced terms, namely $(C,C)$, $(C,\findot)$, $(\findot,C)$ and $(\findot,\findot)$ are in $H^2$. Assume that $\lambda= \valLow$. Then $\forall \mem =_{\pubdata} N $ $\whstate{\while{x}{C}}{\mem} \whsem{\tau} \whstate{C_{*}}{\mem}$ if and only if $\whstate{\while{x}{C}}{N} \whsem{\tau} \whstate{C_{*}}{N}$, for $C_{*} \in \{C;\while{x}{C},\findot\}$. Moreover we can conclude that the pair $(C;\while{x}{C},C;\while{x}{C}) $ is in $A$ by Proposition \ref{prop:subjred}, and $(\findot,\findot) \in H^2$ by definition. \\

Case $(C_1;C_2,C_1;C_2)$. We proceed by cases according to the semantic rule used to reduce the first element of the pair.

Assume $M$ is considered such that  rule $\wraprule{\rulename{c-1}}$ is applied. Hence there exists $C_3 \in \{\wif{E}{C_t}{C_e},\while{x}{C}\}$ such that $C_1;C_2 = R[C_3]$ and $\whstate{R[C_3]}{\mem} \whsem{l} \whstate{R[C_3']}{\mem'}$, given that $C_3' \not = \findot$.\\ 
Let us consider $C_3=\wif{E}{C_t}{C_e}$. As it has been observed previously, for any memory $\mem$ two cases are possible: either we have that  $\whstate{\wif{E}{C_t}{C_e}}{\mem} \whsem{\tau} \whstate{C_t}{\mem}$ or  $\whstate{\wif{E}{C_t}{C_e}}{\mem} \whsem{\tau} \whstate{C_e}{\mem}$. Hence, regardless of $\mem$, the transition produces a silent action and does not alter the memory. We only need to ensure that the pair of reduced commands remains in $R$. \\
Assume the rule $\wraprule{\rulename{\wifName}-any}$ was used for typing $C_3$. Then we know 
$\typeconcel{\regenv}{C_3}{\code{\dots}} {\secan{\termMap(\lambda)\tlub t_1 \tlub t_2}{\writeMap(\lambda) \wlub w_1 \wlub w_2 }}{\regenv_F}{\enta{\regvar}{\var}}$    
under the assumption that 
$\regenv, \emptyRegAll \eenta{\regvar}{\var} E \etypeproduce \code{\dots}, \esecan{\lambda}{n_g} , r, \regenv_1$ 
and  
$\regenv_1 \enta{\regvar}{\var}  C_t \typeproduce \code{\dots}, \secan{t_1}{w_1}, \regenv_2$ 
and 
$\regenv_1 \enta{\regvar}{\var}  C_e \typeproduce \code{\dots}, \secan{t_2}{w_2}, \regenv_3$ and $\writeMap(\lambda) \sqsupseteq w_i$, such that $\regenv_F=\regenv_2 \envinters \regenv_3$. 
Assume that $\lambda= \valHigh$. Then $\writeMap(\lambda)= \valHigh$ and $w_1= w_2 = \valHigh$, hence $\typeconcel{\regenv}{C_3}{\code{\dots}} {\secan{\ttop}{\whigh }}{\regenv_F}{\enta{\regvar}{\var}}$.  Since $R[C_3]$ is typable, $R[\findot]$ must be typable (Lemma \ref{lemma:ctxandtypes}) and it must have type $\secan{t}{\whigh}$ for an appropriate $t$. Hence, $R[C_t]$ must have type $\secan{t'}{\whigh}$ and $R[C_e]$ must have type $\secan{t''}{\whigh}$ (Proposition \ref{prop:subjred}). Therefore all possible pairs of reduced terms, namely $(R[C_t],R[C_t])$, $(R[C_t],R[C_e])$, $(R[C_e],R[C_t])$ and $(R[C_e],R[C_e])$ are in $H^2$. 
Assume that $\lambda= \valLow$. Since the security level of an expression represents the upper bound of the security levels of written registers, which in turns represent the upper bound of the security levels of read variables, the type derivation ensures that $E$ does not involve variables from $\valHigh$. Then for all memories $N$ such that $\mem =_{\pubdata} N $ $\whstate{\wif{E}{C_t}{C_e}}{\mem} \whsem{\tau} \whstate{C_{*}}{\mem}$ implies $\whstate{\wif{E}{C_t}{C_e}}{N} \whsem{\tau} \whstate{C_{*}}{N}$, for $C_{*} \in \{C_t,C_e\}$. Moreover, by Proposition \ref{prop:subjred}, $R[C_{*}]$ is typable, hence pairs $(R[C_{*}],R[C_{*}])$ are in $A$.\\
Assume the rule $\wraprule{\rulename{\wifName}-\valHigh}$ is used instead. Then we know that the derivation  
$\typeconcel{\regenv}{C_3}{\code{\dots}}{\secan{\tconst~(m+1,n_g+max(n_t,n_e)+2)}{\whigh}}{\regenv_2 \envinters \regenv_3}{\enta{\regvar}{\var}}$ holds under the assumption that  $\regenv, \emptyRegAll \eenta{\regvar}{\var} E \etypeproduce \code{\dots}, \esecan{\valHigh}{n_g} , r, \regenv_1$, $\regenv_1 \enta{\regvar}{\var}  C_t \typeproduce \code{\dots}, \secan{\tconst~(m,n_t)}{\whigh}, \regenv_2$ and $\regenv_1 \enta{\regvar}{\var}  C_e \typeproduce \code{\dots}, \secan{\tconst~(m,n_e)}{\whigh}, \regenv_3$. Since $R[C_3]$ is typable, $R[\findot]$ must be typable in $\regenv_2 \envinters \regenv_3$ (Lemma \ref{lemma:ctxandtypes}). Also, since $\regenv_i \sqsupseteq \regenv_2 \envinters \regenv_3$, for $i \in \{2,3\}$, $R[\findot]$ must be also typable in $\regenv_2$ and $\regenv_3$ (Lemma \ref{lemma:comweak}). Hence all possible pairs of reduced terms, namely $(R[C_t],R[C_t])$, $(R[C_t],R[C_e])$, $(R[C_e],R[C_t])$ and $(R[C_e],R[C_e])$ are in $X^2$.\\
Let us consider $C_3=\while{x}{C}$. Since $C_1;C_2$ is typable, both $C_3$ and $R[\findot]$ must be typable (Lemma \ref{lemma:ctxandtypes}). We know that $\typeconcel{\regenv}{C_3}{\code{\dots}}{\secan{\termMap(\lambda)\tlub t}{\writeMap(\lambda) \wlub w}}{\regenv_B}{\enta{\regvar}{\var}}$ for $\lambda = \seclev(x)= \seclev(r)$, providing that $\regenv_B \enta{\regvar}{\var}  C \typeproduce \code{\dots}, \secan{t}{w} , \regenv_E$ and $\writeMap(\lambda) \sqsupseteq w$ and $t= \ttop \Rightarrow \writeMap(\lambda)= \whigh$. 
%REUSE??. 
Assume that $\lambda= \valHigh$. Then $\writeMap(\lambda)= \valHigh$ and $w = \valHigh$, hence 
$\typeconcel{\regenv}{C_3}{\code{\dots}} {\secan{\ttop}{\whigh }}{\regenv_B}{\enta{\regvar}{\var}}$ and 
%$\regenv_B \enta{\regvar}{\var}  C \typeproduce \code{\dots}, \secan{t}{\whigh} , \regenv_E$'' 
$R[\findot]$ must have type $\secan{t}{\whigh}$ for an appropriate $t$.
%, therefore
%$\typeconcel{\regenv}{C;C_3}{\code{\dots}} {\secan{\ttop}{\whigh }}{\regenv_B}{\enta{\regvar}{\var}}$. 
Since 
%$R[C_3]$ is typable (Proposition \ref{prop:subjred}), . W
we know, for the assumption on $\mem$, that $\whstate{R[C_3]}{\mem} \whsem{\tau} \whstate{R[C;C_3]}{\mem}$ it must be, by Proposition \ref{prop:subjred}, that also $R[C;C_3]$ has type $\secan{t'}{\whigh}$, for a suitable $t'$. Hence, for all memories $N$ such that $\mem =_{\pubdata} N$, either $\whstate{R[C_3]}{N} \whsem{\tau} \whstate{R[C;C_3]}{N}$ and $(R[C;C_3],R[C;C_3])$ is in $H^2$, or $\whstate{R[C_3]}{N} \whsem{\tau} \whstate{R[\findot]}{N}$ and therefore $(R[C;C_3],R[\findot])$ is in $H^2$. 
Assume that $\lambda= \valLow$. Then for all memories $N$ such that $\mem =_{\pubdata} N $ $\whstate{R[C_3]}{N} \whsem{\tau} \whstate{R[C;C_3]}{N}$. By Proposition \ref{prop:subjred} $R[C;C_3]$ is typable, hence the pair $(R[C;C_3],R[C;C_3])$ is in $A$. \\
%Since $C_3' \not = \findot$, it must be that $\whstate{\while{x}{C}}{\mem} \whsem{\tau} \whstate{C;\while{x}{C}}{\mem}$.
%, and $R[C_e]$ must have type $\secan{t_2 \tclub t}{\whigh})$. Therefore all possible pairs of reduced terms, namely $(R[C_t],R[C_t])$, $(R[C_t],R[C_e])$, $(R[C_e],R[C_t])$ and $(R[C_e],R[C_e])$ are in $H^2$. 
%Assume that $\lambda= \valLow$. Then $\forall \mem =_{\pubdata} N $ $\whstate{\wif{E}{C_t}{C_e}}{\mem} \whsem{\tau} \whstate{C_{*}}{\mem}$ if and only if $\whstate{\wif{E}{C_t}{C_e}}{N} \whsem{\tau} \whstate{C_{*}}{\mem}$, for $C_{*} \in \{C_t,C_e\}$. Moreover, by Proposition \ref{prop:subjred}, $R[C_{*}]$ is typable, hence pairs $(R[C_{*}],R[C_{*}])$ are in $A$. \\
%Regardless of the typing derivation used for $C_3$
%We now consider $C_3=\wif{E}{C_t}{C_e}$. 
%\\

Assume the rule $\wraprule{\rulename{c-2}}$ is applied instead. Hence there exist two commands $C_3$ and $C_4$ such that $C_1;C_2 \equiv R[C_3;C_4]$ and $\whstate{R[C_3;C_4]}{\mem} \whsem{l} \whstate{R[C_4]}{\mem'}$, given that $\whstate{C_3}{\mem} \whsem{l} \whstate{\findot}{\mem'}$. We distinguish two cases. Assume $C_3$ in $\{\wskip, \assign{x}{E}, \out{ch}{E}\}$. For what we observed previously, since $C_3$ is typable, $\forall \mem =_{\pubdata} N $  $\whstate{C_3}{\mem} \whsem{l} \whstate{\findot}{\mem'}$ if and only if $\whstate{C_3}{N} \whsem{l'} \whstate{\findot}{N'}$ such that $\low(l)=\low(l')$ and $\mem' =_{\pubdata} N'$. Also, by Proposition \ref{prop:subjred}, $R[C_4]$ is typable, hence $(R[C_4],R[C_4]) \in A$. Assume that $C_3$ is $\while{x}{C}$. 
Since $C_1;C_2$ is typable, both $C_3$ and $R[C_4]$ must be typable (Lemma \ref{lemma:ctxandtypes}). We know that $\typeconcel{\regenv}{C_3}{\code{\dots}}{\secan{\termMap(\lambda)\tlub t}{\writeMap(\lambda) \wlub w}}{\regenv_B}{\enta{\regvar}{\var}}$ for $\lambda = \seclev(x)= \seclev(r)$, providing that $\regenv_B \enta{\regvar}{\var}  C \typeproduce \code{\dots}, \secan{t}{w} , \regenv_E$ and $\writeMap(\lambda) \sqsupseteq w$ and $t= \ttop \Rightarrow \writeMap(\lambda)= \whigh$. 
Assume that $\lambda= \valHigh$. Then $\writeMap(\lambda)= \valHigh$ and $w = \valHigh$, hence 
$\typeconcel{\regenv}{C_3}{\code{\dots}} {\secan{\ttop}{\whigh }}{\regenv_B}{\enta{\regvar}{\var}}$ and 
$\regenv_B \enta{\regvar}{\var}  C \typeproduce \code{\dots}, \secan{t}{\whigh} , \regenv_E$. Since $R[C_4]$ is typable (Proposition \ref{prop:subjred}), $R[\findot]$ must have type $\secan{t}{\whigh}$ for an appropriate $t$. This implies that also $R[C_3;C_4]$ has type $\secan{\ttop}{\whigh}$. Hence, for all memories $N$ such that $\mem =_{\pubdata} N$, either $\whstate{R[C_3;C_4]}{N} \whsem{\tau} \whstate{R[C;C_3;C_4]}{N}$ and $(R[C_4],R[C;C_3;C_4])$ is in $H^2$, or we have that $\whstate{R[C_3;C_4]}{N} \whsem{\tau} \whstate{R[C_4]}{N}$ and $(R[C_4],R[C_4])$ is in $H^2$. 
Assume that $\lambda= \valLow$. Then for all memories $N$ such that $\mem =_{\pubdata} N $ $\whstate{R[C_3;C_4]}{N} \whsem{l} \whstate{R[C_4]}{N}$. By Proposition \ref{prop:subjred} $R[C_4]$ is typable, hence the pair $(R[C_4],R[C_4])$ is in $A$. \\

%{\regenv, \emptyRegAll ||- E \etypeproduce \code{D}, \esecan{\seclev(x)}{n}, r, \regenv' }

\emph{Part 2}. 

Let $C \in H$ such that $C \not = \findot$.  Lemma \ref{lemma:prophighcom} ensures that, for any memory $\mem$,  $C$ does not produce an action different than $\tau$ or alter the low part of $\mem$. Proposition \ref{prop:subjred} ensures that the reduced command remains in $H$. Also, recall that $\whstate{\findot}{\mem} \whsemtt{\tau} \whstate{\findot}{\mem}$. This suffices to show that, according to the termination transparent semantics of \whprog, all pairs in $H^2$ perform transitions that correspond to low equivalent actions, in which low equality of memories is preserved and such that reduced pairs of commands are still in $H^2$. \\ %\draftnote{F: FIND A BETTER SYMBOL FOR TERMTRANSP}

\emph{Part 3}. 

We complete the proof by showing that pairs in $X$ perform transitions that correspond to low equivalent actions and preserve low equality of memories. Also, we show that transitions are performed towards pairs of commands that are either in $X$ or in $A$. 
Let $(R[C_1],R[C_2]) \in X$. By definition there must exist $\regenv_1$ and $\regenv_2$ such that the following derivation are correct:
$$
\begin{array}{c}
\regenv_1 \enta{\regvar}{\var}  C_1 \typeproduce \code{\dots}, \secan{\tconst~(m,n_1)}{\whigh},\regenv_1' \\
\regenv_2 \enta{\regvar}{\var}  C_2 \typeproduce \code{\dots}, \secan{\tconst~(m,n_2)}{\whigh},\regenv_2' \\
\regenv_1' \envinters \regenv_2' \enta{\regvar}{\var}  R[\findot] \typeproduce \code{\dots}, \secan{t}{w},\regenv_3 \\
\end{array}
$$
Consider $\mem =_{\pubdata} N$ such that $\whstate{C_1}{\mem} \whsem{l_1} \whstate{C_1'}{\mem'}$ and $\whstate{C_2}{N} \whsem{l_2} \whstate{C_2'}{N'}$.
Lemma \ref{lemma:prophighcom} ensures that $\low(l_1)=\low(l_2)= \tau$ and $\mem' =_\pubdata N'$. 
If $m=1$ then both $C_1'$ and $C_2'$ are $\findot$ and $\regenv_1' \envinters \regenv_2' \enta{\regvar}{\var}  R[\findot] \typeproduce \code{\dots}, \secan{t}{w},\regenv_3$ ensures that $(R[\findot],R[\findot]) \in R$. 
If $m>1$ then Lemma \ref{lemma:prophighstep}  ensures that there exists $\regenv_\alpha$ and $\regenv_\beta$ for which the following derivations are correct
$$
\begin{array}{c}
\regenv_\alpha \enta{\regvar}{\var}  C_1' \typeproduce \code{\dots}, \secan{\tconst~(m-1,n_1')}{\whigh},\regenv_\alpha' \\
\regenv_\beta \enta{\regvar}{\var}  C_2' \typeproduce \code{\dots}, \secan{\tconst~(m-1,n_2')}{\whigh},\regenv_\beta' \\
\end{array}
$$
for  $\regenv_\alpha' \sqsupseteq \regenv_1'$ and $\regenv_\beta' \sqsupseteq \regenv_2'$. Hence it is true that $(R[C_1'],R[C_2']) \in X$ because $\regenv_\alpha \enta{\regvar}{\var}  R[C_1'] \typeproduce \code{\dots} ,\secan{t_a}{w_a} ,\regenv_a$ (Lemmas \ref{lemma:ctxandtypes} and \ref{lemma:comweak}), $\regenv_\beta \enta{\regvar}{\var}  R[C_2'] \typeproduce \code{\dots} ,\secan{t_b}{w_b}, \regenv_b$ (Lemmas \ref{lemma:ctxandtypes}  and \ref{lemma:comweak}) and $\regenv_\alpha' \envinters \regenv_\beta' \enta{\regvar}{\var}  R[\findot] \typeproduce \code{\dots} ,\secan{t_c}{w_c} ,\regenv_c$ (Lemma \ref{lemma:comweak}). 
\end{proof}

We have shown that a type-correct \whprog\ program is strongly secure. We now have to show that the type system in Figures \ref{table:typesystematom} and \ref{table:typesystem} translates a secure \whprog\ program into a secure \swhprog\ program. Since \swhprog\ programs form a subclass of \whprog\ programs, this can be shown by retyping a translated program and showing it is indeed type-correct. 

In order to state this result formally, we consider that now programs have variables in the set $\var'= \var \cup \regvar$. Moreover, from the set of register variables $\regvar= \{r_1,\dots,r_n\}$ we define a new set of register variables, $\regvar'= \{s_1,\dots,s_n\} \cup  \{s_1',\dots,s_n'\}$, such that for a (former) register variable $r_i$ there are a corresponding register variable $\compa{r_i}=s_i$ and a corresponding shadow variable $\shado{r_i}=s_i'$. We assume that $\seclev(r_i)= \seclev(s_i)= \seclev(s_i')$. 

In order to show that the code $D$ obtained from the compilation of a \whprog\ program $C$ is retypable, we proceed via a stronger property. In particular we show that the recompilation of $D$ (i) maps two related input register records into two related output register records and (ii) computes a security annotation which is equivalent (up-to slowdown) to the one corresponding to $D$. We formalize the first aspect as follows.

\begin{definition}[Register Correspondence]
Let $\regenv$ and $\regenv'$ be two register records defined over $\regvar$ and $\regvar'$ respectively. We say that they are corresponding, written as $\regenv \regcorresp \regenv'$, if:
\begin{itemize}
\item $\conn{r_i}{x}{\writeMode} \in \regenv \Rightarrow \conn{s_i}{x}{\writeMode} \in \regenv'$
\item $\conn{r_i}{x}{\readMode} \in \regenv \Rightarrow \conn{s_i}{r}{\writeMode} \in \regenv'$
 \end{itemize}
\end{definition}
Notice that for any register variable $r_i$ associated in $\regenv$ we explicitly require the corresponding register variable $s_i$ (not the shadow register variable $s_i'$) to be associated in $\regenv'$. 

In order to state and prove the result about recompilation of compiled programs, we begin by stating an auxiliary result that formalize the type correctness of the code corresponding to the evaluation of a \whprog\ expression.

\begin{proposition}[Retyping of expressions]\label{ref:exprrety}
Let $E$ be a \whprog\ expression such that there exists $\regenv$ for which $\regenv, \activeReg \eenta{\regvar}{\var} E \etypeproduce \code{D}, \esecan{\lambda}{n}, r, \regenv_\alpha$ holds. Then for all $\regenv'$ such that $\regenv \regcorresp \regenv'$ we  have that there exists a derivation $\regenv' \enta{\regvar'}{\var'}  D \typeproduce \code{D'}, \secan{t}{w},\regenv_\alpha'$ such that $\regenv_\alpha \regcorresp \regenv_\alpha'$ and: 
\begin{itemize}
\item if $\lambda= \valHigh$ or $n=0$, then $w= \whigh$ and $t=(n,p)$;
\item if $\lambda= \valLow$ and $n>0$, then $w= \wdontk$ and $t= \tdontc$.
\end{itemize}
\end{proposition}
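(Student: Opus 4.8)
The plan is to prove the statement by structural induction on the expression $E$, performing a case analysis on the last rule of the derivation $\regenv, \activeReg \eenta{\regvar}{\var} E \etypeproduce \code{D}, \esecan{\lambda}{n}, r, \regenv_\alpha$ (equivalently, on the shape of $E$). For each case I would fix an arbitrary $\regenv'$ with $\regenv \regcorresp \regenv'$ and \emph{exhibit} a concrete command-typing derivation $\regenv' \enta{\regvar'}{\var'} D \typeproduce \code{D'}, \secan{t}{w}, \regenv_\alpha'$, using the nondeterminism of register allocation to pick, for every former register $r_i$ manipulated by $D$, its correspondent $s_i$ (and the shadow registers $s_i'$ as scratch). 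The three things to check in each case are: (i) the derivation exists and meets the side conditions of the command rules; (ii) its write/timing annotation obeys the stated dichotomy; and (iii) the output records satisfy $\regenv_\alpha \regcorresp \regenv_\alpha'$.

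For the base cases I would argue directly. For rule $\rulename{K}$ ($E=k$, $D=\code{\assign{r}{k}}$, $n=1$) I retype the assignment to the now-pure variable $r$ by loading $k$ into $s_r$ via $\rulename{K}$ and storing into $r$; rule $\rulename{\assignName}$ then yields $\secan{\tconst~(1,p)}{\whigh}$ when $\seclev(r)=\lambda=\valHigh$ and $\secan{\tdontc}{\wdontk}$ when $\lambda=\valLow$, matching the claim since $n=1$. Rule $\rulename{V}$-cached gives $D=\code{\findot}$ and $n=0$, which retypes with annotation $\secan{\tconst~(0,0)}{\whigh}$ and unchanged record, so the $n=0$ branch holds and $\regenv_\alpha\regcorresp\regenv_\alpha'$ is just the hypothesis. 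Rule $\rulename{V}$-uncached ($D=\code{\assign{r}{x}}$) is analogous to $\rulename{K}$: loading $x$ into $s_r$ (legal because $\seclev(s_r)=\seclev(r)\sqsupseteq\seclev(x)$) and storing into $r$ produces $\conn{s_r}{r}{\writeMode}$ in the output, which is exactly the correspondent demanded by the read-mode clause of $\regcorresp$.

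The inductive case is rule $\rulename{C}$, where $E=\wop{E_1}{E_2}$ and $D=\code{D_1;D_2;\assign{r}{\wop{r}{r'}}}$ with $n=n_1+n_2+1$. I retype $D$ through the sequential-composition rule $\rulename{;}$: apply the induction hypothesis to $E_1$ to retype $D_1$ (from $\regenv'$ to some $\regenv_1'$ with $\regenv_1\regcorresp\regenv_1'$), then to $E_2$ to retype $D_2$ (from $\regenv_1'$ to $\regenv_2'$ with $\regenv_2\regcorresp\regenv_2'$), and finally retype the combining assignment $\assign{r}{\wop{r}{r'}}$ through $\rulename{C}$ and $\rulename{\assignName}$. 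For the annotation: when $\lambda=\valHigh$ all three pieces carry write effect $\whigh$ and timings $\tconst~(n_1,\cdot)$, $\tconst~(n_2,\cdot)$, $\tconst~(1,\cdot)$, whose $\tclub$-combination is $\tconst~(n_1+n_2+1,p)=\tconst~(n,p)$; when $\lambda=\valLow$ and $n>0$ the final assignment already contributes $\secan{\tdontc}{\wdontk}$, so the $\tclub$ and $\wlub$ combinations collapse to $\tdontc$ and $\wdontk$. The side condition $t_1=\ttop\Rightarrow w_2=\whigh$ of $\rulename{;}$ holds vacuously because every timing produced lies below $\tdontc$.

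The main obstacle is clause (iii): threading the register-record correspondence through the retyping and certifying that a compatible allocation really exists. The key observations that make it go through are that $D_2$ never writes the former register $r$ (since $r\in\activeReg$ during the first-phase compilation of $E_2$), so its retyping leaves the correspondent of $r$ undisturbed, and that any value the combining assignment must read is either cached in the correspondent register (guaranteed by the output correspondence coming from the induction hypothesis, via the $\writeMode$-clause of $\regcorresp$) or can be reloaded from the pure variable. The shadow registers $s_i'$ supply scratch space so that evaluating $\wop{r}{r'}$ in the final assignment does not clobber the $s_j$ that the correspondence requires to be preserved. Making this precise amounts to the kind of monotonicity bookkeeping already isolated in Lemma~\ref{lemma:regenvoper} and Lemma~\ref{lemma:comweak}, applied to the concrete allocation chosen above; it is routine but is where the bulk of the work lies.
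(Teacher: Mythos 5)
Your proposal is correct and takes essentially the same approach as the paper's proof: structural induction on $E$ with case analysis on the last typing rule, exhibiting a concrete recompilation derivation in each case, and—in the $\wop{E_1}{E_2}$ case—resolving the combining assignment $\assign{r}{\wop{r}{r'}}$ by distinguishing operands that are cached in correspondent registers (the $\readMode$ clause of $\regcorresp$) from those that must be reloaded into either the correspondent register or, when that register is pinned by a $\writeMode$ association, the shadow register. One minor note: the paper verifies the output-record correspondence directly in each of its cases rather than by appeal to Lemmas~\ref{lemma:regenvoper} and~\ref{lemma:comweak}, which play no role in its proof of this particular proposition.
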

\begin{proof}

We prove the proposition by induction on the structure of $E$ and by cases on the last rule applied in the type derivation. %In the proof we omit the explicit representation of the code production since it is not relevant in this context. 

\framebox[1.1\width]{Base case} 

Case $E=k$. Assume the following derivation exists
%% CONSTANT - HIGH
$$\inference
{r \in \regvar & r \not \in \activeReg & \seclev(r)=\valHigh}
{\regenv, \activeReg \eenta{\regvar}{\var} k \etypeproduce \code{\assign{r}{k}}, \esecan{\valHigh}{1}, r, \subst{\regenv}{\breakconn{r}}}$$
and consider $\regenv'$, such that $\regenv \regcorresp \regenv'$, together with $s = \compa{r}$. Then the following derivation exists.
$$\inference
{\inference{s \in \regvar'  & \seclev(s)=\valHigh}
{\regenv', \{\} \eenta{\regvar'}{\var'} k \etypeproduce \code{\assign{s}{k}}, \esecan{\valHigh}{1}, s, \subst{\regenv'}{\breakconn{s}}}
}
{\regenv' \enta{\regvar'}{\var'} \assign{r}{k} \typeproduce \code{\assign{s}{k};\assign{r}{s}}, \secan{\tconst~(1,2)}{\whigh}, \subst{\regenv'}{\conn{s}{r}{\writeMode}}} 
$$
We have that $\subst{\regenv}{\breakconn{r}} \regcorresp \subst{\regenv'}{\conn{s}{r}{\writeMode}}$ because:
\begin{itemize}
\item for any $r' \not = r$, if $\conn{r'}{x}{\writeMode} \in \subst{\regenv}{\breakconn{r}}$ then $\conn{s'}{x}{\writeMode} \in \subst{\regenv'}{\conn{s}{r}{\writeMode}}$ because $s' \not = \compa{r}$;
\item for any $r' \not = r$, if $\conn{r'}{x}{\readMode} \in \subst{\regenv}{\breakconn{r}}$ then $\conn{s'}{r'}{\writeMode} \in  \subst{\regenv'}{\conn{s}{r}{\writeMode}}$ because $s' \not = \compa{r}$;
\item $r$ is unassociated in $\subst{\regenv}{\breakconn{r}}$.
\end{itemize}
%% CONSTANT - LOW
Assume the following derivation exists  
$$\inference
{r \in \regvar & r \not \in \activeReg & \seclev(r)=\valLow}
{\regenv, \activeReg \eenta{\regvar}{\var} k \etypeproduce \code{\assign{r}{k}}, \esecan{\valLow}{1}, r, \subst{\regenv}{\breakconn{r}}}$$
instead and consider $\regenv'$, such that $\regenv \regcorresp \regenv'$, together with $s = \compa{r}$. Then the following derivation exists
$$
\inference
{\inference{s \in \regvar' & \seclev(s)=\valLow}
{\regenv', \{\} \eenta{\regvar'}{\var'} k \etypeproduce \code{\assign{s}{k}}, \esecan{\valLow}{1}, s, \subst{\regenv'}{\breakconn{s}}}
}
{\regenv' \enta{\regvar'}{\var'} \assign{r}{k} \typeproduce \code{\assign{s}{k};\assign{r}{s}}, \secan{\tdontc}{\wdontk}, \subst{\regenv'}{\conn{s}{r}{\writeMode}}} 
$$
We have that $\subst{\regenv}{\breakconn{r}} \regcorresp \subst{\regenv'}{\conn{s}{r}{\writeMode}}$ because:
\begin{itemize}
\item for any $r' \not = r$, if $\conn{r'}{x}{\writeMode} \in \subst{\regenv}{\breakconn{r}}$ then $\conn{s'}{x}{\writeMode} \in \subst{\regenv'}{\conn{s}{r}{\writeMode}}$ because $s' \not = \compa{r}$;
\item for any $r' \not = r$, if $\conn{r'}{x}{\readMode} \in \subst{\regenv}{\breakconn{r}}$ then $\conn{s'}{r'}{\writeMode} \in  \subst{\regenv'}{\conn{s}{r}{\writeMode}}$ because $s' \not = \compa{r}$;
\item $r$ is unassociated in $\subst{\regenv}{\breakconn{r}}$.
\end{itemize}

Case $E=x$. Assume the following derivation exists
$$\inference{x \in \var & r \in \regvar & \conn{r}{x}{} \in \regenv}
{\regenv, \activeReg \eenta{\regvar}{\var} x \etypeproduce \code{\findot},\esecan{\seclev(r)}{0}, r, \regenv }$$
and consider $\regenv'$ such that $\regenv \regcorresp \regenv'$. Then the following derivation exists
$$
\inference{}{\regenv' \enta{\regvar'}{\var'} \findot \typeproduce \code{\findot}, \secan{\tconst~(0,0)}{\whigh}, \regenv'}$$
%% Uncached - high
Assume the following derivation exists instead 
$$\inference
{x \in \var & r \in \regvar & r   \not \in \activeReg & \seclev(r) = \valHigh \sqsupseteq \seclev(x)} 
{\regenv, \activeReg \eenta{\regvar}{\var} x \etypeproduce \code{\assign{r}{x}}, \esecan{\valHigh}{1}, r, \subst{\regenv}{\conn{r}{x}{\readMode}}}$$
Then for $\regenv'$, such that $\regenv \regcorresp \regenv'$ and $s = \compa{r}$ the following derivation exists
$$\inference
{\inference{s \in \regvar'  & \seclev(s)=\valHigh}
{\regenv', \{\} \eenta{\regvar'}{\var'} x \etypeproduce \code{\assign{s}{x}}, \esecan{\valHigh}{1}, s, \subst{\regenv'}{\conn{s}{x}{\readMode}}}
}
{\regenv' \enta{\regvar'}{\var'} \assign{r}{x} \typeproduce \code{\assign{s}{x};\assign{r}{s}}, \secan{\tconst~(1,2)}{\whigh}, \subst{\regenv'}{\conn{s}{r}{\writeMode}}}$$
We have that $\subst{\regenv}{\conn{r}{x}{\readMode}} \regcorresp \subst{\regenv'}{\conn{s}{r}{\writeMode}}$:
\begin{itemize}
\item for any $r' \not = r$ and $y \not = x$, if $\conn{r'}{y}{\writeMode} \in \subst{\regenv}{\conn{r}{x}{\readMode}}$ then $\conn{s'}{y}{\writeMode} \in \subst{\regenv'}{\conn{s}{r}{\writeMode}}$ because $s' \not = \compa{r}$;
\item for any $r' \not = r$ and $y \not = x$, if $\conn{r'}{y}{\readMode} \in \subst{\regenv}{\conn{r}{x}{\readMode}}$ then $\conn{s'}{r'}{\writeMode} \in  \subst{\regenv'}{\conn{s}{r}{\writeMode}}$ because $s' \not = \compa{r}$;
\item $\conn{r}{x}{\readMode} \in \subst{\regenv}{\conn{r}{x}{\readMode}}$ and  $\conn{s}{r}{\writeMode} \in  \subst{\regenv'}{\conn{s}{r}{\writeMode}}$.
\end{itemize}
If
$$\inference
{x \in \var & r \in \regvar & r   \not \in \activeReg & \seclev(r) = \valLow \sqsupseteq \seclev(x)} 
{\regenv, \activeReg \eenta{\regvar}{\var} x \etypeproduce \code{\assign{r}{x}}, \esecan{\valLow}{1}, r, \subst{\regenv}{\conn{r}{x}{\readMode}}}$$
then for $s = \compa{r}$ and $\regenv'$ such that $\regenv \regcorresp \regenv'$ the following derivation exists
$$\inference
{\inference{s \in \regvar'  & \seclev(s)=\valLow}
{\regenv', \{\} \eenta{\regvar'}{\var'} x \etypeproduce \code{\assign{s}{x}}, \esecan{\valLow}{1},s, \subst{\regenv'}{\conn{s}{x}{\readMode}}}
}
{\regenv' \enta{\regvar'}{\var'} \assign{r}{x} \typeproduce \code{\assign{s}{x};\assign{r}{s}}, \secan{\tdontc}{\wdontk}, \subst{\regenv'}{\conn{s}{r}{\writeMode}}}$$
We have that $\subst{\regenv}{\conn{r}{x}{\readMode}} \regcorresp \subst{\regenv'}{\conn{s}{r}{\writeMode}}$:
\begin{itemize}
\item for any $r' \not = r$ and $y \not = x$, if $\conn{r'}{y}{\writeMode} \in \subst{\regenv}{\conn{r}{x}{\readMode}}$ then $\conn{s'}{y}{\writeMode} \in \subst{\regenv'}{\conn{s}{r}{\writeMode}}$ because $s' \not = \compa{r}$;
\item for any $r' \not = r$ and $y \not = x$, if $\conn{r'}{y}{\readMode} \in \subst{\regenv}{\conn{r}{x}{\readMode}}$ then $\conn{s'}{r'}{\writeMode} \in  \subst{\regenv'}{\conn{s}{r}{\writeMode}}$ because $s' \not = \compa{r}$;
\item $\conn{r}{x}{\readMode} \in \subst{\regenv}{\conn{r}{x}{\readMode}}$ and  $\conn{s}{r}{\writeMode} \in  \subst{\regenv'}{\conn{s}{r}{\writeMode}}$.
\end{itemize}

\framebox[1.1\width]{Inductive step} 

Case $E=\wop{E_1}{E_2}$. 

Consider that the following derivation exists
$$\inference
{r,r_a \in \regvar \\
\regenv, \activeReg \eenta{\regvar}{\var} E_1 \etypeproduce \code{D_1}, \esecan{\valHigh}{n_1}, r, \regenv_1 \\ 
\regenv_1, \activeReg\cup\{r\} \eenta{\regvar}{\var} E_2 \etypeproduce \code{D_2}, \esecan{\valHigh}{n_2}, r_a, \regenv_2\\
\valHigh=\seclev(r)=\seclev(r_a)}
{\regenv, \activeReg \eenta{\regvar}{\var} \wop{E_1}{E_2} \etypeproduce 
\left \{
\begin{array}{l}
D_1;\\
D_2;\\
\assign{r}{\wop{r}{r_a}}
\end{array}
\right \}
, \esecan{\valHigh}{n_1+n_2+1}, r, \subst{\regenv_2}{\breakconn{r}}}$$
and consider $\regenv'$ such that $\regenv \regcorresp \regenv'$. Then the following derivations exist by inductive hypothesis
$$
\begin{array}{c}
\regenv' \enta{\regvar'}{\var'} D_1 \typeproduce \code{D_1'}, \secan{\tconst~(n_1,p_1)}{\whigh}, \regenv_1'\\
  \regenv_1' \enta{\regvar'}{\var'} D_2 \typeproduce \code{D_2'}, \secan{\tconst~(n_2,p_2)}{\whigh}, \regenv_2'
  \end{array}$$
such that $\regenv_2 \regcorresp \regenv_2'$.
In order for completing the case, we have to show that $\regenv_2' \enta{\regvar'}{\var'} \assign{r}{\wop{r}{r_a}} \typeproduce \code{D'}, \secan{\tconst~(1,p)}{\whigh}, \regenv_3$ such that $\subst{\regenv_2}{\breakconn{r}} \regcorresp \regenv_3$. 
We distinguish four cases. \\
Case 1: there exist $\conn{r}{x}{\readMode}, \conn{r_a}{y}{\readMode}  \in \regenv_2$. Since $\regenv_2 \regcorresp \regenv_2'$, then we know $\conn{s}{r}{\writeMode}, \conn{s_a}{r_a}{\writeMode}  \in \regenv_2'$ and the following type derivation exists
$$\inference
{\inference
{s,s_a \in \regvar'  & \seclev(s)= \valHigh}
{\regenv_2', \{\} \eenta{\regvar'}{\var'} \wop{r}{r_a} \etypeproduce \code{\assign{s}{\wop{s}{s_a}}}, \esecan{\valHigh}{1}, s,\subst{\regenv_2'}{\breakconn{s}}
}}
{\regenv_2' \enta{\regvar'}{\var'}  \assign{r}{\wop{r}{r_a}}  \typeproduce \code{\assign{s}{\wop{s}{s_a}};\assign{r}{s}}, \secan{\tconst~(1,2)}{\whigh}, \subst{\regenv_2'}{\conn{s}{r}{\writeMode}}}$$
We have that $\subst{\regenv_2}{\breakconn{r}} \regcorresp \subst{\regenv_2'}{\conn{s}{r}{\writeMode}}$:
\begin{itemize}
\item for any $r'' \not = r$, if $\conn{r''}{z}{\writeMode} \in \subst{\regenv_2}{\breakconn{r}}$ then $\conn{s''}{z}{\writeMode} \in\subst{\regenv_2'}{\conn{s}{r}{\writeMode}}$ because $s'' \not = \compa{r}$;
\item for any $r'' \not = r$, if $\conn{r''}{z}{\readMode} \in \subst{\regenv_2}{\breakconn{r}}$ then $\conn{s''}{r''}{\writeMode} \in  \subst{\regenv_2'}{\conn{s}{r}{\writeMode}}$ because $s'' \not = \compa{r}$;
\item $r$ is unassociated in $\subst{\regenv_2}{\breakconn{r}}$.
\end{itemize}
%%%%%%%%%%%%
Case 2: there exists $\conn{r}{x}{\readMode} \in \regenv_2$ but there is not $\conn{r_a}{y}{\readMode}  \in \regenv_2$. We therefore distinguish two further subcases.\\
Case 2a:  $r_a$ is unassociated in  $\regenv_2$. Since $\regenv_2 \regcorresp \regenv_2'$ we know $\conn{s}{r}{\writeMode}  \in \regenv_2'$ and the following type derivation for $s_a = \compa{r_a}$ exists
$$
\Scale[0.94]{
\inference
{\inference
{s,s_a \in \regvar'  & \seclev(s)= \valHigh\\
\regenv_2', \{s\} \eenta{\regvar'}{\var'} r_a \etypeproduce \code{\assign{s_a}{r_a}}, \esecan{\valHigh}{1}, s_a,\subst{\regenv_2'}{\conn{s_a}{r_a}{\readMode}}}
{\regenv_2', \{\} \eenta{\regvar'}{\var'} \wop{r}{r_a} \etypeproduce 
\left \{
\begin{array}{l}
\assign{s_a}{r_a};\\
\assign{s}{\wop{s}{s_a}}
\end{array}
\right \}
, \esecan{\valHigh}{2}, s,\subst{\subst{\regenv_2'}{\breakconn{s
 }}}{\conn{s_a}{r_a}{\readMode}}
}}
{\regenv_2' \enta{\regvar'}{\var' }  \assign{r}{\wop{r}{r_a}}  \typeproduce 
\left \{
\begin{array}{l}
\assign{s_a}{r};\\
\assign{s}{\wop{s}{s_a}}; \\
\assign{r}{s}
\end{array}
\right \}
, \secan{\tconst~(1,3)}{\whigh}, 
\subst{\subst{\regenv_2'}{\conn{s}{r}{\writeMode}}}{\conn{s_a}{r_a}{\readMode}}}}
$$
We have that $\subst{\regenv_2}{\breakconn{r}} \regcorresp \subst{\subst{\regenv_2'}{\conn{s}{r}{\writeMode}}}{\conn{s_a}{r_a}{\readMode}}$:
\begin{itemize}
\item for any $r''$ such that $r'' \not = r$ and $r'' \not = r_a$, if $\conn{r''}{z}{\writeMode} \in \subst{\regenv_2}{\breakconn{r}}$ then $\conn{s''}{z}{\writeMode} \in \subst{\subst{\regenv_2'}{\conn{s}{r}{\writeMode}}}{\conn{s_a}{r_a}{\readMode}}$ because $s'' \not = \compa{r}$ and $s'' \not = \compa{r_a}$;
\item for any $r''$ such that $r'' \not = r$ and $r'' \not = r_a$, if $\conn{r''}{z}{\readMode} \in \subst{\regenv_2}{\breakconn{r}}$ then $\conn{s''}{r''}{\writeMode} \in \subst{\subst{\regenv_2'}{\conn{s}{r}{\writeMode}}}{\conn{s_a}{r_a}{\readMode}}$ because $s'' \not = \compa{r}$ and $s'' \not = \compa{r_a}$;
\item $r$ and $r_a$ are unassociated in $\subst{\regenv_2}{\breakconn{r}}$.
\end{itemize}
%%%%%%%%%%%%%%%
Case 2b:  there exists $\conn{r_a}{y}{\writeMode} \in \regenv_2$. Since $\regenv_2 \regcorresp \regenv_2'$ we know  $\conn{s}{r}{\writeMode}  \in \regenv_2'$ and $\conn{s_a}{y}{\writeMode}  \in \regenv_2'$ and the following type derivation exists for $s_a' = \shado{r_a}$
$$
\Scale[0.94]{\inference
{\inference
{s, s_a' \in \regvar'  & \seclev(s)= \valHigh \\
\regenv_2', \{s\} \eenta{\regvar'}{\var'} r_a \etypeproduce \code{\assign{s_a'}{r_a}}, \esecan{\valHigh}{1}, s_a',\subst{\regenv_2'}{\conn{s_a'}{r_a}{\readMode}}}
{\regenv_2', \{\} \eenta{\regvar'}{\var'} \wop{r}{r_a} \etypeproduce 
\left \{
\begin{array}{l}
\assign{s_a'}{r_a};  \\
\assign{s}{\wop{s}{s_a'}}
\end{array}
\right \}
, \esecan{\valHigh}{2}, s,\subst{\subst{\regenv_2'}{\breakconn{s
 }}}{\conn{s_a'}{r_a}{\readMode}}
}}
{\regenv_2' \enta{\regvar'}{\var'}  \assign{r}{\wop{r}{r_a}}  \typeproduce 
\left \{
\begin{array}{l}
\assign{s_a'}{r_a};\\
\assign{s}{\wop{s}{s_a'}}; \\
\assign{r}{s}
\end{array}
\right \}
, \secan{\tconst~(1,3)}{\whigh}, 
\subst{\subst{\regenv_2'}{\conn{s}{r}{\writeMode}}}{\conn{s_a'}{r_a}{\readMode}}}}
$$
We have that $\subst{\regenv_2}{\breakconn{r}} \regcorresp \subst{\subst{\regenv_2'}{\conn{s}{r}{\writeMode}}}{\conn{s_a'}{r_a}{\readMode}}$:
\begin{itemize}
\item for any $r''$ such that $r'' \not = r$ and $r'' \not = r_a$, if $\conn{r''}{z}{\writeMode} \in \subst{\regenv_2}{\breakconn{r}}$ then $\conn{s''}{z}{\writeMode} \in \subst{\subst{\regenv_2'}{\conn{s}{r}{\writeMode}}}{\conn{s_a'}{r_a}{\readMode}}$ because $s'' \not = \compa{r}$ and $s'' \not = \shado{r_a}$;
\item for any $r''$ such that $r'' \not = r$ and $r'' \not = r_a$, if $\conn{r''}{z}{\readMode} \in \subst{\regenv_2}{\breakconn{r}}$ then $\conn{s''}{r''}{\writeMode} \in \subst{\subst{\regenv_2'}{\conn{s}{r}{\writeMode}}}{\conn{s_a'}{r_a}{\readMode}}$ because $s'' \not = \compa{r}$ and $s'' \not = \shado{r_a}$;
\item $\conn{r_a}{y}{\writeMode} \in \subst{\regenv_2}{\breakconn{r}}$ and $\conn{s_a}{y}{\writeMode}  \in \subst{\subst{\regenv_2'}{\conn{s}{r}{\writeMode}}}{\conn{s_a'}{r_a}{\readMode}}$ because $s_a'= \shado{r_a}$;
\item $r$ is unassociated in $\subst{\regenv_2}{\breakconn{r}}$.
\end{itemize}
%%%%%%%%%%%%%%%%%%%%%%%
Case 3: there exists $\conn{r_a}{y}{\readMode} \in \regenv_2$ but there is not $\conn{r}{x}{\readMode}  \in \regenv_2$. We therefore distinguish two further subcases.\\
Case 3a:  $r$ is unassociated in $\regenv_2$. Since $\regenv_2 \regcorresp \regenv_2'$ we know $\conn{s_a}{r_a}{\writeMode}  \in \regenv_2'$ and the following type derivation for $s = \compa{r}$ exists
$$\inference
{\inference
{s,s_a \in \regvar'  & \seclev(s)= \valHigh\\
\regenv_2', \{\} \eenta{\regvar'}{\var'} r \etypeproduce \code{\assign{s}{r}}, \esecan{\valHigh}{1}, s,\subst{\regenv_2'}{\conn{s}{r}{\readMode}}}
{\regenv_2', \{\} \eenta{\regvar'}{\var'} \wop{r}{r_a} \etypeproduce \code{\assign{s}{r}; \assign{s}{\wop{s}{s_a}}}, \esecan{\valHigh}{2}, s,\subst{\regenv_2'}{\breakconn{s
 }}}
}
{\regenv_2' \enta{\regvar'}{\var'}  \assign{r}{\wop{r}{r_a}}  \typeproduce 
\left \{
\begin{array}{l}
\assign{s}{r};\\
\assign{s}{\wop{s}{s_a}}; \\
\assign{r}{s}
\end{array}
\right \}
, \secan{\tconst~(1,3)}{\whigh}, 
\subst{\regenv_2'}{\conn{s}{r}{\writeMode}}}$$
We have that $\subst{\regenv_2}{\breakconn{r}} \regcorresp \subst{\regenv_2'}{\conn{s}{r}{\writeMode}}$:
\begin{itemize}
\item for any $r''$ such that $r'' \not = r$, if $\conn{r''}{z}{\writeMode} \in \subst{\regenv_2}{\breakconn{r}}$ then $\conn{s''}{z}{\writeMode} \in \subst{\regenv_2'}{\conn{s}{r}{\writeMode}}$ because $s'' \not = \compa{r}$;
\item for any $r''$ such that $r'' \not = r$, if $\conn{r''}{z}{\readMode} \in \subst{\regenv_2}{\breakconn{r}}$ then $\conn{s''}{r''}{\writeMode} \in \subst{\regenv_2'}{\conn{s}{r}{\writeMode}}$ because $s'' \not = \compa{r}$;
\item $r$ is unassociated in $\subst{\regenv_2}{\breakconn{r}}$.
\end{itemize}
%%%%%%%%%%%
Case 3b:  there exists $\conn{r}{x}{\writeMode} \in \regenv_2$. Since $\regenv_2 \regcorresp \regenv_2'$ we know $\conn{s_a}{r_a}{\writeMode}  \in \regenv_2'$ and $\conn{s}{x}{\writeMode}  \in \regenv_2'$ so the following type derivation exists for $s'= \shado{r}$
$$\inference
{\inference
{s',s_a \in \regvar'  & \seclev(s')= \valHigh\\
\regenv_2', \{\} \eenta{\regvar'}{\var'} r \etypeproduce \code{\assign{s'}{r}}, \esecan{\valHigh}{1}, s',\subst{\regenv_2'}{\conn{s'}{r}{\readMode}}}
{\regenv_2', \{\} \eenta{\regvar'}{\var'} \wop{r}{r_a} \etypeproduce \code{\assign{s'}{r}; \assign{s'}{\wop{s'}{s_a}}}, \esecan{\valHigh}{2}, s',\subst{\regenv_2'}{\breakconn{s
 }}
}}
{\regenv_2' \enta{\regvar'}{\var' }  \assign{r}{\wop{r}{r_a}}  \typeproduce 
\left \{
\begin{array}{l}
\assign{s'}{r};\\
\assign{s'}{\wop{s'}{s_a}}; \\
\assign{r}{s'}
\end{array}
\right \}
, \secan{\tconst~(1,3)}{\whigh}, 
\subst{\regenv_2'}{\conn{s'}{r}{\writeMode}}}$$
We have that $\subst{\regenv_2}{\breakconn{r}} \regcorresp \subst{\regenv_2'}{\conn{s'}{r}{\writeMode}}$:
\begin{itemize}
\item for any $r''$ such that $r'' \not = r$, if $\conn{r''}{z}{\writeMode} \in \subst{\regenv_2}{\breakconn{r}}$ then $\conn{s''}{z}{\writeMode} \in \subst{\regenv_2'}{\conn{s'}{r}{\writeMode}}$ because $s'' \not = \shado{r}$;
\item for any $r''$ such that $r'' \not = r$, if $\conn{r''}{z}{\readMode} \in \subst{\regenv_2}{\breakconn{r}}$ then $\conn{s''}{r''}{\writeMode} \in \subst{\regenv_2'}{\conn{s'}{r}{\writeMode}}$ because $s'' \not = \shado{r}$;
\item $r$ is unassociated in $\subst{\regenv_2}{\breakconn{r}}$.
\end{itemize}
%%%%%%%%%%%%%
Case 4: neither $\conn{r}{x}{\readMode}$ nor $\conn{r_a}{y}{\readMode}$ are in $\regenv_2$. This case is provable by considering the combined cases 2 and 3.\\

Consider that the following derivation exists instead
$$\inference
{r,r_a \in \regvar \\
\regenv, \activeReg \eenta{\regvar}{\var} E_1 \etypeproduce \code{D_1}, \esecan{\valLow}{n_1}, r, \regenv_1 \\ 
\regenv_1, \activeReg\cup\{r\} \eenta{\regvar}{\var} E_2 \etypeproduce \code{D_2}, \esecan{\valLow}{n_2}, r_a, \regenv_2\\
\valLow=\seclev(r)=\seclev(r_a)}
{\regenv, \activeReg \eenta{\regvar}{\var} \wop{E_1}{E_2} \etypeproduce 
\left \{
\begin{array}{l}
D_1;\\
D_2;\\
\assign{r}{\wop{r}{r_a}}
\end{array}
\right \}, \esecan{\valLow}{n_1+n_2+1}, r, \subst{\regenv_2}{\breakconn{r}}}$$
and consider $\regenv'$ such that $\regenv \regcorresp \regenv'$. Then the following derivations exist by inductive hypothesis

$$\regenv' \enta{\regvar'}{\var'} D_1 \typeproduce \code{D_1'}, \secan{t_1}{w_1}, \regenv_1' \ \ \ \regenv_1' \enta{\regvar'}{\var'} D_2 \typeproduce \code{D_2'}, \secan{t_2}{w_2}, \regenv_2'$$
such that:
\begin{itemize}
\item if $n_i>0$ then $w_i = \wdontk$ and $t_i= \tdontc$;
\item if $n_i=0$ then $w_i = \whigh$ and $t_i= \tconst~(0,0)$;
\item $\regenv_2 \regcorresp \regenv_2'$.
\end{itemize}
In order for completing the case, we have to show that $\regenv_2' \enta{\regvar'}{\var'} \assign{r}{\wop{r}{r_a}} \typeproduce \code{D'}, \secan{\tdontc}{\wdontk}, \regenv_3$ such that $\subst{\regenv_2}{\breakconn{r}} \regcorresp \regenv_3$. In fact, recall that for all $w$ it holds that $w \wlub \wdontk = \wdontk$ and for any $t \sqsubseteq \tdontc$ it holds that $t \tclub \tdontc = \tdontc$.
We distinguish four cases. \\
 Case 1: there exist $\conn{r}{x}{\readMode}, \conn{r_a}{y}{\readMode}  \in \regenv_2$. Since $\regenv_2 \regcorresp \regenv_2'$, then we know $\conn{s}{r}{\writeMode}, \conn{s_a}{r_a}{\writeMode}  \in \regenv_2'$ and the following type derivation exists
$$\inference
{\inference
{s,s_a \in \regvar'  & \seclev(s)= \valLow}
{\regenv_2', \{\} \eenta{\regvar'}{\var'} \wop{r}{r_a} \etypeproduce \code{\assign{s}{\wop{s}{s_a}}}, \esecan{\valLow}{1}, s,\subst{\regenv_2'}{\breakconn{s}}
}}
{\regenv_2' \enta{\regvar'}{\var'}  \assign{r}{\wop{r}{r_a}}  \typeproduce \code{\assign{s}{\wop{s}{s_a}};\assign{r}{s}}, \secan{\tdontc}{\wdontk}, \subst{\regenv_2'}{\conn{s}{r}{\writeMode}}}$$
We have that $\subst{\regenv_2}{\breakconn{r}} \regcorresp \subst{\regenv_2'}{\conn{s}{r}{\writeMode}}$:
\begin{itemize}
\item for any $r'' \not = r$, if $\conn{r''}{z}{\writeMode} \in \subst{\regenv_2}{\breakconn{r}}$ then $\conn{s''}{z}{\writeMode} \in\subst{\regenv_2'}{\conn{s}{r}{\writeMode}}$ because $s'' \not = \compa{r}$;
\item for any $r'' \not = r$, if $\conn{r''}{z}{\readMode} \in \subst{\regenv_2}{\breakconn{r}}$ then $\conn{s''}{r''}{\writeMode} \in  \subst{\regenv_2'}{\conn{s}{r}{\writeMode}}$ because $s'' \not = \compa{r}$;
\item $r$ is unassociated in $\subst{\regenv_2}{\breakconn{r}}$.
\end{itemize}
%%%%%%%%%%%%
Case 2: there exists $\conn{r}{x}{\readMode} \in \regenv_2$ but there is not $\conn{r_a}{y}{\readMode}  \in \regenv_2$. We therefore distinguish two further subcases.\\
Case 2a:  $r_a$ is unassociated in $\regenv_2$. Since $\regenv_2 \regcorresp \regenv_2'$ we know $\conn{s}{r}{\writeMode}  \in \regenv_2'$ and the following type derivation for $s_a = \compa{r_a}$ exists
$$\Scale[0.93]{\inference
{\inference
{s,s_a \in \regvar'  & \seclev(s)= \valLow\\
\regenv_2', \{s\} \eenta{\regvar'}{\var'} r_a \etypeproduce \code{\assign{s_a}{r_a}}, \esecan{\valLow}{1}, s_a,\subst{\regenv_2'}{\conn{s_a}{r_a}{\readMode}}}
{\regenv_2', \{\} \eenta{\regvar'}{\var'} \wop{r}{r_a} \etypeproduce 
\left \{
\begin{array}{l}
\assign{s_a}{r_a};\\
\assign{s}{\wop{s}{s_a}}
\end{array}
\right \}
, \esecan{\valLow}{2}, s,\subst{\subst{\regenv_2'}{\breakconn{s
 }}}{\conn{s_a}{r_a}{\readMode}}
}}
{\regenv_2' \enta{\regvar'}{\var' }  \assign{r}{\wop{r}{r_a}}  \typeproduce 
\left \{
\begin{array}{l}
\assign{s_a}{r};\\
\assign{s}{\wop{s}{s_a}}; \\
\assign{r}{s}
\end{array}
\right \}
, \secan{\tdontc}{\wdontk}, 
\subst{\subst{\regenv_2'}{\conn{s}{r}{\writeMode}}}{\conn{s_a}{r_a}{\readMode}}}}
$$
We have that $\subst{\regenv_2}{\breakconn{r}} \regcorresp \subst{\subst{\regenv_2'}{\conn{s}{r}{\writeMode}}}{\conn{s_a}{r_a}{\readMode}}$:
\begin{itemize}
\item for any $r''$ such that $r'' \not = r$ and $r'' \not = r_a$, if $\conn{r''}{z}{\writeMode} \in \subst{\regenv_2}{\breakconn{r}}$ then $\conn{s''}{z}{\writeMode} \in \subst{\subst{\regenv_2'}{\conn{s}{r}{\writeMode}}}{\conn{s_a}{r_a}{\readMode}}$ because $s'' \not = \compa{r}$ and $s'' \not = \compa{r_a}$;
\item for any $r''$ such that $r'' \not = r$ and $r'' \not = r_a$, if $\conn{r''}{z}{\readMode} \in \subst{\regenv_2}{\breakconn{r}}$ then $\conn{s''}{r''}{\writeMode} \in \subst{\subst{\regenv_2'}{\conn{s}{r}{\writeMode}}}{\conn{s_a}{r_a}{\readMode}}$ because $s'' \not = \compa{r}$ and $s'' \not = \compa{r_a}$;
\item $r$ and $r_a$ are unassociated in $\subst{\regenv_2}{\breakconn{r}}$.
\end{itemize}
%%%%%%%%%%%%%%%
Case 2b:  there exists $\conn{r_a}{y}{\writeMode} \in \regenv_2$. Since $\regenv_2 \regcorresp \regenv_2'$ we know  $\conn{s}{r}{\writeMode}  \in \regenv_2'$ and $\conn{s_a}{y}{\writeMode}  \in \regenv_2'$ and the following type derivation exists for $s_a' = \shado{r_a}$
$$\Scale[0.93]{\inference
{\inference
{s, s_a' \in \regvar'  & \seclev(s)= \valLow \\
\regenv_2', \{s\} \eenta{\regvar'}{\var'} r_a \etypeproduce \code{\assign{s_a'}{r_a}}, \esecan{\valLow}{1}, s_a',\subst{\regenv_2'}{\conn{s_a'}{r_a}{\readMode}}}
{\regenv_2', \{\} \eenta{\regvar'}{\var'} \wop{r}{r_a} \etypeproduce 
\left \{
\begin{array}{l}
\assign{s_a'}{r_a};\\
\assign{s}{\wop{s}{s_a'}}
\end{array}
\right \}
, \esecan{\valLow}{2}, s,\subst{\subst{\regenv_2'}{\breakconn{s
 }}}{\conn{s_a'}{r_a}{\readMode}}
}}
{\regenv_2' \enta{\regvar'}{\var'}  \assign{r}{\wop{r}{r_a}}  \typeproduce 
\left \{
\begin{array}{l}
\assign{s_a'}{r_a};\\
\assign{s}{\wop{s}{s_a'}}; \\
\assign{r}{s}
\end{array}
\right \}
, \secan{\tdontc}{\wdontk}, 
\subst{\subst{\regenv_2'}{\conn{s}{r}{\writeMode}}}{\conn{s_a'}{r_a}{\readMode}}}}
$$
We have that $\subst{\regenv_2}{\breakconn{r}} \regcorresp \subst{\subst{\regenv_2'}{\conn{s}{r}{\writeMode}}}{\conn{s_a'}{r_a}{\readMode}}$:
\begin{itemize}
\item for any $r''$ such that $r'' \not = r$ and $r'' \not = r_a$, if $\conn{r''}{z}{\writeMode} \in \subst{\regenv_2}{\breakconn{r}}$ then $\conn{s''}{z}{\writeMode} \in \subst{\subst{\regenv_2'}{\conn{s}{r}{\writeMode}}}{\conn{s_a'}{r_a}{\readMode}}$ because $s'' \not = \compa{r}$ and $s'' \not = \shado{r_a}$;
\item for any $r''$ such that $r'' \not = r$ and $r'' \not = r_a$, if $\conn{r''}{z}{\readMode} \in \subst{\regenv_2}{\breakconn{r}}$ then $\conn{s''}{r''}{\writeMode} \in \subst{\subst{\regenv_2'}{\conn{s}{r}{\writeMode}}}{\conn{s_a'}{r_a}{\readMode}}$ because $s'' \not = \compa{r}$ and $s'' \not = \shado{r_a}$;
\item $\conn{r_a}{y}{\writeMode} \in \subst{\regenv_2}{\breakconn{r}}$ and $\conn{s_a}{y}{\writeMode}  \in \subst{\subst{\regenv_2'}{\conn{s}{r}{\writeMode}}}{\conn{s_a'}{r_a}{\readMode}}$ because $s_a'= \shado{r_a}$;
\item $r$ is unassociated in $\subst{\regenv_2}{\breakconn{r}}$.
\end{itemize}
%%%%%%%%%%%%%%%%%%%%%%%
Case 3: there exists $\conn{r_a}{y}{\readMode} \in \regenv_2$ but there is not $\conn{r}{x}{\readMode}  \in \regenv_2$. We therefore distinguish two further subcases.\\
Case 3a:  $r$ is unassociated in $\regenv_2$. Since $\regenv_2 \regcorresp \regenv_2'$ we know $\conn{s_a}{r_a}{\writeMode}  \in \regenv_2'$ and the following type derivation for $s = \compa{r}$ exists
$$\inference
{\inference
{s,s_a \in \regvar'  & \seclev(s)= \valLow\\
\regenv_2', \{\} \eenta{\regvar'}{\var'} r \etypeproduce \code{\assign{s}{r}}, \esecan{\valLow}{1}, s,\subst{\regenv_2'}{\conn{s}{r}{\readMode}}}
{\regenv_2', \{\} \eenta{\regvar'}{\var'} \wop{r}{r_a} \etypeproduce \code{\assign{s}{r}; \assign{s}{\wop{s}{s_a}}}, \esecan{\valLow}{2}, s,\subst{\regenv_2'}{\breakconn{s
 }}}
}
{\regenv_2' \enta{\regvar'}{\var'}  \assign{r}{\wop{r}{r_a}}  \typeproduce 
\left \{
\begin{array}{l}
\assign{s}{r};\\
\assign{s}{\wop{s}{s_a}}; \\
\assign{r}{s}
\end{array}
\right \}
, \secan{\tdontc}{\wdontk}, 
\subst{\regenv_2'}{\conn{s}{r}{\writeMode}}}$$
We have that $\subst{\regenv_2}{\breakconn{r}} \regcorresp \subst{\regenv_2'}{\conn{s}{r}{\writeMode}}$:
\begin{itemize}
\item for any $r''$ such that $r'' \not = r$, if $\conn{r''}{z}{\writeMode} \in \subst{\regenv_2}{\breakconn{r}}$ then $\conn{s''}{z}{\writeMode} \in \subst{\regenv_2'}{\conn{s}{r}{\writeMode}}$ because $s'' \not = \compa{r}$;
\item for any $r''$ such that $r'' \not = r$, if $\conn{r''}{z}{\readMode} \in \subst{\regenv_2}{\breakconn{r}}$ then $\conn{s''}{r''}{\writeMode} \in \subst{\regenv_2'}{\conn{s}{r}{\writeMode}}$ because $s'' \not = \compa{r}$;
\item $r$ is unassociated in $\subst{\regenv_2}{\breakconn{r}}$.
\end{itemize}
%%%%%%%%%%%
Case 3b:  there exists $\conn{r}{x}{\writeMode} \in \regenv_2$. Since $\regenv_2 \regcorresp \regenv_2'$ we know $\conn{s_a}{r_a}{\writeMode}  \in \regenv_2'$ and $\conn{s}{x}{\writeMode}  \in \regenv_2'$ so the following type derivation exists for $s'= \shado{r}$
$$\inference
{\inference
{s',s_a \in \regvar'  & \seclev(s')= \valLow\\
\regenv_2', \{\} \eenta{\regvar'}{\var'} r \etypeproduce \code{\assign{s'}{r}}, \esecan{\valLow}{1}, s',\subst{\regenv_2'}{\conn{s'}{r}{\readMode}}}
{\regenv_2', \{\} \eenta{\regvar'}{\var'} \wop{r}{r_a} \etypeproduce \code{\assign{s'}{r}; \assign{s'}{\wop{s'}{s_a}}}, \esecan{\valLow}{2}, s',\subst{\regenv_2'}{\breakconn{s
 }}
}}
{\regenv_2' \enta{\regvar'}{\var' }  \assign{r}{\wop{r}{r_a}}  \typeproduce 
\left \{
\begin{array}{l}
\assign{s'}{r};\\
\assign{s'}{\wop{s'}{s_a}}; \\
\assign{r}{s'}
\end{array}
\right \}
, \secan{\tdontc}{\wdontk}, 
\subst{\regenv_2'}{\conn{s'}{r}{\writeMode}}}$$
We have that $\subst{\regenv_2}{\breakconn{r}} \regcorresp \subst{\regenv_2'}{\conn{s'}{r}{\writeMode}}$:
\begin{itemize}
\item for any $r''$ such that $r'' \not = r$, if $\conn{r''}{z}{\writeMode} \in \subst{\regenv_2}{\breakconn{r}}$ then $\conn{s''}{z}{\writeMode} \in \subst{\regenv_2'}{\conn{s'}{r}{\writeMode}}$ because $s'' \not = \shado{r}$;
\item for any $r''$ such that $r'' \not = r$, if $\conn{r''}{z}{\readMode} \in \subst{\regenv_2}{\breakconn{r}}$ then $\conn{s''}{r''}{\writeMode} \in \subst{\regenv_2'}{\conn{s'}{r}{\writeMode}}$ because $s'' \not = \shado{r}$;
\item $r$ is unassociated in $\subst{\regenv_2}{\breakconn{r}}$.
\end{itemize}
%%%%%%%%%%%%%
Case 4: neither $\conn{r}{x}{\readMode}$ nor $\conn{r_a}{y}{\readMode}$ are in $\regenv_2$. This case is provable by considering the combined cases 2 and 3.\\
 
 \end{proof}

We can now present the details for the recompilation of \swhprog\ programs. In order to relate the type annotations produced in the two compilations, we consider a relation $\nextTerm \subseteq \termlattice \times \termlattice $ such that $\ttop \nextTerm \ttop$, $\tdontc \nextTerm \tdontc$ and $\tconst~(m,n) \nextTerm \tconst~(n,p)$. 

\begin{proposition}[Type prediction]\label{thm:typeprediction}
Let $C$ be a \whprog\ program. If $\regenv \enta{\regvar}{\var}  C \typeproduce \code{D}, \secan{t}{w}, \regenv'$, then for any $\regenv_1$ such that $\regenv \regcorresp \regenv_1$ there exists a derivation $\regenv_1 \enta{\regvar'}{\var'} D \typeproduce D',\secan{t'}{w},\regenv_1'$ such that $t \nextTerm{} t'$ and $\regenv' \regcorresp \regenv_1'$. 
\end{proposition}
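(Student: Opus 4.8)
The plan is to prove Proposition~\ref{thm:typeprediction} by induction on the structure of $C$, with a case analysis on the last rule applied in the derivation $\regenv \enta{\regvar}{\var} C \typeproduce \code{D}, \secan{t}{w}, \regenv'$; the retyping itself is just the same type system read over the extended alphabet $(\var',\regvar')$, i.e.\ the judgement $\enta{\regvar'}{\var'}$. The engine of the argument is Proposition~\ref{ref:exprrety}: every time a subexpression $E$ is compiled to $\swhprog$ code, that result already supplies a retyping of the code with write effect matching the level of $E$, a timing of the shifted shape $\tconst~(n,p)$ (or $\tdontc$), and an output register record still in correspondence. The command-level induction then only has to recompile the trailing instructions and the control structure and glue the expression facts together. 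Before the induction I would record a small compatibility lemma for the shift relation $\nextTerm$: from $t_1 \nextTerm t_1'$ and $t_2 \nextTerm t_2'$ it follows that $t_1 \tclub t_2 \nextTerm t_1' \tclub t_2'$ and $t_1 \tlub t_2 \nextTerm t_1' \tlub t_2'$, while $\termMap(\lambda) \nextTerm \termMap(\lambda)$ holds trivially since $\termMap(\lambda) \in \Set{\tdontc,\ttop}$. The interesting case is $\tclub$: when $t_i = \tconst~(m_i,n_i)$ we have $t_i' = \tconst~(n_i,p_i)$, and both sides add their components coordinatewise, so the second coordinate of $t_1 \tclub t_2$ equals the first coordinate of $t_1' \tclub t_2'$; if an operand leaves the $\tconst$ regime the operator degenerates to $\tlub$, which respects $\nextTerm$ because the relation pairs $\tdontc$ with $\tdontc$ and $\ttop$ with $\ttop$ and is monotone with respect to lying below $\tdontc$. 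This lemma is exactly what makes the timing shift compose through sequencing.

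For the base cases I would treat the atomic commands. The $\wraprule{\rulename{\wskip}}$ rule gives $\tconst~(1,1) \nextTerm \tconst~(1,1)$ and leaves the record unchanged, so it is immediate. For $\assign{x}{E}$ (and dually $\out{ch}{E}$) I would apply Proposition~\ref{ref:exprrety} to the compiled guard code, obtaining a retyping with the predicted write effect and shifted timing together with a record corresponding to the expression's output record, and then recompile the trailing $\assign{x}{r}$ (resp.\ $\out{ch}{r}$), in which $r$ is now a pure variable of $\var'$ located through the correspondence. Here one checks that the overall write effect is exactly $w$ (fixed by $\seclev(x)$, resp.\ $\seclev(ch)$), that sequencing produces a timing of the form $\tconst~(n+1,p')$ (or $\tdontc$) as required, and that the closing record update keeps the records in correspondence --- the latter reusing the same register-juggling case analysis (on whether the result register is cached via $\compa{\cdot}$ or shadowed via $\shado{\cdot}$) that already appears in the proof of Proposition~\ref{ref:exprrety}.

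For the inductive cases I would use the induction hypothesis on subcommands and Proposition~\ref{ref:exprrety} on guards. Sequencing is routine: the hypotheses for $C_1,C_2$ give shifted timings and corresponding records that combine by the compatibility lemma for $\tclub$ and $\wlub$, with the output correspondence inherited from the second hypothesis. The rules $\wraprule{\rulename{\wifName}-any}$ and $\wraprule{\rulename{\whileName}}$ yield timings in $\Set{\tdontc,\ttop}$ (since $\termMap(\lambda)$ dominates under $\tlub$), so there the only work is to confirm the recompiled code types under the same $\lambda$ and write effect, with $\envinters$ and the loop-invariant inclusions (handled through Lemma~\ref{lemma:comweak}) keeping the records corresponding.

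The hard part will be $\wraprule{\rulename{\wifName}-\valHigh}$, where $t$ is an exact $\tconst$ value: I must show the recompilation again lands in the high-conditional rule and produces first coordinate $n_g + max(n_t,n_e) + 2$, which is precisely the second coordinate of the original timing. The crux is that after recompiling each branch --- the hypothesis turns $\tconst~(m,n_t)$ into $\tconst~(n_t,p_t)$ and $\tconst~(m,n_e)$ into $\tconst~(n_e,p_e)$ --- the padding $\wskip^{n_e-n_t}$ and $\wskip^{n_t-n_e}$, each $\wskip$ costing one step in both source and target, realigns the branches to the common source-step count $max(n_t,n_e)+1$, so the rule is applicable again; sequencing with the recompiled guard of source-step count $n_g$ then gives exactly the predicted first coordinate. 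Carrying the correspondence through the branch intersection $\envinters$ and the freshly introduced $\compa{\cdot}$ and $\shado{\cdot}$ registers is the most bookkeeping-intensive part, but it follows the pattern already established for expressions.
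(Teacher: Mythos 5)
Your proposal is correct and follows essentially the same route as the paper's own proof: structural induction on $C$ with case analysis on the last typing rule, Proposition~\ref{ref:exprrety} as the engine for all expression code, the cached/unassociated/write-associated case split via $\compa{\cdot}$ and $\shado{\cdot}$ when retyping the closing instructions, and the same realignment argument for $\wraprule{\rulename{\wifName}-\valHigh}$ showing the padded branches retype to a common source-step count so that the recompiled conditional gets first coordinate $n_g+\max(n_t,n_e)+2$, i.e.\ exactly the second coordinate of the original timing. The only differences are presentational: you factor the interaction of $\nextTerm$ with $\tclub$ and $\tlub$ into an explicit compatibility lemma where the paper verifies these facts inline, and in the $\whileName$ case the paper establishes the loop-invariant inclusions by choosing the smallest record corresponding to $\regenv_B$ (with domain $\compa{\Dom(\regenv_B)}$) rather than appealing to Lemma~\ref{lemma:comweak}; both are sound.
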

\begin{proof}

We prove the proposition by induction on the structure of $C$ and by cases on the last rule applied in the type derivation.

\framebox[1.1\width]{Base case} 

Case $C=\wskip$. Assume 
$$\inference
{ - }
{\regenv \enta{\regvar}{\var}  \wskip \typeproduce \code{\wskip}, \secan{\tconst~(1,1)}{\whigh}, \regenv}$$
and let $\regenv_1$ be a register record such that $\regenv \regcorresp \regenv_1$. Then the following derivation holds
$$\inference
{ - }
{\regenv_1 \enta{\regvar'}{\var'}  \wskip \typeproduce \code{\wskip}, \secan{\tconst~(1,1)}{\whigh}, \regenv_1}$$

Case $C= \assign{x}{E}$. Assume 
$$\inference
{x \in \var &r\in \regvar & \seclev(r)=\valHigh & \regenv, \emptyRegAll \eenta{\regvar}{\var} E \etypeproduce \code{D}, \esecan{\valHigh}{n}, r, \regenv' }
{
\regenv \enta{\regvar}{\var} \assign{x}{E} \typeproduce \code{D;\assign{x}{r}}, 
\secan{\tconst~(1,n+1)}{\whigh}, 
\subst{\regenv'}{\conn{r}{x}{\writeMode}}}$$
and let $\regenv_1$ be a register record such that $\regenv \regcorresp \regenv_1$. By applying Proposition \ref{ref:exprrety} on $E$ we have that $\regenv_1 \enta{\regvar'}{\var'} D \typeproduce \code{D'}, 
\secan{\tconst~(n,p)}{\whigh},\regenv_1'$ such that $\regenv' \regcorresp \regenv_1'$.We now distinguish two cases.\\
Case 1: there exists $\conn{r}{z}{\readMode} \in \regenv'$. Then since  $\regenv' \regcorresp \regenv_1'$ there exists $\conn{s}{r}{\writeMode} \in \regenv_1'$ and the following derivation is correct
$$\inference
{\regenv_1 \enta{\regvar'}{\var'} D \typeproduce \code{D'}, 
\secan{\tconst~(n,p)}{\whigh},\regenv_1' \\
\regenv_1' \enta{\regvar'}{\var'} \assign{x}{r} \typeproduce \code{\assign{x}{s}},\secan{\tconst~(1,1)}{\whigh}, \subst{\regenv_1'}{\conn{s}{x}{\writeMode}}}
{
\regenv_1 \enta{\regvar'}{\var'} D;\assign{x}{r} \typeproduce \code{D';\assign{x}{s}}, 
\secan{\tconst~(n+1,p+1)}{\whigh}, 
\subst{\regenv_1'}{\conn{s}{x}{\writeMode}}}$$
and $\subst{\regenv'}{\conn{r}{x}{\writeMode}} \regcorresp \subst{\regenv_1'}{\conn{s}{x}{\writeMode}}$.\\
%%%
Case 2: There is not $\conn{r}{z}{\readMode} \in \regenv'$. We distinguish two further cases.\\
Case 2a: $r$ is unassociated in $\regenv'$. Then
$$\inference
{\regenv_1 \enta{\regvar'}{\var'} D \typeproduce \code{D'}, 
\secan{\tconst~(n,p)}{\whigh},\regenv_1' \\
(ii)}
{
\regenv_1 \enta{\regvar'}{\var'} D;\assign{x}{r} \typeproduce 
\left \{
\begin{array}{l}
D'\\
\assign{s}{r};\\
\assign{x}{s}
\end{array}
\right \}, 
\secan{\tconst~(n+1,p+2)}{\whigh}, 
\subst{\regenv_1'}{\conn{s}{x}{\writeMode}}}$$
where 
$$
(ii)= \inference
{
s \in \regvar' & s = \compa{r} \\
\regenv_1',\{ \} \eenta{\regvar'}{\var'} r \typeproduce \code{\assign{s}{r}},\esecan{1}{\whigh}, s, \subst{\regenv_1'}{\conn{s}{r}{\readMode}}}
{
\regenv_1' \enta{\regvar'}{\var'} \assign{x}{r} \typeproduce \code{\assign{s}{r};\assign{x}{s}}, 
\secan{\tconst~(1,2)}{\whigh}, 
\subst{\regenv_1'}{\conn{s}{x}{\writeMode}}}$$
and $\subst{\regenv'}{\conn{r}{x}{\writeMode}} \regcorresp \subst{\regenv_1'}{\conn{s}{x}{\writeMode}}$.\\
%%%%%%%%
Case 2b: $\conn{r}{z}{\writeMode} \in \regenv'$. Then since  $\regenv' \regcorresp \regenv_1'$ there exists $\conn{s}{z}{\writeMode} \in \regenv_1'$. Then the following derivation is correct
$$\inference
{\regenv_1 \enta{\regvar'}{\var'} D \typeproduce \code{D'}, 
\secan{\tconst~(n,p)}{\whigh},\regenv_1' \\
(ii)}
{
\regenv_1 \enta{\regvar'}{\var'} D;\assign{x}{r} \typeproduce 
\left \{
\begin{array}{l}
D';\\
\assign{s}{r};\\
 \assign{x}{s}
\end{array}
\right \} 
\secan{\tconst~(n+1,p+2)}{\whigh}, \subst{\regenv_1'}{\conn{s}{x}{\writeMode}}
}$$
where 
$$
(ii)= \inference
{s \in \regvar' & s = \compa{r}\\
\regenv_1',\{ \} \eenta{\regvar'}{\var'} r \typeproduce \code{\assign{s}{r}},\esecan{1}{\whigh}, s, \subst{\regenv_1'}{\conn{s}{r}{\readMode}}}
{
\regenv_1' \enta{\regvar'}{\var'} \assign{x}{r} \typeproduce \code{\assign{s}{r};\assign{x}{s}}, 
\secan{\tconst~(1,2)}{\whigh}, \subst{\regenv_1'}{\conn{s}{x}{\writeMode}}}$$
and $\subst{\regenv'}{\conn{r}{x}{\writeMode}} \regcorresp \subst{\regenv_1'}{\conn{s}{x}{\writeMode}}$.\\
%%LOW ASSIGNMENT
Assume 
$$\inference
{x \in \var &r\in \regvar & \seclev(r) = \valLow & \regenv, \emptyRegAll \eenta{\regvar}{\var} E \etypeproduce \code{D}, \esecan{\valLow}{n}, r, \regenv' }
{
\regenv \enta{\regvar}{\var} \assign{x}{E} \typeproduce \code{D;\assign{x}{r}}, 
\secan{\tdontc}{\wdontk}, 
\subst{\regenv'}{\conn{r}{x}{\writeMode}}}$$ instead,
and let $\regenv_1$ be a register record such that $\regenv \regcorresp \regenv_1$. By applying Proposition \ref{ref:exprrety} on $E$ we have that $\regenv_1 \enta{\regvar'}{\var'} D \typeproduce \code{D'}, 
\secan{t}{w},\regenv_1'$ such that:
\begin{itemize}
\item if $n>0$ then $w = \wdontk$ and $t= \tdontc$;
\item if $n=0$ then $w = \whigh$ and $t= \tconst~(0,0)$;
\item $\regenv_2 \regcorresp \regenv_2'$.
\end{itemize}
We now distinguish two cases.\\
Case 1: there exists $\conn{r}{z}{\readMode} \in \regenv'$. Then since  $\regenv' \regcorresp \regenv_1'$ there exists $\conn{s}{r}{\writeMode} \in \regenv_1'$ and the following derivation is correct
$$\inference
{\regenv_1 \enta{\regvar'}{\var'} D \typeproduce \code{D'}, 
\secan{t}{w},\regenv_1' \\
\regenv_1' \enta{\regvar'}{\var'} \assign{x}{r} \typeproduce \code{\assign{x}{s}},\secan{\tdontc}{\wdontk}, \subst{\regenv_1'}{\conn{s}{x}{\writeMode}}}
{
\regenv_1 \enta{\regvar'}{\var'} D;\assign{x}{r} \typeproduce \code{D';\assign{x}{s}}, 
\secan{\tdontc}{\wdontk}, 
\subst{\regenv_1'}{\conn{s}{x}{\writeMode}}}$$
and $\subst{\regenv'}{\conn{r}{x}{\writeMode}} \regcorresp \subst{\regenv_1'}{\conn{s}{x}{\writeMode}}$.\\
%%%
Case 2: There is not $\conn{r}{z}{\readMode} \in \regenv'$. We distinguish two further cases.\\
Case 2a: $r$ is unassociated in $\regenv'$. Then
$$\inference
{\regenv_1 \enta{\regvar'}{\var'} D \typeproduce \code{D'}, 
\secan{t}{w},\regenv_1' \\
(ii)}
{
\regenv_1 \enta{\regvar'}{\var'} D;\assign{x}{r} \typeproduce \code{D';\assign{s}{r};\assign{x}{s}}, 
\secan{\tdontc}{\wdontk}, 
\subst{\regenv_1'}{\conn{s}{x}{\writeMode}}}$$
where 
$$
(ii)= \inference
{
s \in \regvar' & s = \compa{r} \\
\regenv_1',\{ \} \eenta{\regvar'}{\var'} r \typeproduce \code{\assign{s}{r}},\esecan{1}{\valLow},s, \subst{\regenv_1'}{\conn{s}{r}{\readMode}}}
{
\regenv_1' \enta{\regvar'}{\var'} \assign{x}{r} \typeproduce \code{\assign{s}{r};\assign{x}{s}}, 
\secan{\tdontc}{\wdontk}, 
\subst{\regenv_1'}{\conn{s}{x}{\writeMode}}}$$
and $\subst{\regenv'}{\conn{r}{x}{\writeMode}} \regcorresp \subst{\regenv_1'}{\conn{s}{x}{\writeMode}}$.\\
%%%%%%%%
Case 2b: $\conn{r}{z}{\writeMode} \in \regenv'$. Then since  $\regenv' \regcorresp \regenv_1'$ there exists $\conn{s}{z}{\writeMode} \in \regenv_1'$. Then the following derivation is correct
$$\inference
{
\regenv_1 \enta{\regvar'}{\var'} D \typeproduce \code{D'}, 
\secan{t}{w},\regenv_1' \\
(ii)}
{
\regenv_1 \enta{\regvar'}{\var'} D;\assign{x}{r} \typeproduce \code{D';\assign{s}{r};\assign{x}{s}}, 
\secan{\tdontc}{\wdontk}, 
\subst{\regenv_1'}{\conn{s}{x}{\writeMode}}}$$
where 
$$
(ii)= \inference
{
s \in \regvar' & s = \compa{r}\\
\regenv_1',\{ \} \eenta{\regvar'}{\var'} r \typeproduce \code{\assign{s}{r}},\esecan{1}{\valLow}, s, \subst{\regenv_1'}{\conn{s}{r}{\readMode}}}
{
\regenv_1' \enta{\regvar'}{\var'} \assign{x}{r} \typeproduce \code{\assign{s}{r};\assign{x}{s}}, 
\secan{\tdontc}{\wdontk}, 
\subst{\regenv_1'}{\conn{s}{x}{\writeMode}}}
$$
and $\subst{\regenv'}{\conn{r}{x}{\writeMode}} \regcorresp \subst{\regenv_1'}{\conn{s}{x}{\writeMode}}$.

Case $C=  \out{ch}{E}$. 
Assume 
$$\inference
{r\in \regvar & \regenv, \emptyRegAll \eenta{\regvar}{\var} E \etypeproduce \code{D}, \esecan{\valHigh}{n}, r, \regenv' }
{
\regenv \enta{\regvar}{\var} \out{ch}{E} \typeproduce \code{D;\out{ch}{r}}, 
\secan{\tconst~(1,n+1)}{\whigh}, 
\regenv'}$$
and let $\regenv_1$ be a register record such that $\regenv \regcorresp \regenv_1$. By applying Proposition \ref{ref:exprrety} on $E$ we have that $\regenv_1 \enta{\regvar'}{\var'} D \typeproduce \code{D'}, 
\secan{\tconst~(n,p)}{\whigh},\regenv_1'$ such that $\regenv' \regcorresp \regenv_1'$.We now distinguish two cases.\\
Case 1: there exists $\conn{r}{z}{\readMode} \in \regenv'$. Then since  $\regenv' \regcorresp \regenv_1'$ there exists $\conn{s}{r}{\writeMode} \in \regenv_1'$ and the following derivation is correct
$$\inference
{\regenv_1 \enta{\regvar'}{\var'} D \typeproduce \code{D'}, 
\secan{\tconst~(n,p)}{\whigh},\regenv_1' \\
\regenv_1' \enta{\regvar'}{\var'} \out{ch}{r} \typeproduce \code{\out{ch}{s}},\secan{\tconst~(1,1)}{\whigh}, \regenv_1'}
{
\regenv_1 \enta{\regvar'}{\var'} D;\out{x}{r} \typeproduce \code{D';\out{ch}{s}}, 
\secan{\tconst~(n+1,p+1)}{\whigh}, \regenv_1'}$$
%%%
Case 2: There is not $\conn{r}{z}{\readMode} \in \regenv'$. We distinguish two further cases.\\
Case 2a: $r$ is unassociated in $\regenv'$. Then
$$\inference
{\regenv_1 \enta{\regvar'}{\var'} D \typeproduce \code{D'}, 
\secan{\tconst~(n,p)}{\whigh},\regenv_1' \\
(ii)}
{
\regenv_1 \enta{\regvar'}{\var'} D;\out{ch}{r} \typeproduce 
\left \{
\begin{array}{c}
D';\\
\assign{s}{r};\\
\out{ch}{s}
\end{array}
\right \}
, 
\secan{\tconst~(n+1,p+2)}{\whigh}, 
\subst{\regenv_1'}{\conn{s}{r}{\readMode}}}$$
where 
$$
(ii)= \inference
{
s \in \regvar' & s = \compa{r} \\
\regenv_1',\{ \} \eenta{\regvar'}{\var'} r \typeproduce \code{\assign{s}{r}},\esecan{1}{\whigh}, s, \subst{\regenv_1'}{\conn{s}{r}{\readMode}}}
{
\regenv_1' \enta{\regvar'}{\var'} \out{ch}{r} \typeproduce \code{\assign{s}{r};\out{ch}{s}}, 
\secan{\tconst~(1,2)}{\whigh}, 
\subst{\regenv_1'}{\conn{s}{r}{\readMode}}}$$
and $\regenv' \regcorresp \subst{\regenv_1'}{\conn{s}{r}{\readMode}}$ because $r$ is unassociated in $\regenv'$.\\
%%%%%%%%
Case 2b: $\conn{r}{z}{\writeMode} \in \regenv'$. Then since  $\regenv' \regcorresp \regenv_1'$ there exists $\conn{s}{z}{\writeMode} \in \regenv_1'$. Consider $s' = \shado{r}$, then the following derivation is correct
$$\inference
{\regenv_1 \enta{\regvar'}{\var'} D \typeproduce \code{D'}, 
\secan{\tconst~(n,p)}{\whigh},\regenv_1' \\
(ii)}
{
\regenv_1 \enta{\regvar'}{\var'} D;\assign{x}{r} \typeproduce 
\left \{
\begin{array}{c}
D';\\
\assign{s'}{r};\\
\out{ch}{s'}
\end{array}
\right \}
, 
\secan{\tconst~(n+1,p+2)}{\whigh}, 
\subst{\regenv_1'}{\conn{s'}{r}{\readMode}}}$$
where 
$$
(ii)= \inference
{
s' \in \regvar' & s'= \shado{r} \\
\regenv_1',\{ \} \eenta{\regvar'}{\var'} r \typeproduce \code{\assign{s'}{r}},\esecan{1}{\whigh}, s', \subst{\regenv_1'}{\conn{s'}{r}{\readMode}}}
{
\regenv_1' \enta{\regvar'}{\var'} \assign{x}{r} \typeproduce \code{\assign{s'}{r};\out{ch}{s'}}, 
\secan{\tconst~(1,2)}{\whigh}, 
\subst{\regenv_1'}{\conn{s'}{r}{\readMode}}}
$$
and $\regenv' \regcorresp \subst{\regenv_1'}{\conn{s'}{r}{\readMode}}$.\\
%%LOW OUT
Assume 
$$\inference
{r\in \regvar & \regenv, \emptyRegAll \eenta{\regvar}{\var} E \etypeproduce \code{D}, \esecan{\valLow}{n}, r, \regenv' }
{
\regenv \enta{\regvar}{\var} \out{ch}{E} \typeproduce \code{D;\out{ch}{r}}, 
\secan{\tdontc}{\wdontk}, 
\regenv'}$$ instead,
and let $\regenv_1$ be a register record such that $\regenv \regcorresp \regenv_1$. By applying Proposition \ref{ref:exprrety} on $E$ we have that $\regenv_1 \enta{\regvar'}{\var'} D \typeproduce \code{D'}, 
\secan{t}{w},\regenv_1'$ such that:
\begin{itemize}
\item if $n>0$ then $w = \wdontk$ and $t= \tdontc$;
\item if $n=0$ then $w = \whigh$ and $t= \tconst~(0,0)$;
\item $\regenv_2 \regcorresp \regenv_2'$.
\end{itemize}
We now distinguish two cases.\\
Case 1: there exists $\conn{r}{z}{\readMode} \in \regenv'$. Then since  $\regenv' \regcorresp \regenv_1'$ there exists $\conn{s}{r}{\writeMode} \in \regenv_1'$ and the following derivation is correct
$$\inference
{\regenv_1 \enta{\regvar'}{\var'} D \typeproduce \code{D'}, 
\secan{t}{w},\regenv_1' \\
\regenv_1' \enta{\regvar'}{\var'} \out{ch}{r} \typeproduce \code{\out{ch}{s}},\secan{\tdontc}{\wdontk}, \regenv_1'}
{
\regenv_1 \enta{\regvar'}{\var'} D;\out{ch}{r} \typeproduce \code{D';\out{ch}{s}}, 
\secan{\tdontc}{\wdontk}, 
\regenv_1'}$$
and $\regenv' \regcorresp \regenv_1'$.\\
%%%
Case 2: There is not $\conn{r}{z}{\readMode} \in \regenv'$. We distinguish two further cases.\\
Case 2a: $r$ is unassociated in $ \regenv'$. Then
$$\inference
{\regenv_1 \enta{\regvar'}{\var'} D \typeproduce \code{D'}, 
\secan{t}{w},\regenv_1' \\
(ii)}
{
\regenv_1 \enta{\regvar'}{\var'} D;\out{x}{r} \typeproduce \code{D';\assign{s}{r};\out{x}{s}}, 
\secan{\tdontc}{\wdontk}, 
\subst{\regenv_1'}{\conn{s}{r}{\readMode}}}$$
where 
$$
(ii)= \inference
{
s \in \regvar' & s  = \compa{r} \\
\regenv_1',\{ \} \eenta{\regvar'}{\var'} r \typeproduce \code{\assign{s}{r}},\esecan{1}{\valLow},s, \subst{\regenv_1'}{\conn{s}{r}{\readMode}}}
{
\regenv_1' \enta{\regvar'}{\var'} \out{ch}{r} \typeproduce \code{\assign{s}{r};\out{ch}{s}}, 
\secan{\tdontc}{\wdontk}, 
\subst{\regenv_1'}{\conn{s}{r}{\readMode}}}$$
and $\regenv' \regcorresp \subst{\regenv_1'}{\conn{s}{r}{\readMode}}$ because $r$ is unassociated in $\regenv'$.\\
%%%%%%%%
Case 2b: $\conn{r}{z}{\writeMode} \in \regenv'$. Then since  $\regenv' \regcorresp \regenv_1'$ there exists $\conn{s}{z}{\writeMode} \in \regenv_1'$. Consider $s'= \shado{r}$, then the following derivation is correct
$$\inference
{\regenv_1 \enta{\regvar'}{\var'} D \typeproduce \code{D'}, 
\secan{t}{w},\regenv_1' \\
(ii)}
{
\regenv_1 \enta{\regvar'}{\var'} D;\out{ch}{r} \typeproduce \code{D';\assign{s'}{r};\out{ch}{s'}}, 
\secan{\tdontc}{\wdontk}, 
\subst{\regenv_1'}{\conn{s'}{r}{\readMode}}}$$
where 
$$
(ii)= \inference
{
s' \in \regvar' &s' = \shado{r} \\
\regenv_1',\{ \} \eenta{\regvar'}{\var'} r \typeproduce \code{\assign{s'}{r}},\esecan{1}{\valLow}, s', \subst{\regenv_1'}{\conn{s'}{r}{\readMode}}}
{
\regenv_1' \enta{\regvar'}{\var'} \out{ch}{r} \typeproduce \code{\assign{s'}{r};\out{ch}{s'}}, 
\secan{\tdontc}{\wdontk}, 
\subst{\regenv_1'}{\conn{s'}{r}{\readMode}}}
$$
and $\regenv' \regcorresp \subst{\regenv_1'}{\conn{s'}{r}{\readMode}}$.

%%%%

\framebox[1.1\width]{Inductive Step} 

Case $C=C_1;C_2$. Assume
$$\inference
{\regenv \enta{\regvar}{\var}  C_1 \typeproduce \code{D_1}, \secan{t_1}{w_1}, \regenv_\alpha \\
\regenv_\alpha \enta{\regvar}{\var}  C_2 \typeproduce \code{D_2}, \secan{t_2}{w_2}, \regenv_\beta &
}
{\regenv \enta{\regvar}{\var}  C_1;C_2 \typeproduce \code{D_1;D_2}, \secan{t_1 \tclub t_2}{w_1 \wlub w_2},\regenv_\beta}\\
$$
and let $\regenv_1$ be a register record such that $\regenv \regcorresp \regenv_1$. Then the following type derivation exists
$$\inference
{\regenv_1 \enta{\regvar'}{\var'}  D_1 \typeproduce \code{D_1'}, \secan{t_1'}{w_1}, \regenv_\alpha' \\
\regenv_\alpha' \enta{\regvar'}{\var'}  D_2 \typeproduce \code{D_2'}, \secan{t_2'}{w_2}, \regenv_\beta' &
}
{\regenv_1 \enta{\regvar'}{\var'}  D_1;D_2 \typeproduce \code{D;D'}, \secan{t_1 \tclub t_2}{w_1 \wlub w_2},\regenv_\beta'}\\
$$
and $\regenv_\beta \regcorresp \regenv_\beta'$ by applying the inductive hypothesis on derivations for $D_1$ and $D_2$.

Case $C=\wif{E}{C_t}{C_e}$. Assume 
$$
\Scale[0.94]{\inference
{r \in \regvar \\
\regenv, \emptyRegAll \eenta{\regvar}{\var} E \etypeproduce \code{D_g}, \esecan{\valHigh}{n_g} , r, \regenv_\alpha\\ 
\regenv_\alpha \enta{\regvar}{\var}  C_t \typeproduce \code{D_t}, \secan{t_1}{\whigh}, \regenv_\beta &
\regenv_\alpha \enta{\regvar}{\var}  C_e \typeproduce \code{D_e}, \secan{t_2}{\whigh}, \regenv_\gamma  
}
{\typeconce{\regenv}{\wif{E}{C_t}{C_e}}{
\left \{ 
\begin{array}{l}
D_g; \\
\wifa{r}{D_t;\wskip}{D_e;\wskip}
\end{array} 
\right \}
} 
{\secan{\ttop}{\whigh}
}
{\regenv_\beta \envinters \regenv_\gamma}{\enta{\regvar}{\var}}}\\}
$$
and let $\regenv_1$ be a register record such that $\regenv \regcorresp \regenv_1$. By applying the Proposition \ref{ref:exprrety} on $E$ we have that $\regenv_1 \enta{\regvar'}{\var'}  D_g \typeproduce \code{D_g'}, \secan{\tconst~(n_g,p_g)}{\whigh}, \regenv_\alpha'$ such that $\regenv_\alpha \regcorresp \regenv_\alpha'$. We distinguish two cases.\\
Case 1: there exists $\conn{r}{z}{\readMode} \in \regenv_\alpha$. Then since  $\regenv_\alpha \regcorresp \regenv_\alpha'$ there exists $\conn{s}{r}{\writeMode} \in \regenv_\alpha'$ and the following derivation is correct
$$
\inference
{ 
\regenv_1 \enta{\regvar'}{\var'}  D_g \typeproduce \code{D_g'}, \secan{\tconst~(n_g,p_g)}{\whigh}, \regenv_\alpha' \\
s \in \regvar' & s = \compa{r} \\
\regenv_\alpha' \enta{\regvar'}{\var'}  D_t \typeproduce \code{R_t}, \secan{t_1'}{\whigh}, \regenv_\beta' &
\regenv_\alpha' \enta{\regvar'}{\var'}  D_e \typeproduce \code{R_e}, \secan{t_2'}{\whigh}, \regenv_\gamma' \\
R_t'=R_t;\wskip;\wskip & R_e'=R_e;\wskip;\wskip 
}
{\typeconce{\regenv'}{
\left \{ 
\begin{array}{l}
D_g;\\
\wifa{r}{D_t;\wskip}{D_e;\wskip}
\end{array} 
\right \}
}{
\left \{ 
\begin{array}{l}
D_g'; \\
\wifa{s}{R_t'}{R_e'}
\end{array} 
\right \}
} 
{
\secan{\ttop}{\whigh}}
{\regenv_\beta' \envinters \regenv_\gamma'}{\enta{\regvar'}{\var'}}}\\
$$
and $\regenv_\beta \envinters \regenv_\gamma \regcorresp \regenv_\beta' \envinters \regenv_\gamma'$ because the correspondence between records is preserved by record intersection.\\
Case 2: There is not $\conn{r}{z}{\readMode} \in \regenv_\alpha$. We distinguish two further cases.\\
Case 2a: $r$ is unassociated in $ \regenv_\alpha$.
Then the following derivation is correct
$$
\inference
{ 
\regenv_1 \enta{\regvar'}{\var'}  D_g \typeproduce \code{D_g'}, \secan{\tconst~(n_g,p_g)}{\whigh}, \regenv_\alpha' \\
s \in \regvar' & s = \compa{r}\\
D_a=\assign{s}{r}\\
\subst{\regenv_\alpha'}{\conn{s}{r}{\readMode}} \enta{\regvar'}{\var'}  D_t \typeproduce \code{R_t}, \secan{t_1'}{\whigh}, \regenv_\beta' \\
\subst{\regenv_\alpha'}{\conn{s}{r}{\readMode}} \enta{\regvar'}{\var'} D_e \typeproduce \code{R_e}, \secan{t_2'}{\whigh}, \regenv_\gamma' \\
R_t'=R_t;\wskip;\wskip & R_e'=R_e;\wskip;\wskip 
}
{\typeconce{\regenv'}{
\left \{ 
\begin{array}{l}
D_g;\\
\wifa{r}{D_t;\wskip}{D_e;\wskip}
\end{array} 
\right \}
}{
\left \{ 
\begin{array}{l}
D_g'; \\
D_a; \\
\wifa{s}{R_t'}{R_e'}
\end{array} 
\right \}
} 
{
\secan{\ttop}{\whigh}}
{\regenv_\beta' \envinters \regenv_\gamma'}{\enta{\regvar'}{\var'}}}\\
$$
In particular notice that $\regenv_\alpha \regcorresp \subst{\regenv_\alpha'}{\conn{s}{r}{\readMode}}$, since $r$ is unassociated in $\regenv_\alpha$ and $\regenv_\beta \envinters \regenv_\gamma \regcorresp \regenv_\beta' \envinters \regenv_\gamma'$ because the correspondence between records is preserved by record intersection.\\
Case 2b: $\conn{r}{z}{\writeMode} \in \regenv_\alpha$.
Then the following derivation is correct
$$
\inference
{ 
\regenv_1 \enta{\regvar'}{\var'}  D_g \typeproduce \code{D_g'}, \secan{\tconst~(n_g,p_g)}{\whigh}, \regenv_\alpha' \\
s' \in \regvar' & s'=\shado{r} \\
D_a=\assign{s'}{r} \\
\subst{\regenv_\alpha'}{\conn{s'}{r}{\readMode}} \enta{\regvar'}{\var'}  D_t \typeproduce \code{R_t}, \secan{t_1'}{\whigh}, \regenv_\beta' \\
\subst{\regenv_\alpha'}{\conn{s'}{r}{\readMode}} \enta{\regvar'}{\var'}  D_e \typeproduce \code{R_e}, \secan{t_2'}{\whigh}, \regenv_\gamma' \\
R_t'=R_t;\wskip;\wskip & R_e'=R_e;\wskip;\wskip 
}
{\typeconce{\regenv'}{
\left \{ 
\begin{array}{l}
D_g;\\
\wifa{r}{D_t;\wskip}{D_e;\wskip}
\end{array} 
\right \}
}{
\left \{ 
\begin{array}{l}
D_g'; \\
D_a;\\
\wifa{s'}{R_t'}{R_e'}
\end{array} 
\right \}
} 
{
\secan{\ttop}{\whigh}}
{\regenv_\beta' \envinters \regenv_\gamma'}{\enta{\regvar'}{\var'}}}\\
$$
In particular notice that $\regenv_\alpha \regcorresp \subst{\regenv_\alpha'}{\conn{s'}{r}{\readMode}}$, since $\regenv_\alpha \regcorresp \regenv_\alpha'$ and $\conn{r}{z}{\writeMode} \in \regenv_\alpha$ implies $\conn{s}{z}{\writeMode} \in \subst{\regenv_\alpha'}{\conn{s'}{r}{\readMode}}$ and $\regenv_\beta \envinters \regenv_\gamma \regcorresp \regenv_\beta' \envinters \regenv_\gamma'$ because the correspondence between records is preserved by record intersection.

Assume 
$$
\Scale[0.9]{
\inference
{r \in \regvar \\
\regenv, \emptyRegAll \eenta{\regvar}{\var} E \etypeproduce \code{D_g}, \esecan{\valLow}{n_g} , r, \regenv_\alpha\\ 
\regenv_\alpha \enta{\regvar}{\var}  C_t \typeproduce \code{D_t}, \secan{t_1}{w_1}, \regenv_\beta &
\regenv_\alpha \enta{\regvar}{\var}  C_e \typeproduce \code{D_e}, \secan{t_2}{w_2}, \regenv_\gamma  
}
{\typeconce{\regenv}{
\left \{ 
\begin{array}{l}
\wifa{E}{C_t}{C_e}
\end{array} 
\right \}
}{
\left \{ 
\begin{array}{l}
D_g; \\
\wifa{r}{D_t;\wskip}{D_e;\wskip}
\end{array} 
\right \}
} 
{\secan{\tdontc \tlub t_1 \tlub t_2}{\wdontk}
}
{\regenv_\beta \envinters \regenv_\gamma}{\enta{\regvar}{\var}}}\\}
$$
and let $\regenv_1$ be a register record such that $\regenv \regcorresp \regenv_1$. By applying the Proposition \ref{ref:exprrety} on $E$ we have that $\regenv_1 \enta{\regvar'}{\var'}  D_g \typeproduce \code{D_g'}, \secan{t}{w}, \regenv_\alpha'$ such that:
\begin{itemize}
\item if $n>0$ then $w = \wdontk$ and $t= \tdontc$;
\item if $n=0$ then $w = \whigh$ and $t= \tconst~(0,0)$;
\item $\regenv_\alpha \regcorresp \regenv_\alpha'$. 
\end{itemize}
We distinguish two cases.\\
Case 1: there exists $\conn{r}{z}{\readMode} \in \regenv_\alpha$. Then since  $\regenv_\alpha \regcorresp \regenv_\alpha'$ there exists $\conn{s}{r}{\writeMode} \in \regenv_\alpha'$ and the following derivation is correct
$$
\Scale[0.9]{
\inference
{ 
\regenv_1 \enta{\regvar'}{\var'}  D_g \typeproduce \code{D_g'}, \secan{t}{w}, \regenv_\alpha' \\
s \in \regvar' & s= \compa{r}\\
\regenv_\alpha' \enta{\regvar'}{\var'}  D_t \typeproduce \code{R_t}, \secan{t_1'}{w_1}, \regenv_\beta' &
\regenv_\alpha' \enta{\regvar'}{\var'}  D_e \typeproduce \code{R_e}, \secan{t_2'}{w_2}, \regenv_\gamma' \\
R_t'=R_t;\wskip;\wskip & R_e'=R_e;\wskip;\wskip 
}
{\typeconce{\regenv'}{
\left \{ 
\begin{array}{l}
D_g;\\
\wifa{r}{D_t;\wskip}{D_e;\wskip}
\end{array} 
\right \}
}
{
\left \{ 
\begin{array}{l}
D_g'; \\
\wifa{s}{R_t'}{R_e'}
\end{array} 
\right \}
} 
{
\secan{\tdontc \tlub t_1' \tlub t_2'}{\wdontk}}
{\regenv_\beta' \envinters \regenv_\gamma'}{\enta{\regvar'}{\var'}}}\\}
$$
and $\regenv_\beta \envinters \regenv_\gamma \regcorresp \regenv_\beta' \envinters \regenv_\gamma'$ because the correspondence between records is preserved by record intersection.\\
Case 2: There is not $\conn{r}{z}{\readMode} \in \regenv_\alpha$. We distinguish two further cases.\\
Case 2a: $r$ is unassociated in $\regenv_\alpha$.
Then the following derivation is correct
$$
\Scale[0.9]{
\inference
{ 
\regenv_1 \enta{\regvar'}{\var'}  D_g \typeproduce \code{D_g'}, \secan{t}{w}, \regenv_\alpha' \\
s \in \regvar' & s= \compa{r} \\
D_a=\assign{s}{r}\\
\subst{\regenv_\alpha'}{\conn{s}{r}{\readMode}} \enta{\regvar'}{\var'}  D_t \typeproduce \code{R_t}, \secan{t_1'}{w_1}, \regenv_\beta' \\
\subst{\regenv_\alpha'}{\conn{s}{r}{\readMode}}  \enta{\regvar'}{\var'} D_e \typeproduce \code{R_e}, \secan{t_2'}{w_2}, \regenv_\gamma' \\
R_t'=R_t;\wskip;\wskip & R_e'=R_e;\wskip;\wskip 
}
{\typeconce{\regenv'}{
\left \{ 
\begin{array}{l}
D_g;\\
\wifa{r}{D_t;\wskip}{D_e;\wskip}
\end{array} 
\right \}
}{
\left \{ 
\begin{array}{l}
D_g'; \\
D_a; \\
\wifa{s}{R_t'}{R_e'}
\end{array} 
\right \}
} 
{
\secan{\tdontc \tlub t_1' \tlub t_2'}{\wdontk}}
{\regenv_\beta' \envinters \regenv_\gamma'}{\enta{\regvar'}{\var'}}}\\}
$$
In particular notice that $\regenv_\alpha \regcorresp \subst{\regenv_\alpha'}{\conn{s}{r}{\readMode}}$, since $r$ is unassociated in $\regenv_\alpha$ and $\regenv_\beta \envinters \regenv_\gamma \regcorresp \regenv_\beta' \envinters \regenv_\gamma'$ because the correspondence between records is preserved by record intersection.\\
Case 2b: $\conn{r}{z}{\writeMode} \in \regenv_\alpha$.
Then the following derivation is correct
$$
\Scale[0.9]{
\inference
{ 
\regenv_1 \enta{\regvar'}{\var'}  D_g \typeproduce \code{D_g'}, \secan{t}{w}, \regenv_\alpha' \\
s' \in \regvar' & s'= \shado{r} \\
D_a=\assign{s'}{r} \\
\subst{\regenv_\alpha'}{\conn{s'}{r}{\readMode}} \enta{\regvar'}{\var'}  D_t \typeproduce \code{R_t}, \secan{t_1'}{w_1}, \regenv_\beta' \\
\subst{\regenv_\alpha'}{\conn{s'}{r}{\readMode}} \enta{\regvar'}{\var'}  D_e \typeproduce \code{R_e}, \secan{t_2'}{w_2}, \regenv_\gamma' \\
R_t'=R_t;\wskip;\wskip & R_e'=R_e;\wskip;\wskip 
}
{\typeconce{\regenv'}{
\left \{ 
\begin{array}{l}
D_g;\\
\wifa{r}{D_t;\wskip}{D_e;\wskip}
\end{array} 
\right \}
}{
\left \{ 
\begin{array}{l}
D_g'; \\
D_a;\\
\wifa{s'}{R_t'}{R_e'}
\end{array} 
\right \}
} 
{
\secan{\tdontc \tlub t_1 \tlub t_2}{\wdontk}}
{\regenv_\beta' \envinters \regenv_\gamma'}{\enta{\regvar'}{\var'}}}\\}
$$
In particular notice that $\regenv_\alpha \regcorresp \subst{\regenv_\alpha'}{\conn{s'}{r}{\readMode}}$, since $\regenv_\alpha \regcorresp \regenv_\alpha'$ and $\conn{r}{z}{\writeMode} \in \regenv_\alpha$ implies $\conn{s}{z}{\writeMode} \in \subst{\regenv_\alpha'}{\conn{s'}{r}{\readMode}}$ and $\regenv_\beta \envinters \regenv_\gamma \regcorresp \regenv_\beta' \envinters \regenv_\gamma'$ because the correspondence between records is preserved by record intersection.

Assume 
$$
\inference
{r \in \regvar \\
\regenv, \emptyRegAll \eenta{\regvar}{\var} E \etypeproduce \code{D_g}, \esecan{\valHigh}{n_g} , r, \regenv_\alpha\\ 
\regenv_\alpha \enta{\regvar}{\var}  C_t \typeproduce \code{D_t}, \secan{\tconst~(m,n_t)}{\whigh}, \regenv_\beta \\
\regenv_\alpha \enta{\regvar}{\var}  C_e \typeproduce \code{D_e}, \secan{\tconst~(m,n_e)}{\whigh}, \regenv_\gamma\\ 
D_t'= D_t;\wskip^{n_e-n_t};\wskip  &  D_e'= D_e;\wskip^{n_t-n_e};\wskip\\
n= \max(n_t,n_e) \\
}
{\typeconce{\regenv}{\wif{E}{C_t}{C_e}}{
\left \{ 
\begin{array}{l}
D_g; \\
\wifa{r}{D_t'}{D_e'}
\end{array} 
\right \}
} 
{\secan{\tau}{\whigh}
}
{\regenv_\beta \envinters \regenv_\gamma}{\enta{\regvar}{\var}}}\\
$$
for $\tau= \tconst~(m+1,n_g+n+2)$ and let $\regenv_1$ be a register record such that $\regenv \regcorresp \regenv_1$. By applying the Proposition \ref{ref:exprrety} on $E$ we have that $\regenv_1 \enta{\regvar'}{\var'}  D_g \typeproduce \code{D_g'}, \secan{\tconst~(n_g,p_g)}{\whigh}, \regenv_\alpha'$ such that $\regenv_\alpha \regcorresp \regenv_\alpha'$. We distinguish two cases.\\
Case 1: there exists $\conn{r}{z}{\readMode} \in \regenv_\alpha$. Then since  $\regenv_\alpha \regcorresp \regenv_\alpha'$ there exists $\conn{s}{r}{\writeMode} \in \regenv_\alpha'$ and the following derivation is correct
$$
\inference
{ 
\regenv_1 \enta{\regvar'}{\var'}  D_g \typeproduce \code{D_g'}, \secan{\tconst~(n_g,p_g)}{\whigh}, \regenv_\alpha' \\
s \in \regvar' & s = \compa{r}\\
\regenv_\alpha' \enta{\regvar'}{\var'}  D_t' \typeproduce \code{R_t}, \secan{\tconst~(n+1,p_t)}{\whigh}, \regenv_\beta' \\
\regenv_\alpha' \enta{\regvar'}{\var'}  D_e' \typeproduce \code{R_e}, \secan{\tconst~(n+1,p_e)}{\whigh}, \regenv_\gamma' \\
R_t'=R_t;\wskip^{p_e-p_t};\wskip & R_e'=R_e;\wskip^{p_t-p_e};\wskip  \\
p = \max(p_t,p_e)
}
{\typeconce{\regenv'}{
\left \{ 
\begin{array}{l}
D_g;\\
\wifa{r}{D_t'}{D_e'}
\end{array} 
\right \}
}{
\left \{ 
\begin{array}{l}
D_g'; \\
\wifa{s}{R_t'}{R_e'}
\end{array} 
\right \}
} 
{
\secan{\tau'}{\whigh}}
{\regenv_\beta' \envinters \regenv_\gamma'}{\enta{\regvar'}{\var'}}}\\
$$
Notice that $\tau' = \tconst~(n_g,p_g) \tclub \tconst~(n+2,p+2)= \tconst~(n_g+n+2,p_g+p+2)$ and $\regenv_\beta \envinters \regenv_\gamma \regcorresp \regenv_\beta' \envinters \regenv_\gamma'$ because the correspondence between records is preserved by record intersection.\\
Case 2: There is not $\conn{r}{z}{\readMode} \in \regenv_\alpha$. We distinguish two further cases.\\
Case 2a: $r$ is unassociated in $\regenv_\alpha$.
Then the following derivation is correct
$$
\inference
{ 
\regenv_1 \enta{\regvar'}{\var'}  D_g \typeproduce \code{D_g'}, \secan{\tconst~(n_g,p_g)}{\whigh}, \regenv_\alpha' \\
s \in \regvar' & s = \compa{r}\\
D_a=\assign{s}{r}\\ %this is D_g, has type (1,high)
\subst{\regenv_\alpha'}{\conn{s}{r}{\readMode}} \enta{\regvar'}{\var'}  D_t \typeproduce \code{R_t}, \secan{\tconst~(n+1,p_t)}{\whigh}, \regenv_\beta' \\
\subst{\regenv_\alpha'}{\conn{s}{r}{\readMode}}  \enta{\regvar'}{\var'}D_e \typeproduce \code{R_e}, \secan{\tconst~(n+1,p_e)}{\whigh}, \regenv_\gamma' \\
R_t'=R_t;\wskip^{p_e-p_t};\wskip & R_e'=R_e;\wskip^{p_t-p_e};\wskip  \\
p= \max(p_t,p_e)
}
{\typeconce{\regenv'}{
\left \{ 
\begin{array}{l}
D_g;\\
\wifa{r}{D_t'}{D_e'}
\end{array} 
\right \}
}{
\left \{ 
\begin{array}{l}
D_g'; \\
D_a; \\
\wifa{s}{R_t'}{R_e'}
\end{array} 
\right \}
} 
{
\secan{\tau'}{\whigh}}
{\regenv_\beta' \envinters \regenv_\gamma'}{\enta{\regvar'}{\var'}}}\\
$$
for $\tau' = \tconst~(n_g,p_g)\tclub \tconst~(n+2,1+p+2)$. In particular notice that $\tconst~(n_g,p_g) \tclub \tconst~(n+2,1+p+2)= \tconst~(n_g+n+2,1+p_g+p+2)$ and $\regenv_\alpha \regcorresp \subst{\regenv_\alpha'}{\conn{s}{r}{\readMode}}$, since $r$ is unassociated in $\regenv_\alpha$ and $\regenv_\beta \envinters \regenv_\gamma \regcorresp \regenv_\beta' \envinters \regenv_\gamma'$ because the correspondence between records is preserved by record intersection.\\
Case 2b: $\conn{r}{z}{\writeMode} \in \regenv_\alpha$.
Then the following derivation is correct
$$
\inference
{ 
\regenv_1 \enta{\regvar'}{\var'}  D_g \typeproduce \code{D_g'}, \secan{\tconst~(n_g,p_g)}{\whigh}, \regenv_\alpha' \\
s' \in \regvar' & s'= \shado{r} \\
D_a=\assign{s'}{r}\\ %this is D_g, has type (1,high)
\subst{\regenv_\alpha'}{\conn{s'}{r}{\readMode}} \enta{\regvar'}{\var'}  D_t \typeproduce \code{R_t}, \secan{\tconst~(n+1,p_t)}{\whigh}, \regenv_\beta' \\
\subst{\regenv_\alpha'}{\conn{s'}{r}{\readMode}} \enta{\regvar'}{\var'}  D_e \typeproduce \code{R_e}, \secan{\tconst~(n+1,p_e)}{\whigh}, \regenv_\gamma' \\
R_t'=R_t;\wskip^{p_e-p_t};\wskip & R_e'=R_e;\wskip^{p_t-p_e};\wskip  \\
p= \max(p_t,p_e)
}
{\typeconce{\regenv'}{
\left \{ 
\begin{array}{l}
D_g;\\
\wifa{r}{D_t'}{D_e'}
\end{array} 
\right \}
}{
\left \{ 
\begin{array}{l}
D_g'; \\
D_a; \\
\wifa{s'}{R_t'}{R_e'}
\end{array} 
\right \}
} 
{
\secan{\tau'}{\whigh}}
{\regenv_\beta' \envinters \regenv_\gamma'}{\enta{\regvar'}{\var'}}}\\
$$
for $\tau'= \tconst~(n_g,p_g)\tclub \tconst~(n+2,1+p+2)$. In particular notice that $\tconst~(n_g,p_g) \tclub \tconst~(n+2,1+p+2)= \tconst~(n_g+n+2,1+p_g+p+2)$ and $\regenv_\alpha \regcorresp \subst{\regenv_\alpha'}{\conn{s'}{r}{\readMode}}$, since $\regenv_\alpha \regcorresp \regenv_\alpha'$ and $\conn{r}{z}{\writeMode} \in \regenv_\alpha$ implies $\conn{s}{z}{\writeMode} \in \subst{\regenv_\alpha'}{\conn{s'}{r}{\readMode}}$ and $\regenv_\beta \envinters \regenv_\gamma \regcorresp \regenv_\beta' \envinters \regenv_\gamma'$ because the correspondence between records is preserved by record intersection.

Case $C=\while{x}{C_1}$. Assume $\valHigh = \seclev(x)=\seclev(r)$ and
$$\inference
{x \in \var & r \in \regvar \\
A=\assign{r}{x};\assign{x}{r}; \\
\regenv_\alpha=\subst{\regenv}{\conn{r}{x}{\writeMode}} \\
\regenv_B \sqsubseteq \regenv_\alpha & \regenv_B \sqsubseteq \subst{\regenv_E}{\conn{r}{x}{\writeMode}}\\
\regenv_B \enta{\regvar}{\var}  C \typeproduce \code{D}, \secan{t}{\whigh} , \regenv_E\\
}
{\typeconce{\regenv}{\while{x}{C}}{
\left \{ 
\begin{array}{l}
A; \\
\while{r}{\code{D;A;\wskip}}\\
\end{array} 
\right \}
} 
{
\secan{\ttop}{\whigh}
}
{\regenv_B}{\enta{\regvar}{\var}}}$$
and let $\regenv_1$ be a register record such that $\regenv \regcorresp \regenv_1$. We now have to show that 
$$\typeconce{\regenv_1}{
\left \{ 
\begin{array}{l}
A;\\
\while{r}{\code{D;A;\wskip}}
\end{array} 
\right \}
}
{
\code{\dots}
} 
{
\secan{\ttop}{\whigh}
}
{\regenv_B'}{\enta{\regvar}{\var}}$$
and $\regenv_B \regcorresp \regenv_B'$.
We begin by constructing the derivation for $A$ as follows 
$$\inference{
(i)
&
(ii)
}
{\regenv_1 \enta{\regvar'}{\var'} \assign{r}{x};\assign{x}{r};  \typeproduce 
\left \{ 
\begin{array}{l}
\assign{s}{x}; \\
\assign{r}{s};\\
\assign{x}{s};
\end{array} 
\right \}
, \secan{\tconst~(2,3)}{\whigh} ,\subst{\regenv_1}{\conn{s}{x}{\writeMode}}}$$ 
where 
$$ 
\begin{array}{rcl}
(i) & = &
\inference{s \in \regvar' & s = \compa{r} \\
\regenv_1, \{\} \eenta{\regvar'}{\var} x \typeproduce \code{\assign{s}{x}},\esecan{\valHigh}{1},s,\subst{\regenv_1}{\conn{s}{x}{\readMode}}}
{\regenv_1 \enta{\regvar'}{\var'} \assign{r}{x} \typeproduce 
\left \{ 
\begin{array}{l}
\assign{s}{x}; \\
\assign{r}{s};
\end{array} 
\right \}, \secan{\tconst~(1,2)}{\whigh}, \subst{\regenv_1}{\conn{s}{r}{\writeMode}}} \\
(ii) & = & 
\inference{\subst{\regenv_1}{\conn{s}{r}{\writeMode}},\{\} \eenta{\regvar'}{\var'} r \typeproduce \code{\findot},\esecan{\valHigh}{0},s,\subst{\regenv_1}{\conn{s}{r}{\writeMode}}}
{\subst{\regenv_1}{\conn{s}{r}{\writeMode}} \enta{\regvar'}{\var'} \assign{x}{r} \typeproduce \code{\assign{x}{s}}, \secan{\tconst~(1,1)}{\whigh}, \subst{\regenv_1}{\conn{s}{x}{\writeMode}}}
\end{array}$$
Notice that $\subst{\regenv}{\conn{r}{x}{\writeMode}} \regcorresp \subst{\regenv_1}{\conn{s}{x}{\writeMode}}$. We now need to show that
$$\subst{\regenv_1}{\conn{s}{x}{\writeMode}} \enta{\regvar'}{\var'} \while{r}{\code{D;A;\wskip}} \typeproduce \code{\dots},\secan{\ttop}{\whigh},\regenv_B'$$
Consider the following derivation
$$\inference{
r \in \var'  & s' \in \regvar' & s' = \shado{r}\\
A' = \assign{s'}{r};\assign{r}{s'}; \\
\regenv_1'=\subst{\subst{\regenv_1}{\conn{s}{x}{\writeMode}}}{\conn{s'}{r}{\writeMode}} 
& \regenv_B' \sqsubseteq \regenv_1'
& \regenv_B' \sqsubseteq \subst{\subst{\regenv_E'}{\conn{s}{x}{\writeMode}}}{\conn{s'}{r}{\writeMode}}\\
(iii)\\
}
{\subst{\regenv_1}{\conn{s}{x}{\writeMode}} \enta{\regvar'}{\var'} 
{
\left \{
\begin{array}{l}
\while{r}\{\\
\begin{array}{ll}
 & D;\\
 & \assign{r}{x};\\
 & \assign{x}{r};\\
 & \wskip; \}
\end{array}\\
\end{array}
\right \}
} 
\typeproduce 
\left \{
\begin{array}{l}
A';\\
\while{s'}\{\\
\begin{array}{ll}
& D';\\
& \assign{s}{x};\\
& \assign{r}{s};\\
& \assign{x}{s};\\
& \wskip;\\
& A';\\
& \wskip; \}
\end{array}\\
\end{array}
\right \}
,\secan{\ttop}{\whigh},\regenv_B'}
$$
Notice that $\subst{\regenv}{\conn{r}{x}{\writeMode}} \regcorresp  \subst{\subst{\regenv_1}{\conn{s}{x}{\writeMode}}}{\conn{s'}{r}{\writeMode}} = \regenv_1'$, since $r$ is already in correspondence with $s$. Consider the smallest $\regenv_B' \sqsubseteq \regenv_1'$ such that $\regenv_B \regcorresp \regenv_B'$ (the domain of $\regenv_B'$ is $\compa{\Dom(\regenv_B)}$). Then the following derivation
$$
\Scale[0.93]{
(iii)= \inference{(iv) & (v) & (vi)}
{\regenv_B' \enta{\regvar'}{\var'} 
\left \{
\begin{array}{l}
D;\\
\assign{r}{x};\\
\assign{x}{r};\\
\wskip; 
\end{array}
\right \}
\typeproduce 
\left \{
\begin{array}{l}
D';\\
\assign{s}{x};\\
\assign{r}{s};\\
\assign{x}{s};\\
\wskip; 
\end{array}
\right \}
,
\left \langle
\begin{array}{l}
\whigh, \\
t'\tclub \tconst~(2,3) \tclub \tconst~(1,1)
\end{array}
\right \rangle
,\regenv_E'}}
$$
holds for
$$
\begin{array}{rl}
(iv) = & \regenv_B' \enta{\regvar'}{\var'} D \typeproduce \code{D'},\secan{t'}{\whigh},\regenv_E'\\
(v) = & \regenv_E' \enta{\regvar'}{\var'} 
\left \{
\begin{array}{l}
\assign{r}{x};\\
\assign{x}{r};
\end{array}
\right \} \typeproduce 
\left \{
\begin{array}{l}
\assign{s}{x};\\
\assign{r}{s};\\
\assign{x}{s};\\
\end{array}
\right \}
,\secan{\tconst~(1,3)}{\whigh},\subst{\regenv_E'}{\conn{s}{x}{\writeMode}}\\
(vi) =  & \subst{\regenv_E'}{\conn{s}{x}{\writeMode}} \enta{\regvar'}{\var'} \wskip \typeproduce \code{\wskip},\secan{\tconst~(1,1)}{\whigh},\subst{\regenv_E'}{\conn{s}{x}{\writeMode}}\\
\end{array}
$$
In particular:
\begin{itemize}
\item the type derivation $(iv)$ follows from the inductive hypothesis on $D$, which also guarantees that $\regenv_E \regcorresp \regenv_E'$; 
\item type derivation $(v)$ is identical, up to environment renaming, to the same derivation calculated previously. It also holds that $\subst{\regenv_E}{\conn{s}{x}{\writeMode}} \regcorresp \subst{\regenv_E'}{\conn{s}{x}{\writeMode}}$, since $\regenv_E \regcorresp \regenv_E'$;
\item type derivation $(vi)$ is straightforward.
\end{itemize}
The case is concluded by observing that $\regenv_B \regcorresp \regenv_B'$ and $\regenv_B \sqsubseteq \subst{\regenv_E}{\conn{s}{x}{\writeMode}}$ and $\subst{\regenv_E}{\conn{s}{x}{\writeMode}} \regcorresp \subst{\regenv_E'}{\conn{s}{x}{\writeMode}}$ implies that $\regenv_B' \sqsubseteq \subst{\subst{\regenv_E'}{\conn{s}{x}{\writeMode}}}{\conn{s'}{r}{\writeMode}}$.

Assume $\valLow = \seclev(x)=\seclev(r)$ and 
$$\inference
{x \in \var & r \in \regvar \\
A=\assign{r}{x};\assign{x}{r}; \\
\regenv_\alpha=\subst{\regenv}{\conn{r}{x}{\writeMode}} \\
\regenv_B \sqsubseteq \regenv_\alpha & \regenv_B \sqsubseteq \subst{\regenv_E}{\conn{r}{x}{\writeMode}}\\
\regenv_B \enta{\regvar}{\var}  C \typeproduce \code{D}, \secan{t}{w} , \regenv_E\\}
{\typeconce{\regenv}{\while{x}{C}}{
\left \{ 
\begin{array}{l}
A; \\
\while{r}{\code{D;A;\wskip}}\\
\end{array} 
\right \}
} 
{
\secan{\tdontc}{\wdontk}
}
{\regenv_B}{\enta{\regvar}{\var}}}$$
instead. Notice that for $\seclev(x)= \seclev(r)= \valLow$ we have that $t$ must be $t\sqsubset \ttop$, hence $\termMap(\lambda)\tlub t= \tdontc$. Let $\regenv_1$ be a register record such that $\regenv \regcorresp \regenv_1$. We now have to show that 
$$\typeconce{\regenv_1}{
\left \{ 
\begin{array}{l}
A;\\
\while{r}{\code{D;A;\wskip}}
\end{array} 
\right \}
}
{
\code{\dots}
} 
{
\secan{\tdontc}{\wdontk}
}
{\regenv_B'}{\enta{\regvar}{\var}}$$
and $\regenv_B \regcorresp \regenv_B'$.
We begin by constructing the derivation for $A$ as follows 
$$\inference{
(i)
&
(ii)
}
{\regenv_1 \enta{\regvar'}{\var'} \assign{r}{x};\assign{x}{r};  \typeproduce 
\left \{ 
\begin{array}{l}
\assign{s}{x}; \\
\assign{r}{s};\\
\assign{x}{s};
\end{array} 
\right \}
, \secan{\tdontc}{\wdontk} ,\subst{\regenv_1}{\conn{s}{x}{\writeMode}}}$$ 
where 
$$ 
\begin{array}{rcl}
(i) & = &
\inference{s \in \regvar' & s = \compa{r} \\
\regenv_1, \{\} \eenta{\regvar'}{\var} x \typeproduce \code{\assign{s}{x}},\esecan{\valLow}{1},s,\subst{\regenv_1}{\conn{s}{x}{\readMode}}}
{\regenv_1 \enta{\regvar'}{\var'} \assign{r}{x} \typeproduce 
\left \{ 
\begin{array}{l}
\assign{s}{x}; \\
\assign{r}{s};
\end{array} 
\right \}, \secan{\tdontc}{\wdontk}, \subst{\regenv_1}{\conn{s}{r}{\writeMode}}} \\
(ii) & = & 
\inference{\subst{\regenv_1}{\conn{s}{r}{\writeMode}},\{\} \eenta{\regvar'}{\var'} r \typeproduce \code{\findot},\esecan{\valLow}{0},s,\subst{\regenv_1}{\conn{s}{r}{\writeMode}}}
{\subst{\regenv_1}{\conn{s}{r}{\writeMode}} \enta{\regvar'}{\var'} \assign{x}{r} \typeproduce \code{\assign{x}{s}}, \secan{\tdontc}{\wdontk}, \subst{\regenv_1}{\conn{s}{x}{\writeMode}}}
\end{array}$$
Notice that $\subst{\regenv}{\conn{r}{x}{\writeMode}} \regcorresp \subst{\regenv_1}{\conn{s}{x}{\writeMode}}$. We now need to show that
$$\subst{\regenv_1}{\conn{s}{x}{\writeMode}} \enta{\regvar'}{\var'} \while{r}{\code{D;A;\wskip}} \typeproduce \code{\dots},\secan{\tdontc}{\wdontk},\regenv_B'$$
Consider the following derivation
$$\inference{
r \in \var'  & s' \in \regvar' & s' = \shado{r}\\
A' = \assign{s'}{r};\assign{r}{s'}; \\
\regenv_1'=\subst{\subst{\regenv_1}{\conn{s}{x}{\writeMode}}}{\conn{s'}{r}{\writeMode}} 
& \regenv_B' \sqsubseteq \regenv_1'
& \regenv_B' \sqsubseteq \subst{\subst{\regenv_E'}{\conn{s}{x}{\writeMode}}}{\conn{s'}{r}{\writeMode}}\\
(iii)\\
}
{\subst{\regenv_1}{\conn{s}{x}{\writeMode}} \enta{\regvar'}{\var'} 
{
\left \{
\begin{array}{l}
\while{r}\{\\
\begin{array}{ll}
 & D;\\
 & \assign{r}{x};\\
 & \assign{x}{r};\\
 & \wskip; \}
\end{array}\\
\end{array}
\right \}
} 
\typeproduce 
\left \{
\begin{array}{l}
A';\\
\while{s'}\{\\
\begin{array}{ll}
& D';\\
& \assign{s}{x};\\
& \assign{r}{s};\\
& \assign{x}{s};\\
& \wskip;\\
& A';\\
& \wskip; \}
\end{array}\\
\end{array}
\right \}
,\secan{\tdontc}{\wdontk},\regenv_B'}
$$
Notice that $\subst{\regenv}{\conn{r}{x}{\writeMode}} \regcorresp  \subst{\subst{\regenv_1}{\conn{s}{x}{\writeMode}}}{\conn{s'}{r}{\writeMode}} = \regenv_1'$, since $r$ is already in correspondence with $s$. Consider the smallest $\regenv_B' \sqsubseteq \regenv_1'$ such that $\regenv_B \regcorresp \regenv_B'$ (the domain of $\regenv_B'$ is $\compa{\Dom(\regenv_B)}$). Then the following derivation
$$
(iii)= \inference{(iv) & (v) & (vi)}
{\regenv_B' \enta{\regvar'}{\var'} 
\left \{
\begin{array}{l}
D;\\
\assign{r}{x};\\
\assign{x}{r};\\
\wskip; 
\end{array}
\right \}
\typeproduce 
\left \{
\begin{array}{l}
D';\\
\assign{s}{x};\\
\assign{r}{s};\\
\assign{x}{s};\\
\wskip; 
\end{array}
\right \}
,
\left \langle
\begin{array}{l}
w \wlub \wdontk \wlub \whigh  \\
t' \tclub \tdontc \tclub \tconst~(1,1)
\end{array}
\right \rangle
,\regenv_E'}
$$
holds for
$$
\begin{array}{rl}
(iv) = & \regenv_B' \enta{\regvar'}{\var'} D \typeproduce \code{D'},\secan{t'}{w},\regenv_E'\\
(v) = & \regenv_E' \enta{\regvar'}{\var'} 
\left \{
\begin{array}{l}
\assign{r}{x};\\
\assign{x}{r};
\end{array}
\right \} \typeproduce 
\left \{
\begin{array}{l}
\assign{s}{x};\\
\assign{r}{s};\\
\assign{x}{s};\\
\end{array}
\right \}
,\secan{\tdontc}{\wdontk},\subst{\regenv_E'}{\conn{s}{x}{\writeMode}}\\
(vi) =  & \subst{\regenv_E'}{\conn{s}{x}{\writeMode}} \enta{\regvar'}{\var'} \wskip \typeproduce \code{\wskip},\secan{\tconst~(1,1)}{\whigh},\subst{\regenv_E'}{\conn{s}{x}{\writeMode}}\\
\end{array}
$$
In particular:
\begin{itemize}
\item the type derivation $(iv)$ follows from the inductive hypothesis on $D$, which also guarantees that $\regenv_E \regcorresp \regenv_E'$; 
\item type derivation $(v)$ is identical, up to environment renaming, to the same derivation calculated previously. It also holds that $\subst{\regenv_E}{\conn{s}{x}{\writeMode}} \regcorresp \subst{\regenv_E'}{\conn{s}{x}{\writeMode}}$, since $\regenv_E \regcorresp \regenv_E'$;
\item type derivation $(vi)$ is straightforward.
\end{itemize}
The case is concluded by observing that $\regenv_B \regcorresp \regenv_B'$ and $\regenv_B \sqsubseteq \subst{\regenv_E}{\conn{s}{x}{\writeMode}}$ and $\subst{\regenv_E}{\conn{s}{x}{\writeMode}} \regcorresp \subst{\regenv_E'}{\conn{s}{x}{\writeMode}}$ implies that $\regenv_B' \sqsubseteq \subst{\subst{\regenv_E'}{\conn{s}{x}{\writeMode}}}{\conn{s'}{r}{\writeMode}}$. Also, $w \wlub \wdontk \wlub \whigh = \wdontk$ and  $t' \tclub \tdontc \tclub \tconst~(1,1) = \tdontc$ (since $t \nextTerm{} t'$ and $t \sqsubset \ttop$ implies that $t' \sqsubset \ttop$). 

\end{proof}

Proposition \ref{thm:typeprediction} states that the an \swhprog\ program obtained from a type-correct \whprog\ program is also type-correct, and from Preposition \ref{prop:strongsectype} we have that any type-correct program is strongly secure. We therefore conclude that the type system in Figures \ref{table:typesystematom} and  \ref{table:typesystem} compiles a \whprog\ program $C$ into an \swhprog\ program $D$ which enjoys Strong Security. Notice that the definition of Strong Security for \swhprog\ programs is obtained by considering the semantics of \swhprog\ programs as a LTS $\iwhlts=\{\{\langle D,\mem \rangle \}, \whsem{}, \{ch!n|ch \in \{\low,\high\} \mbox{ and } n \in \mathbb{N}\} \cup \{\tau\} \}$, where $\whsem{}$ is defined by rules in Figure \ref{table:whilesem} for $\allvari= \var \cup \regvar$.

\subsection{From \swhprog\ programs to \assem\ programs}\label{Sec:iWh-asse}

We now introduce the compilation function that translate \swhprog\ programs into \assem\ programs. The syntax for \assem\ programs is defined in Figure \ref{table:assemsyntax}. The translation $\whtoasse$ between \swhprog\ and \assem\ instructions is defined in Figure \ref{table:transiwa}.

\begin{figure}[h]
\centering
$
\begin{array}{| l c l |}
 \hline 
\whtoasse(\wskip,l) 				& = &    ([l: \nop], \emptyLab)\\
 \hline
 \whtoasse(\assign{x}{r},l) 			& = &   ([l: \storeName \ \vartoptr(x) \ r], \emptyLab) \\
 \hline
 \whtoasse(\assign{r}{n},l) 		& = &  ([l: \movekName\ r\ n], \emptyLab)\\
 \whtoasse(\assign{r}{r'},l)			& = &  ([l: \moverName\ r\ r'], \emptyLab)\\
 \whtoasse(\assign{r}{x},l)			& = &  ([l: \loadName\ \vartoptr(x)\ r], \emptyLab)\\
 \whtoasse(\assign{r}{\wop{r}{r'}},l) 	& = &  ([l: \aopName \ r \ r'], \emptyLab)\\
\hline
\whtoasse(\out{ch}{r},l) 			& = & ([l: \out{ch}{r}], \emptyLab)\\
\hline
	\whtoasse
	\left (
	
	\wifa{r}{\{D_1;\wskip\}}{\{D_2;\wskip\}},l \right ) & = & 
	
	\begin{array}{l}
	([l: \jzName\ br \ r] \concatprograms P_1 \concatprograms [l_1: \jmpName\ ex]  \\		
								\concatprograms  P_2 \concatprograms [l_2: \nop], ex) \\
								 	\mbox{where } \\
								 \hspace{3ex} \whtoasse(D_1,\emptyLab)=(P_1,l_1) \\
								 \hspace{3ex} \whtoasse(D_2,br)=(P_2,l_2)	\\
								 \mbox{for fresh labels } br, ex \\
	\end{array}\\
\hline
	\whtoasse(\while{r}{\{D;\wskip\}},l) 	& = & ([l: \jzName\ ex \ r] \concatprograms P \concatprograms [l': \jmpName\ l], ex) \\
								&   & 	\mbox{where } \\
								&   & \hspace{3ex} \whtoasse(D,\emptyLab)=(P,l') \\
							%	&   & \mbox{if }l \not = \emptyLab \mbox{ then } l'=l\\
								&   & \mbox{for fresh label } ex\\
 \hline 
	\whtoasse(D_1; D_2	,l)		& = & (P_1 \concatprograms P_2, l')\\
							&   & 	\mbox{where } \\
							&   & \hspace{3ex} \whtoasse(D_1,l)=(P_1,l_1) \\
							&   & \hspace{3ex} \whtoasse(D_2,l_1)=(P_2,l')	\\
\hline
									
\end{array}$
\caption{Translation between \swhprog\ and \assem\ programs\label{table:transiwa}}
\end{figure}

Beside the input \swhprog\ code and the output \assem\ program, the function $\whtoasse$ requires a label as input and produces a label as output. The role of these labels is clarified in the explanation of $\wifName$ and $\whileName$ instructions compilation. 

All cases $\wskip$, $\assign{x}{r}$, $\assign{r}{F}$ and $\out{ch}{r}$ correspond to a straightforward mapping between \swhprog\ instructions and \assem\ instructions. The input label is used to label the produced instruction, and the empty label $\emptyLab$ is given as output. The first instruction in the compilation of the $\wifName$ statement is $l:\jzName\ br\ r$, that corresponds to the register variable evaluation in the \swhprog\ language. Branches $D_1$ and $D_2$ are translated into \assem\ programs $P_1$ and $P_2$ respectively. In particular, $P_2$ requires the label $br$ to be used as argument for $\whtoasse$ in order to move the control flow from $l: \jzName\ br\ r$ to $P_2$ in case the content of $r$ is 0. The trailing $\wskip$ statements are converted into a $l_1: \jmpName\ ex$ and a $l_2:\ \nop$ respectively, to guarantee that the first instruction to be executed after any of the two branches is the one labeled with $ex$, if any. The  compilation of the $\whileName$ instruction produces a conditional jump on $r$ as the first instruction, similarly to the $\wifName$ case. The compilation of the loop body follows, and the $\wskip$ statement is replaced by the $l':\ \jmpName\ l$ instruction, that moves back the control to the register evaluation when the execution of the loop body is completed. 
 
The relation between an \swhprog\ program $D$ and its corresponding \assem\ program $\whtoasse(D)$ is stated in terms of the \assem\ semantics. Recall from Section \ref{sec:procdescr} that the semantic of \assem\ programs is defined in terms of a labelled transition system $\asselts=\{ \{\langle P,k,\RegName,\HeapName\rangle \}, \{ch!k| ch \in \{\low,\high\} \cup \{\tau\}\},\asem{} \}$ where the first two elements of the state $P$ and $k \in \Constant$ are the fault-tolerant component $T$,  the elements $\RegName$ and $\HeapName$ are the faulty part $F$ and transitions are defined in Figure \ref{table:abssemantics}.

In general, a \assem\ program behaves exactly as the corresponding \swhprog\ program. However, this does not hold for \swhprog\ programs that manipulate constants outside $\Constant$. We therefore postulate the existence of a technique for approximating the resource footprint for the source \swhprog\ program statically (cf. the approach to transparency in \cite{DelTedesco+:FTNInterference}) and, from now onwards, we focus only on \swhprog\ programs that have a resource footprint compatible with the capabilities of the \assem\ machine. 

Before proceeding further, we need some notational conventions. We say an \swhprog\ memory $\mem$ is equivalent to the \assem\ storage $(\RegName,\HeapName)$, written as $\mem \memequiva (\RegName,\HeapName)$ if $\restr{\mem}{\regvar}= \RegName$ and $\restr{\mem_w}{\var}= \vartoptr \circ \HeapName$. 
 
\begin{definition}[Strong Coupling]
A relation $R$ between \swhprog\ programs and $(P, k)$ pairs, where $P$ is a \assem\ program and $k \in \Constant$ is an exact bisimulation if for any $(D,(P,n)) \in R$, $\forall \mem \memequiva (\RegName,\HeapName)$, $\whstate{D}{\mem} \whsem{l} \whstate{D'}{\mem'}$  if and only if $\absStateJump{P}{n}{\RegName}{\HeapName} \asem{l}
\absStateJump{P}{n'}{\RegName'}{\HeapName'}$ such that $\mem' \memequiva (\RegName',\HeapName')$ and $(D',(P,n')) \in R$. We say that an \swhprog\ program $D$ is strongly coupled with a \assem\ program $P$, written as $D \strongsim P$, if there exists an exact bisimulation $R$ such that $(D,(P,0)) \in R$.
\end{definition}

We can now characterize the existing relation between an \swhprog\ program $D$ and its translated version $P$ in terms of strong coupling.

\begin{proposition}[Strong coupling between \swhprog\ and \assem\ programs]\label{prop:iwhileandrisc}
Let $D$ be an \swhprog\ program and $P$ be the first component of $\whtoasse(D,\emptyLab)$. Then $D \strongsim P$. 
\end{proposition}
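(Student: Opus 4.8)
The plan is to prove Proposition~\ref{prop:iwhileandrisc} by exhibiting an explicit exact bisimulation relation $R$ between \swhprog\ programs and $(\assemprog,k)$ pairs, and then showing that $(D, (\whtoasse(D,\emptyLab), 0)) \in R$. The core difficulty is that the compilation function $\whtoasse$ is defined compositionally and uses label threading: the second component of $\whtoasse(\cdot,l)$ is an output label that stitches fragments together, and the semantics of \assem\ (Figure~\ref{table:abssemantics}) advances a numeric program counter $\pc$ rather than jumping by label except at explicit jump instructions. So the relation $R$ must track \emph{which residual \swhprog\ program $D'$ corresponds to which program-counter position $k$} within the fixed compiled code $\assemprog$. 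The natural definition is: $(D', (\assemprog, k)) \in R$ whenever the code starting at position $k$ in $\assemprog$ is (up to label renaming) the translation $\whtoasse(D',l')$ of $D'$ embedded inside $\assemprog$, with the continuation semantics matching. I would first fix $\assemprog$ and $\emptyLab$ as the global target, define $R$ in terms of a ``code-residual correspondence'', and then verify the bisimulation conditions.

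First I would set up the memory-equivalence bookkeeping: since $\mem \memequiva (\RegName,\HeapName)$ requires $\restr{\mem}{\regvar}=\RegName$ and the pure-variable part matches $\HeapName$ via $\vartoptr$, I need the compilation cases for $\assign{x}{r}$ (mapped to $\storeName\ \vartoptr(x)\ r$), $\assign{r}{x}$ (mapped to $\loadName\ \vartoptr(x)\ r$), $\assign{r}{n}$, $\assign{r}{r'}$, and $\assign{r}{\wop{r}{r'}}$ to each preserve $\memequiva$ under one \assem\ step. This is a routine case analysis comparing the \whprog/\swhprog\ semantics of Figure~\ref{table:whilesem} against the \assem\ rules $\wraprule{\rulename{Store}}$, $\wraprule{\rulename{Load}}$, $\wraprule{\rulename{Move-k}}$, $\wraprule{\rulename{Move-r}}$ and $\wraprule{\rulename{Op}}$. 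Each atomic case produces exactly one \assem\ instruction and the output label $\emptyLab$, so the correspondence is step-for-step and the output label is observationally identical ($\low(l)$ matches on both sides, and $\tau$ actions align).

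The main work is the inductive step for the compound constructs, proved by structural induction on $D$. For sequential composition $D_1;D_2$, the translation concatenates $P_1 \concatprograms P_2$ and threads the label $l_1$; I would show that executing the $P_1$ fragment leaves the program counter exactly at the entry of $P_2$, invoking the induction hypothesis on each sub-program and using well-formedness of the concatenated code. For $\wifName$, I must trace the compiled layout $[l:\jzName\ br\ r] \concatprograms P_1 \concatprograms [l_1:\jmpName\ ex] \concatprograms P_2 \concatprograms [l_2:\nop]$: the guard step uses $\wraprule{\rulename{Jz-S}}$ or $\wraprule{\rulename{Jz-F}}$ matching the \swhprog\ rules $\wraprule{\rulename{if-1}}/\wraprule{\rulename{if-2}}$, and crucially the trailing $\wskip$ in each \swhprog\ branch corresponds to the $\jmpName\ ex$ / $\nop$ that reunite control flow at label $ex$ --- this is exactly why the \swhprog\ syntax forces branches to end in $\wskip$. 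For $\whileName$, the body-terminating $\wskip$ compiles to $\jmpName\ l$ which returns control to the guard evaluation, so I verify that one loop iteration of \swhprog\ corresponds to traversing the guard jump, the body $P$, and the back-jump, landing at the same program-counter position $k$ that the relation assigns to $\while{r}{\{D;\wskip\}}$. The hard part will be making the label-to-position correspondence precise and stable under $\res_\CodeName$: I must argue that the fresh labels $br,ex,l'$ introduced by $\whtoasse$ resolve (via $\res_\CodeName$) to the positions that the fragment layout dictates, and that concatenation preserves this so that a residual program always points into a contiguous, correctly-labeled sub-block of the fixed $\assemprog$. Once this positional invariant is established as a lemma, the bisimulation clauses --- equal labels $\low(l)$, preserved $\memequiva$, and membership of the reduced pair in $R$ --- follow uniformly from the induction hypotheses.
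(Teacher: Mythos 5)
First, a point about the comparison itself: the paper gives \emph{no} proof of Proposition~\ref{prop:iwhileandrisc}. The proposition is stated at the end of Section~\ref{Sec:iWh-asse} and immediately used (combined with the results of Section~\ref{Sec:Wh-iWh} to conclude Theorem~\ref{thm:strsenforce}), so there is no argument in the paper to measure yours against, and your proposal has to be judged on its own merits. Its overall architecture is the right one: a lock-step (exact) bisimulation, atomic statements matching single \assem\ instructions, the syntactically forced trailing $\wskip$ in branches and loop bodies matching $\jmpName\ ex$, $\nop$, and the back-jump, guard steps $\wraprule{\rulename{if-1}}/\wraprule{\rulename{if-2}}$ and $\wraprule{\rulename{w-1}}/\wraprule{\rulename{w-2}}$ matching $\wraprule{\rulename{Jz-F}}/\wraprule{\rulename{Jz-S}}$, and rule $\wraprule{\rulename{c-2}}$ ensuring that atomic statements inside sequences execute and are stripped in a single step on both sides.

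There is, however, a genuine gap at the centre of the plan: the ``natural definition'' of $R$ --- relate $D'$ to $(\assemprog,k)$ when the code starting at position $k$ is, up to label renaming, the translation $\whtoasse(D',l')$ --- is not preserved by execution, and it fails at the first loop unfolding. When $\while{r}{\{D;\wskip\}}$ fires rule $\wraprule{\rulename{w-1}}$, the source residual becomes $\{D;\wskip\};\while{r}{\{D;\wskip\}}$ (and, one step later, a partially executed body in a reduction context $\ec[\cdot]$), while the program counter moves to the start of the compiled body. A \emph{fresh} translation of that residual lays out the body code twice, inserts a $\nop$, and places the guard test \emph{after} the first copy; the actual code at position $k$ contains the body once, followed by $l':\jmpName\ l$, a back-jump to a guard at an \emph{earlier} address. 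So the new pair is simply not in $R$ as you defined it, and the bisimulation-closure step fails exactly where the induction must pass through the loop. The repair is standard but it is the real content of the proof: define $R$ inductively or coinductively with an explicit clause for program points inside a loop body carrying the guard's address as a continuation; equivalently, relate residuals to positions only up to the congruence identifying a loop with its one-step unfolding, or via a decompilation of $(\assemprog,k)$ with a continuation stack. Your phrase ``with the continuation semantics matching'' gestures at this, but the definitional work it hides is the crux; by contrast, the part you flag as hard --- the $\res_\CodeName$ label-to-position invariant under $\concatprograms$ and freshness --- is routine bookkeeping, as is the $\memequiva$ case analysis for the atomic instructions.
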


The result in this section can be combined with the ones in Section \ref{Sec:Wh-iWh} to show that we can translate a type-correct \whprog\ program into a strongly secure \assem\ program. In particular, 
Let $C$ be a \whprog\ program such that $\{\} \enta{\regvar}{\var}  C \typeproduce \code{D}, \esecan{t}{w}, \regenv'$ and let $\whtoasse(D,\emptyLab)=(P,l)$. Then $P$ is strongly secure. 

\section{Example of a Compilation}\label{appendix:example}
In this section we show an example of a simple hash calculation that we can implement\footnote{ Notice that this statement is not completely accurate. The inaccuracy arises from the fact that $guard>0$ is not a valid expression in the language. This is not a big issue, since RISC can be smoothly extended by including a $\jlez$ instruction without breaking any of the results.} within our language. Informally, the hashing algorithm works as follows:
\begin{itemize}
\item define two public integers $i$ and $j$ such that $1 < j \ll i$
\item pick a public message $m$ in the interval $[0,…,2^i-1]$, while the public hash $h$ will be in $[0,…,2^j-1]$ 
\item define a public value $p$, the smallest prime such that $p> 2^j$ 
\item pick two secret values $q$ and $r$ such that $1\leq q \leq p-1$, $0 \leq r \leq p-1$ 
\item calculate the hash as $h_{q,r}(m)=[(q*m+r) \mod p] \mod 2^i$ 
\end{itemize}

In the \whprog\ language a possible implementation of the hash program proceeds as follows:
\[
\begin{array}{l}
limit := 1;  \\
\while{i}{\{limit:=limit*2; i:=i-1;\} }\\
source:=q*m; \\
source:=source+r; \\
guard:=source -p; \\
\while{guard>0}{ \{guard:=guard-p; source := source -p;\} }\\
guard:=source -limit; \\
\while{guard > 0} do{\{ guard := guard -limit; source := source -limit;\} }
\end{array}
\]
	
This correspond to the following \assem\ code
\[
\Scale[0.9]{
\begin{array}{rllcrll}
 & \movek r_{lim} 1 		& 						& & & \% &  \\
 & \storeName \ limit\  r_{lim}	& \conn{r_{lim}}{limit}{W} 	& & & \storeName \ guard \ r_g & \conn{r_{g}}{guard}{W}\\
 & \loadName\ r_i\ i 		& 						& & &  \loadName \ r_s \ source & \\
 & \storeName\ i \ r_i 		& \conn{r_{i}}{i}{W} 		& & &  \sub\ r_s \ r_{hlim} & \\
 loop1: & \jzName \ exit\_loop1\  r_i  			 &		& & & \storeName \ source \ r_s &\conn{r_{s}}{source}{W}\\
 & \movek \ r_2 \ 2  					&			& & &  \loadName\ r_g\ guard & \\
 & \mul \ r_{lim} \ r_2  					&			& & &   \storeName \ guard \ r_g &\conn{r_{g}}{guard}{W}\\
 & \storeName \ limit \ r_{lim} &\conn{r_{lim}}{limit}{W} 	& & &  \jmp loop3 & \\
 & \movek\ r_1\ 1 && & & \\
 & \sub\ r_i\ r_1   	&& & & \\
 & \storeName\ i\ r_i 	& \conn{r_{i}}{i}{W}& & &\\
 & \loadName\ r_i\ i & & & &\\
 & \storeName\ i \ r_i & \conn{r_{i}}{i}{W}& & &\\
 & \jmpName \ loop1 & & & &\\
exit\_loop1: &  \loadName\ r_g \ q & & & &\\
& \loadName\  r_m \ m & \conn{r_{m}}{m}{R} & & &\\
 & \mul \ r_g\ r_m & & & &\\
 & \storeName\ source\ r_g & \conn{r_{g}}{source}{W}& & &\\
  & \loadName\ r_r \ r & \conn{r_{r}}{r}{R}& & &\\
  & \addName \ r_g \ r_r && & & \\
  & \storeName\ source \ r_g & \conn{r_{g}}{source}{W}& & &\\
  & \loadName\ r_p \ p & \conn{r_{p}}{p}{R}& & &\\
  & \sub \ r_g \ r_p & & & &\\ 
  & \storeName\ guard \ r_g & \conn{r_{g}}{guard}{W} & & &\\
  & \loadName \ r_g \ guard && & & \\
  & \storeName \ guard \ r_g &\conn{r_{g}}{guard}{W}& & & \\
loop2: & \jlez \ exit\_loop2  \ r_g & & & &\\
& \sub \ r_g \ r_p && & & \\
 & \storeName \ guard \ r_g & \conn{r_{g}}{guard}{W}& & & \\
 & \loadName \ r_s  \ source && & & \\
 & \sub \ r_s \ r_p && & & \\
 & \storeName \ source \ r_s& \conn{r_{s}}{source}{W}& & &\\
 &  \loadName \ r_g \ guard && & & \\
 & \storeName \ guard \ r_g & \conn{r_{g}}{guard}{W}& & &\\
 & \jmp loop2 && & & \\
 exit\_loop2:& \loadName \ r_{hlim}\ limit & \conn{r_{hlim}}{limit}{W}& & &\\
  & \loadName\ r_g \ source & & & &\\
  & \sub \ r_g \ r_{hlim} & & & &\\
  & \storeName\ guard\ r_g&  \conn{r_{g}}{guard}{W}& & &\\
  &  \loadName \ r_g \ guard & & & &\\
  & \storeName\ guard \ r_g & \conn{r_{g}}{guard}{W}& & &\\
  loop3: &  \jlez\ exit\_loop3 \ r_g  && & & \\
  & \sub\ r_g \ r_{hlim} & & & &\\
  & \% & & & & \\
\end{array}
}
\]

 The security levels of resources involved in this example are assigned as follows:
 \begin{itemize} 
 \item level $\valLow$ variables: $limit$, $i$, $m$, $p$;
 \item level $\valHigh$ variables: $q$, $source$, $guard$, $r$;
 \item level $\valLow$ registers: $r_{lim}$, $r_i$, $r_2$, $r_1$; 
 \item level $\valHigh$ registers: $r_g$, $r_m$, $r_p$, $r_s$, $r_{hlim}$, $r_r$.
 \end{itemize}

\end{document}